\documentclass[12pt,reqno]{amsart}

\usepackage{mathabx}
\usepackage{bbold}
\usepackage{mathrsfs}         
\usepackage{amssymb}
\usepackage{graphicx, amsmath}
\usepackage{amsfonts}
\usepackage{dsfont}
\usepackage{amssymb}
\usepackage{fancyhdr}
\usepackage{a4wide}
\usepackage{xcolor}
\usepackage{enumerate} 
 \usepackage{enumitem}
\usepackage[colorlinks]{hyperref}
\usepackage{mathtools}\usepackage{mathtools}

\numberwithin{equation}{section}
\setlength\parindent{15pt}

\usepackage[margin=3cm]{geometry}

\newcommand{\vp}{\varphi}
\newcommand{\ve}{\varepsilon}

\renewcommand{\O}{\mathcal{O}}

\newcommand{\N}{\mathcal{N}}  
\newcommand{\C}{\mathbb{C}} 
\newcommand{\Z}{\mathbb{Z}} 
\newcommand{\R}{\mathbb{R}} 
 \newcommand{\h}{\mathfrak{h}}
\renewcommand{\S}{\mathcal{S}}
\newcommand{\po}{  \N_\S  }

\newcommand{\B}{\mathcal B}

\newcommand{\F}{\mathscr{F}}
 \newcommand{\calR}{ \mathcal{R} }

\newcommand{\Ha}{ \mathcal{H}  }  

\newcommand{\supp}{\textnormal{supp}}
\newcommand{\1}{\mathds{1}}
\renewcommand{\i}{ \mathrm{ i }  }
 \renewcommand{\d}{\mathrm{d}}

\newcommand{\blu}{\color{black}}

\renewcommand{\t}[1]{\textnormal{#1}}

\newcommand{\<}{\left\langle}
\renewcommand{\>}{\right\rangle}

\renewcommand{\leq}{\leqslant}

\renewcommand{\geq}{\geqslant}

%
\newtheorem{condition}{Condition}[section]

\newtheorem{theorem}[condition]{Theorem} 
\newtheorem{proposition}[condition]{Proposition} 
\newtheorem{lemma}[condition]{Lemma}

\newtheorem{definition}[condition]{Definition} 

\theoremstyle{remark}



\theoremstyle{definition}

\newtheorem{remark}[condition]{Remark}

 \title[
 Quantum Boltzmann dynamics 
around the Fermi ball
 ]
 {Quantum Boltzmann dynamics 
 	and 
 	bosonized particle-hole interactions 
 	in fermion gases}

\author[Esteban C\'ardenas]{Esteban C\'ardenas}

\address[Esteban C\'ardenas]{Department of Mathematics,
	University of Texas at Austin,
	2515 Speedway,
	Austin TX, 78712, USA}
\email{eacardenas@utexas.edu}

\author[Thomas Chen]{Thomas Chen}

\address[Thomas Chen]{Department of Mathematics,
	University of Texas at Austin,
	2515 Speedway,
	Austin TX, 78712, USA}
\email{tc@math.utexas.edu}

\setcounter{tocdepth}{1}

\makeatletter

%
%

%

 \begin{document}
 
 \begin{abstract}
In this paper, we study a   gas of $N \gg 1 $ weakly interacting 
fermions. 
We describe the  time evolution of 
states that are perturbations of 
      the Fermi ball, 
      and analyze   the dynamics
      in particle-hole variables. 
Our main result states that,
 for small values of   the coupling constant 
 and for appropriate initial data, 
  the  effective dynamics of the 
  momentum distribution  
 is   determined 
 by a discrete collision  operator of quantum  Boltzmann form.  
 \end{abstract}

\maketitle

{\hypersetup{linkcolor=black}
\tableofcontents}

\section{Introduction }
\label{section introduction}

The quantum Boltzmann equation  was first proposed 
 on phenomenological grounds 
by Nordheim \cite{Nordheim} and Uehling and  Uhlenbeck \cite{UehlingUhlenbeck}
as  a quantum-mechanical correction to the    classical Boltzmann equation, taking into account the statistics of the particles. 
For  a spatially homogeneous gas of fermions in three dimensions, 
it reads 
\begin{align} 
	\label{boltzmann}
& \partial_T F 
	  =   
	\int_{\R^9}
b (  p  \, p_*  p' p_* '   )
\Big(
F' F'_* (1 - F) (1 - F_*) - 
F F_* (1 - F' ) (1 - F_*')
\Big) 
	\d p_* \d p ' \d p_* '    
\end{align} 
Here,   the unknown $F = F_T(p)$ is a  probability density function  in momentum space,
and we  use the short-hand notations   $F = F(p)$, $F'  = F (p')$, $F_* = F(p_*)$
and $F_* ' = F(p_* ' )$. 
 The function $b (p p_* p' p_*')$
is the  scattering amplitude and its precise form
depends on the scaling limit under which \eqref{boltzmann} is obtained.
In this article, we are mostly interested  in a 
regime of weak interactions.
Here,  the amplitude  is calculated quantum-mechanically in the Born approximation  to second order
and takes the form  (up to geometric constants)
\begin{equation}
	\label{def:b}
	b  (  p \,  p_* p' , p_* '  )
	=   
	\delta (p + p_* - p' - p'_*) \, 
	\delta \big( p^2 + p_*^2 - (p')^2  - (p_* ')^2	\big)  \, 
	\big|  \hat v( p  - p') - \hat v (p - p_*')\big|^2   
\end{equation}
where  $\hat v(k)$ is the Fourier transform of the microscopic 
interaction potential.  In other regimes -- for instance,  in the low-density limit described below -- two-body interactions
are not weak and  the scattering amplitude
  needs  to be computed taking into account   all orders in the potential. 
 

 \vspace{2mm}
  
Despite important mathematical efforts,   
the rigorous derivation of \eqref{boltzmann}  
from first principles 
remains an  important   open problem  in mathematical physics. 
This amounts to analyzing  the 
microscopic dynamics of
an interacting particle system governed by the Schr\"odinger equation,
and the standing conjecture 
states
that the equation
 \eqref{boltzmann}
emerges 
 in a suitable scaling limit.

\vspace{2mm}

{\blu  In this regard,  one of 
the main lines of research consists 
of studying  the \textit{kinetic scaling    limit}
of an interacting quantum gas. 
Here, 
one considers a 
particle system   with an interaction of strength 
$\lambda>0$
and rescales time by 
\begin{equation}
T  = \lambda^{ 2     }  t 
\end{equation}
where $t >0$ is the microscopic time scale. 
The  time  scale 
$T $ is   known as the \textit{kinetic time scale}, 
and it is  conjectured to be the smallest scale 
that produces $O(1)$ corrections to the dynamics. 
The rigorous analysis  of this phenomenon was initiated by Hugenholtz \cite{Hugenholtz} and Ho and Landau \cite{holandau}, for spatially homogeneous systems. 
See also \cite{ESY2004,Lukkarinen2009} for more results in this direction. 
In the space inhomogeneous setting, it was observed by Spohn in \cite{Spohn1994}
that   if one implements the following scaling for a system of $N$ quantum particles
\begin{equation}
	\label{weak}
	X = \ve x  \ , \qquad T = \ve t \ , \qquad \lambda = \ve^{1/2} 
	\qquad
	\t{and} 
	\qquad 
	N\ve^3=1 
\end{equation}
one should conjecturally obtain    the space inhomogeneous quantum Boltzmann equation.
The scaling \eqref{weak} is often called the weak coupling limit. 
Using   BBGKY hierarchy methods, the weak coupling limit    was further studied for quantum systems 
 in \cite{Bendetto2005} 
and more recently by 
\cite{ChenGuo2015,ChenHolmer2023}. 
 In addition to the weak coupling limit, there is the so-called   \textit{low density} limit.
It  differs from the weak coupling regime
  in that one sets $\lambda=1$ and $ N = \ve^{-2}$ and takes      
  $\ve \rightarrow  0$; in particular, the scattering amplitude entering the quantum Boltzmann equation \eqref{boltzmann} needs to be computed to all order in the potential. 
  See for instance \cite{Bendetto2006} for partial results. 
  One should note that this limit (and \textit{not} the weak coupling limit) is formally the same as   taking the  \textit{Boltzmann-Grad} limit. 
  The latter is considered in the   derivation of the classical Boltzmann equation \cite{Gallagher2012} and is now well understood even at arbitrarily long times \cite{Deng}. 
  Finally, for further reading,  we refer the reader to \cite{Bendetto2004,Bendetto2007,Bendetto2008,Pulvirenti2006}.

   \vspace{2mm}

To this day, the derivation of the
quantum Boltzmann equation 
in an interacting quantum system
remains an open problem. 
Indeed, previous results at kinetic time scales may be classified into the following two types. They are   either conditional (the solution is assumed to satisfy key properties, yet these remain unproven) or they correspond to truncation results (the partial series of a Duhamel expansion is fully analyzed, by the tail is not controlled).

\vspace{2mm}

%

In this article, we turn to the study of the following  question: } 

\vspace{1.5mm}
\noindent \textit{Is there a  scaling window 
for which 
the many-body fermionic dynamics
is, to leading order, governed by an operator 
of the form \eqref{boltzmann}, with complete error analysis? }

\vspace{1.5mm}

We refer to this question as the 
\textit{emergence} of quantum Boltzmann dynamics.  {\blu 
Let us note that questions in the same spirit have been investigated in \cite{Hott,Hott2}
for Bose gases, and also in \cite{Deng2021} for systems of waves.
}

\subsection{Description of main results}
Let us     informally 
describe   the main results of this paper, 
which  {\blu give an answer to the proposed question.} 

\vspace{2mm}

We consider the microscopic dynamics
of $N$ identical  {\blu spinless} fermions
on the torus of side length $ L>0 $
in dimension $d \geq 1$, 
which we denote by
$	 		 \Lambda = (\R/ L \Z)^d  \ . $
The states are $N$-body wave functions  belonging to
the Hilbert space 
\begin{equation}
	L^2_a ( \Lambda^N ) 
	 = 
	 \big\{ \Psi \in L^2 (\Lambda^N) : 
	 \Psi (x_{\sigma (1)}	 , \ldots, x_{ \sigma (N)	})
	  = 
	  (-1)^\sigma 
	  	 \Psi (x_{   1 }	 , \ldots, x_{ N 	} )  \   \forall \sigma \in S_N 
	 		 \big\}   . 
\end{equation} 
Here, $S_N$ is the group of permutations of $N$ elements; the antisymmetry of the functions $\Psi \in L_a^2 (\Lambda^N)$  characterizes  
the fermionic  statistics of the particles.
We then   study   the solutions of the  Schr\"odinger equation, 
written   in microscopic units as 
\begin{equation}
\label{schrodinger}
	i \partial_t \Psi 
	 = 
	 \bigg(	 \frac{1}{2}	  \,   \sum_{  i  } ( - \Delta_{x_i})	 + 
\frac{	 \lambda }{2 }
	 \sum_{  i\neq j } V(x_i - x_j )			 \, 	\bigg)
	  \Psi  \  , \qquad \Psi(0) = \Psi_0  \in 	L^2_a ( \Lambda^N )   \ . 
\end{equation}
Here,  $\lambda >0$ is the coupling constant of
the interaction, mediated by a    pair potential $V( x - y)$, 
and the initial datum is denoted by $\Psi_0$.
We   denote by
$	\Lambda^* \equiv  ( 	 \frac{2\pi}{L }\Z			 )^d $
the dual momentum lattice, 
and write for convenience
$\int_{\Lambda_*} \d p \equiv \frac{1}{|\Lambda|} \sum_{ p \in \Lambda^*}$.

\vspace{2mm}
 
We consider quasi-free, translation-invariant initial data $\Psi_0$
corresponding to states that are perturbations
 of  the \textit{Fermi ball}  
\begin{equation}
	\Psi_F \equiv \bigwedge_{   p  \in \B } e_p 
\qquad
\t{where}
\qquad 
		\B   \equiv  \{   p\in   ( 2\pi \Z / L  )^d : |p| \leq p_F 		\}     \ , 
\end{equation}
in the sense of Condition \ref{condition initial data}.  
Here,  we denote by 
\begin{equation}
\blu 
	e_p (x) =  
	|\Lambda|^{	 - \frac{1}{2}}
	e^{ i  p \cdot x }  \ , \qquad p \in \Lambda^* 
\end{equation}
the plane-wave basis,  and 
	  $p_F>0$ stands for   the \textit{Fermi momentum}.
We
	     tune $N>0$
 so that $ | \B|  =  N $
 and thus, consequently,  for fixed $L>0$
	  \begin{equation}
	  	p_F =  C_d \bigg( \frac{N}{|\Lambda|}	\bigg)^{ 1/d }  (1 + o(1)) \qquad N  \rightarrow  \infty  \  , 
	  \end{equation}
for some constant $C_d>0$.  
As is well-known, the Fermi ball 
$\Psi_F$ corresponds to the non-interacting ground state of the system. It is, therefore, 
an approximately stationary solution of the Schr\"odinger equation for small values of $\lambda>0$.  
As we explain below,
 we   study the dynamics of $\Psi (t)$ relative to the Fermi ball $\Psi_F$ in the particle-hole formalism.

\vspace{2mm}

The    main object  
of interest  in this article 
{\blu{is}}   the 
momentum distribution of the wave-function $\Psi (t)$.
We shall denote {\blu it} by $F_t(p)$, where $p $ belongs to the dual lattice 
$	\Lambda^* $, 
and {\blu satisfies} 
\begin{equation}
 \int_{\Lambda_*} F_t(p) \d p = N \qquad \t{and} \qquad 0 \leq F_t(p) \leq 1   \ . 
\end{equation}
For a precise definition of $F_t(p)$, see \eqref{momentum F} in the next section, where we introduce the Second Quantization formalism. 
In particular, we are interested in understanding the emergence of
quantum Boltzmann dynamics
for the time evolution of $F_t(p)$ 
for small values of the coupling constant $\lambda>0$, 
large values of the  time variable $t \in \R $, and at fixed length $L>0$.

\vspace{2mm}

In this regard,  our main result is contained in Theorem \ref{thm3} 
and establishes the emergence  
of the quantum Boltzmann operator \eqref{boltzmann}
for  the dynamics of $F_t(p)$, 
in a particular scaling window.
We   consider the following scaling 
for $d = 3 $
\begin{equation}
	\label{scaling0}
	\lambda 
	=  
	 N^{ - \frac{3}{2} - \delta }
	 \qquad 
	t = N^{ \frac{1}{6} +   \frac{\delta}{2 } } \,  T    \ , 
	\qquad
\t{and } 
\ L>0 \ 
 \t{ fixed} 
\end{equation}
where $\delta\in(0,1)$ is small enough, and 
$F_0$ is a small enough 
perturbation of the Fermi ball.  
The coupling in the 
scaling \eqref{scaling0} is chosen
so that higher-order
 many-body effects can be controlled.
 At the same time, the time-scale in \eqref{scaling0}
 is chosen  to be large enough 
 so that completed collisions can be detected. 

\vspace{1mm}

In the scaling \eqref{scaling0}, 
we prove the validity   of the following  asymptotic expansion 
\begin{equation}
	\label{F}
	F_t = F_0 +   (\lambda t )^2 \, 
\mathscr C  [F_0 ] 	  + \cdots   
\end{equation}
 with respect to an appropriate metric, 
 with rigorous   control of the error terms. 
The 
 operator  in    \eqref{F} 
is a 
	discrete version of the   right-hand side of 
	\eqref{boltzmann}, given by 
	\begin{align}
\mathscr C  [F] 
	\equiv  
\int_{ ( \Lambda_*)^3 }
     b  (  p  \, p_*  p' p_* '   )
     \Big(
     F' F'_* (1 - F) (1 - F_*) - 
     F F_* (1 - F' ) (1 - F_*')
     \Big) 
     \d p_* \d p ' \d p_* '    
      \end{align} 
where now  
the scattering
{\blu amplitude} 
 contains a \textit{discrete}
 energy-conservation delta function
      \begin{align} 
\label{kernel}
           b(p p_* & p' p'_* )  \\
            &= 
            \frac{\pi}{ ( \pi/2)  }
             \,  \delta (p + p_* - p' - p'_*)
             \delta_{\Z } 
 \big( p^2 + p_*^2 - (p')^2  - (p_* ')^2	\big) 
      | \widehat V ( p - p ' )			-		\widehat V ( p - p_* ' )				|^2 
      \nonumber 
	\end{align}
where    we denote
$\delta_{\Z} (x) \equiv  \delta_{x,0} $
and 
$
  	\delta(p -q )  =   |\Lambda|^{-1 }   \delta_{p,q}$.
  	 In particular,  
  	we note that 
  	the  leading order term grows \textit{quadratically} with time, 
  	and includes an additional   factor $   \pi/2   $
  	relative to \eqref{def:b}. 
See  	 Remark \ref{remark:t} below for a discussion. 
Here we have a  factor $\pi$ rather than the  conventional  $4\pi$ thanks
to the $\frac{\lambda}{2}$ in \eqref{schrodinger}.

\subsection{Discussion}
 
\begin{remark}
Throughout   this paper, 
	the side length of the box $L$ is kept fixed.
	 Thus,    we obtain a 
	discrete operator   $\mathscr{C}$
rather  than  its  continuous counterpart   \eqref{boltzmann}. 
Notice that this is a  {high-density regime}, and the Fermi momentum $p_F>0$ becomes large. 
In contrast,  the weak-coupling limit 
considers systems with constant density. 
\end{remark}

\begin{remark}\label{remark:t}
Let us comment on the quadratic time dependence
of the
leading order term of \eqref{F}.
The
relevant time-energy delta function  
that appears  naturally 
in collision operators is    
\begin{eqnarray} 
	\delta_t (E) \equiv   t \delta_{1}(tE)  
	\qquad
	\t{ where }
	\qquad 
	\delta_1(E) 
	\equiv 
	\frac{2}{\pi }\frac{\sin^2\big( 	\frac{E}{2}\big)}{E^2} 
\end{eqnarray}
for $t, E \in \R$.  
It is well-known
that after the limit 
$L \rightarrow \infty$,  the function $\lim_{t \rightarrow \infty}	\delta _t(E)$
becomes a   Dirac delta $\delta(E)$. 
Indeed, for illustration purposes,   consider a smooth 
 test function $\vp : \R \rightarrow \R $
and note  that 
 \begin{equation}
 	\label{macro}
\lim_{t \rightarrow \infty }
\lim_{L \rightarrow \infty }
\frac{1}{L} \sum_{ n \in 		 \frac{1}{L }	\Z} 
\delta_t (n )	 \vp (n) 
 = \vp (0) \ . 
 \end{equation}
In this paper, we consider fixed $L>0$
and, consequently, 
	the same quantity behaves like $O(t)$ for large times.
	Taking $L=1$ for illustration yields  
	 \begin{equation}
	 	\label{moll}
 \sum_{ n \in 	 \Z} 
		\delta_t (n )	 \vp (n) 
		= 		
\frac{t}{\pi /2 }
		 \  \vp(0) + 
		O(1/t )
		\qquad
		\t{as}
		\qquad
		t\rightarrow \infty 
	\end{equation}
which we interpret as a lattice effect.
	See for reference  Lemma \ref{lemma delta}. 
 	As a consequence of \eqref{moll}, 
	the leading order term
	in \eqref{F}  is $O(\lambda^2 t^2)$;
	much larger than the expected $O (\lambda^2 t )$ in the macroscopic limit \eqref{macro}. 
	\end{remark}

\begin{remark}
	We consider initial data $F_0$
	for which $\mathscr C[F_0 ]$ is of order $N^{1/3}$
	in an appropriate norm. 
	Thus, the first-order correction
	to the momentum distribution 
	in \eqref{F}
	in the scaling \eqref{scaling0}
	is of order 
	\begin{equation}
		\label{correction}
	O ( \lambda^2 t^2 N^{1/3} )  = O ( N^{ -7/3 - \delta})		
	\end{equation}
	While this correction is certainly small, 
	the leading order term in the expansion \eqref{F}
 is determined by the quantum Boltzmann operator 
 $\mathscr C [F_0]$, 
  with full error control over the subleading   terms.
	
\end{remark}

\subsection{Strategy}
The heart of our approach is 
the  study  of the time evolution
of       the momentum distribution $F_t(p)$ 
 relative to    the stationary Fermi ball. 
 It is then very convenient 
 to introduce the momentum distributions
 of $\Psi_F\in L_a^2(\Lambda^N)$ as well as its complement: 
 \begin{equation}
 	\label{chi definition}
 	\chi
 	\equiv
 	\1 ( |p |\leq p_F )
 	\qquad 
 	\t{ and }
 	\qquad 
 	\chi^\perp  \equiv 1 - \chi    \  .
 \end{equation}
We   
change variables
and work in   
particle-hole space.
Namely, we study the 
 the  momentum distribution $f_t(p)$  of excited particles and  empty holes (see Definition \ref{definition momentum distribution}) 
 which satisfies 
\begin{equation}
\label{F and f}
	F_t ( p) 
	= 
	\begin{cases}
		f _t(p) 	  \quad   &  |p| > p_F \\ 
		1 - f_t (p)     \quad   & | p | \leq p_F 
	\end{cases} \ . 
\end{equation} 
{\blu  Our attention is on     states  that  consist of  suitable  perturbations of  the  Fermi ball.
See Condition \ref{condition initial data} for the precise assumptions. }

 \vspace{2mm}
 
 The proof of the expansion \eqref{F} can then   be broken down into three steps.
 \begin{enumerate}[leftmargin=*]
 	\item  In Theorem \ref{theorem 1}, 
 	we study the dynamics of $f_t(p)$
 	in terms of the \textit{unconstrained parameters} of the theory
 	and find a general estimate in \eqref{thm 1 estimate} in dimension $d \geq 1 $. 
{\blu  	This means that we do not fix any functional relationship  (i.e., scaling) between the parameters $\lambda$, $N$, $\Lambda$, and $t$}. 
 	
 	\item In Theorem \ref{corollary},   we specialize this estimate to  the scaling \eqref{scaling0} in $d =3 $
 	and show that the remainder terms are smaller in an appropriate sense, and for well-chosen initial data.
 	 Here, extracting the  dependence of the collision operators with respect to  $t \gg 1 $ is essential.

 	\item In Theorem \ref{thm3}, we go back to the momentum distribution $F_t(p)$
 	and conclude the validity of the expansion \eqref{F}.  
 \end{enumerate}

 Let us now describe these steps in more detail.

 \vspace{2mm}

\textit{Step 1.} 
Here, our main goal is to  analyze the  effective dynamics of $f_t(p)$, 
which is driven by a particle-hole Hamiltonian $\h$ described 
in   Section \ref{section preliminaries}.  

\vspace{1mm}
In Theorem \ref{theorem 1}, 
we consider 
the following   asymptotic expansion  in terms of  the coupling $\lambda>0 $
and microscopic time $t \in \R $
\begin{equation}
\label{f eq}
	f_t   = f_0  + \lambda^2 t  \, B_t[f_0] + \lambda^2  t \, Q_t[f_0] + \cdots  \  
\end{equation}
and give  an explicit estimate  on  the remainder,  
in terms of the unconstrained parameters of the theory.    
In particular, our estimates  focus on the description of the dynamics of fermions  whose momenta are  \textit{away} from the Fermi surface.
This is encoded in the norm $\|	 \cdot \|_{\ell_m^{1*}}$ that we use to measure distances. 
Note that this is consistent with the fact that
particles and holes near the Fermi surface couple together to form \textit{bosonized} particle-hole pairs, 
and experience different dynamical behaviour.
This bosonization phenomenon was first explained by 
Sawada \cite{Sawada1957} and Sawada, Brueckner,  Fukuda, and Brout \cite{Sawadaetal1957}
in the 1950s. 
{\blu   
More recently,    rigorous formulations  have  been constructed to study  
 large  Fermi gases in the random phase approximation.
In \cite{Benedikter2020},  the authors introduced a rigorous   decomposition of  the Fermi surface
into  several small patches, and on each one a localized quasi-bosonic operator is defined.
 These techniques are then used to prove  an optimal upper bound for the correlation energy of a Fermi gas in the mean-field regime;
these methods were later refined  and improved in \cite{Benedikter2021,Benedikter2022,Benedikter 2023} to find the matching lower bound, study dynamical properties of the system, 
and include large interactions. 
A different bosonization approach  (not based on patch localization)
was developed  in  \cite{Christiansen2022} that  approximately diagonalizes the fermionic Hamiltonian and derives an effective quasi-bosonic Hamiltonian describing the correlation energy and the elementary excitations of the system. These methods were subsequently employed in 
     \cite{Christiansen2022-2} to study plasmon modes and in \cite{Christiansen2023}
 to derive a mean-field version of the Gell-mann--Brueckner formula. }

\vspace{2mm}

{\blu 
 The operator  $B_t$ is bilinear and is  given in Definition \ref{definition B}.
It describes     physical processes between fermions that are mediated 
by \textit{virtual}   {bosonized} particle-hole pairs near the Fermi surface. 
 Here, virtual refers to the fact that in these second-order processes,  both the initial and final number of particle-hole pairs is zero, yet they still participate      in intermediate interactions.
 These processes are allowed in view of the fact that the number of particle-hole pairs is not conserved.  
Let us note that under  our assumptions on the initial data and on the dynamics, 
we can guarantee that the Fermi surface is \textit{depleted} 
throughout  the time scale  under consideration; 
see Proposition \ref{prop N estimates 2} for a precise statement.  
This means that  all physical processes 
in which  the final number of  particle-hole pairs is non-zero  can be neglected. 
} 
On the other hand, the  operator $Q_t$ is given in Definition \ref{definition Q1}, and corresponds to  an energy-mollified
collisional operator  of quantum Boltzmann form, 
describing 
collisions of the form:  particle-particle, particle-hole, and hole-hole.

\vspace{2mm}

Both operators satisfy $ B_t  \big|_{t=0} = Q_t  \big|_{t=0}=0$.
Therefore, they do \textit{not} dominate over the remainders
for small times. 
In the next step, we consider longer time scales
and extract the leading order time dependence.

\vspace{2mm}

\textit{Step 2}.
In  Theorem \ref{corollary}, 
we consider     the dynamics    for longer time scales 
and for small values of the coupling. 
More precisely, we specialize to the
scaling \eqref{scaling0}
in three dimensions.
The scaling is chosen so that $\lambda $ is small enough to control error terms arising from Theorem \ref{theorem 1}, 
but $t$ is large enough to extract the leading order terms of $B_t$ and $Q_t$.
 In particular,  it is long enough  to  observe completed collisions. 

\vspace{1mm}

We    identify the leading order terms     of the operators
$B_t$  and $Q_t$  as $t   \rightarrow \infty  $, with rigorous error control. 
These are given by     limits ({\blu to be understood in a certain topology})
\begin{equation}
	\mathscr B  \equiv  	\lim_{t \rightarrow \infty } \lim_{\lambda \downarrow 0} \frac{1}{t}  B_t   
\qquad 
\t{and}
\qquad 
\mathscr Q  \equiv  	\lim_{t \rightarrow \infty } \lim_{\lambda \downarrow 0} \frac{1}{t}  Q _t   \ . 
\end{equation}
Observe that we take the factor $\frac{1}{t}$
in order to extract the leading order term,  in agreement with 
  \eqref{moll}.   In particular, after we take the limits $t \rightarrow \infty$, 
  it is possible to establish \textit{lower bounds}
  for the operators $\mathscr{ B}$
  and $\mathscr{Q}$.
  This then allows for comparison with the remainder terms obtained in Theorem \ref{theorem 1}.
For small enough perturbations, 
we  construct   initial data
such that in an $\ell^\infty$-norm: 
 \begin{equation}
 	 \mathscr{B}[f_0]  \, \sim  \, N^{1/3}
 	 \qquad
 	 \t{and}
 	 \qquad 
 	 \mathscr{Q} [f_0]	\,  \lesssim 	\, N^{1/6} \ . 
 \end{equation}

\vspace{2mm}

\textit{Step 3.} Finally, in Theorem  \ref{thm3}, 
 we use the relation  \eqref{F and f}
 to go back to the original momentum distribution function $F_t(p)$.
In particular, one must here compare $\mathscr{B}[f]$ and $\mathscr{C}[F]$.
We find that for appropriate perturbations,
 there holds
\begin{equation}
\label{C and B}  
	\mathscr{C}[F]  \ =  \ ( 1 - \chi)  \mathscr{B}[f]  \ - \      \chi   \mathscr{ B}[f]   
\end{equation}   
modulo a small error term, see e.g Proposition \ref{prop F}. 
 A combination of  \eqref{C and B} and Theorem \ref{corollary} allows us to finish the proof of \eqref{F}. 
 
 {\blu 
 From a conceptual level, \eqref{C and B} is a crucial observation.  
It states  that the  emergence of the     operator that drives the   quantum Boltzmann equation \eqref{boltzmann}
 for states $F$ near the Fermi ball, 
 is associated  with   an       operator $\mathscr{B}$  in   particle-hole space.
In particular, the operator $\mathscr B$ is 
 {\blu \textit{bilinear} in the variables $f$ and $1-f$}, 
 and may be regarded as a ``quadratic approximation" of the full operator $\mathscr C$. 
The physical meaning of the operator  $\mathscr B$   is associated  with the    interactions between particles   
with a distinguished boson field --  the field of bosonized particle-hole pairs near the Fermi surface. }

 \bigskip 
{\color{black} 
 Finally, we would like to recall that our results are valid only for 
 \textit{spatially homogeneous systems}.  
 This assumption is  physically relevant  and simplifies  the analysis considerably. 
Indeed, this leads to the consideration  of  the expectation  of  field operators  $a_p ^*a_p$ instead of 
 $a_{p}^* a_q$ for $q \neq p$. 
 This has the advantage of having a simpler representation in the interaction picture,
and has been significantly exploited in the past
 \cite{Hott,Hott2,ChenSasaki,ESY2004,Hugenholtz}.
 For spatially inhomogeneous systems,  partial results are available in
 \cite{Bendetto2005,Bendetto2006,Bendetto2007,Bendetto2008,ChenGuo2015,ChenHolmer2023}.
 Note that the methods differ vastly from those in this article, 
  and 
  the BBGKY  approach  seems   to be preferred over  field-theoretic methods.
  At the moment, it is unclear if the results in this article may be generalized  to spatially inhomogeneous systems.

   }

\vspace{2mm }

\subsection{Organization of this article }
In Section  \ref{section main results},  we state the main result of this article, 
and in Section \ref{section preliminaries},  we introduce the preliminaries that are needed to set up the proof. 
In Sections  \ref{toolbox1} and \ref{section number estimates} 
we introduce and develop the machinery 
that we use  in  our analysis. 
In Sections \ref{section TFF} and \ref{section TFB},   we show how the  operators $Q $ and $B $, 
respectively, emerge from the many-body dynamics, giving rise to the \textit{leading order terms}. 
In Section \ref{section subleading}, we estimate \textit{subleading order terms} and  in Section \ref{section proof of theorem},  we prove our main result, Theorem \ref{theorem 1}. 
Finally, in Section \ref{section fixed volume} we analyze the fixed volume case.

\subsection{Notation}
The following notation is going to be used   throughout this article. 
\begin{itemize}[leftmargin=*]
	\item[-] $\Lambda^* \equiv (2 \pi \Z / L)^d$ denotes the dual lattice of $\Lambda$.
	\item[-] We write  $\int_{\Lambda^*} F(p) \d p \equiv |\Lambda|^{-1} \sum_{p\in \Lambda^*} F(p)$
	for any function $F : \Lambda^* \rightarrow \C$. 
	\item[-] $\delta_{p,q}$ denotes the standard Kronecker delta.
	\item[-] We denote
	 $\delta_\Z (x) = \delta_{x,0}$
	 and 
	 $\delta(p -q  ) = |\Lambda|\delta_{p,q}$. 
	\item[-] $\ell^p(\Lambda^*)$  denotes   the space of functions  
	with finite norm $\| f\|_{\ell^p} \equiv (	\int_{  \Lambda^{*} } |f(p)|^p \d p 	)^{1/p}$. 
	\item[-] $B(X)$ denotes the space of bounded linear operators acting on $X$. 
	\item[-] We denote $\widetilde f \equiv 1 -f $  for any function  $f: \Lambda^* \rightarrow \C$  . 
	\item[-] $\widehat G(k) \equiv (2\pi)^{-d/2} \int_\Lambda e^{- i k\cdot x } G(x) \d x $ 
	denotes the Fourier transform of $G : \Lambda \rightarrow \C.$
	\item[-] We say that a positive real number  $C>0$ is a \textit{constant}, if it is    independent of  the physical parameters 
	$N$, $|\Lambda|$, $\lambda$, $n$ and $t$. 
	\item[-]	Given two real-valued  quantities $A 	$
	and 
	$B
	$, 
	we say that $A \lesssim B $ if there exists a constant $C>0$ 
	such that 
	\begin{eqnarray}
		A \leq C  \,  B   \ . 
	\end{eqnarray}
	Additionally, we say that  $A \simeq B $ if  both $A \lesssim  B $ and $B \lesssim  A$ hold true. 
	
	\item[-] We shall frequently omit subscripts from Hilbert spaces 
	norms throughout proofs. 
	
	\item[-] {\blu We define the bracket
	   $\<t\>   \equiv  (1 + t^2)^{1/2}$.}   
\end{itemize}

\section{Main results}
\label{section main results}
The main result of this article is a rigorous interpretation of the  expansion 
\eqref{F}  that arises from the   many-body fermionic dynamics. 
First, we  present   the model  that we study and introduce particle-hole variables. 
Secondly, we present our first 
 main result in Theorem \ref{theorem 1}.
 It contains an estimate 
in a weighted $\ell^\infty$ norm for the error  of  the momentum distribution of     particles and holes. 
Thirdly,  in Theorem \ref{corollary},  we discuss the consequences of this estimate in a precise scaling. 
Here, we extract the leading order terms
and prove appropriate lower bounds.
Finally, in Theorem \ref{thm3},  we go back to the momentum distribution $F_t (p)$. 
  
 \subsection{The model}
 We consider the torus 
 $\Lambda  \equiv   (\R / L \Z)^d $ where $L>0$ is its   length.
 We study the dynamics on the 
 Fock space   
 \begin{equation}
 	\textstyle 
 	\F  \equiv  \C \oplus \bigoplus_{n \geq 1  } \F_n  \qquad {\t{where}} \qquad 
 	\F_n \equiv   \bigwedge_{i=1}^n  L^2 (\Lambda  )   , \ \forall n \geq 1     \     .
 \end{equation} 
 As usual, $\F$ is equipped  with  creation and annihilation operators  $a_p$ and $a_q^*$.
 In  momentum space 
 {\blu they} satisfy the Canonical Anticommutation Relations (CAR)
 \begin{equation}
 	\{   a_p , a_q^*  \}   
 	=
 	\delta (p-q)
 	\equiv
 	|\Lambda | \delta_{p,q}
 	\qquad
 	\t{and}
 	\qquad 
 	\{   a_p^* , a_q^*  \}    = \{	 a_p , a_q	\}  = 0 
 \end{equation}
 for all $p,q \in  \Lambda^*  \equiv     (2 \pi \Z / L)^d   $.
 Here,  $\delta_{p,q}$ stands for the Kronecker delta and 
 and $ \{ \cdot, \cdot  \}$ denotes   the anticommutator. 
 The Fock vacuum vector will be denoted by $\Omega  \in \F $, and
 for notational convenience we denote sums by 
 $\int_{\Lambda^*} \d p \equiv |\Lambda|^{-1} \sum_{p\in \Lambda^*}$. 
 
 \vspace{1mm }
 The Hamiltonian on Fock space  
 that we study     corresponds to the 
 operator 
 \begin{equation}
 	\label{hamiltonian}
 	\Ha
 	\equiv 
 	\frac{1}{2}
 	\int_{\Lambda^*}
 	|p|^2 a_p^* a_p  \d p 
 	+
 	\frac{\lambda}{2}
 	\int_{ ( \Lambda^*)^3 }
 	\hat{V}(k)
 	\ 
 	a^*_{p+k} a^*_{q-k} a_q a_p 
 	\  \d p  \,  \d q \,   \d k   \  , 
 \end{equation}
 where $\hat V(k) \equiv (2\pi)^{-3/2} \int_\Lambda V(x) \d x    $ is the Fourier transform of a two-body  potential
 $ V (x)  $. 
 We are interested in the 
 dynamics generated by $\Ha$ on  initial states
 that are appropriate
 perturbations of the Fermi ball. 
 These are described 
 in detail in Condition \ref{condition initial data}.

 \vspace{2mm}

   Let us describe   the time evolution      generated by    the Hamiltonian $\Ha$, defined in \eqref{hamiltonian}. 
  Denote by $B(\F)$   the $C^*$-algebra of bounded operators  on Fock space, 
  and consider  an initial state $\rho$.
  Then, the dynamics  of the system is given by 
  \begin{equation}
  	\label{dynamics}
  	\rho_t  (    \O) = \rho ( e^{  i t \Ha} \O e^{ -  i t \Ha }   )  \ , \qquad \O \in B(\F) \ . 
  \end{equation}
  Here and in the sequel, the time variable $t\in \R$  should be understood as being measured in microscopic units. 
  In particular, the momentum distribution per unit volume of the system is defined as
  \begin{equation}
\label{momentum F}
  	F_{t}(p) \equiv |\Lambda|^{-1} \rho_t(     a_p^*a_p) \ , 
  \end{equation}
  for all $t \in \R$ and $ p \in \Lambda^*$.

  \vspace{2mm}

  The initial states that we study are  a  {\blu special class of    perturbations} of the 
Fermi ball. 
  That is,  $\rho$  corresponds to   a translation-invariant state, 
which is a 
  suitable perturbation 
of  the  pure state 
  \begin{equation}
  	\Psi_F =  
  	\prod_{ p \in \mathcal B }
  	a^*_{p} \Omega   
  	\qquad \t{where} \qquad 
  	\label{fermi sphere}
  	\mathcal{B}
  	= 
  	\{    p  \in   \Lambda^*    :    | p | \leq    p_F 		 \}   
  \end{equation} 
{\blu in the sense of Condition \ref{condition initial data}}. 
The  state $\Psi_F$ 
corresponds to the    Slater determinant
of plane waves 
$e_p (x) = |\Lambda|^{-1/2 } e^{ i  p \cdot x }$ 
with momenta in $\mathcal B$, 
minimizing the kinetic energy of the system in compliance with 
Pauli's exclusion  principle.  

\vspace{2mm}

  We refer  interchangeably to $\Psi_F$ and $\mathcal{B}$    as the \textit{Fermi ball}  
  defined in terms of the Fermi momentum $p_F$.  
  For simplicity, we assume here that $p_F$ is given, and define the number of particles  to be  $N \equiv | \B|$.  
  The relationship between $p_F$ and $N$ is then given by the formula 
  \begin{equation}
  	\label{fermi momentum}
  	p_F 
  	\, = \,  C 
  	\big( 
  	N / |\Lambda|
  	\big)^{1/d} \ , 
  \end{equation}
{\blu 
  where $C = C_d  + o( 1 )$ as $p_F \rightarrow \infty$, 
  and $C_d>0$ is a constant depending only  on  the  dimension.}  In particular, it is important to note that  the high-density regime that we study
  corresponds to large values of the Fermi momentum $p_F \gg1$.

\subsection{Assumptions and definitions}
Let us  state  the precise mathematical conditions under which our main theorems are  formulated.
We first discuss the conditions on the interaction potentials, and then the conditions on the initial data.

\vspace{2mm}

 \textit{Potentials.}
Throughout this work, 
 we shall  consider      real-valued functions  
 $V : \Lambda \rightarrow \R$
 that satisfy     Condition \ref{cond potentials} below. 
 In particular, under these conditions, the many-body Hamiltonian 
 $\Ha$ defined in   \eqref{hamiltonian} is self-adjoint, and
 its  dynamics  is well-defined.  
 
  \begin{condition} 
 	\label{cond potentials}
 	$V : \Lambda \rightarrow \R$
 	is a real-valued function
 	whose 
 	Fourier transform  $\hat V(k)$
 	satisfies  the following   conditions. 
 	
 	\begin{enumerate}
 		\item 	It has compact support in a ball of radius $r>0$ . 
 		
 		\item 
 		$\hat V  (  - k ) = \hat V  (k)  $ for all $ k \in  \Lambda^*$. Thus, $\hat V$ is real-valued . 
 		
 		\item 
 		$\hat V (0)    = 0 $ .   
 		
 		\item 
 		$V$  is chosen relative to the box $\Lambda$
 		so    that $     \sup_{	 |\Lambda | >0	} \|	 \hat V	\|_{\ell^1 (\Lambda^*)} < \infty $.
 		
 	\end{enumerate}
 \end{condition}

  \begin{remark}
The radius $r>0$ will be fundamental in our analysis. 
From a physical point of view, it determines a momentum scale 
that allows for particle interactions. 
In particular, it will determine an $\mathcal{ O }(1)$ neighborhood around the Fermi surface that separates  excited particles and holes 
into either bosonizable or non-bosonizable. 
  \end{remark}

 \vspace{2mm}

 \textit{States}. 
The initial states 
that we consider are regarded as   perturbations of the Fermi ball
 $\Psi_F$. 
It will   be   convenient to use   
the following notations 
for    the momentum distribution 
of $\Psi_F$, as well as its complement 
\begin{equation}
	\chi(p)
	 = 
	\1 ( |p |\leq p_F )
	\qquad 
\t{ and }
	\qquad 
	\chi^\perp   =  1 - \chi    \  , 
\end{equation}
where $\1$ stands for a  characteristic function.  In particular,  it is now a  standard calculation  using the 
CAR     to show that   
\begin{equation}
	\<  \Psi_{ F	}, a_p^*a_q \Psi_{F	}  \>  =
	\delta(p -q ) \chi(p)   
\end{equation}
for all $p,q \in \Lambda^*$. 

\vspace{2mm}

We analyze the dynamics of fermions
relative to the Fermi ball. 
In this regard, we think of 
excited fermions with momentum $|p| >  p_F$
as \textit{particles}, 
and the anti-particles
they leave behind inside of the Fermi ball  as
\textit{holes}, with momentum $|h| \leq p_F$
This change of variables is implemented by a 
\textit{particle-hole transformation}. 
It  corresponds to the unitary transformation on Fock space 
\begin{equation}
	\calR : \F \longrightarrow \F 
\end{equation}
that 
can be explicitly defined 
through its action on creation  and annihilation operators  as  follows 
\begin{align}
\label{particlehole transformation}
\calR^* a_p^* \calR 
 \ = \ 
 \begin{cases}
 \ 	a_p^* 		\qquad  |p| > p_F  \\
 \  a_p  	\qquad  |p| \leq p_F 
 \end{cases} 
\qquad
\t{and}
\qquad 
\calR \Omega \equiv  \Psi_F \ . 
\end{align}
 %
In particular, we  are     interested in describing 
the time evolution of the momentum distribution of states relative to the Fermi ball. 
In particle-hole space, the dynamics of     these states is   described  by the  
\textit{particle-hole Hamiltonian }
\begin{equation}
\label{conjugated hamiltonian}
 \h  \equiv  	\calR^* \Ha \calR    \    . 
\end{equation}	
A more explicit representation of the Hamiltonian $\h$ will be given in the next section.

\vspace{1mm}

Thus, we   study the evolution in time of the 
corresponding momentum distribution, defined as follows. 
 Recall   that a state is a positive linear functional on $B(\F)$, with
$\nu(\1)=1$.

\begin{definition}
\label{definition momentum distribution}
	Given an initial state $\nu$, we 
	define  
	the momentum distribution per unit volume of particles and holes 
	as 
	\begin{align}
\label{momentum distribution definition}
		f_t ( p  ) 
		 \equiv 
		 |\Lambda|^{ -1 }
		 	\nu(  e^{\i t \h }   a_p^*a_p e^{ - \i t \h } ) 	  \ , 
	\end{align}
 	for  $ (t,p) \in \R \times \Lambda^*$. 
\end{definition}

\begin{remark}
	Let $F_t(p) $ be the momentum distribution  \eqref{momentum F}
of a state $\rho$, 
	evolving {\blu according} to the Hamiltonian $\Ha$, in the original many-body problem. 
	Then, a straightforward calculation using the CAR shows that 
\begin{equation}
	F_t ( p) 
	= 
	\begin{cases}
		f_t (p) 	  \quad   &  |p| > p_F \\ 
		1 - f_t (p)     \quad   & | p | \leq p_F \  , 
	\end{cases}
\end{equation}
where $f_t(p)$ is given as in Definition \ref{definition momentum distribution}, 
with respect to the unitarily transformed initial state
\begin{equation}
\label{nu and rho}
	\nu (\mathcal O ) \equiv  
	\rho ( \calR \mathcal O \calR^*  ) \ , \qquad \O  \in B (\F) \ . 
\end{equation}
Note that, if $\rho$ and $\nu$ are determined by pure states
$\Psi$ and $ \psi$, respectively, then
\eqref{nu and rho} 
is equivalent to 
$\Psi = \calR \psi$. 
In particular, $\Psi = \Psi_F$ if and only if $\psi = \Omega$. 
\end{remark}
 
%

We find it convenient to introduce conditions on the initial data with respect to the particle-hole variables. 
In order to state them, 
recall  that  the  interaction potential $\hat V$
has support of size $r>0$.
We introduce the following neighborhood around the surface of the Fermi ball  
\begin{eqnarray}
	\label{S definition}
	\S   \equiv 
	\{   p\in \Lambda^* : p_F - 3 r \leq | p | \leq p_F +3 r  \}    
\end{eqnarray}
which (under a slight abuse of notation)    we shall refer to  as    the \textit{Fermi surface}. 
The pre-factor $3$ is included for technical reasons.  

\vspace{1mm}

The conditions for the initial data in the particle-hole representation are given as follows.

\vspace{1mm }

\begin{condition}  
	\label{condition initial data}
The initial state   $\nu$   {\blu satisfies} the following conditions. 
	\begin{enumerate}[label = (C\arabic*) , leftmargin=*]
\item 
There {\blu exist} sequences $ 0 \leq  \nu_j  \leq 1 $ and $\Psi_j \in \F$
such that $\nu(\O) = \sum_{j=1}^\infty \nu_j \< \Psi_j , \O  \Psi_j \>_\F $ 
with $\sum \nu_j  =1 $ and $\| 	 \Psi_j\|_\F =1 $ . 

		\item 
		$\nu$ is number-conserving and quasi-free:
		for all $k,k' \in \mathbb N $, $p_1, \ldots, p_k \in \Lambda^*$ and $q_1, \ldots, q_{k'} \in \Lambda^*$ there holds
		\begin{equation}
\nu 
\Big(
a_{p_1}^* \cdots a_{p_k}^* a_{q_{k ' } }   \cdots a_{q_1}  
\Big)
			\, 		= \, 
			\delta_{ k , k' }
			\det 
			\big[
			\nu(a_{p_i}^* a_{q_j})
			\big]_{1 \leq i,j \leq k } \ . 
			\end{equation}
		
		\item 
		$\nu$ is translation-invariant: 
		for all $p,q \in \Lambda^*$,  there holds 
		$
		\nu (a_p^* a_q) = \delta(p-q) \nu (a_p^* a_p ) \ . 
		$
		
		\item 
		
		$\nu$ has zero charge:
		$
		\int_{ \mathcal{B} } \nu (a_p^*a_p) \d p 
		= 
		\int_{ \mathcal{B}^c } \nu (a_p^*a_p) \d p  \ .
		$
		
		\item 
		
			There exists a constant $C \geq 0 $ such that
		$\int_\S \nu (a_p^*a_p)\d p \leq C (\lambda |\Lambda| p_F^{d-1} )^{2}$. 
	\end{enumerate}
\end{condition}

\vspace{1mm}

\begin{remark}   
Analogously  as in   \cite{ESY2004}, 
we could have considered states that are only \textit{restricted quasi-free} up to a certain   degree, 
and our main results  would remain unchanged.  
{\blu
	Since all the examples that we consider are indeed quasi-free, we choose to require condition (C2) instead of the more general restricted quasi-free assumption. 
}

We note that translation invariance is  a natural assumption that greatly simplifies the analysis, while at the same time being physically relevant. 
The same comments apply to the requirement  of  states  having  zero charge.

{ \color{black} Finally, let us comment on  Condition (C5). We refer to it as the \textit{depletion}  of the Fermi surface, in the sense that it contains only a small number  of  particle-hole pairs. 
	Let us observe that if the initial datum was equal to the Fermi ball, then $F_0  = \chi$
or, equivalently,  $f_0 =0$. 
Thus, $f_0$ quantifies the deviations of the initial datum from the Fermi ball.
In particular, (C5) states that  the   perturbations  of $\chi$    around its    surface are negligible, and so the distribution of   excited particles and holes described by $f_0$ is mostly supported away from the surface.  
	Physically,  such states represent a special class of excited states,  and our main interest lies in the dynamics that is observed at later times. 
	Let us note that
these excited states do not occur spontaneously in the system, and can only be achieved by \textit{external} means;
 the reader may, for instance, think of a cold gas of electrons in a semiconductor that has been carefully excited by an external source of light (photoexcitation). 
In this regard, we consider them to be external perturbations of a system originally at (or close to) equilibrium. 
 }
 \end{remark}

\vspace{2mm}

 \textbf{Example.}
We may  construct a pure  state $\nu$ that {\blu satisfies} Condition \ref{condition initial data}
as a Slater determinant. 
Namely, given $ n \in \mathbb N$, 
let $h_1, \ldots, h_n \in \B \backslash \S $
and
$p_1 , \ldots, p_n \in \B^c \backslash \S $. 
Then, we set   
\begin{equation}
	\nu(\mathcal O)
	\equiv 
	\< \Psi_0  , \mathcal O  \Psi_0 \>_\F
	\qquad
	\t{where}
	\qquad
	\Psi_0 \equiv
	a^*_{h_1} \cdots a^*_{h_n} a^*_{p_1} \cdots a^*_{p_n }
 \Omega  \ . 
\end{equation} 
Since Slater determinants are always number-conserving and  quasi-free, this {\blu satisfies} (C2). 
One may verify that translation invariance in  (C3)
is satisfied by direct computation of the two-point function
\begin{equation}
	\nu (a_p^*a_q) 
	=
	\delta(p-q) 
	\Big(	\delta(p - h_1) + \ldots + \delta(p-h_n) + \delta(p - p_1) +
	\ldots   +  \delta(p-p_n) 	\Big)  \ . 
\end{equation}
The state $\nu$ has zero charge in (C4) 
because 
we have chosen an equal number of $h_i's$ and $p_i's$ in 
$\B$ and $\B^c$, respectively. 
Finally,  (C5) follows from  $\nu(a_p^*a_p)=0$
for all $  p  \in \S. $

 \vspace{2mm}


\subsection{Statement of the main theorem}

Our main result identifies the time evolution of the  momentum distribution $f_t(p)$
in terms of two non-linear operators that act on functions on $\Lambda^*$.  
In order to define these,   we introduce 
the following three  notations. 

\vspace{1mm}
\noindent  \textit{Notations}
 
 \begin{enumerate}[leftmargin=*  , label = (\roman*)]
 	\item 
 	We denote 
 	by $E_p$ the \textit{dispersion relation} of particles and holes.
 	It is defined    for $ p \in \Lambda^*$
 	\begin{equation}
 		\label{dispersion relation}
 		E_p   \,  \equiv  \, 
 		-  
 		\chi(p)
 		\Big( \ 
 		\frac{p^2}{2} 
 		\, +  \, 
 		\frac{\lambda}{2}
 		\, 
 		(\hat V * \chi^\perp)(p) \ 
 		\Big)
 		+ 
 		\chi^\perp (p)
 		\Big( \ 
 		\frac{p^2 }{2}
 		\, 	- \, 
 		\frac{		\lambda }{2}  
 		\, 
 		( \hat V * \chi)(p) 
 		\ 		\Big) \ . 
 	\end{equation}
 	
 	\item 
 	For  $(t,E)\in \R^2$ 
 	we denote by $\delta_t(E) $
 	the following 
 	\textit{mollified Delta function}
 	\begin{eqnarray}
 		\label{delta function}
 		\delta_t (E) =   t \delta_{1}(tE)  
 		\qquad
 		\t{ where }
 		\qquad 
 		\delta_1(E) 
 		= 
 		\frac{2}{\pi }\frac{\sin^2\big( 	\frac{E}{2}\big)}{E^2} \ . 
 	\end{eqnarray}

 	\item 
 	Third,  we introduce the following convenient notation for products 
 	of $\chi$. Namely, for any $ k \in \mathbb N $ 
 	we  will  write 
 	\begin{equation}
 		\chi(p_1 , \ldots,  p_k)  \equiv  \chi(p_1)   \cdots   \chi (p_k)
 	\end{equation}
 	for all  $p_1, \cdots, p_k \in \Lambda^*$ 
 	and similarly for $\chi^\perp. $

 \end{enumerate}

 \vspace{2mm}

The following operator describes describes Boltzmann-type interactions
between particles/particles, particles/holes and holes/holes. 
Here and in the sequel  we denote 
$$\widetilde f \equiv 1 -f $$ 
for any function $f : \Lambda^* \rightarrow \R $.

\vspace{3mm}


%

\begin{definition}
\label{definition B}
For all  $t \in \R $
we define    in terms  of particle and hole interactions 
\begin{align}
	\label{mollified B}
	B_t   & \equiv  B_t^{(H)}  +  B_t^{(P)}
	 : \ell^1(\Lambda^*) \rightarrow \ell^1 (\Lambda^* )
\end{align}
where 
  $B_t^{(H)} : \ell^1 (\Lambda^* ) \rightarrow \ell^1  (\Lambda^* ) $ 
and 
  $B_t^{(P)} : \ell^1 (\Lambda^* ) \rightarrow \ell^1  (\Lambda^* ) $ 
  are defined as follows 
 \begin{align}
 	B_t^{(H)}  [f](h)   
\nonumber 
&   	 \equiv 
2\pi 
 	 \int_{  \Lambda^{*} }
 	 |\hat V(k)|^2 
 	 \Big(
 	 \alpha^{H}_t (h - k ,k)    f(h - k) \widetilde f(h) 
 	  - 
 	   	 \alpha^{H}_t (h,k)    f(h) \widetilde f(h+k ) 
 	 \Big) \d  h   \ ,  \\ 
\nonumber 
	B_t^{(P)}  [f](p) 
 & 	\equiv 
2\pi 
	\int_{  \Lambda^{*} } 
	|\hat V(k)|^2 
	\Big(
	\alpha^{P}_t ( p + k ,k)    f( p + k) \widetilde f(p) 
	- 
	\alpha^{P}_t (p,k)    f(p) \widetilde f( p  - k ) 
	\Big) \d  p    \   , 
\end{align} 
for $f \in \ell^1$ and $p,h\in \Lambda^*$. 
Here,  the coefficients  $\alpha^H_t$   and  $\alpha^P_t$ are defined as  
 \begin{align}
 	\label{alpha H}
 	\alpha^H_t 
 	( h, k)
 	 &   \equiv 
 	 \chi(h)\chi(h+k)
 	\int_{  \Lambda^* }
 	\chi( r )
 	\chi^\perp ( r + k )
 	\delta_t 
 	\big[   
 	E_h - E_{ h + k  }  -  E_r     -  E_{r+k }    
 	\big]   
 	\d r   \ ,   \\ 
 		\label{alpha P}
 	\alpha^P_t ( p, k)
 & 	\equiv 
 	\chi^\perp (p)\chi^\perp ( p -k)
 	\int_{  \Lambda^* }
 	\chi( r )
 	\chi^\perp ( r + k )
 	\delta_t 
 	\big[   
 	E_p - E_{ p - k  }  -  E_r     -  E_{r+k }    
 	\big]   
 	\d r \      ,
 \end{align}
for all $p,h, k  \in \Lambda^*.$
\end{definition}

\begin{definition}
	\label{definition Q1}
	For $f   \in \ell^1   (\Lambda^* )$
	and
	$t \in \R $
	we 
	define  
	\begin{align}
		\label{mollified Q}
		Q_t [f] (p)
		& \equiv 
		\pi 
		\int_{  \Lambda^{*4 } }
		\d \vec p 
		\,
		\sigma (\vec p  )
		\,
		\Big[ 
		\delta(p - p_1)
		+
		\delta(p - p_2)
		-
		\delta(p - p_3)
		-											  
		\delta(p - p_4)
		\Big]    \\
		& \times \delta_t [     E_{ p_1 }   +   E_{ p_2 } -   E_{ p_3 } -   E_{ p_4  }] 
		\Big(
		\nonumber 
		f (p_3 ) f (p_4)  \widetilde f (p_1 ) \widetilde f (p_2 )
		-
		f (p_1 ) f (p_2)  \widetilde f (p_3 ) \widetilde f (p_4 )
		\Big)  
		\ . 
	\end{align}
	The   coefficient function  $\sigma   : ( \Lambda^*)^4 \rightarrow \R$  is {\blu defined} as  
	\begin{equation}
		\sigma = \sigma_{HH} + \sigma_{P P} + \sigma_{ H P } + \sigma_{P H } 
	\end{equation}
	where the coefficient functions are defined for $\vec p = (p_1 , p_2, p_3, p_4) \in (\Lambda^*)^4 $
	as follows 
	\begin{align}
		\sigma_{HH} (\vec p )
		& 	= 
		\chi(p_1, p_2 , p_3 , p_4)
		\delta(p_1 + p_2 - p_3 - p_4 )
		|   \hat V (p_1 - p_4 )  - \hat V (p_1 - p_3 ) |^2   \\ 
		\sigma_{PP} (\vec p )
		& 	= 
		\chi^\perp    (p_1, p_2 , p_3 , p_4)
		\delta(p_1 + p_2 - p_3 - p_4 )
		|   \hat V (p_1 - p_4 )  - \hat V (p_1 - p_3 ) |^2 	\\ 
		\sigma_{H P} (\vec p )
		& 	= 
		2
		\chi(p_1, p_3 )		\chi^\perp (p_2 , p_4 ) 
		\delta(p_1  -  p_2 - p_3 +  p_4 )
		|   \hat V (p_1 - p_3 ) |^2 	  \\  
		\sigma_{P H } (\vec p ) 				 & 	= 
		2
		\chi^\perp (p_1, p_3 )		\chi (p_2 , p_4 ) 
		\delta(p_1  -  p_2 - p_3 +  p_4 )
		|   \hat V (p_1 - p_3 ) |^2 	 \ . 
	\end{align}
	%
\end{definition}

   \vspace{3mm}

\textit{Definition of the norm}. 
We will analyze   error terms with respect to a weighted norm which we now introduce. 
Indeed, for  $m> 0 $ we introduce the following weight 
\begin{equation}
	w_m(p)   \equiv 
	\begin{cases}
\< p\>^m \ ,  &  p \in \S  \\ 
1  \ ,   & p \in  \Lambda^*  \backslash \S   
		\end{cases} \    . 
\end{equation}
where   $\<  p\> \equiv  (1+ p^2)^{1/2}$ denotes  the standard Japanese bracket. 
   We define  the Banach space $\ell_m^1 \equiv \ell^1_m (\Lambda^*) $ 
of functions $\vp : \Lambda^* \rightarrow \C$
for which the norm 
\begin{eqnarray}
	\label{def:metric}
	\|	 \vp 	\|_{\ell^1_m}  \equiv  \int_{\Lambda^*}	  |\vp (p) | w_m(p)  \d p 
\end{eqnarray}
is finite. 
We will measure distances in the norm associated with the dual space of $\ell^1_m(\Lambda^*)$. 
Namely,  we regard  $ \ell_m^{1*} \equiv  [ \ell^{1}_m (\Lambda^*) ]^*$ as the Banach 
space of functions $f : \Lambda^* \rightarrow \C$
endowed with the norm 
\begin{equation}
\label{f sup}
	\|	f 	\|_{ \ell^{1*}_m }  \equiv 
	\sup_{ p \in \Lambda^*}
	 w_m(p)^{ -1 }
	 |f(p)|  
	  = 
	  \sup_{ \vp  \in \ell^1_m}
 \frac{	 |    \< \vp , f  \> | 	}{\|	\vp\|_{\ell_m^1}	}
\end{equation}	
where we denote by $\<\vp ,f \> \equiv  \int_{  \Lambda^{*} }  \overline{\vp(p) }  f(p) \d p $ 
the coupling between $\ell_m^1$
and $\ell^{1*}_m$.

\begin{remark}
\label{remark norms}
	As vector spaces, 
	$\ell_m^1 (\Lambda^*) = \ell^1(\Lambda^*)$
	and 
	$\ell_m^{1* } (\Lambda^* ) = \ell^\infty(\Lambda^*) $ for all $m> 0$. 
	However, we   {\blu we equip}  these spaces with the norms 
	$\|		\cdot \|_{\ell_m^1 }$
	and 
	$	\|	 \cdot 	\|_{\ell_m^{1* }}	$
since the weight $w_m(p) $	   records the decay near the Fermi surface $\S$. 
For completeness, we record here the following inequality 
\begin{equation}
\label{f norms}
	\|	f	\|_{\ell^\infty (\Lambda^*\backslash \S )} 
	\leq 
	\|	 f 	 \|_{	 \ell_m^{1*}		}
	\leq 
	\|	f	\|_{\ell^\infty(\Lambda^*)} \ , 
	\qquad \forall f \in \ell^\infty(\Lambda^*)  
\end{equation}
which we shall make use of   when studying the fixed volume case in the next subsection. 
 \end{remark}

\begin{remark}
	If $f \in \ell_m^{1* } $ is real-valued,  one may  restrict the supremum over  $\vp \in \ell_m^1$ 
	on the right-hand side of \eqref{f sup} to be real-valued as well. 
{	\blu This observation will be useful when computing double commutators,   as it makes 
	various observables of interest self-adjoint. 
	In particular, it will simplify certain expressions that would then need to be decomposed into real and imaginary parts otherwise. 
}
\end{remark}

\begin{remark}
	{\blu 
		Let us further comment on this norm. 
		First, we note that it is an $\ell^\infty$ based norm. 
		Such  a choice of base space is, in particular, quite useful when working with \textit{fermionic} field variables. 
		Indeed, one immediately obtains uniform $\ell^\infty$ bounds for  observables of interest thanks to the boundedness of the operators $(a_p)_{ p \in \Lambda_*}$; this is a consequence of the Pauli Exclusion Principle. 
		Furthermore, it is natural to decompose this norm into two parts that are weighted differently: either \textit{inside} or \textit{outside} the Fermi surface. 
		In this article, we are mostly interested in the dynamical properties of electrons and holes outside the Fermi surface, which do not exhibit bosonization phenomena. 
		For particles with such momenta, the weight $w_m(p) =1$ indicated that the norm in consideration is only the $\ell^\infty$ norm. 
		On the other hand, the weight $ w_m(p)^{ -1 } = \< p\>^{-m}  \sim p_F^{-m}$ on the surface indicates that the norm is unable to resolve   further  errors in this region. 
		In other words, the Fermi surface becomes  small in this norm. 
				This is compatible with the choice of initial data under consideration.
Indeed, it will be a consequence of our analysis 
 that the Fermi surface remains depleted of particle-hole pairs  throughout the analysis.  
 In particular, we will use the norm $\| f \|_{\ell_m^{1*}}$ to quantify the size of various error terms
involving physical processes with a non-zero number of initial particle-hole pairs. 
} 
\end{remark}

 \vspace{2mm}

Finally, let us now introduce 
two important new parameters. 
Namely,  
these are the numbers 
$n$ and $R$  defined as 
\begin{equation}
	\label{R definition} 
	n 
	\equiv 
	|\Lambda | \int_{\Lambda^* } f_0 (p) \d p 
	\qquad
	\t{and} 
	\qquad 
	R \equiv|\Lambda| p_F^{d-1 }
	\simeq 
	 |\S| \ . 
\end{equation}
Here, $n$ corresponds  to   
the initial number of particles and holes in the system, 
and it  measures the size of the perturbation of the Fermi ball. 
On the other hand,  $R$ corresponds
to the maximal number  of bosonized particle-hole pairs
that can populate  the Fermi surface $\S$, defined in \eqref{S definition}.  
Our main result is now  stated as follows.

\begin{theorem}\label{theorem 1}
Let  $f_t(p)$ be the momentum distribution of particles and holes, 
as given in Definition  \ref{definition momentum distribution}. 
We assume that  
Conditions \ref{cond potentials} 
and 
    \ref{condition initial data} are satisfied, 
    as well as the   bounds   $1  \leq   n \leq C   R^{1/2}$. 
Then, for all $m >  0 $ there exists $C = C(m,d)>0$ such that 
for all  $t  \geq 0 $
there holds 
\begin{align}
\label{f t expansion}
	f_t 
	 \,   =  \, 
	f_0 
 \, 	+  \, 
	\lambda^2 t  \, B_t    	[  f_0 ]   
\, 	  +  \, 
	\lambda^2 t  \, Q_t[f_0  ]
\,	 +	\, 
	 \lambda^2 t \, \t{Rem}_1 (t)  \ , 
\end{align}
where $\t{Rem}_1 (t) $ is a remainder term that satisfies 
\begin{equation}
\label{thm 1 estimate}
	\|	 \t{Rem}_1 (t) 	\|_{\ell_m^{1*}} 
\ 	\leq  \ C  
\, 	t  \,  e^{  C \lambda R \<t \>} \, 
	\bigg(
	\lambda R^2 
	(R^{\frac{1}{2}}  + n^2 )   \< t \>
	+ 
	\frac{R^3}{p_F^m} 
	\bigg) \ . 
\end{equation}
\end{theorem}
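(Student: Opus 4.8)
The plan is to extract the expansion \eqref{f t expansion} via a Duhamel iteration of the Heisenberg evolution of $a_p^* a_p$ under $\h = \calR^* \Ha \calR$, carried out to second order in $\lambda$, with all higher-order terms and all non-resonant second-order contributions absorbed into $\t{Rem}_1(t)$. Concretely, writing $\h = \mathbb{D} + \lambda \mathbb{I}$ where $\mathbb{D}$ collects the quadratic (``diagonal'') part containing the renormalized dispersion $E_p$ and $\mathbb{I}$ is the interaction, I expand
\begin{equation}
e^{\i t \h} a_p^* a_p e^{-\i t \h} = e^{\i t \mathbb{D}} a_p^* a_p e^{-\i t \mathbb{D}} + \i \lambda \int_0^t e^{\i s \mathbb{D}}[\mathbb{I}_{t-s}, a_p^* a_p] e^{-\i s \mathbb{D}} \, \d s + \cdots
\end{equation}
in the interaction picture and iterate once more, so that the leading nontrivial term is the double commutator $-\lambda^2 \int_0^t\!\int_0^{s}\cdots$. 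Taking the expectation in $\nu$ and using that $\nu$ is quasi-free, number-conserving and translation invariant (Condition \ref{condition initial data}), Wick's theorem reduces every resulting $n$-point function to products of the two-point function $\nu(a_q^* a_q) = |\Lambda| f_0(q)$ (times $\delta$'s), which is precisely how the cubic combinations $f(p_3)f(p_4)\widetilde f(p_1)\widetilde f(p_2)$ in $Q_t$ and the quadratic combinations in $B_t$ appear.

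\textbf{Organizing the leading terms.} The double-commutator term, after Wick reduction, splits according to which creation/annihilation operators in $\mathbb{I}$ are ``paired to the outside'' versus contracted internally against the vacuum sector of particle-hole space; this is where the decomposition of $\sigma$ into $\sigma_{HH} + \sigma_{PP} + \sigma_{HP} + \sigma_{PH}$ comes from, corresponding to the four sign patterns of $\chi$ versus $\chi^\perp$ on the four momenta, and the $B_t$ terms come from the contractions that leave one internal momentum free and summed inside the Fermi surface $\S$ (yielding the $\alpha_t^H, \alpha_t^P$ integrals over $\chi(r)\chi^\perp(r+k)$). The time integrals $\int_0^t\!\int_0^s e^{\i(s-s')(E_{p_1}+E_{p_2}-E_{p_3}-E_{p_4})}\,\d s'\,\d s$ produce, after one integration, exactly $t \,\delta_t[E_{p_1}+E_{p_2}-E_{p_3}-E_{p_4}]$ with $\delta_t$ as in \eqref{delta function} (this is the standard computation $\int_0^t (t-\tau)\cos(E\tau)\,\d\tau = \frac{t^2}{2}\cdot\frac{2}{\pi}\frac{\sin^2(tE/2)}{(tE/2)^2}\cdot\frac{\pi}{2}$ up to the normalization baked into $\delta_1$), which is why both $Q_t$ and $B_t$ carry the overall $\lambda^2 t$ and a mollified delta rather than a sharp energy-conservation constraint. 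I would invoke the number-operator estimates and the structural lemmas developed in Sections \ref{toolbox1}--\ref{section number estimates} (assumed available) to justify that each such term is finite and to bound the operator norms of the pieces of $\mathbb{I}$ relative to powers of the number operator.

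\textbf{Error control.} The remainder $\t{Rem}_1(t)$ consists of (i) the third- and higher-order Duhamel terms, and (ii) the genuinely oscillatory leftover in the second-order term (the difference between the full double time-integral and its resonant $\delta_t$ approximation, plus commutator pieces where an external leg hits the interaction twice). For (i), each extra factor of $\lambda\mathbb{I}$ costs a factor $\lambda R \langle t\rangle$ once one bounds $\mathbb{I}$ against the number operator on $\S$ and uses Condition (C5) / Proposition \ref{prop N estimates 2} to control the number of particles near the Fermi surface --- this is the origin of the $e^{C\lambda R\langle t\rangle}$ prefactor, which is really a resummation (Grönwall / geometric series) of the Dyson series. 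The polynomial factor $\lambda R^2(R^{1/2} + n^2)\langle t\rangle$ tracks the size of the first genuinely-discarded term: one power of $\lambda$ from the extra interaction, powers of $R$ from phase-space volume near $\S$ and the $|\Lambda|$'s in the two-point function, $n^2$ from two external contractions against the perturbation away from $\S$, and $R^{1/2}$ from the vacuum/depleted channel. The term $R^3/p_F^m$ in \eqref{thm 1 estimate} is precisely the price of measuring in the $\ell_m^{1*}$ norm: the weight $w_m(p) = \langle p\rangle^m$ on $\S$ means that contributions localized near the Fermi surface, which are $O(1)$ in $\ell^\infty$ but carry $|p| \simeq p_F$, get divided by $p_F^m$, so one needs the crude bound $\|\cdot\|_{\ell^\infty} \lesssim R^3$ on those pieces and then gains $p_F^{-m}$. \emph{The main obstacle} I anticipate is (ii): cleanly separating, inside the second-order term, the resonant $\delta_t$ part that builds $Q_t + B_t$ from the non-resonant remainder, and showing the latter is controlled by $\lambda R^2(R^{1/2}+n^2)\langle t\rangle^2$ uniformly in $|\Lambda|$ --- this requires the stationary-phase-type estimates on sums of $e^{\i s E_p}$ over the lattice that dominate Sections \ref{section TFF}--\ref{section TFB}, together with careful bookkeeping of which Wick contractions land on the $O(n)$ perturbation versus the $O(R)$ Fermi surface, since a naive count would lose a power of $R$ and break the stated bound.
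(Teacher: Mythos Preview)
Your broad outline --- second-order Duhamel plus Wick reduction via quasi-freeness --- is correct, but several of the mechanisms you identify are not the ones that actually drive the proof, and following your plan as stated would hit real obstacles. First, the organizing algebraic structure is not a generic classification of Wick contractions but the decomposition of the interaction $\mathcal V = V_F + V_{FB} + V_B$ into purely-fermionic ($D^*D$), mixed ($D^*b$), and purely-bosonic ($b^*b$, $bb$) pieces; the double commutator then splits into nine terms $T_{\alpha,\beta}$, of which exactly two are leading: $T_{F,F}$ produces $Q_t$ and $T_{FB,FB}$ produces $B_t$. In particular $B_t$ does not come from ``contractions leaving one internal momentum free'' but from the scalar part $G_k(t-s) = \langle \Omega,[b_k(t),b_k^*(s)]\Omega\rangle$ of the boson commutator inside $T_{FB,FB}$. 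The other seven $T_{\alpha,\beta}$ are estimated directly and are genuinely subleading.

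Second, the exponential $e^{C\lambda R\langle t\rangle}$ is \emph{not} a resummed Dyson series of third- and higher-order terms. The expansion is truncated: only one extra Duhamel step is taken on the two leading $T_{\alpha,\beta}$ in order to evaluate the inner commutator on the \emph{initial} state $\nu$ (where quasi-freeness is available) rather than on $\nu_{t_2}$. All remainders are then bounded in terms of $\nu_\tau(\N^\ell)$ and $\nu_\tau(\N_\S)$ for $\tau\le t$, and the exponential comes from the Gr\"onwall bounds on these expectations (Propositions \ref{prop N estimates 1}--\ref{prop N estimates 2}), not from the observable expansion. Finally, your anticipated main obstacle is a red herring: no stationary-phase or oscillatory-sum estimates are used anywhere. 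The ``non-resonant remainder'' you worry about is handled entirely by operator-norm bounds on $b$- and $D$-operators relative to $\N$ and $\N_\S$ (the Type-I through Type-IV estimates of Section~\ref{toolbox1}); the $R^3/p_F^m$ term appears because commutators like $[b_\ell,D_k^*(\vp)]$ are supported on $\S$, and testing against $\vp\in\ell_m^1$ gains $\int_\S|\vp|\lesssim p_F^{-m}\|\vp\|_{\ell_m^1}$ directly, with crude $R$-factors from $\|b\|\lesssim R$.
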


 
 \begin{remark}
 	At time zero, there holds $Q_0= B_0=0$. 
 	Hence, in order to prove that  the collision operators   dominate the remainder terms, we consider longer time scales in the next subsection, 
 	for three-dimensional boxes.
We show that for appropriately chosen initial data, 
and for $t \gg 1: $
\begin{equation}
c_\Lambda 
\, t \,  N^{1/3}
 \ \leq  \ 
	\|	 B_t [f_0]	\|_{\ell_m^{1*}}
 \  \leq  \ 
	  C_\Lambda \, t  \, N^{1/3}
	\qquad 
	\t{and}
	\qquad 
	 \|	 Q_t[f_0]	\|_{\ell_m^{1*}} \leq t n  \ . 
\end{equation} 
We will compare these sizes with the right-hand side of 
\eqref{thm 1 estimate}, for  $n \ll N^{1/6}$. 
 \end{remark}
 
 \begin{remark}
 	In view of Remark \ref{remark norms}, the above result  describes the dynamics  
 	of particles and holes away from the Fermi surface, i.e. on $\Lambda^*/ \S$. 
This is consistent with the fact \cite{Benedikter2022} that bosonization occurs inside    the Fermi surface. 
 \end{remark}

 \subsection{A particular scaling}
 Let us discuss in this section 
 a scaling regime for which   Theorem \ref{theorem 1} turns
 into an effective approximation. 
 
 \vspace{1mm }
 Namely,  we specialize to  the three-dimensional case
  amd consider the following scaling regime 
\begin{equation}
	\label{scaling2}
	\lambda = 1 / N^{ 3/2  + \delta_1} , 
	\qquad 
	t = N^{1/6 + \delta_2} T   \ , 
	\qquad
	L \  \t{ fixed }
\qquad 
	n    \leq N^{\frac{1}{6}}
\end{equation}
for positive parameters 
$ 0<  \delta_1$, $\delta_2 <1 $  specified below.

\vspace{2mm}

The time scales that we consider are short enough to maintain control over the remainder terms
of Theorem \ref{theorem 1}, 
but long enough to observe  {\blu exact energy conservation}.
More precisely, the mollified delta function $\delta_t( \Delta E)$ 
incorporated in the definition of $Q_t$ and $B_t$, 
now becomes a Kronecker delta function with respect to the free energy.
In other words, we prove  in Lemma \ref{lemma delta} in a suitable sense that 
\begin{equation}
\label{delta 1}
	\delta_t  ( \Delta E) = \frac{2t }{\pi} 
	\delta_\Z(\Delta e)
	\Big(
1   	+  \mathcal{ O }(1/t^2) + \mathcal{ O }(\lambda^2 t^2)
	\Big)   
\end{equation}
{\blu where $\delta_\Z (x)  = \delta_{x,0}$, and }
  \begin{equation}
	\Delta e  \equiv  e(p_1) + e(p_2) - e(p_3) - e(p_4) \ , 
	\qquad 
	\t{where}
	\qquad e(p) \equiv   
	\big[\chi^\perp(p) - \chi(p) \big] \frac{p^2}{2} \ . 
\end{equation}
As a consequence, 
in Lemma \ref{lemma Q difference} and \ref{lemma B difference} 
we are able to identify the leading order terms of the operators $Q_t$ and $B_t$
as follows 
\begin{align}
		\label{large time eq 1}
	 	B_t[f]  
	 	 & = 
	 	 t    \mathscr{B} [f]  
	 +  \mathcal{ O }_{\ell^\infty } 					
	 (1/t  )
+  \mathcal{ O }_{\ell^\infty   } 					 (  t^3  \lambda^2  \|	 \hat V	\|_{\ell^1}^2 )    \\
	\nonumber 
Q_t[f]  
& = 
t \mathscr{Q} [f]    
+ 			 \mathcal{ O }_{\ell^\infty   } 					(1/t  )
+  \mathcal{ O }_{\ell^\infty  } 					
(  t^3 \lambda^2   \|	 \hat V	\|_{\ell^1}^2 )   \ . 
\end{align}
where $f \in \ell^1(\Lambda^* ).$
  Here, the   operator $\mathscr B[f]$ 
  is defined as in Definition
  \ref{definition B}
  but with $\delta_t ( \Delta E )$  
    replaced by   
   $ (2/\pi) \delta_\Z   ( \Delta e)$. 
 The definition of $\mathscr{Q}$ is analogous. 
 
 \vspace{1mm}
%

The  following   result now follows as a corollary of  Theorem \ref{theorem 1}, 
Lemma \ref{lemma Q difference} and \ref{lemma B difference}, 
and the inequalities found in Eq. \eqref{f norms}.

\begin{theorem}[First collision time]
	\label{corollary}
Let  $f_t(p)$ be the momentum distribution of particles and holes, 
as given in Definition  \ref{definition momentum distribution}. 
We assume that  
Conditions \ref{cond potentials} 
and 
\ref{condition initial data} are satisfied, 
and consider  the  scaling    \eqref{scaling2} in three dimensions.
	  Assume that $\delta_2 \leq \delta_1/2 \equiv \delta$. 
	Then, for all $m>5$ 
there exists $C>0$
such that for all $ T \in [ N^{-\delta/2}  ,1]$
there holds 
\begin{equation}
 f_{t  } 
 = 
  f_0 
  + 
 \lambda^2 t^2   
    \Big(
\mathscr{B}[f_0]
+
 \mathscr{Q}[f_0] 
+  
\t{Rem}_2 (  T ) 
    \Big) 
\end{equation}
where $\t{Rem}_2$ is a remainder term that satisfies 
\begin{equation}
 \|	  \t{Rem}_2 ( T) \|_{    \ell^\infty(  \Lambda^*  \backslash \S )  }
		\leq  
C  N^{\frac{1}{3 }}
\bigg(
\frac{1}{N^\delta}
+ 
\frac{1}{N^{(m-5)/3    }}
\bigg)   
 = o (N^{\frac{1}{3}}) \ . 
\end{equation}
\end{theorem}

 \begin{remark}
 	\label{remark:L1}
Theorem \ref{corollary} 
describes  $f_t(p)$
 	for $p\in \Lambda^* \backslash\S$, i.e.
 	 away from the Fermi surface.
 	For $ p \in \S$ and $T \in [0,1]$,  
 	one actually has an $\ell^1$-bound 
\begin{equation}
	 \|	 f_{\epsilon^{-1} T  }	\|_{\ell^1( \S )}
	 \,   \leq  \, 
	  C /  N^{5/3 + 2 \delta} 
\end{equation}
which    follows as a propagation-in-time of  the depletion of the Fermi surface, as stated in Condition \ref{condition initial data} for the initial data
$f_0$. 
 See Proposition \ref{prop N estimates 2}. 
 In words, the scaling is chosen so that the Fermi surface remains almost entirely depleted over the   scale $T.$
  \end{remark}

\begin{remark}[Sizes of $\mathscr Q$ and $\mathscr B$]
	\label{remark lower bound}
	    The inequality contained in   Theorem \ref{corollary}
shows that       $\mathscr{B}$ and $\mathscr Q$ dominate    the remainder terms if 
 \begin{equation}
\label{initial data lower bound}
 	\|	\mathscr Q[f_0]+ 	\mathscr B[f_0]	\|_{    \ell^\infty( \Lambda^*   \backslash \S )  }
 \,  	 	\gg  \, 
 	 	 \|	  \t{Rem}_2 (T) 	\|_{    \ell^\infty( \Lambda^*   \backslash \S )  } \ . 
 \end{equation}
In Section \ref{section fixed volume} 
we prove that this holds 
for initial data satisfying additionally Condition \ref{condition 3}, 
and   $\hat V(k)$  satisfies additionally 
Condition \ref{condition 4}. 
Let us further explain. 
\begin{enumerate}[leftmargin=*]
		\item
We take   {\blu $f_0 (p)$  }as a linear combination of Kronecker deltas in $ \Lambda^* $ (see Def. \ref{definition initial data}). 
Further, we assume that they are supported
away from each other by a distance $r>0$, 
and that at least one of their cartesian components satisfies $|p_i| \sim p_F$.
We show that 
\begin{equation}
c_\Lambda N^{1/3}
 \ \leq  \ 
	\|	 \mathscr {B}[f_0]	\|_{\ell^\infty (   \Lambda^*   / \S 		)} 
	  \ \leq \ 
	  C_\Lambda  N^{1/3}
\end{equation}
where $c_\Lambda, C_\Lambda>0$. 
The heuristics for the $p_F \sim N^{1/3}$ dependence are as follows: given $ k \in  \supp \hat V / \{ 0\}$ 
a fermion can interact with any of the particle-hole pairs in the \textit{lune set}
\begin{equation}
	L(k) \equiv 	\{  q \in  \Lambda^*   :  |q | \leq p_F  , \  |q +k| \geq p_F	 	\} \ , 
\end{equation}   
which is of order $|L (k) | \sim N^{2/3}$.
On the other hand, energy conservation $\delta_\Z ( \Delta e) $
introduces a geometric constraint that reduces this number by an additional factor $ p_F \sim   N^{1/3}$. {\blu This  reduction arises from the  intersection of the  two-dimensional  lune set $L(k)$ with a straight line. Hence, reducing the number of lattice points to be counted within a two-dimensional figure of area $ p_F^2 \sim N^{2/3} $, to a one-dimensional set of length $p_F \sim N^{1/3}$}.

	\item  {\blu  Using elementary estimates for  $f_0 \in \ell^1(\Lambda^*)$ with $ 0 \leq f_0 \leq 1 $ one may show the following bound for the collision operator
	\begin{equation}
		\|	 \mathscr{Q} [f_0 ]	\|_{\ell^\infty( \Lambda^*  )} 
		\leq C   \|	f _0	\|_{\ell^1  (\Lambda_* )} 
		 = C n \  , 
	\end{equation}
 which is sufficient  
 to conclude the validity of \eqref{initial data lower bound}.
 We refer the reader to Section \ref{section fixed volume} for more details. 
}

\end{enumerate}

\end{remark}

\subsection{The original distribution function}
 Let us now state our final result, regarding the expansion 
 \eqref{F} introduced in the first section. 
 
 \begin{theorem}
 	\label{thm3}
 	 Let $F_t(p)$ be  the momentum distribution of the system, 
 	 defined   in 
 	 \eqref{momentum F}.
Assume that  
 	 Conditions  \ref{cond potentials} 
 	 and 
 	 \ref{condition initial data} are satisfied, 
 	 and consider the scaling \eqref{scaling2}
 	 in three dimensions, 
 	 with
 $\delta_2 \leq \delta_1/2  =  \delta$. 
 	Then, for all $m>5$  there exists
 	$C>0$
 	such that
 	 \begin{equation}
 	 	F_t = F_0 +  \lambda^2 t^2 
 	 	\Big(	 \mathscr C [F_0] +  \t{Rem}_3 (T)  				\Big)
 	 	\qquad 
 	 	\forall T \in [N^{ - \delta/2} ,1  ]
 	 \end{equation}
where $ \t{Rem}_3 (T)  			$ 
is a remainder term 
that satisfies
	\begin{equation}
		\|	  \t{Rem}_3 ( T) \|_{    \ell^\infty(  \Lambda^*  )  }
		\leq  
		C  N^{\frac{1}{3 }}
		\bigg(
		\frac{1}{N^\delta}
		+ 
		\frac{1}{N^{(m-5)/3    }
	}
		+ 
\frac{1}{N^{ 1/6 }} 
		\bigg)   
		= o (N^{\frac{1}{3}}) \ . 
\end{equation}
Additionally, if $F_0$ and $\hat V$ satisfy  Condition \ref{condition 4} and \ref{condition 3}, respectively, 
there exist positive constants $c_\Lambda, C_\Lambda>0$
such that 
\begin{equation}
 c_\Lambda N^{1/3}
 \ \leq  \ 
\|	 \mathscr C   [F_0 ]		\|_{\ell^\infty(	\Lambda^*
	\backslash \S )} 
 \ \leq \ 
 C_\Lambda  N^{1/3}  \ . 
\end{equation}

\end{theorem}

 \begin{remark}
 	Remark \ref{remark:L1}
 	 implies that $\| F_t - \chi	\|_{\ell^1(\S)} \leq C N^{ - 5/3 - 2 \delta}$. 
 	 That is, the Fermi surface remains almost stationary. 
 \end{remark}

 \subsection{Comparison}
 Let us compare our work 
 with      previous results that are available in the literature. 
 \begin{enumerate}[leftmargin=*]
 	\item 
 	 {\blu  
 	Benedikter, Nam, Porta, Schlein, and Seiringer \cite{Benedikter2022}
 	studied the dynamics of a three-dimensional Fermi gas around the Fermi ball, 
 	in the   semi-classical $\hbar = N^{-1/3}$, mean-field regime
 	$\lambda = 1/N$. 
 	The main focus is on the \textit{bosonization} of particle-hole pairs inside   a suitable neighborhood of the Fermi surface. 
 	The   initial states   $\psi $   considered  by the authors consist     of   approximations of the true  (interacting) ground state, constructed via   explicit  Bogoliubov transformations;    excitations of particle-hole pairs  within the surface are allowed in such data.  	 
 These  bosonization methods are based on  a  patch decomposition  of the Fermi surface, 
and  	have     also been employed    	 in	\cite{Benedikter2020, Benedikter2021,Benedikter 2023}; 
see   the discussion in Section \ref{section introduction}.  
 	In their terms, here     we consider  the contribution to the dynamics associated  with  the ``non-bosonizable terms''. These interaction  terms describe the dynamics of particles and holes away from the Fermi surface. 
} 
 	
 	\vspace{1mm}
 	
 	\item Hott and the second author  \cite{Hott} 
 	have    studied the emergence of quantum Boltzmann dynamics for the fluctuations around a  Bose-Einstein condensate.
 	In particular, our scaling regimes  are similar to  one      another  in the sense that both of them contain a large number of particles per unit volume.
 	In other words, the density of particles acts as an expansion parameter. 
 	Similarly, they both employ  quasi-free initial states. 
 	While of course the difference in statistics plays  a crucial role in the analysis, 
 	the approach presented here is largely    inspired by    that   of   \cite{Hott}. 
 	See also \cite{Hott2}
 	for a more recent refined analysis 
 	which incorporates additional  renormalized terms
 	into the Hartree-Fock-Bogoliubov dynamics, 
 	  yielding control over longer time-scales.

 	\vspace{1mm}
 	
 	\item Erd\H{o}s  \cite{Erdos2002} studied the weak-coupling limit 
 	of an electron interacting with a thermal bath of phonons. In particular, 
 	a linear Boltzmann equation is shown to emerge from the long-time dynamics. 
 	It turns out that this is not very different than the situation under consideration. 
 	Namely, the dynamics of particles and holes 
 	{outside} of the Fermi surface   can  also be described as particles 
 	that  interact with a  boson field, i.e., the  bosonized electron-hole pairs around the Fermi surface, as described by \cite{Benedikter2022}. 
 	The situation here is more complicated, however.  On the one hand, bosonization is only approximate. On the other hand,   several other interactions influence the dynamics of particles and holes, 
 	and rigorous error  control  over these interactions is already   demanding. 
 	Finally, let us recall that the weak coupling limit  of electrons interacting with a random medium is intimately related to the model studied in \cite{Erdos2002}; for results in this direction,  see e.g  
 	\cite{Chen2005,ChenRodnianski,ChenSasaki,ErdosYau2000,Spohn2006}. 
 	
 	\vspace{1mm}
 	
 	\item 
 	{\blu  
 	The scaling regime considered in this article 
 	\eqref{scaling2} contains      an  interaction strength  that is \textit{much} weaker than  the  microscopic mean-field scaling regime, which sets  $ \lambda = N^{-1/3}$ in three dimensions for fermions. 
In the dynamical description of  mean-field theory,  
 	  an approximation can be found in terms of transport equations.
 	In microscopic regimes, one finds
 	the Hartree-Fock (HF) equation
 	and  for macroscopic (semi-classical) regimes 
 	one can also derive the Vlasov equation. 
 	 	The literature of mean-field theory  for fermions is vast  
and will not be reviewed in detail here. We refer the interested reader  to the following non exhaustive list of references regarding the derivation of HF dynamics
	\cite{BardosEtal2003,Elgart2004,FrohlichKnowles2011,PetratPickl2016,BachEtal2016,Benedikter2014,Porta2017,Benedikter2016}
 and Vlasov dynamics
\cite{Narnhofer1981,Benedikter2016-2,Lafleche2021}.

 } 
 \end{enumerate}

 \section{Preliminaries}
\label{section preliminaries}

In this section, we introduce   preliminaries that are needed to prove our main result. 
First,  we give an explicit representation of the particle-hole Hamiltonian $\h $, introduced in \eqref{conjugated hamiltonian}. 
Second, based on this representation, we introduce the interaction picture framework that we shall use to study the dynamics of the momentum distribution $f_t(p)$, defined in \eqref{momentum distribution definition}. 
Third, we perform a double commutator expansion and identify nine terms, 
from which we shall extract   leading order and subleading order terms. 
Finally, we introduce number estimates that we use to analyze the nine terms found in the double commutator expansion.

\subsection{Calculation of $\h$}
Let us  introduce  two   fundamental {\blu collections} of operators. 
 We shall   refer to them    as $D$- and $b$-operators, respectively. 
 
 \begin{definition}
 	Let $k \in \Lambda^*$. 
 	
 	\vspace{1mm}
 	
 	\noindent (1)  
 		  We define the $D$-operators as 
 		\begin{equation}
 	  D_k 
 		\equiv 
 		\int_{  \Lambda^* }
 		\chi^\perp (p)  \chi^\perp(p  - k )
 		a_{p -k }^* a_{p  } \,  \d p 
 		-
 		\int_{  \Lambda^* }
 		\chi (h )  \chi ( h+ k  )
 		a_{h + k } ^* a_{ h  } \,  \d h   \  .
 	\end{equation}

 \noindent (2) 
 			We define the $b$-operators as 
\begin{equation}
 		 			 b_k 
 		 			\equiv  
 		 		  \int_{  \Lambda^* }   \chi^\perp (p)  \chi (p-  k )  		 a_{p - k} a_{p } \, \d p   \ . 		
\end{equation}
 
 \end{definition}
 
 \begin{remark}
For the rest of the article, we denote the corresponding adjoint operators by 
{\blu 
 $D_k^* \equiv   (  D_k )^*$ and $ b^*_k \equiv ( b_k)^*$, respectively. 
} 
Additionally, we shall extensively use the basic relation 
\begin{equation}
	D^*_k = D_{-k} \qquad \forall k \in \Lambda^*  \ . 
\end{equation}
 \end{remark}

 \begin{remark}[Heuristics]
 	\label{remark D and b}
{\blu  	  $D$ is   a combination of  \textit{fermionic} operators.  
 	  They contain interactions  between   holes and holes,  together with particles and particles,
 	that are away from the Fermi surface.  } 
 	On the other hand, the operators $b$  should be understood as   approximate \textit{bosonic} operators; they   annihilate  bosonized particle-hole pairs near the Fermi surface. 
 	 	In fact, the following commutation relation holds 
 	\begin{equation}
 		[  b_k  , D^*_k ] = 0 \qquad \forall k \in \Lambda^* \   .
 	\end{equation}
However, we shall not need any estimates on the
commutation relations satisfied by $b$'s, and this interpretation will remain at a heuristic level. 
 \end{remark}

 The following lemma contains the explicit representation for the   particle-hole Hamiltonian, in terms of 
 a ``solvable Hamiltonian", plus interaction terms depending on $D$ and $b$ operators.

 \begin{lemma}
Let $\h$ be the operator defined in \eqref{conjugated hamiltonian}. 
Then,  	the following identity holds 
 	\begin{equation}
\label{lemma h eq 1}
 		\h  -   \mu_1 \1  - \mu_2 \mathcal Q  =    \h_0 + \lambda \mathcal{V}  
 	\end{equation}
 	for some real-valued constants $\mu_1 , \mu_2 \in \R$. 
 	Here 
 	$\mathcal{Q}$
 	corresponds to the 
 	\textit{charge} operator
 	\begin{equation}
 		\mathcal{Q}
 		 \equiv 
 		\int_{  \Lambda^{*} } \chi^\perp( p)a_p^* a_p \d p 
 		- 
 		\int_{  \Lambda^{*} } \chi( p)a_p^* a_p \d p    ; 
 	\end{equation}
 	$\h_0$ corresponds to the quadratic, diagonal operator 
 \begin{equation}
\label{h0 definition}
 		\h_0  = \int_{  \Lambda^{*} } E_p a_p^*a_p \d p 
 \end{equation}
 with $E_p$ the dispersion relation defined in \eqref{dispersion relation}; 
and $\mathcal V = V_F + V_{FB} + V_B$ contains the following three interaction terms 
  \begin{align}
	\label{VF}
	V_F  & 
	 \equiv 
	\frac{1}{2}
	\int_{  \Lambda^* }
	\hat V (k) D^*_k  D_k   \  \d  k \\
	\label{VFB}
	V_{FB}  & 
	\equiv 
	\int_{  \Lambda^* } 
	\hat V (k)
	D^*_k
	\big[ 
	b_k
	+ b^*_{	-k }
	\big]  \d  k \\
	\label{VB}
	V_{B} &  
	\equiv 
	\int_{  \Lambda^* } 
	\hat V (k) 
	\big[ 
	b^*_k b_k 
	+ 
	\frac{1}{2} \, 
	b^*_k  b^*_{- k }
	+ 
	\frac{1}{2} \, 
	b_{ -k }b_k  
	\big]  \d  k   \ . 
\end{align} 
 \end{lemma}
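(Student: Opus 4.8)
The plan is to compute $\calR^* \Ha \calR$ directly by conjugating each term of \eqref{hamiltonian} with the particle--hole transformation $\calR$, using the defining relations \eqref{particlehole transformation}. First I would treat the kinetic term $\frac12 \int |p|^2 a_p^* a_p \, \d p$. Splitting the integral over $\{|p| > p_F\}$ and $\{|p| \leq p_F\}$, the first part is unchanged by $\calR$, while for the second part one uses $\calR^* a_p^* a_p \calR = a_p a_p^* = \delta(0) - a_p^* a_p$ (with the understanding that $\delta(0) \equiv |\Lambda|$ on the lattice). This produces the hole kinetic energy $-\int \chi(p) \tfrac{|p|^2}{2} a_p^* a_p \, \d p$ plus a $c$-number contribution $\tfrac12 \int_{\B} |p|^2 \, \d p$, which goes into $\mu_1 \1$.

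\textbf{The interaction term.}
The main work is conjugating the quartic term. Here I would substitute $a_p \mapsto \calR^* a_p \calR$ according to whether each of the four momenta $p, q, p+k, q-k$ lies inside or outside $\B$; this gives $2^4 = 16$ cases, many of which vanish or combine by symmetry. The strategy is to reorganize the resulting normal-ordered expressions in terms of the $D$- and $b$-operators. Concretely: terms where all four momenta are particles, or all four are holes, or two-particles-two-holes in a ``density--density'' configuration, assemble into $V_F = \tfrac12 \int \hat V(k) D_k^* D_k \, \d k$ after using $\hat V(-k) = \hat V(k)$ and relabeling; terms with the particle--hole pairing structure $a^* a^* $ (two creations near the surface) or $a\, a$ assemble into the $b_k^* b_{-k}^*$ and $b_{-k} b_k$ pieces of $V_B$, and the mixed ones into $b_k^* b_k$; and the cross terms $D_k^*(b_k + b_{-k}^*)$ form $V_{FB}$. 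Reordering annihilation/creation operators to normal order via the CAR generates lower-order contributions: quadratic pieces proportional to $\int \hat V * \chi$ or $\int \hat V * \chi^\perp$, which are exactly the Hartree-type shifts appearing in the definition \eqref{dispersion relation} of $E_p$, plus $c$-numbers absorbed into $\mu_1$, plus terms proportional to the charge operator $\mathcal Q$ absorbed into $\mu_2 \mathcal Q$. The use of $\hat V(0) = 0$ from Condition \ref{cond potentials} is what kills the diagonal $k = 0$ self-interaction terms and keeps several would-be divergent contributions from appearing.

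\textbf{Main obstacle.}
I expect the principal difficulty to be purely bookkeeping: tracking the sixteen momentum-region cases, performing all the normal-orderings consistently, and verifying that the leftover quadratic terms reassemble precisely into $\int E_p a_p^* a_p \, \d p$ with $E_p$ as in \eqref{dispersion relation} — including getting the signs right on the hole sector, where an extra minus sign appears from $a_p a_p^* = |\Lambda| - a_p^* a_p$ and where the Hartree term enters with $-\tfrac{\lambda}{2}(\hat V * \chi)(p)$ rather than $+$. A secondary point requiring care is confirming that every $c$-number and every charge-proportional remainder genuinely has the claimed form (real constants times $\1$ and $\mathcal Q$), which relies on translation invariance of $\chi$ and on $\hat V$ being real and even. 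Once the algebra is organized, the identity \eqref{lemma h eq 1} follows by collecting terms; no analytic estimates are needed at this stage.
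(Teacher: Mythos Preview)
Your approach is correct and is precisely the standard computation: conjugate the kinetic and quartic terms by $\calR$, split each momentum variable according to $\chi$ versus $\chi^\perp$, normal-order, and reassemble into $D$- and $b$-operators while absorbing the leftover quadratic pieces into $E_p$ and the $c$-numbers and charge-proportional pieces into $\mu_1\1$ and $\mu_2\mathcal Q$. The paper does not actually give a proof of this lemma; it simply states that ``the proof of the above Lemma will not be given here, for it has already been considered in the literature in a very similar form'' and refers the reader to \cite{Benedikter2021}, pp.~897--899. So there is no paper-proof to compare against, but your outline matches the computation carried out in that reference, and the obstacles you flag (bookkeeping of the sixteen cases, sign conventions on the hole sector, verifying the Hartree shifts in $E_p$) are exactly the points where care is needed.
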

 
\begin{remark}
	The labeling of       $V_F$, $V_{FB}$, and $V_B$  is of course related to Remark \ref{remark D and b}. Namely, $V_F$ contains    fermion/fermion interactions, $V_{FB}$ contains fermion/boson interactions and $V_B$ contains boson/boson interactions. 	 
	{\blu Let us note that the interactions containing $b$ operators are not \textit{exactly} bosonic,
		and the present terminology may be somewhat misleading.  
		 In particular,  the term quasi-bosonic is    more precise and is often used in the literature. However, in order to ease the overall terminology and notation,  we will choose the $B$ labels. We hope this will not cause much confusion.}
\end{remark}

\begin{remark}
	The charge operator $\mathcal Q$ is irrelevant for  the dynamics in the system.
	Indeed, one may easily check that $[\h_0 , \mathcal Q]  = [D, \mathcal Q] = [b, \mathcal Q] = 0 $ and, therefore, 
	$[ \h , \mathcal{Q}  ] = 0 $. In other words, the charge is a constant of motion
	and only the right hand side of \eqref{lemma h eq 1} is relevant      regarding the   time evolution of the momentum distribution of the system. 
	We make this argument precise in the next subsubsection. 
\end{remark}

  The proof of the above Lemma will not be given here, for it has already been considered in the literature 
 in a very similar form. The reader is referred, for instance, to \cite[pps 897-899]{Benedikter2021}.

 	\subsection{The interaction picture}
  Let us now exploit the identity found in \eqref{lemma h eq 1}.  
First, recalling that  the Hamiltonian $\h_0$ is quadratic and diagonal with respect to creation and annihilation operators, 
  we  may  easily calculate the associated  Heisenberg evolution to be given by 
  \begin{align}
  	a_p(t)  & \equiv e^{ it \h_0   } a_p e^{ - i t \h_0 }  = e^{ - i t E_p} a_p   \ , \\
  	a_p^* (t)  &  \equiv e^{ it \h_0   } a_p^* e^{ - i t \h_0 }  = e^{  +  i t E_p} a_p^*  	  \ , 
  \end{align}
for all $ p \in \Lambda^*$ and $t \in \R $ ; the dispersion relation $E_p$ was defined in \eqref{dispersion relation}. 
 Secondly, we introduce the 
 \textit{interaction Hamiltonian} 
 \begin{equation}
  \h_I (t)  \equiv   \lambda \, e^{ it \h_0 }\mathcal{V} e^{ - i t \h_0 } \qquad \forall t\in \R \  , 
 \end{equation}
where $\h_0$ and $\mathcal{V}$ are defined in Lemma \ref{lemma h eq 1}.

\vspace{2mm}

We now introduce the dynamics associated to the interaction picture. 

\begin{definition}
\label{definition interaction dynamics}
	Given an initial state $ \nu  :  B (\F) \rightarrow \C$, we denote by $(  \nu_t  )_{t \in \R}$ 
	the solution of the {\blu initial} value problem 
  \begin{align}
 	\label{interaction dynamics}
 	\begin{cases}
 		&  	 i 	\partial_t \nu_t (  \mathcal{O} )  
 		= 
 		\nu_t \big(   [  \h_I(t)  , \mathcal{ O } ]		\big)		
 		\quad \forall \mathcal{ O } \in B (\F ) 
 		\\
 		&  \nu_ 0 = \nu   
 	\end{cases}  
 \end{align}
which we shall refer to as the interaction dynamics. 
\end{definition}

The momentum distribution of the system $f_t(p)$, 
introduced in Def. \ref{definition momentum distribution}, 
is now linked to the interaction dynamics.
{\blu Indeed, a standard calculation shows that   the Schr\"odinger and the interaction picture agree.} That is, 
for all $t \in \R$ and
$ p \in \Lambda^*$   
	\begin{equation}
\label{interaction picture}
		f_t(p) =  |\Lambda|^{-1 } {\nu_t (  a_p^* a_p )}   \ . 
	\end{equation}
 In the next subsection, we shall use Eq. \eqref{interaction picture} 
 to expand $f_t(p).$

 \subsection{Second order perturbative expansion}  
Let $f_t(p)$ be  as in Eq. \eqref{interaction picture}, 
and let us recall that $\nu$ is  an initial state satisfying Condition \ref{condition initial data}.  
In particular, quasi-freeness and translation invariance imply that 
  \begin{equation}
  	\label{nu vanishing}
	\nu (  [a_p^* a_p  ,  a_{k_1}^\hash   a_{k_2}^\hash    a_{k_3}^\hash   a_{k_4}^\hash       ]  ) = 0 \ , \qquad \forall \, k_1, k_2, k_3, k_4 \in \Lambda^* \ . 
\end{equation}
Thus, upon expressing the Hamiltonian $\h_I(t)$
in terms of creation- and annihilation operators, 
one finds that 
 $\partial_t |_{t=0}   f_{t }  (p) 
 =
   i |\Lambda|^{-1} \nu (  [  a_p^*a_p  ,  \h_I (0)  ]  )=0$.
Hence,  
the following second   order expansion holds true
 \begin{equation}
 	\label{double commutator}
 	f_t(p )
 	=
 	f_0(p) 
 	- 
| \Lambda|^{-1 } 
 	\int_0^t 
 	\int_0^{t_1}
 	\nu_{ t_2  }
 	\Big( 
 	[[ a_p^* a_p , \h_I (t_1)    ] , \h_I( t_2 ) ]   
 	\Big)  
 	\d t_1 \d t_2
 \end{equation}
 for any $t \in \R $ and $ p \in \Lambda^*. $
 We dedicate the rest of this article to the study of the right-hand side of the above equation.
 
 \vspace{1mm}
 
 Let us identify   the terms in the {\blu second order perturbative expansion} found above. 
A straightforward expansion of the interaction Hamiltonian yields the decomposition 
\begin{equation}
\label{interaction hamiltonian 2}
	\h_I(t) = \lambda \big(
	V_F(t) + V_{FB} (t) + V_B(t) 
 	\big) \qquad \forall t \in \R 
\end{equation}
where the interaction terms  {\blu evolve}  according to the Heisenberg picture. Namely, we set 
\begin{equation}
\label{Heisenberg evolution V}
	V_\alpha (t)   \equiv 
 \, e^{ it \h_0 }
 V_\alpha  
 \, e^{ -  it \h_0 }
 \qquad \forall t\in \R \  , \alpha \in  \{  F , FB  , B  \} \ . 
\end{equation}
Upon expanding  the right hand side of
\eqref{double commutator}, 
  one finds the following nine terms 
 \begin{align}
	\nonumber 
	f_t  - f_0  =  & 
	- 
	\lambda^2 |\Lambda|^{ -1 }
	\Big(
	T_{F,F} (t) 
	+ T_{F, FB}  (t)
	+  T_{F , B } (t)
	\Big)
	\\
	\nonumber 
	&  	- 
		\lambda^2 |\Lambda|^{ -1 }
	\Big(
	T_{FB,F} (t) 
	+ T_{FB, FB}  (t)
	+  T_{FB , B } (t)
	\Big)
	\\ 
	\label{expansion f}
	&
	-  
		\lambda^2 |\Lambda|^{ -1 }
	\Big(
	T_{B,F} (t) 
	+ T_{B, FB}  (t)
	+  T_{B , B } (t)
	\Big)
\end{align}
  where we set, for $t \in \R$  and $ p \in \Lambda^*$
 \begin{align}
\label{T alpha beta}
 	T_{ \alpha , \beta }( t,p )
  \equiv 
 	\int_0^t 
 	\int_0^{t_1}
 	\nu_{ t_2  }
 	\Big( 
 	[[ a_p^* a_p , V_\alpha   (t_1 )    ] , V_\beta (t_2 ) ]   
 	\Big)  
 	\d t_1 \d t_2 
 	\qquad 
 	\alpha , \beta  \in \{ F, FB , B \}   \ . 
 \end{align} 
 We shall analyze in detail      the quantities $T_{\alpha , \beta}: \R \times \Lambda^* \rightarrow \R$ 
 when tested against a smooth function. 
 To this end, let  us introduce some notation we shall be using for the rest of this work.
 For $\vp : \Lambda^* \rightarrow \C $ we let 
 \begin{equation}
 	N (\vp )  
 	\equiv 
 	\int_{  \Lambda^* }
 \overline{ 	\vp (p) } 
 	a_p^* a_p  
 	\d p  
 \end{equation}
 together with 
 \begin{align}
\label{T alpha beta phi}
 	T_{\alpha , \beta } 
 	(t , \vp )
    \equiv 
\< \vp ,   T_{\alpha,\beta} (t) \>
 	= 
 	\int_0^t 
 	\int_0^{t_1}
 	\nu_{t_2 }
 	\Big( 
 	[[  N(\vp ), V_\alpha   (t_1)    ] , V_\beta (t_2 ) ]   
 	\Big)  
 	\, \d t_1 \d t_2 \ . 
 \end{align}

\subsection{Excitation operators}
The following two operators will play a major role in our analysis. 
{\blu They are}  the number operator (per unit volume) that counts the  total number of particles and holes in the system, 
together with the number operator that  only counts the number of particles and holes in the Fermi surface $\S$. 
More precisely, we consider

\begin{definition}
We define the two following operators in $\F$.

\noindent (1) The number operator as 
\begin{equation}
	\N   \equiv 
	 \int_{ \Lambda^*} a_p^*a_p \d p  \ . 
\end{equation} 

\noindent (2) The 
surface-localized number operator as 
\begin{equation}
\label{number boson}
	\po   
	\equiv 
	 \int_{  \S } a_p^*a_p  \d p 
\end{equation}
where $\S$ is the Fermi surface,    defined in \eqref{S definition} .
\end{definition}

\begin{remark}
	Let us recall that in Section \ref{section main results} 
	we have introduced the parameter $R = |\Lambda| \int_\S \d p $. 
	In particular, it follows from the boundedness of creation- and annihilation- 
	operators
	that $\N_\S$ is a {\blu bounded} operator
	and 
	$\|  \N_\S	\|_{B(\F)} \leq R $. 
\end{remark}

\begin{remark}[Domains]
$\N$ is an unbounded 
self-adjoint operator in $\F$
with domain
$\mathcal D (\N) 
=
\{	 \Psi = (\psi_n)_{n\geq0 } \in \F :  \sum_{n\geq0} n^2 
\| \psi_n 		\|^2_{L^2(\Lambda^n)} < \infty
		\} \  .  $
As initial data, the mixed states that we work with 
satisfy 
\begin{equation}
	\nu (\N) 
	 \equiv 
	 \int_{  \Lambda^{*} } \nu(a_p^*a_p) \d p 
	  = 
	  \int_{  \Lambda^{*} } f_ 0 (p) \d p < \infty  \  , 
\end{equation}
and similarly for higher powers $\N^k$.
It is standard to show that 
the time evolution generated by  the particle-hole Hamiltonian $\h$, 
as defined in \eqref{conjugated hamiltonian}, 
preserves $\mathcal D (\N)$, 
in the sense that $\nu_t( \N^k) < \infty$
for $t \in \R$ and $ k \in \mathbb N$.
In order to simplify the exposition, 
we shall  purposefully not refer to the unbounded nature
of the operator $\N$ in the rest of the article.
\end{remark}

%
%

The proof of Theorem \ref{theorem 1} relies on the fact that 
the subleading order terms that arise from the double commutator expansion --written in terms of $ b$- and $D$-operators--
can be bounded above by expectations of the operators $\N$ and $\N_\S $,
with respect to the evolution of the state $\nu$ driven by the interaction Hamiltonian $\h_I(t)$. 
This analysis is carried out in Section \ref{toolbox1}. 
Further,   in Section \ref{section number estimates} 
we prove bounds for the {\blu growth in  time}  of 
  the expectations $\nu_t(\N)$ and $\nu_t (\N_\S )$. 
  This two-step analysis is combined in Section \ref{section proof of theorem} to prove Theorem \ref{theorem 1}.

 \section{Tool Box I: Analysis of $b$- and $D$-operators}
 \label{toolbox1}
 
 In the last section, we introduced 
 the time evolution of certain observables in the Heisenberg picture, 
 with respect to the solvable Hamiltonian $\h_0$, introduced in \eqref{h0 definition}. 
 In particular, the evolution of the creation- and annihilation- operators $a$ and $a^*$
 takes the simple form 
 \begin{align}
 	a_p(t)    = e^{ - i t E_p} a_p   
 	\qquad \t{and}
 	\qquad 
 	a_p^* (t)   = e^{  +  i t E_p} a_p^*  	  \ , 
 \end{align}
 for all $ p \in \Lambda^*$ and $t \in \R $ ; the dispersion relation $E_p$ was defined in \eqref{dispersion relation}.  
 Let us now introduce the Heisenberg evolution of the $b$- and $D$-operators 
 as follows.
 
 \begin{definition}
 	\label{definition Heisenberg b D}
 	Let  $k \in \Lambda^*$ and $t \in \R$.\\
 	\noindent (1) 
 	The Heisenberg evolution of the $D$-operators is given by 
 	$$ 
 	D_k (t)
 	\equiv 
 	e^{ i t \h_0}
 	D_k 
 	e^{ -  i t \h_0}
 	=
 	\int_{  \Lambda^* }
 	\chi^\perp (p, p  - k )
 	a_{p -k }^* (t)  a_{p  } (t)  \,  \d p 
 	-
 	\int_{  \Lambda^* }
 	\chi (h , h+ k  )
 	a_{h + k }^* (t)  a_{ h  } (t)  \,  \d h    
 	$$
 	and $D_k^*(t) \equiv [ D_k(t)]^* = D_{-k} (t)$.
 	
 	\noindent (2) 
 	The Heisenberg evolution of the $b$-operators is given by 
 	$$
 	b_k(t) 
 	\equiv 
 	e^{ i t \h_0}
 	b_k  
 	e^{ -  i t \h_0}
 	 = \int_{  \Lambda^* }   \chi^\perp (p)  \chi (p-  k )  		 a_{p - k}(t) a_{p } (t)  \, \d p     
 	$$ 		
 	and $b_k^*(t) \equiv [ b_k(t)]^*$.
 \end{definition}

The main goal of this section is to introduce a systematic calculus that lets us deal with a combination of the operators $b_k(t)$ and $D_k(t)$ -- together with multiple combinations of their commutators -- 
as they show up in the analysis of the double commutator expansion found in 
\eqref{expansion f}. 
First, we introduce many useful identities required for the upcoming analysis. 
Secondly, we state estimates for several combinations of $b$-  and $D$-operators.

\subsection{Identities}
In this subsection, we record useful identities between operators in $\F$ that we shall use extensively in the rest of this article. 
Most importantly, in the next subsection, we shall use these identities  to obtain estimates of 
{\blu important} commutator observables. 
 \\

\noindent \textit{Preliminary  identities }.  
First, we write general time-independent relations. 

\vspace{2mm}

\noindent 1) For all $ p,q ,r \in \Lambda^*$ the CAR   imply that  
 \begin{align}
 [ a_r^*a_r, a_p^*  a_q ] 
 =
  \big(
  \delta ( r - q ) - \delta(r - p )
  \big) 
  a_p^* a_q
 \end{align}

\vspace{1mm }

\noindent  2) For all $p,q \in \Lambda^*$ and $ \vp \in \ell^1(\Lambda^*)$ there holds 
 \begin{equation}
 [N(\vp) , a_p^* a_q] =  \Big(		\overline{ \vp(p)}  -  \overline{\vp(q ) } \Big) a_p^* a_q
 \end{equation}
 where we recall $N(\vp)  =  \int_{  \Lambda^{*} }  \overline{\vp(p)}  a_p^* a_p  \d p .$   

 \vspace{3mm}

\noindent \textit{Commutator  identities}.   
 The following lemma contains useful operator  identities, to be used in the next section. 
 Since they only rely on the CAR and straightforward commutator calculations,
  we leave their proof to the reader.

\begin{lemma}
Let $k, \ell \in \Lambda^*$ and  $t,s  \in \R$ . \\
\noindent (1) For $ p \in \mathcal B^c$ and $h \in \mathcal B$ there holds 
\begin{align}
	[b_k(s), a_p^* (t) ]
 & 	\  = \ 
	\chi  (p - k )
	\, 
	e^{ i(t-s) E_p}
	a_{p -  k } (s)    \ ,   \\ 
	[ b_k (s) , a_h^* (t) ]
 & 	\  = \ 
	- \chi^\perp 
	( h +  k  )
	\, 
	e^{ i(t-s) E_h }
	a_{h +  k  } (s)    \ . 
\end{align}

\vspace{2mm}

\noindent (2) There holds 
 \begin{align}
	[b_\ell (s)	  , 		D_k^* (t) 	]
	&  = 
	\int_{  \Lambda^* }
	\chi^\perp (p)
	\chi^\perp(p - k )
	\chi  (p-\ell )
	e^{ i (t-s)E_p}
	a_{p- \ell} (s)
	a_{p - k }(t)			
	\d  p 	\nonumber \\
	& 
	+ 
	\int_{  \Lambda^* }
	\chi ( h ) 
	\chi  ( h + k )
	\chi^\perp  ( h + \ell  )
	e^{ i (t-s)E_h}
	a_{  h + \ell } (s)
	a_{  h + k  }(t)
	\d  h \  . 
	\label{commutator b D*}
\end{align}
In particular, $[b_k(t) , D_k^* (s)] = 0 $. 

\vspace{2mm }

\noindent (3) There holds 
\begin{align}
	\label{boson commutator}
	[b_k (t), b_\ell^* (s)]
	& 
	=   \ 
	\delta ( k - \ell )
	\int_{  \Lambda^* }
	\chi^\perp(p)
	\chi(p-k)
	\nonumber 
	e^{  - i (t-s ) (E_p   +  E_{p - k})  }					
	\d p  \\
	& 	 - 
	\int_{  \Lambda^* }
	\chi^\perp (p) 	 \chi^\perp (p+ \ell - k )	\chi(p-k) 			\nonumber  
	e^{  - i (t-s ) E_{p-k}    }	
	a_p^* (t)  a_{p + \ell - k } (s) \, \d p 	\\ 
	&  - 	  \int_{  \Lambda^* }
	\chi  ( h )
	\chi (  h + \ell - k )
	\chi^\perp (h + \ell)
	e^{  -i (t-s )  E_{h+k}    } 
	a_h^* (t)  a_{h + \ell - k } (s) \, \d h \ . 
	%
	%
	%
	%
	%
	%
	%
	%
	%
\end{align}
\end{lemma}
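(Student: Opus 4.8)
The final statement is a Lemma collecting three commutator identities: $[b_k(s),a_p^*(t)]$ and $[b_k(s),a_h^*(t)]$ in part (1), $[b_\ell(s),D_k^*(t)]$ in part (2), and $[b_k(t),b_\ell^*(s)]$ in part (3). The author has already announced the proof will be left to the reader since ``they only rely on the CAR and straightforward commutator calculations.'' So the plan is to verify each one by substituting the Heisenberg-evolved definitions from Definition \ref{definition Heisenberg b D}, using $a_p(t) = e^{-itE_p}a_p$ and $a_p^*(t) = e^{+itE_p}a_p^*$, and then pushing a single annihilation or creation operator through a bilinear expression in $a,a^*$ using only the CAR $\{a_p,a_q^*\} = \delta(p-q)$ together with the vanishing of the other anticommutators.

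First I would do part (1). Expand $b_k(s) = \int_{\Lambda^*}\chi^\perp(p')\chi(p'-k)e^{-is(E_{p'}+E_{p'-k})}a_{p'-k}a_{p'}\,\d p'$ and compute $[a_{p'-k}a_{p'},\,a_p^*]$ via the standard identity $[AB,C] = A\{B,C\} - \{A,C\}B$ for two annihilators and one creator. Only the term where $p' = p$ survives (giving $\{a_{p'},a_p^*\} = \delta(p'-p)$, killed against the second slot), producing $-\chi^\perp(p)\chi(p-k)e^{-is(E_p + E_{p-k})}a_{p-k}$; here I use that $p\in\mathcal B^c$ forces $\chi^\perp(p) = 1$, and I re-express $e^{-is(E_p+E_{p-k})}a_{p-k}e^{is\cdot}$... more carefully, I collect the phases: $-e^{-isE_p}e^{-isE_{p-k}}\cdot e^{itE_p}\cdot e^{isE_{p-k}}a_{p-k}(s)$-type bookkeeping to land on $\chi(p-k)e^{i(t-s)E_p}a_{p-k}(s)$ after the sign from $[AB,C]$ is tracked. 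Wait, I need the sign: $[a_{p'-k}a_{p'},a_p^*]$ with the creator anticommuting past $a_{p'}$ first gives $+a_{p'-k}\delta(p'-p)$, so actually the surviving term is $+\chi^\perp(p)\chi(p-k)(\ldots)a_{p-k}$, matching the stated identity. For $a_h^*(t)$ with $h\in\mathcal B$, the contraction instead hits the first slot $a_{p'-k}$, requiring $p'-k = h$ i.e. $p' = h+k$, and the anticommuting-past sign produces the $-\chi^\perp(h+k)$ with the constraint $\chi(h) = 1$. Part (2) follows the same mechanism applied to $D_k^*(t) = D_{-k}(t) = \int\chi^\perp(p')\chi^\perp(p'+k)a_{p'+k}^*(t)a_{p'}(t) - \int\chi(h')\chi(h'-k)a_{h'-k}^*(t)a_{h'}(t)$ — wait, I should be careful and just read off $D_{-k}$ from the definition of $D_k$ with $k\mapsto -k$. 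Commuting $b_\ell(s)$ (a product of two annihilators) past $D_k^*(t)$ (a sum of $a^*a$ terms) produces, for each of the two $D$-pieces, a single contraction of one of $b$'s annihilators against the creator in $D^*$, leaving a product $a(s)a(t)$; carefully tracking which index gets contracted and collecting phases gives the two integrals in \eqref{commutator b D*}. Setting $\ell = k$ and $s = t$, each term carries $\chi^\perp(p-k)\chi(p-k)$ or $\chi(h+k)\chi^\perp(h+k)$, which vanish identically, yielding $[b_k(t),D_k^*(t)] = 0$; one then upgrades to $[b_k(t),D_k^*(s)] = 0$ since the phases also collapse when the surviving $\chi\chi^\perp$ products already vanish (the momentum constraints force the incompatible characteristic functions regardless of $t,s$).

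For part (3), I would compute $[b_k(t),b_\ell^*(s)]$ by expanding $b_\ell^*(s) = \int\chi^\perp(q)\chi(q-\ell)e^{is(E_q + E_{q-\ell})}a_q^*a_{q-\ell}^*\,\d q$ and commuting the two-annihilator $b_k(t)$ past the two-creator $b_\ell^*(s)$. The commutator of $a_{p-k}a_p$ with $a_q^*a_{q-\ell}^*$ generically produces a ``double contraction'' term (a scalar, when both annihilators pair with both creators) plus ``single contraction'' terms of the form $a^*a$. The scalar term requires $\{p,p-k\} = \{q,q-\ell\}$ as sets; compatibility with $\chi^\perp(p)\chi(p-k)$ and $\chi^\perp(q)\chi(q-\ell)$ forces $q = p$ and hence $k = \ell$, giving the $\delta(k-\ell)\int\chi^\perp(p)\chi(p-k)e^{-i(t-s)(E_p+E_{p-k})}\,\d p$ term. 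The single-contraction terms give the two remaining integrals, one surviving with $\chi^\perp\chi^\perp\chi$ and a particle-type momentum shift $p\mapsto p+\ell-k$, the other with $\chi\chi\chi^\perp$ and a hole-type shift; the phases $e^{-i(t-s)E_{p-k}}$ and $e^{-i(t-s)E_{h+k}}$ come from pairing the evolved operators and cancelling the matched exponentials. The main obstacle — really the only one — is bookkeeping: getting every sign from anticommuting operators past each other correct, and correctly enumerating which of the four possible contractions in part (3) survive the characteristic-function constraints and with which residual momentum relabeling. There is no conceptual difficulty; it is a disciplined application of $[AB,CD] = A\{B,C\}D - AC\{B,D\} + \{A,C\}DB - C\{A,D\}B$ (with appropriate signs for the fermionic case), followed by relabeling integration variables to match the stated right-hand sides.
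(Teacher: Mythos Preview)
Your proposal is correct and aligns exactly with the paper's own treatment: the paper does not prove this lemma at all, stating only that the identities ``rely on the CAR and straightforward commutator calculations'' and leaving them to the reader. Your sketch carries out precisely those calculations, and the bookkeeping you outline (the identity $[AB,CD] = A\{B,C\}D - AC\{B,D\} + \{A,C\}DB - C\{A,D\}B$, tracking which contractions survive the $\chi\chi^\perp$ constraints, and relabeling) is the intended route; one minor simplification is that for the ``in particular'' claim $[b_k(t),D_k^*(s)]=0$ you can set $\ell=k$ directly in \eqref{commutator b D*} and observe the products $\chi^\perp(p-k)\chi(p-k)$ and $\chi(h+k)\chi^\perp(h+k)$ vanish identically, without first specializing $s=t$.
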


Besides the $b$- and $D$- operators, we shall work extensively with
their \textit{contracted} versions.  These are defined as follows
in terms of $N(\vp) = \int_{\Lambda^*}  \overline{\vp(p)} a_p^* a_p \d p . 	$

\begin{definition}[Contractions]
	Let $\vp : \Lambda^* \rightarrow \C$. 
Then, we define the $D(\vp)$-operators as the collection of operators for $t\in \R$ and $ k \in \Lambda^*$
	\begin{align}
 D_k(t,\vp) \equiv  [	N(\vp) , D_k(t)	]	 \qquad   \t{and}\qquad 	D_k^*(t, \vp) \equiv [N(\vp) , D_k^*(t)]  \ . 
	\end{align}
	Similarly, we define the $b(\vp)$-operators 
	as the collection of operators for $t\in \R$ and $ k \in \Lambda^*$
	\begin{align}
		b_k(t,\vp) \equiv [N(\vp ) , b_k(t)] \qquad  \t{and} \qquad  b_k^*(t,\vp)	\equiv [N(\vp) , b_k^*(t)] \ . 
	\end{align}
	We call them the contractions of $b$ and $D$ with the 
	  function $\vp$. 
	\end{definition}

\begin{remark}
	We immediately note that
\begin{align}
 	 [ D_k (t,\vp) ]^* =   - D^*_k(t , \bar \vp) \qquad [b_k (t,\vp)]^* =  - b^*_k(t,\bar \vp)
\end{align}
for all $t \in \R$ and $ k\in \Lambda^*$. 
Thus, the contractions are not adjoints of each other.
 However,  the following relations hold true for all $\Psi \in \F$
 \begin{equation}
 	\|	 	 [ D_k (t,\vp) ]^* \Psi	\|_\F  =  	\|	 	 D_k^* (t, \bar \vp)  \Psi	\|_\F 
 	\qquad
 	\t{and}
 	\qquad 
 	 	\|	 	 [ b_k (t,\vp) ]^* \Psi	\|_\F  =  	\|	 	 b_k^* (t, \bar \vp)  \Psi	\|_\F  \ .
 \end{equation}
Since final estimates are given in terms of $\ell^p$ norms of $\vp$, the complex conjugation 
does not affect the end result.
 Thus, when proving estimates, one may regard them as adjoints of each other. 
\end{remark}
 
\begin{remark}
	The contractions can of course be calculated explicitly using the CAR. 
	Let us record here the two following calculations 
	\begin{align}
		\label{operator D phi}
		D_k^* (t, \vp) 
		&    =
		\int_{  \Lambda^* }
		\chi^\perp(p)
		\chi^\perp(p-k)
		\big[											\nonumber 
		\vp(p) - \vp( p - k)
		\big] 
		a_p^*(t)  a_{p-k}(t)
		\d p 		\\
		&  \quad -   
		\int_{  \Lambda^* }
		\chi ( h )
		\chi( h + k )
		\big[
		\vp( h) - \vp( h+ k )
		\big] 
		a_h^*(t)  a_{ h + k }(t)  
		\d h \ , 
		\\ 
		b_k
		(t, \vp) 
		\label{b phi}
		& =
		\int_{  \Lambda^* }
		\chi^\perp  (q)
		\chi (q-k)
		\Big[
		\vp(q -k )
		+ 
		\vp(q  )
		\Big] 
		a_{q-k}(t) a_q (t ) 
		\ \d q \ . 
	\end{align} 
\end{remark}
 
Let us now state in the following Lemmas 
some useful commutation relations.
Since they all follow from straightforward manipulation of the CAR, 
we leave them as an exercise for the reader.


\begin{lemma}
Let  $k, \ell \in \Lambda^*$, $t,s \in \R$ and $\vp \in \ell^1$. \\
\noindent (1) There holds 
	 \begin{align}
\label{commutator b D phi}
		[ b_\ell  (s)  , 
		& 	D_k^*(t,\vp)	
		] \\
		&   =
		\int_{  \Lambda^* }
		\chi^\perp(p)
		\chi^\perp(p-k)
		\chi(p - \ell )
		\big[
		\vp(p)
		-
		\vp(p - k) 
		\big]
		e^{i(t-  s) E_p }
		a_{p - \ell } (s) 			\nonumber 
		a_{ p - k } (t) 		
		\d p 
		\\
		& 
		+
		\int_{  \Lambda^* }
		\chi(h)
		\chi(h+k )
		\chi^\perp ( h + \ell )
		\big[
		\vp(h)
		-
		\vp(h  + k ) 
		\big]
		e^{i(t-  s) E_h }
		a_{  h + \ell } (s) 			\nonumber 
		a_{  h +  k  } (t) 	
		\d h  \ . 	
	\end{align}
\end{lemma}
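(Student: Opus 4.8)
The statement is a direct computation, so my plan is simply to carry it out cleanly, starting from the explicit representation of the contracted operator $D_k^*(t,\vp)$ recorded in \eqref{operator D phi}. That formula expresses $D_k^*(t,\vp)$ as a sum of two pieces: a ``particle'' piece built from $a_p^*(t)a_{p-k}(t)$ in which the factors $\chi^\perp(p)\chi^\perp(p-k)$ restrict $p$ to the region $\B^c$, and a ``hole'' piece built from $a_h^*(t)a_{h+k}(t)$ in which the factors $\chi(h)\chi(h+k)$ restrict $h$ to $\B$. Hence it suffices to compute $[b_\ell(s),a_p^*(t)a_q(t)]$ for these two ranges of the momenta and then integrate against the given coefficients.

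The key reduction is that $b_\ell(s)$ is an integral of products of two annihilation operators; since each annihilation operator anticommutes with each of those two factors, it commutes with $b_\ell(s)$, and in particular $[b_\ell(s),a_q(t)]=0$. By the Leibniz rule for commutators this collapses $[b_\ell(s),a_p^*(t)a_q(t)]$ to $[b_\ell(s),a_p^*(t)]\,a_q(t)$. Now I would invoke the single-creation-operator commutators $[b_\ell(s),a_p^*(t)]$ and $[b_\ell(s),a_h^*(t)]$ established above: on the particle piece the factor $\chi^\perp(p)$ puts us in the case $p\in\B^c$, so $[b_\ell(s),a_p^*(t)]=\chi(p-\ell)\,e^{i(t-s)E_p}\,a_{p-\ell}(s)$; on the hole piece the factor $\chi(h)$ puts us in the case $h\in\B$, so $[b_\ell(s),a_h^*(t)]=-\chi^\perp(h+\ell)\,e^{i(t-s)E_h}\,a_{h+\ell}(s)$. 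Substituting these two identities inside the two integrals of \eqref{operator D phi} and moving $[b_\ell(s),\cdot]$ past the remaining annihilation operator, the particle contribution reproduces the first line of \eqref{commutator b D phi} once one collects the factors $\chi^\perp(p)\chi^\perp(p-k)\chi(p-\ell)$, while in the hole contribution the minus sign from $[b_\ell(s),a_h^*(t)]$ cancels the overall minus sign in front of the hole integral of $D_k^*(t,\vp)$, producing the second line after collecting $\chi(h)\chi(h+k)\chi^\perp(h+\ell)$.

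There is no genuine analytic obstacle here; the only point requiring care is the bookkeeping of the three characteristic functions appearing in each term and of the two sign conventions, which is why it is safest to keep the particle and hole contributions separate throughout. As a consistency check — and as an alternative derivation — one could instead start from the Jacobi identity $[b_\ell(s),[N(\vp),D_k^*(t)]]=[[b_\ell(s),N(\vp)],D_k^*(t)]+[N(\vp),[b_\ell(s),D_k^*(t)]]$ and combine the already-established formula \eqref{commutator b D*} for $[b_\ell(s),D_k^*(t)]$ with the relation $[b_\ell(s),N(\vp)]=-b_\ell(s,\vp)$; both routes yield the same expression, but the direct substitution above is the shorter one and is what I would write down.
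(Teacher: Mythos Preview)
Your proposal is correct and is precisely the straightforward CAR computation the paper has in mind; in fact the paper does not spell out a proof at all but leaves this identity as an exercise for the reader. Your route via \eqref{operator D phi} together with the single-creation commutators $[b_\ell(s),a_p^*(t)]$ and $[b_\ell(s),a_h^*(t)]$ is the natural one, and the bookkeeping you describe (in particular the sign cancellation in the hole term) is accurate.
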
   
 
%
%

 \begin{lemma}[$\N$ commutators]
\label{lemma N commutators}
For all $k \in \Lambda^*$ and  $t \in \R$   
the following holds true.

\vspace{1.5mm}

\noindent (1) For the $D$-operators
\begin{equation}
	[  D_k(t) , \N 			] = [ D_k^*(t) , \N ]  = 0  
\end{equation}
 	and similarly for the contracted operators 
 	$D_k(t,\vp)$ and $D_k^*(t,\vp)$. 
 	
 	\vspace{1.5mm}
 	
 	\noindent (2) For the $b$-operators, 
 	for any measurable function $f : \R \rightarrow \C $ the
 	pull-through formulae holds true 
 	  \begin{align} 
\label{pull thru formula b}
 	f( 	\N  ) 
 		b_k (t) 
 		= 
 		b_k (t)   f  (\N -2 ) 
 		\quad
 		\t{and}
 		\quad 
 		 	f(	\N  ) 
 		b_k^*(t) 
 		= 
 		b_k^* (t ) f   (\N + 2 ) 
 	\end{align}
 	and similarly for the contracted operators
 	$b_k(t,\vp)$
 	and
 	$b_k^* (t,\vp)$. 
 	
 \end{lemma}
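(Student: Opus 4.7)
The plan is to reduce both parts of the lemma to the elementary commutation relations
\[
 [\N, a_p^{*}] = a_p^{*}, \qquad [\N, a_p] = -a_p, \qquad p\in\Lambda^{*},
\]
which follow from a one-line CAR computation. From these, the Leibniz rule gives the three building-block identities $[\N, a_p^{*}a_q]=0$, $[\N, a_p a_q] = -2\, a_p a_q$, and $[\N, a_p^{*} a_q^{*}] = 2\, a_p^{*} a_q^{*}$. Since $\h_0$ is quadratic and diagonal, $[\h_0,\N]=0$ and the time evolution $e^{it\h_0}(\cdot)e^{-it\h_0}$ commutes with the functional calculus of $\N$; consequently the three identities above still hold after replacing every $a_p^{\#}$ with $a_p^{\#}(t)$.

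For part (1), I would inspect the explicit form of $D_k(t)$ in Definition \ref{definition Heisenberg b D}: it is a linear combination of operators of the type $a_{p-k}^{*}(t)\, a_p(t)$ and $a_{h+k}^{*}(t)\, a_h(t)$. Each such summand commutes with $\N$ by the $a^{*}a$ identity above, hence so does $D_k(t)$, and by taking adjoints the same holds for $D_k^{*}(t)$. The contractions $D_k(t,\vp) = [N(\vp), D_k(t)]$ and $D_k^{*}(t,\vp)$ are (per the explicit expression in \eqref{operator D phi}) again linear combinations of $a^{*}(t)\, a(t)$ monomials, so the identical argument applies. I would also note, as a cleaner alternative, that $[N(\vp),\N]=0$, so the Jacobi identity yields $[\N, D_k(t,\vp)] = [\N,[N(\vp),D_k(t)]] = 0$ directly.

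For part (2), $b_k(t)$ as displayed in Definition \ref{definition Heisenberg b D} is a linear combination of monomials $a_{p-k}(t) a_p(t)$, so the second building-block identity gives $[\N, b_k(t)] = -2\, b_k(t)$, i.e.\ $\N\, b_k(t) = b_k(t)(\N-2)$. Iterating this yields $\N^n b_k(t) = b_k(t)(\N-2)^n$ for every $n\in\mathbb{N}$, and hence $P(\N) b_k(t) = b_k(t) P(\N-2)$ for every polynomial $P$. The pull-through for a general measurable $f$ then follows by the standard spectral-calculus argument: working in the joint spectral decomposition of $\N$ (it is self-adjoint on $\mathcal D(\N)$ and has integer spectrum), polynomial approximation on bounded spectral projections extends the identity to all bounded Borel $f$, and the unbounded case follows on the invariant core of vectors in $\bigcap_k \mathcal D(\N^k)$ used throughout the paper. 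Taking adjoints gives the companion formula for $b_k^{*}(t)$ with $\N+2$ replacing $\N-2$. For the contractions, the explicit expression in \eqref{b phi} shows $b_k(t,\vp)$ is again a linear combination of monomials $a(t) a(t)$, so it obeys the same $-2$ shift, and the same proof applies verbatim.

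The only point that is not a one-line CAR manipulation is the extension from polynomial $P(\N)$ to general $f(\N)$; this is the ``main obstacle,'' but it is standard, and in the applications later in the paper only $f(x)=(x+c)^{\pm 1/2}$ or $f(x)=x^k$ will actually appear, for which a direct polynomial/Neumann-series argument on the dense domain of finite-particle vectors suffices and avoids any subtlety about unboundedness of $\N$.
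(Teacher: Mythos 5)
Your argument is correct and is precisely the kind of ``straightforward manipulation of the CAR'' that the authors invoke; in fact, the paper states this lemma without proof, explicitly leaving it as an exercise in the sentence immediately preceding it. The only stylistic simplification worth noting is that the extension from polynomials to general measurable $f$ can be carried out in one line by writing $f(\N)=\sum_m f(m)\pi_m$ in terms of the particle-sector projections and using $b_k(t)\pi_m = \pi_{m-2}b_k(t)$, which bypasses the polynomial-approximation step you describe.
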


  \begin{lemma}[$\N_\S$ commutators]
  	\label{lemma NS commutators}
  	For all $k \in 3 \supp \hat V$ and  $t \in \R$   
  	the following commutation relations hold true 
  	\begin{eqnarray}
  		[\N_\S , b_k(t) ] = - 2 b_k(t) 
  		\quad
  		\t{and}
  		\quad 
  		  		[\N_\S , b_k^*(t) ] = + 2 b_k^* (t)  \ . 
  	\end{eqnarray}
  \end{lemma}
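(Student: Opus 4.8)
The plan is to reduce both identities to a short geometric observation, so that essentially no estimates are needed. First I would record that $\N_\S = N(\1_\S)$, where $\1_\S$ is the characteristic function of the shell $\S$ defined in \eqref{S definition}, and that by Definition \ref{definition Heisenberg b D},
\begin{equation*}
  b_k(t) = \int_{\Lambda^*} \chi^\perp(p)\,\chi(p-k)\, e^{-\i t(E_p + E_{p-k})}\, a_{p-k} a_p \, \d p \ ,
\end{equation*}
since $a_p(t) = e^{-\i t E_p} a_p$. Next I would compute, directly from the CAR, that $[\N_\S, a_q] = -\1_\S(q)\, a_q$, and hence
\begin{equation*}
  [\N_\S,\, a_{p-k} a_p] = -\big(\1_\S(p) + \1_\S(p-k)\big)\, a_{p-k} a_p \ .
\end{equation*}
Substituting this into the integral representation of $b_k(t)$ yields
\begin{equation*}
  [\N_\S, b_k(t)] = -\int_{\Lambda^*} \chi^\perp(p)\,\chi(p-k)\,\big(\1_\S(p) + \1_\S(p-k)\big)\, e^{-\i t(E_p + E_{p-k})}\, a_{p-k} a_p \, \d p \ .
\end{equation*}

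The heart of the argument --- and really the only step with content --- is then the geometric claim that, whenever $\chi^\perp(p)\chi(p-k) \neq 0$ and $k \in 3\supp\hat V$, both $p$ and $p-k$ lie in $\S$. Indeed, $\chi^\perp(p)\chi(p-k)\neq 0$ forces $|p| > p_F$ and $|p-k| \le p_F$, while $k \in 3\supp\hat V$ gives $|k| \le 3r$ with $r>0$ the radius of $\supp\hat V$ (Condition \ref{cond potentials}). The triangle inequality then gives $p_F - 3r < p_F < |p| \le |p-k| + |k| \le p_F + 3r$, so $p \in \S$; and symmetrically $p_F - 3r < |p| - |k| \le |p-k| \le p_F \le p_F + 3r$, so $p-k \in \S$. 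This is precisely where the prefactor $3$ built into the definition of $\S$, and into the admissible range of momentum transfers $k$, is used. Consequently $\1_\S(p) + \1_\S(p-k) \equiv 2$ on the support of the integrand, and comparison with the displayed formula for $b_k(t)$ gives $[\N_\S, b_k(t)] = -2\, b_k(t)$.

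Finally, the identity for $b_k^*(t)$ would follow by taking adjoints: since $\N_\S$ is bounded and self-adjoint, $[\N_\S, b_k(t)]^* = -[\N_\S, b_k^*(t)]$, while $(-2\, b_k(t))^* = -2\, b_k^*(t)$, so $[\N_\S, b_k^*(t)] = 2\, b_k^*(t)$. In summary, I do not anticipate any genuine obstacle here: the proof is a one-line CAR computation followed by the triangle-inequality inclusion $\{\,|p|>p_F,\ |p-k|\le p_F,\ |k|\le 3r\,\} \subseteq \{p \in \S\} \cap \{p-k\in\S\}$; the only point requiring care is tracking the support conditions consistently with the constants appearing in $\S$ and in $3\supp\hat V$.
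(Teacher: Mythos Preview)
Your proof is correct and is exactly the kind of argument the paper has in mind: the authors state that the lemma ``follow[s] from straightforward manipulation of the CAR'' and leave it as an exercise, and your computation of $[\N_\S,a_{p-k}a_p]$ together with the triangle-inequality inclusion into $\S$ is precisely that exercise carried out in full. The only point worth noting is that your argument also makes explicit \emph{why} the factor $3$ in the definition of $\S$ matches the hypothesis $k\in 3\supp\hat V$, which the paper leaves implicit.
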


\subsection{Estimates}
In this subsection we state estimates that shall be used extensively for the rest of this article. 
Most of these are operator estimates for observables in $\F$ containing the fermionic creation
and annihilation  operators $a_p$ and $a_p^*$.
We remind the reader that these are bounded operators with norm
$ 
		\|	a_p	 \|_{B (\F)}    \,   =  \,      \|	 a_p^*  	\|_{B(\F) }  \  \leq  \  | \Lambda|^{1/2}   
	 $ 
	 for all $ p \in \Lambda^* \ . $ \\

 \noindent \textit{Preliminary  estimates}.   
  Let us state    some  elementary estimates that we shall make use of. 

\vspace{2mm }

\noindent 1) For any   function $f : \Lambda^* \rightarrow \C$, 
  $ k \in \Lambda^* $ and   $\Psi \in \F$ there holds 
 \begin{equation}
 \label{cross estimate}
\Big\|		
\int_{  \Lambda^* } f(p) a_{p+k}^* a_{p} \d p 
\Psi 
 \, 
 \Big\|_\F 
\leq 
  \ 	\|f 	\|_{ \ell^\infty   }
	\|	 \N \Psi 	\|_\F  \ . 
 \end{equation}

\vspace{1mm}

\noindent 2)   The Heisenberg evolution  of the creation- and annihilation- operators $a_p (t) $ and $a_p^*(t)$ are bounded operators in $\F$,  with norms 
  \begin{equation}
\label{bounds on a}
  \|	a_p	(t) \|_{B (\F)}   \,   = \,    \|	 a_p^* (t) 	\|_{B(\F) }  \  \leq  \  | \Lambda|^{1/2} 
   ,
   \qquad \forall t  \in \R , \   p \in \Lambda^* \ . 
  \end{equation}

\vspace{1mm}

\noindent 3) 
The Heisenberg evolution of the 
 $b$-operators  are bounded operators in $\F$ with norms   
\begin{equation}
\label{bounds on b}
\|	 b_k(t) 	\|_{B(\F) }
\, = \, 
 \|   b^*_k(t) 	\|_{B(\F) 	}
 \ \leq  \ 
 | \Lambda|
  \int_{  \Lambda^* } \chi^\perp (p) \chi(p-k) \d p 
 \  \lesssim \ 
R 
\end{equation}  
for all $ k \in \supp \hat V $ and $ t \in \R$. Let us recall that $R =   |\Lambda| p_F^{d-1 }$ . \\ 

{\blu 
 \begin{proof}
 (1) 	 Let $\Phi \in \F$ and note that a  two-fold application of  the Cauchy-Schwarz inequality  gives
 \begin{align*}
 \Big|    	 \< \Phi,   \int_{  \Lambda^* } f(p) a_{p+k}^* a_{p} \d p  \Psi \>_\F  \Big| 
 & \leq \| f \|_{\ell^\infty}  \int_{  \Lambda^* }   dp \|  a_{p+k } \Phi  \|_\F \|    a_p \Psi \|_\F  \\ 
 & \leq \| f \|_{\ell^\infty}  \|  \N^{1/2} \Phi \|_\F \| \N^{1/2} \Psi	 	\|_\F \ . 
 \end{align*}
The identity  $\int_{  \Lambda^* } f(p) a_{p+k}^* a_{p} \d p  = \N^{-1/2 } \int_{  \Lambda^* } f(p) a_{p+k}^* a_{p} \d p  \N^{1/2}$ combined with the previous estimate is sufficient to finish the proof after taking the supremum over $\Phi$. \\

(2) This is an easy consequence of the unitarity of  the evolution group 
$e^{itH}$ and $\| a_k \|_{ B (\F)} \leq  |\Lambda|^{1/2}$, which follows from the CAR.  \\

(3) 
From the unitarity of the evolution group,  we obtain 
\begin{equation}
	 \|  b_k(t)	\|_{ B(\F)}	  = 
	  \|  b_k  \|_{ B(\F)}	 \leq 
	  \int_{\Lambda^*} \chi^\perp (p) \chi(p -k) \|   a_{p-k} a_p \|_{ B(\F)} dp 
	  \leq 
 |\Lambda|
	  \int_{\Lambda^*} \chi^\perp (p) \chi(p -k)   dp  \ . 
\end{equation}
For the final estimate, we observe that $ p\mapsto \chi^\perp (p) \chi (p-k)$ is supported 
on a neighborhood of order $|k|$ around the Fermi surface, with radius $p_F \gg1 $.
Thus, a   point counting estimate shows that 
$	  \int_{\Lambda^*} \chi^\perp (p) \chi(p -k)   dp   \lesssim  |\Lambda|  | k| p_F^{d-1}$. 
Since $ k \in \supp \hat V$, the value $|k|$ may be absorbed in the constant. This finishes the proof. 
 \end{proof}
}

\noindent \textit{Commutator  estimates}.  
Let us now describe  the most important estimates concerning $b$- and $D$-operators. 
Essentially, commutators between $b$- and $D$-operators--together with their contracted versions $b(\vp)$ and $D(\vp)$--can be classified into four types, depending on the estimate they verify. 
It turns out that these four types of estimate exhaust \textit{all} possibilities that show up 
in the double commutator expansion for $f_t(p)$.
In other words, these estimates are enough to analyze the nine terms 
$\{T_{\alpha ,  \beta } (t,p )\}_{	\alpha, \beta	 \in \{ F, FB,B\} } $.

\vspace{2mm}

We remind the reader of the relation $D_k^*(t)   =  D_{-k} (t)$, valid for all $ k \in \Lambda^*$ and $t \in \R$.
In particular, \textit{all} of the upcoming inequalities are valid if we replace $D$ by  $D^*$.
On the other hand, 
we warn the reader that 
this property  \textit{does not} hold for $b$-operators in general.


  \vspace{2mm}

  The first type of estimate concerns the combination of operators 
  that are relatively bounded with respect to the number operator $\N = \int_{  \Lambda^{*} }a_p^*a_p \d p $, 
  or any of its powers. 
  We call these \textit{Type-I} estimates. 
  They are contained in the following lemma.

\begin{lemma}[Type-I estimates]
	\label{lemma type I}
There exists a constant $C>0$
such that  for any 
$\Psi \in \F $,  $k,\ell \in \Lambda^*$,  
		  and $t ,s,r   \in \R$
		  the following  inequalities hold true  
  \begin{align}
\label{lemma type I.1}
	\|		D_k(t) \Psi	\|_\F  
	&  \ \leq  \  
	C 
  	\|		\N \Psi	\|_{\F } \\ 
  	\label{lemma type I.1 2}
	\| [   D_k (t)  , D_\ell (s)    ]		\Psi \|_\F 
	&   \   \leq  \ 
	C
	\|	 \N \Psi	\|_\F 		\\
	  	\label{lemma type I.1 3}
	\|	[D_k  (t)  , D_\ell  (s)  D_\ell (r) ]  
	\Psi	\|_\F 
	&    \  \leq  \ 
	C 
	\|	 \N^2 \Psi 	\|_\F  	 \  .  
\end{align}
	\end{lemma}
  
  \begin{remark}
  	{\blu  
  	 Previously in the literature,  the first estimate \eqref{lemma type I.1} was considered
  	 in a similar fashion in \cite{Benedikter2020}. 
  	 To the best of our knowledge, higher-order commutator estimates  like
  	 \eqref{lemma type I.1 2}  and 
  	   	 \eqref{lemma type I.1 3}   are new. 
  	   	} 
  \end{remark}

  The second type of estimates concerns combination of operators that can be bounded  above 
  by 
  the surface-localized number operator
  $\po  =  \int_\S a_p^* a_p \d p$,
  up to pre-factors that can grow with the recurring parameter 
  $R   =  |\Lambda| p_F^{d-1}$. 
  We call these \textit{Type-II} estimates, and they are contained in the following lemma

  \begin{lemma}[Type-II estimates]
\label{lemma type II}
There exists a constant $C>0$
such that for any $\Psi \in \F $,  $ k, \ell ,q   \in \supp \hat V $,    
and $t ,s,r   \in \R$
the following inequalities  hold true

  	\begin{align}
\label{lemma type II.1}
  		\|	b_k(t) \Psi 	\|_\F 
  		& 
   \ 	\leq  \  C 
  	R^{\frac{1}{2}}
  	\, 
  		\|	 \po^{1/2} \Psi 	\|_\F 
  		\\
  		\|	[b_\ell  (t) , D_k(s)] \Psi 	\|_\F 
  		&    
 \   		\leq   \ C 
  	R^{\frac{1}{2}}
  	 \, 
  		\|		 \po^{1/2} \Psi   	\|_\F  
  		\\
  		\|	 [ [b_\ell  (t) , D_k (s) ] , D_q (r)   ]\Psi 	\|_\F 
  		&  \    \leq  \  C 
  	R^{\frac{1}{2}} 
  	\, 
  		\|		 \po^{1/2} \Psi   	\|_\F  \ . 
  	\end{align}
  \end{lemma}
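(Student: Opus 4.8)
\textbf{Proof proposal for the Type-II estimates (Lemma \ref{lemma type II}).}

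The plan is to reduce all three inequalities to a single mechanism: every operator appearing on the left-hand side can, after using the CAR-based identities of the previous subsection, be written as a finite sum of terms of the schematic form $\int_{\Lambda^*} g(p)\, a^\sharp_{\mu_1(p)}(t_1)\, a^\sharp_{\mu_2(p)}(t_2)\,\d p$ where the momentum labels are affine shifts of $p$ by the external momenta $k,\ell,q$, the phase factors $e^{i(\cdot)E_\cdot}$ have modulus one, and — crucially — at least one of the two fermion operators carries a momentum label forced by a product of $\chi,\chi^\perp$ factors to lie in a slab of width $\lesssim r$ around the Fermi sphere, hence inside $\S$ (this is exactly why the prefactor $3$ was built into the definition \eqref{S definition} of $\S$, and why one restricts $k,\ell,q \in \supp\hat V$, i.e. $|k|,|\ell|,|q|\le r$). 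First I would dispose of the bound $\|b_k(t)\Psi\|_\F \le C R^{1/2}\|\po^{1/2}\Psi\|_\F$: from Definition \ref{definition Heisenberg b D}, $b_k(t)=\int_{\Lambda^*}\chi^\perp(p)\chi(p-k)e^{-it(E_p+E_{p-k})}a_{p-k}a_p\,\d p$, and the constraint $\chi^\perp(p)\chi(p-k)=1$ with $|k|\le r$ forces $p_F-r\le |p|,|p-k|\le p_F+r$, so both $p$ and $p-k$ lie in $\S$. Writing $b_k(t)\Psi$ as this integral, applying Cauchy--Schwarz in $p$ against the indicator of the lune set (whose measure in our units is $\lesssim R$), and using $\|a_p\| \le |\Lambda|^{1/2}$ together with $\sum_{p\in\S}\|a_p\Psi\|^2 \simeq |\Lambda|\,\langle\Psi,\po\Psi\rangle$, gives the claimed $R^{1/2}\|\po^{1/2}\Psi\|_\F$.

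Next I would handle $[b_\ell(t),D_k(s)]$. By the identity \eqref{commutator b D*} (in the adjoint form, using $D_k=D_{-k}^*$) the commutator is again an integral of a two-fermion monomial $a_{p-\ell}(s)\,a_{p-k}(s\text{ or }t)$ with a product of four $\chi/\chi^\perp$ factors; one of those factors is $\chi(p-\ell)$ or $\chi^\perp(p+\ell)$ coming from the $b$-commutation, combined with $\chi^\perp(p)\chi^\perp(p-k)$ or $\chi(h)\chi(h+k)$ from $D$, and since $|\ell|,|k|\le r$ the joint support forces the relevant momentum into $\S$. I would then estimate exactly as in the $b_k$ case — Cauchy--Schwarz in the integration variable restricted to a set of measure $\lesssim R$, boundedness of the second $a^\sharp$ by $|\Lambda|^{1/2}$, and recognition of the remaining $\ell^2$-sum of $\|a_p\Psi\|^2$ over $p\in\S$ as $|\Lambda|\langle\Psi,\po\Psi\rangle$ — to get the $R^{1/2}\|\po^{1/2}\Psi\|_\F$ bound. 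The double commutator $[[b_\ell(t),D_k(s)],D_q(r)]$ is handled by iterating: since $D_q$ is a number-type fermionic operator that, by Lemma \ref{lemma type I}, is relatively $\N$-bounded but more importantly here preserves the structure "one momentum label pinned to $\S$", commuting it into the already-localized two-fermion integral produces again a (finite sum of) integrals of the same schematic form, now possibly with three fermion operators, two of which can be bounded by $|\Lambda|^{1/2}$ while the Cauchy--Schwarz is performed against the $\S$-localized variable; the measure-$\lesssim R$ constraint survives, yielding the same $R^{1/2}\|\po^{1/2}\Psi\|_\F$.

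The main obstacle I anticipate is purely bookkeeping rather than analytic: after the double commutator one gets many terms (from the CAR anticommutators hitting different labels, and from the two choices of which operator in $D_q$ contracts), and for each one must verify that the product of $\chi/\chi^\perp$ factors genuinely pins a label to $\S$ with the width-$3r$ slab — the factor $3$ is there precisely so that after two shifts by momenta of size $\le r$ the label is still within $3r$ of the sphere. A secondary subtlety is making sure the Cauchy--Schwarz is always applied in the variable that is $\S$-constrained (not in a free variable, which would only give a $\|\N\Psi\|$ bound with a worse, $\S$-independent prefactor); in the terms where the commutator with $D_q$ creates a new free integration variable one must check that variable is itself constrained, or else reorganize the estimate so the $\po^{1/2}$ is extracted from the originally localized slot. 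None of these steps requires anything beyond the CAR, the trivial operator norm bound $\|a_p\|\le|\Lambda|^{1/2}$, Cauchy--Schwarz, and the elementary counting $|\{p\in\Lambda^*: p_F-3r\le|p|\le p_F+3r\}|\lesssim |\Lambda|p_F^{d-1}=R$, so the lemma follows once the case analysis is organized cleanly.
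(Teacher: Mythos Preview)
Your proposal is correct and follows essentially the same approach as the paper's sketch: Cauchy--Schwarz in the $\S$-localized momentum variable, the bound $\|a_p\|\le|\Lambda|^{1/2}$ on the remaining slot, and the observation that each commutation with a $D$-operator shifts the $\chi/\chi^\perp$ support by at most $r$, so after two iterations the pinned momentum still lies in the width-$3r$ slab $\S=\S(3)$. One small correction: the double commutator $[[b_\ell,D_k],D_q]$ remains a two-fermion $aa$-type operator (not three), since commuting the number-type $D_q$ against $a\,a$ contracts one pair via the CAR and returns $a\,a$; this only makes the bookkeeping you worried about cleaner.
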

  
  \begin{remark}
  	In certain proofs, it will be convenient to use the upper bound 
\begin{equation*}
  	\N_S \leq \N . 
  	\end{equation*}
  	The reader should  then have in mind   that the (weaker) version 
  	of the      estimates contained in Lemma \ref{lemma type II}, 
     	in which     $\N_\S$ is replaced by $\N $, 
     	 also holds true. 
  \end{remark}

  The third     type of estimate corresponds to a combination of 
  operators that have been contracted with a test function $\vp \in \ell_m^1$, 
  and  their operator norm can be bounded above in terms of the integral 
  \begin{equation}
  	\int_\S |\vp(p) | \d p   \  \lesssim \    p_F^{- m }	\|	 \vp \|_{\ell_m^1 }  \  . 
  \end{equation}
We call these \textit{Type-III} estimates, and they are contained in the following lemma.

  \begin{lemma}[Type-III estimates]
\label{lemma type III}
Let $m>0$. 
There exists a constant $C>0$ such that   	
for all $ k, \ell , q  \in  \supp \hat V  , \, $
 $t , s, r \in \R$ and $\vp  \in \ell^1_m(\Lambda^*) $
  	the following inequalities {\blu hold}  true 
  	\begin{align}
\label{eq:1}
  		  \|	b_k(t, \vp)	\|_{	B(\F) 	} 
   &     	 \ 	\leq    	 \  
   C  \, 
|	\Lambda	| 
\, p_F^{-m}   \, 
\|	\vp 	\|_{\ell_m^1 }  \\ 
\label{eq:2}
  		   	\|		 [b_\ell (t)  , D_k(s, \vp)]	 	\|_{B(\F) }
  		 &     	 \ 	\leq   	 \ 
  		 C\, 
  		|	\Lambda	| 
  		\, p_F^{-m}   \, 
  		\|	\vp 	\|_{\ell_m^1 }  \\ 
  		\label{eq:3}
  		 \|	 [	[ 	 b_k(t )  , D_\ell(s)  	]	 , D_q(r, \vp)]	\|_{B(\F) 	}
  		 &     	 \ 	\leq   	 \ 
  		 C\, 
  		|	\Lambda	| 
  		\, p_F^{-m}   \, 
  		\|	\vp 	\|_{\ell_m^1 }  
  	\end{align}
  	\end{lemma} 
\begin{remark}
	Type-III estimates are symmetric with respect to the exchange of $b$ and $b^*$.
	This property follows from the relation $\| 	 \O	\|_{B(\F)} = \|	 \O^*	\|_{B(\F)}$
		and the symmetry  $D_k^* (t) = D_{-k}(t)$. 
\end{remark}

    The fourth and final     type of estimate corresponds to combination of 
  operators that have been contracted with a test function $\vp \in \ell_m^1$, 
  and  their operator norm can be bounded above in terms of the integral 
  \begin{equation}
  	\int_{\Lambda^*} |\vp(p) | \d p \   = \  \|	 \vp 	\|_{\ell^1 }  \  \lesssim \   	\|	 \vp \|_{\ell_m^1 }  \ , 
  \end{equation}
and a pre-factor, depending on the volume of the box $|\Lambda|. $
  We call these \textit{Type-IV} estimates, and they are contained in the following lemma.

  \begin{lemma}[Type-IV estimates]
\label{lemma type IV}
There exists a constant $C>0$ such that   	
for all $ k, \ell , q  \in  \Lambda^* ,  \, $
$t , s, r \in \R$ and $\vp  \in \ell^1(\Lambda^*) $
the following inequalities  hold true
  	  	\begin{align}
  		\|	D_k (t, \vp)	\|_{	B(\F) 	} 
  	& 	\ 	\leq \  
  	C   
  		\, 
  		|\Lambda |
  		\|	\vp\|_{\ell^1 }  \\
  		  	\|    [  D_k (t, \vp ) , D_\ell (s) ]		\|_{	{B} (\F) 	} 	 
  	& 	\ 	\leq    \ 
  	C \, 
|\Lambda |
\|	\vp\|_{\ell^1 }   \  . 
  	\end{align}

  	  \end{lemma}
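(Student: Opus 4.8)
The plan is to derive both inequalities from a single mechanism: expand each operator into its defining fermionic bilinears, insert the elementary bound $\|a_p^*(t)\|_{B(\F)}=\|a_p(t)\|_{B(\F)}\le|\Lambda|^{1/2}$ of \eqref{bounds on a}, and close with the triangle inequality. The only genuinely delicate point is bookkeeping the powers of $|\Lambda|$ generated by the conflicting conventions $\int_{\Lambda^*}\d p=|\Lambda|^{-1}\sum_{p\in\Lambda^*}$ and $\delta(q)=|\Lambda|\,\delta_{q,0}$. The reason the plain $\ell^1$-norm of $\vp$ suffices here — in contrast to the Type-III estimates, where decay near $\S$ is exploited — is that contracting with $N(\vp)$ replaces a constant coefficient by a \emph{difference} of $\vp$: by the CAR identity $[N(\vp),a_p^*a_q]=(\overline{\vp(p)}-\overline{\vp(q)})\,a_p^*a_q$, the resulting coefficient $\vp(p)-\vp(p-k)$ obeys $|\vp(p)-\vp(p-k)|\le|\vp(p)|+|\vp(p-k)|$, so $\int_{\Lambda^*}|\vp(p)-\vp(p-k)|\,\d p\le 2\|\vp\|_{\ell^1}$ uniformly in $k$.

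For the first bound I would use the explicit form of the contraction — \eqref{operator D phi} with $\vp\mapsto\overline{\vp}$ and a sign, or equivalently compute $D_k(t,\vp)=[N(\vp),D_k(t)]$ directly from the CAR identity above and $a_p^*(t)=e^{itE_p}a_p^*$, $a_p(t)=e^{-itE_p}a_p$ — obtaining
\begin{align*}
 D_k(t,\vp)
 &=\int_{\Lambda^*}\chi^\perp(p,p-k)\big(\overline{\vp(p-k)}-\overline{\vp(p)}\big)\,a_{p-k}^*(t)\,a_p(t)\,\d p\\
 &\quad-\int_{\Lambda^*}\chi(h,h+k)\big(\overline{\vp(h+k)}-\overline{\vp(h)}\big)\,a_{h+k}^*(t)\,a_h(t)\,\d h.
\end{align*}
Bounding $\|a_{p-k}^*(t)\,a_p(t)\|_{B(\F)}\le|\Lambda|$ and using $|\Lambda|^{-1}\sum_{p\in\Lambda^*}\big(|\vp(p-k)|+|\vp(p)|\big)=2\|\vp\|_{\ell^1}$ for each of the two integrals, the triangle inequality yields $\|D_k(t,\vp)\|_{B(\F)}\le 4|\Lambda|\,\|\vp\|_{\ell^1}$, uniformly in $k$ and $t$.

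For the second bound one cannot simply multiply operator norms, since $D_\ell(s)$ is unbounded (only $\N$-bounded, by the Type-I estimate, Lemma~\ref{lemma type I}); the gain comes entirely from the commutator structure. I would split $[D_k(t,\vp),D_\ell(s)]$ into its particle/hole cross-terms, each a commutator of two bilinears, and apply the CAR: $[a_a^*a_b,a_c^*a_d]=\delta(b-c)\,a_a^*a_d-\delta(a-d)\,a_c^*a_b$ (for the Heisenberg-evolved operators the right-hand side picks up a phase of modulus one, which is irrelevant). Since $\delta(\cdot)=|\Lambda|\,\delta_{\cdot,0}$, each Kronecker delta collapses one of the two momentum sums and supplies a factor $|\Lambda|$, turning the double normalization $|\Lambda|^{-2}\sum_p\sum_{p'}$ into $|\Lambda|^{-1}\sum_p$. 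In the surviving single sum the coefficient inherited from $D_\ell(s)$ is a product of $\chi$'s, hence $\le1$, while the one inherited from $D_k(t,\vp)$ is of the form $|\vp(\cdot)-\vp(\cdot)|$; bounding the remaining bilinear by $|\Lambda|$ as before and summing over the finitely many cross-terms gives $\|[D_k(t,\vp),D_\ell(s)]\|_{B(\F)}\le C|\Lambda|\,\|\vp\|_{\ell^1}$, uniformly in $k,\ell,t,s$.

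The main obstacle, such as it is, is purely quantitative rather than conceptual: one must track the powers of $|\Lambda|$ carefully so as to land on exactly one power (the interplay of $\int=|\Lambda|^{-1}\sum$ with $\delta=|\Lambda|\,\delta_{\cdot,0}$ is where an error would creep in), note that the Heisenberg phases have modulus one and hence drop out of every estimate, and — for the commutator bound — resist the temptation to bound $\|D_\ell(s)\|_{B(\F)}$ (which is infinite), using instead the delta-collapse in the commutator. The finitely many particle/hole cross-terms in the second bound are all estimated by the same two lines.
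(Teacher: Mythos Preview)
The proposal is correct and follows essentially the same approach as the paper's sketch: use the explicit expression \eqref{operator D phi} for $D_k(t,\vp)$, bound each $a^*a$ by $|\Lambda|$ via \eqref{bounds on a}, and integrate against $|\vp(p)-\vp(p-k)|\le|\vp(p)|+|\vp(p-k)|$ to obtain $4|\Lambda|\|\vp\|_{\ell^1}$; then repeat after computing the commutator $[D_k(t,\vp),D_\ell(s)]$, where the CAR delta collapses one momentum sum. Your remarks on the $|\Lambda|$ bookkeeping and on why one must not attempt to bound $\|D_\ell(s)\|_{B(\F)}$ directly are accurate and add useful detail beyond the paper's terse sketch.
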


Let us now turn to the   proofs of  Lemmas \ref{lemma type I}, \ref{lemma type II}, \ref{lemma type III} and \ref{lemma type IV}. 
  
\begin{proof}[Proof of Lemma \ref{lemma type I}]
Let us fix $\Psi \in \F $,  $k,\ell \in \Lambda^*$,  
and $t ,s,r   \in \R$. 

\vspace{1mm}

\noindent  \textit{\blu Proof of} \eqref{eq:1}.
We shall make use of the elementary estimate 
found in \eqref{cross estimate}.
To this end, starting from \eqref{definition Heisenberg b D} 
we decompose
\begin{equation}
	D_k(t) 
	 = 
	 \int_{  \Lambda^{*} }  
	 f^{(1)} (t,k,p) 
	 a_{p-k}^* a_p \d p 
	  + 
	  	 \int_{  \Lambda^{*} }  
	  f^{(2)} (t,k,h) a_{h + k }^* a_h \d h 
\end{equation}
where 
$	 f^{(1)} (t,k,p) = \chi^\perp(p,p-k) e^{it(E_{p-k}-E_p)}$
and 
$	 f^{(2)} (t,k,h) = \chi(h,h+k) e^{it(E_{h+k}-E_h)}$. 
Clearly, $\|	 f^{(1)}	 (t,k )\|_{\ell^\infty}  =  \|	 f^{(2)}	 (t,k )\|_{\ell^\infty}  =  1$.
Hence, it follows that 
$\|	 D_k(t) \Psi	\|_\F \leq 2 \|   \N\Psi	\|_\F$.

\vspace{1mm}

\noindent  {\textit{\blu Proof of} \eqref{eq:2}.} 
The proof is extremely similar--it suffices to note that the commutator can be calculated explicitly to be 
\begin{align}
\nonumber 
	[D_k(t), D_\ell(s)]
	= 
& 	\int_{  \Lambda^{*} }    \chi^\perp(	p , p -\ell , p - k - \ell 	)	e^{  i  (s-t) E_{p-\ell} }  a^*_{p-k-\ell} (t) a_p(s)		 \d p 	\\
\nonumber 
&  \quad -  
	\int_{  \Lambda^{*} }    \chi^\perp(	p , p - k , p - k - \ell 	)	e^{  i  (t-s) E_{p- k } }  a^*_{p-k-\ell} (s) a_p(t)		 \d p \\
\nonumber 
	&  \quad + 
		\int_{  \Lambda^{*} }    \chi (	h  , h + \ell ,  h+ k +  \ell 	)	e^{  i  (s-t) E_{h+ \ell} }  a^*_{ h + k + \ell} (t) a_h (s)		 \d h   \\
		&  \quad - 
			\int_{  \Lambda^{*} }    \chi  (	h  , h  + k  , h + k+  \ell 	)	e^{  i  (t-s ) E_{ h+ k } }  a^*_{h + k \ell} (s) a_h(t)		 \d h   \ . 
\end{align}
Hence,  the same argument shows that 
$\| [D_k(t) ,D_\ell(s)]  \Psi \|_\F  
 \leq 4 \|   \N  \Psi		\|_\F$.

\vspace{1mm}

\noindent  {\textit{\blu Proof of} \eqref{eq:3}.} 
For simplicity, let us suppress the time labels, and the momentum variables. 
In what follows $C>0$ is a constant whose value may change from line to line. 
We calculate using the previous results, and the commutation relations $[\N ,D]= 0$
\begin{align}
\nonumber
	\|  [ D ,D D ] \Psi		\|_\F 
& 	\leq 
	\| 		D [D,D]	\Psi		\|_\F
	+ 
	\| 	 [D,D]	 D	\Psi		\|_\F			\\ 
\nonumber
& 	\leq 
	C 	\| 		\N [D,D]	\Psi		\|_\F
	+ 
	C 	\| 		\N D	\Psi		\|_\F		\\ 
\nonumber
 & 	= 
	C 	\| 		 [D,D]	 \N \Psi		\|_\F
	+ 
	\| 	 C 	D\N 	\Psi		\|_\F		\\ 
& 	\leq C \|	 \N^2 \Psi	\|_\F  \ . 
\end{align}
This finishes the proof.
\end{proof}

  \begin{proof}[Proof of Lemma \ref{lemma type II}]
Let us fix  $\Psi \in \F $,  $ k, \ell ,q   \in \supp \hat V $,    
  and $t ,s,r   \in \R$. 
  \vspace{1mm}
  
  Let us give the main ideas behind the proof. 
  Let us recall that $\supp\hat V$ is contained in a ball of radius $r>0$. 
For $ n \in \mathbb N $,   define the Fermi surfaces 
  \begin{equation}
  	\S ( n) \equiv \{ 	 p \in \Lambda^* \ : \  p_F - n r \leq |p| \leq p_F + nr 		\}  , 
  \end{equation}
and the
number operators $\N_{\S(n)	} \equiv \int_{\S(n)} a_p^* a_p \d p$. 
  In particular, we are denoting $\S = \S(3)$ in \eqref{fermi sphere}. 
Given $k,\ell \in \supp \hat V$, 
consider operators of the form
\begin{align}
\beta_k   
\equiv
  \int_{  \Lambda^{*} }  \,  \1_{ \S(1)} (p)		\,  a_{p+k } a_p  \, \d p  
\  ,  \qquad \t{and} \qquad 
\mathcal D_\ell   \equiv  \int_{  \Lambda^{*} }  a_{p+\ell}^* a_p \d p  \ . 
\end{align}
  One should think generically  of $\beta_k$  as $b_k(t)$
  and $\mathcal D_\ell$ as $D_\ell(s)$. 
We make the following two observations.
First, $\beta_k$ can be controlled   by $\N_{\S(1) }$ in the following sense 
\begin{align}
		\nonumber 
		\| \beta_k  \Psi	\|_\F
		&  \leq 
		|\Lambda|^{\frac{1}{2}} 
		\int_{  \Lambda^{*} } 
		\1_{	\S(1)	}(p)
		\|	  a_p \Psi	\|_\F 			 \\
		\nonumber 
		& \leq
		|\Lambda|^{\frac{1}{2}} 
		\Big(
		\int_{  \Lambda^{*} }
		\1_{\S(1) 	}(p) \d p 
		\Big)^{\frac{1}{2}}
		\Big(
		\int_{  \Lambda^{*} }
		\1_{\S(1) }(p) 
		\|	   a_p \Psi	\|_\F^2
		\Big)^{\frac{1}{2}}					\\
		&  \lesssim 
				|\Lambda|^{\frac{1}{2}}  
p_F^{	\frac{d-1}{2}	}
		\|	 \N_{\S(1) }^{\frac{1}{2}} \Psi	\|_\F 	  
		 =
		R^{\frac{1}{2}}
				\|	 \N_{\S(1) }^{\frac{1}{2}} \Psi	\|_\F 		\  , 
\label{beta bound}
	\end{align}
where we used a basic geometric estimate 
to find that 
$
	\int_{  \Lambda^{*} }
\1_{\S(1) 	}(p) \d p  
\lesssim p_F^{d-1}.
$
Secondly, the commutator between $\beta_k$ and $\mathcal D _\ell$ can be calculated to be 
\begin{align}
	[\beta_k , \mathcal D _\ell]
	= 
	\int_{  \Lambda^{*} }
		\1_{	\S(1)	} ( p -  \ell) 
	a_{ p + k - \ell}	 a_p \d p 
	+ 
	\int_{  \Lambda^{*} }
	\1_{	\S(1)	} ( p ) 
	a_{ p + k - \ell}	 a_p \d p   \ . 
\end{align}
Since both $k,\ell \in \supp \hat V$, 
it holds that $\1_{\S(1)}  (p - \ell ) \leq \1_{\S(2)} (p)$, 
and of course 
$\1_{\S(1)} (p) \leq \1_{\S(2)}(p)$.
Consequently, the same argument that we used to obtain \eqref{beta bound} can now be repeated on each term of the above equation 
to obtain
\begin{equation}
	\|	 [\beta_k , \mathcal D _\ell] 	\Psi 	\|_\F 
	\lesssim 
			R^{\frac{1}{2}}
	\|	 \N_{\S(2) }^{\frac{1}{2}} \Psi	\|_\F 		\ .   
\end{equation}
  The same argument  can be repeated for the next commutator with $\mathcal D _q$,
  provided one enlarges the Fermi surface
  from $\S(2)$ to $\S(3)$.
  In other words, it holds that 
  \begin{equation}
  		\|	 [  [\beta_k , \mathcal D _\ell]  , \mathcal D _q]	\Psi 	\|_\F 
  	\lesssim 
  	R^{\frac{1}{2}}
  	\|	 \N_{\S(3) }^{\frac{1}{2}} \Psi	\|_\F 		\ .   
  \end{equation}

 \vspace{1mm}
 The above motivation contains the main ideas for the proof of the lemma. 
One merely has to include additional  bounded coefficients 
in the definition of $\beta_k$ and $\mathcal D _\ell $
to account for the dependence on $t \in \R$ and $ k \in \Lambda^*$,
that comes from $b_k(t)$ and $D_\ell(s)$. 
We leave the details to the reader.
   	  \end{proof}

  \begin{proof}[Proof of Lemma \ref{lemma type III}]
Let us fix $m>0$,  $ k, \ell , q  \in  \supp \hat V  , \, $
$t , s, r \in \R$ and $\vp  \in \ell^1_m(\Lambda^*) $.
  	Starting from Eq. \eqref{b phi} we easily estimate that 
  	 \begin{align}
  	 	\|	 b_k(t,\vp)	\|_{B(\F)}
  	 	\leq 
2   	 	|\Lambda|  \int_{\Lambda^*} \1_\S(p) |\vp(p)|  \d p  \ . 
  	 \end{align}
It suffices then to note that 
$\int_{\S} |\vp(p)| \d p \lesssim  p_F^{-m}  \|	 \vp	\|_{\ell_m^1 }$. 
For the next estimate,  the same analysis can be carried out, 
starting from the commutator identity found in Eq. \eqref{commutator b D phi}. 
For the last estimate, one has to calculate the upcoming 
commutators and bound each term in the same way.
  \end{proof}

  \begin{proof}[Sketch of Proof of Lemma \ref{lemma type IV}]
Let us fix $k \in \Lambda^*$ and $\vp \in \ell^1$.
  	Starting from Eq. \eqref{operator D phi} 
  	we use $ 0 \leq \chi , \chi^\perp \leq1$
  	and
  	$\|	 a_p (t)	\|_{	B(\F)	} = \|	 a_p^* (t)	\|_{B(\F)} \leq |\Lambda|^{\frac{1}{2}}$
to find 
  	\begin{align}
  		\|	 D_k(t,\vp)				\|_{	B(\F)	} 
  		\leq 4 |\Lambda| 		 \int_{\Lambda^*}		|	\vp(p)	| \d p \ . 
  	\end{align}
A similar inequality can be found upon calculation
of the commutator $ [   D_k(t)  ,  D_\ell ( s  , \vp )   ].$  	
This finishes the proof. 
  \end{proof}

 \section{Tool Box II: Excitation Estimates}
\label{section number estimates}
In Section \ref{section preliminaries}
we introduced the two following observables: 
\begin{equation}
	\N = \int_{  \Lambda^{*} } a_p^* a_p \d p 
	\qquad 
\t{ 	and } 
	\qquad 
	\N_\S  = \int_{ \S  } a_p^* a_p \d p 
\end{equation}
The  main purpose of this section is to prove
estimates
that control 
 the {\blu growth in time} of  the expectation of $\N$ and $\N_\S$ 
  with respect to the interaction dynamics   $(\nu_t)_{t\in\R }$, defined in \eqref{interaction dynamics}.
These estimates are precisely stated in the following two propositions, which we prove in the {\blu remainder} of this section.

\begin{proposition}
	\label{prop N estimates 1}
	Let   $( \nu_t)_{t \in \R}$
	solve the interaction dynamics defined in \eqref{interaction dynamics}, with initial data $\nu_0 \equiv \nu$
	satisfying Condition \ref{condition initial data}. 
	Assume that $n = \nu(\N) \geq 1 $. 
	Then, 
	for all $ \ell \in \mathbb N$ there exists a constant $C> 0 $ such that  
	\begin{equation}
		\nu_t (\N^\ell   ) \leq C n^\ell 
		\exp(  C \lambda R t )   
		\  , \qquad \forall t  \geq 0 \ . 
	\end{equation}
\end{proposition}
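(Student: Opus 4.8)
The plan is to prove, via a Grönwall argument, the stated bound for $g_\ell(t) \equiv \nu_t\big((1+\N)^\ell\big)$, $\ell \in \mathbb N$, and then use $\N^\ell \le (1+\N)^\ell$. For the initial bound, since $\nu$ is number-conserving and quasi-free with $\nu(\N)=n\ge 1$, in the eigenbasis of its one-particle density matrix $\N$ has the law of a sum of independent Bernoulli variables whose parameters sum to $n$; the standard moment estimate for such sums gives $\nu(\N^\ell)\le C_\ell\, n^\ell$, hence $g_\ell(0)\le C_\ell\, n^\ell$. For the propagation, write $\h_I(t)=\lambda\big(V_F(t)+V_{FB}(t)+V_B(t)\big)$ as in \eqref{interaction hamiltonian 2}; differentiating \eqref{interaction dynamics} gives $|g_\ell'(t)| = \lambda\,\big|\nu_t\big([V_F(t)+V_{FB}(t)+V_B(t),\,(1+\N)^\ell]\big)\big|$, and it suffices to show this is $\le C_\ell\,\lambda R\, g_\ell(t)$.

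The first simplification is that $\N$ commutes with every $D$-operator $D_k(t)$, $D_k^*(t)$ (Lemma \ref{lemma N commutators}) and with $b_k^*(t)b_k(t)$, so $V_F(t)$ and the $b^*b$ piece of $V_B(t)$ drop out of the commutator; only the two terms $D_k^*(t)b_k(t)$, $D_k^*(t)b_{-k}^*(t)$ of $V_{FB}(t)$ and the two terms $b_k^*(t)b_{-k}^*(t)$, $b_{-k}(t)b_k(t)$ of $V_B(t)$ remain. For each of these, the pull-through formulae of Lemma \ref{lemma N commutators} rewrite the commutator with $(1+\N)^\ell$ as the same operator multiplied by a polynomial $P(\N)$ of degree $\ell-1$ with $|P(x)|\le C_\ell(1+x)^{\ell-1}$; for instance $[D_k^*(t)b_k(t),(1+\N)^\ell]=D_k^*(t)b_k(t)\big[(1+\N)^\ell-(\N-1)^\ell\big]$.

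The core estimate is to bound, for each surviving term tested against a vector $\Phi$ of the evolved mixed state, $|\langle\Phi, D_k^*(t)b_k(t)P(\N)\Phi\rangle|$ and its analogues by $C_\ell R\,\langle\Phi,(1+\N)^\ell\Phi\rangle$. I would do this by a symmetric weight-splitting: using $[D_k(t),\N]=0$ and the $\N$-shifts obtained by commuting $b$- and $b^*$-operators past functions of $\N$, write, for a suitable real $s$,
\begin{equation*}
	\langle\Phi,\, D_k^*(t)\, b_k(t)\, P(\N)\,\Phi\rangle
	=
	\big\langle\, (1+\N)^{s} D_k(t)\,\Phi\,,\ (1+\N)^{-s} b_k(t) P(\N)\,\Phi\,\big\rangle
\end{equation*}
and apply Cauchy--Schwarz. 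Then $\|(1+\N)^{s}D_k(t)\Phi\|\lesssim\|(1+\N)^{s+1}\Phi\|$ by Lemma \ref{lemma type I}, while $\|(1+\N)^{-s}b_k(t)P(\N)\Phi\|\lesssim R\,\|(1+\N)^{\ell-1-s}\Phi\|$ by the operator bound $\|b_k(t)\|_{B(\F)}\lesssim R$ of \eqref{bounds on b}; taking $s=\ell/2-1$ balances both exponents to $\ell/2$ and gives $\lesssim R\,\langle\Phi,(1+\N)^\ell\Phi\rangle$. For the two double-$b$ terms of $V_B(t)$ the splitting is the same, but now the double-$b$ factor must be controlled by applying the surface estimate $\|b_k(t)\Psi\|\lesssim R^{1/2}\|\po^{1/2}\Psi\|$ of Lemma \ref{lemma type II} \emph{twice}, together with $[\po,b_k(t)]=-2b_k(t)$ (Lemma \ref{lemma NS commutators}) and $\po\le\N$, yielding $\|b_{-k}(t)b_k(t)(1+\N)^{s}\Phi\|\lesssim R\,\|(1+\N)^{s+1}\Phi\|$; it is essential here to use $\po$ twice, since $\|b_k(t)\|_{B(\F)}\lesssim R$ applied to both factors would cost a factor $R^2$. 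Summing over the mixed-state ensemble, integrating $k$ against $\hat V$ and using $\|\hat V\|_{\ell^1}<\infty$ (Condition \ref{cond potentials}) then gives $|g_\ell'(t)|\le C_\ell\,\lambda R\, g_\ell(t)$, and Grönwall yields $g_\ell(t)\le C_\ell\, n^\ell e^{C_\ell\lambda R t}$.

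I expect the main obstacle to be precisely the bookkeeping in the core estimate: for every interaction term one must choose the estimate for its $b$-factors (the operator bound $\|b_k(t)\|_{B(\F)}\lesssim R$ versus the surface bound $\|b_k(t)\Psi\|\lesssim R^{1/2}\|\po^{1/2}\Psi\|$) so that the resulting bound carries \emph{exactly} the power $(1+\N)^\ell$ and \emph{exactly one} power of $R$; using the surface bound where the operator bound is required raises the $\N$-degree above $\ell$ and breaks the Grönwall closure, whereas using the operator bound where the surface bound is required produces $\lambda R^2$, too large for the stated exponent. A routine point, handled by the paper's convention that $\N$ may be treated as bounded (e.g. truncating $\N\wedge K$ and letting $K\to\infty$), is the justification of differentiating $g_\ell$ and of exchanging the sum over the ensemble and the $k$-integral with the limit; one also uses throughout the symmetry of all $b$-estimates under $b\leftrightarrow b^*$.
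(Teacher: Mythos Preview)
Your proposal is correct and follows essentially the same route as the paper's proof. The paper likewise reduces to a Gr\"onwall argument on $\nu_t(\N^\ell+\1)$ via the commutator estimates of Lemma~\ref{lemma commutator estimates 1}: $V_F$ and the $b^*b$ piece of $V_B$ drop out, the pull-through formula expands $[\N^\ell,b_k]$ as a polynomial in $\N$ of degree $\ell-1$ times $b_k$, the $D_k^*b_k$ terms are handled by the Type-I estimate for $D_k$ together with the operator bound $\|b_k\|\lesssim R$, and the $b_kb_{-k}$ terms by applying the Type-II surface estimate twice with $\N_\S\le\N$ to obtain a single power of $R$; the initial bound $\nu(\N^\ell)\lesssim n^\ell$ is invoked as a consequence of quasi-freeness, which is exactly your Bernoulli-sum argument.
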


\begin{proposition}
	\label{prop N estimates 2}
	Let   $( \nu_t)_{t \in \R}$
	solve the interaction dynamics defined in \eqref{interaction dynamics}, with initial data $\nu_0 \equiv \nu$
	satisfying Condition \ref{condition initial data}. 
	Further, assume that $ n = \nu(\N)  \lesssim R^{1/2}$. 
	Then, there exists a constant  $C>0$ such that  
	\begin{equation}
		\nu_t (\po)
		\leq
		C(  \lambda R \<t\> )^2   
		\exp(  C   \lambda R  t )    \  , 
		\qquad \forall t  \geq 0 \ , 
	\end{equation}
where $ \< t \>  = (1 + t^2 )^{\frac{1}{2}}. $
\end{proposition}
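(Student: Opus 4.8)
The plan is to derive Proposition \ref{prop N estimates 2} from the double-commutator expansion already set up in Section \ref{section preliminaries}, tested against the characteristic function of the Fermi surface, and then to estimate the resulting nine terms with the $b/D$-calculus of Section \ref{toolbox1} and the a-priori bound of Proposition \ref{prop N estimates 1}. Write $g(t):=\nu_t(\N_\S)$; since $\N_\S$ is bounded ($\|\N_\S\|_{B(\F)}\le R$) this is finite, and $g(0)=\int_\S\nu(a_p^*a_p)\,\d p\le C(\lambda R)^2$ by condition (C5) of Condition \ref{condition initial data}. Summing the expansion \eqref{double commutator} over $p\in\S$ — equivalently, replacing $a_p^*a_p$ there by $\N_\S=N(\1_\S)$, which is legitimate because the first-order term still vanishes by \eqref{nu vanishing} — gives, in the notation of \eqref{T alpha beta}--\eqref{T alpha beta phi}, the identity $g(t)-g(0)=-\lambda^2\sum_{\alpha,\beta}T_{\alpha,\beta}(t,\1_\S)$, i.e.
\begin{equation}
	g(t)\ =\ g(0)\ -\ \lambda^2\sum_{\alpha,\beta\in\{F,FB,B\}}\int_0^t\!\int_0^{t_1}\nu_{t_2}\Big(\big[\,[\,\N_\S\,,\,V_\alpha(t_1)\,]\,,\,V_\beta(t_2)\,\big]\Big)\,\d t_1\,\d t_2\ .
\end{equation}
This already displays the two powers of $\lambda$ and the two time integrations responsible for the factors $(\lambda R)^2$ and $\langle t\rangle^2$ in the claimed estimate.

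It then remains to bound the nine quantities $T_{\alpha,\beta}(t,\1_\S)$. I would first compute the inner commutators: $[\N_\S,V_\alpha(t_1)]$ is evaluated using $[\N,D_k(t)]=0$ (Lemma \ref{lemma N commutators}), $[\N_\S,D_k(t)]=D_k(t,\1_\S)$ with the explicit contraction formulas \eqref{operator D phi}, \eqref{commutator b D phi}, and Lemma \ref{lemma NS commutators} for the $b$-operators ($[\N_\S,b_k(t)]=-2b_k(t)$, etc.) — note that the number-conserving bosonic piece $\int\hat V b_k^*b_k$ of $V_B$ drops out. Commuting once more with $V_\beta(t_2)$ via the Leibniz rule and the commutator identities of Section \ref{toolbox1} (including \eqref{boson commutator}), each $T_{\alpha,\beta}(t,\1_\S)$ becomes a $\hat V^{\otimes 2}$-weighted, time-integrated expectation of precisely the nested $b/D$ expressions that the Type-I, II, III, IV estimates (Lemmas \ref{lemma type I}--\ref{lemma type IV}) are built to handle. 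Applying those estimates together with the Cauchy--Schwarz inequality $|\nu_{t_2}(\mathcal A\mathcal B)|\le\nu_{t_2}(\mathcal A\mathcal A^*)^{1/2}\nu_{t_2}(\mathcal B^*\mathcal B)^{1/2}$ bounds each term by powers of $R$ and of $\nu_{t_2}(\N^j)^{1/2}$ for small $j$ — here one uses that a single contracted factor $D_k(t,\1_\S)$ costs only $\lesssim|\Lambda|\,\|\1_\S\|_{\ell^1}\lesssim R$ (Type-IV; the weighted norm gives nothing extra since $\1_\S$ lives on $\S$), and that each fully contracted bosonic pair contributes a scalar of size $\lesssim R$ plus an $\N_\S$-controlled remainder. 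Inserting Proposition \ref{prop N estimates 1}, which gives $\nu_{t_2}(\N^j)\le Cn^je^{C\lambda Rt}$, and trading every surplus $n^j$ for $R^{j/2}$ via the hypothesis $n\lesssim R^{1/2}$, a term-by-term count shows that each contribution to the double integral is $\lesssim\lambda^2\big(R^2e^{C\lambda Rt}+R^2g(t_2)\big)$.

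Substituting back into the displayed identity yields an integral inequality of Grönwall type, $g(t)\le C(\lambda R)^2+C\lambda^2R^2\langle t\rangle^2e^{C\lambda Rt}+C\lambda^2R^2\int_0^tg(s)\,\d s$, and Grönwall's lemma then gives $g(t)\le C(\lambda R\langle t\rangle)^2e^{C\lambda Rt}$, which is the assertion (the last term is absorbed into the exponent, using $\lambda R<1$). If one prefers, the $\N_\S$-controlled remainders of the bosonic contractions can instead be discarded at the cost of a factor $R^{1/2}$, but whether this still closes depends on the individual term, so keeping them and appealing to Grönwall is cleaner.

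The step I expect to be the real obstacle is not any single estimate but the bookkeeping of the double commutator: $[[\N_\S,V_\alpha(t_1)],V_\beta(t_2)]$ unfolds into a sizeable list of terms mixing the ``fermionic'' $D$'s and the ``bosonic'' $b$'s, and in each one the Cauchy--Schwarz splitting must be arranged so that no power of $R$ is wasted — in particular the two terms from $[[\N_\S,V_B(t_1)],V_B(t_2)]$ in which two bosonic commutators are fully contracted into scalars sit exactly at the $R^2$ budget permitted by the bound, so the estimates must be sharp there. The contracted-$D$ contributions of $V_F$, which would be too large if they appeared at first order in $\lambda$, cause no difficulty precisely because $\nu([\N_\S,V_F(t_1)])=0$ as well (this is again \eqref{nu vanishing}): every such contribution therefore automatically carries the extra $\lambda$ and the extra time integration produced by passing to the second Duhamel order.
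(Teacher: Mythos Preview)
Your approach is genuinely different from the paper's. The paper does \emph{not} go to second Duhamel order here: it proves Proposition \ref{prop N estimates 2} by a first-order Gr\"onwall argument on $g(t)=\nu_t(\N_\S)$. The key input is a dedicated commutator lemma (Lemma \ref{lemma commutator estimates 2}) giving $|\langle\Psi,[\N_\S,V_F]\Psi\rangle|\lesssim\|\N_\S^{1/2}\Psi\|\,\|\N^{3/2}\Psi\|$, $|\langle\Psi,[\N_\S,V_{FB}]\Psi\rangle|\lesssim R^{1/2}\|\N_\S^{1/2}\Psi\|\,\|\N\Psi\|$, and $|\langle\Psi,[\N_\S,V_B]\Psi\rangle|\lesssim R\|\N_\S^{1/2}\Psi\|(\|\N_\S^{1/2}\Psi\|+\|\Psi\|)$. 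The structural point (isolated as Lemma \ref{lemma O}) is that the operators in $[\N_\S,D_k]$ always carry an explicit $\1_\S$ on one leg, so their matrix elements factor as $\|\N^{1/2}\Phi\|\,\|\N_\S^{1/2}\Psi\|$. Dividing $\partial_t g$ by $g^{1/2}$, inserting Proposition \ref{prop N estimates 1}, and applying Gr\"onwall to $g^{1/2}$ closes immediately.

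Your second-order route is conceptually sound and makes the $(\lambda R)^2$ prefactor manifest, but two points need repair. First, the Gr\"onwall feedback as you wrote it does not close: the double time integral of an integrand $\lesssim R^2 g(t_2)$ produces $R^2\int_0^t(t-s)g(s)\,\d s$, carrying an extra factor of $t$, so Gr\"onwall would yield $\exp(C\lambda^2R^2t^2)$ rather than $\exp(C\lambda Rt)$. You can sidestep this by taking the route you called less clean---replacing every $\N_\S$ in the remainders by $\N$ and using $n\lesssim R^{1/2}$; this does in fact stay within the $R^2$ budget, so no feedback on $g$ is actually needed. Second, the paper's Type-III estimates are useless for $\varphi=\1_\S$ (they return $|\Lambda|p_F^{-m}\|\1_\S\|_{\ell^1_m}\simeq R$, no gain), so you cannot simply invoke the bounds of Sections \ref{section TFF}--\ref{section subleading} with $\varphi=\1_\S$; several of the nine double commutators must be re-estimated by hand, which essentially reproduces Lemma \ref{lemma commutator estimates 2} at second order. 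The paper's first-order route is shorter and avoids both issues.
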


 {\blu  
The idea behind the proof of our estimates
relies on a 
  Gr\"onwall argument, in which we
bound expectations of  
commutators  $[\N, \h_I(t)]$  and $[\N_\S, \h_I(t) ]$
in terms of combinations of  expectations of 
 $\N$ and $\N_\S$. 
 This then allows us to close the estimates after paying with a constant that grows exponentially fast with time. 
  The proof of these number estimates is heavily inspired by previous work on  the derivation of mean-field dynamics for Bose and Fermi  gases, and they are nowadays considered a standard tool in the derivation of nonlinear equations from quantum many-body systems.  
 	See for instance, \cite{Rodnianski2009} for Bose gases, and \cite{Benedikter2014} for Fermi gases, respectively. 
  
In the situation considered in this article, the proof  of the commutator estimates
 relies heavily on  the fact that 
  the interaction Hamiltonian decomposes into three parts, corresponding to fermion-fermion, fermion-boson and boson-boson interactions.
  Indeed, each term gives rise to different commutators with $\N$ and $\N_\S$, respectively, which require different estimates; see e.g. Lemma \ref{lemma commutator estimates 1} and \ref{lemma commutator estimates 2}.  }

  Let us recall that this decomposition reads
\begin{eqnarray}
	\label{h_I decomposition }
	\h_I(t)
 \ 	=  \ 
 	\lambda  \, 
	\big(	V_F(t)  + V_{F, B}(t)   + V_B(t)  \big) \ , \qquad \forall t \geq 0 \ . 
\end{eqnarray}
Here, time-dependence corresponds to the Heisenberg evolution associated to the solvable Hamiltonian $\h_0$--see Eq. \eqref{Heisenberg evolution V}. 
In particular, using the formulae \eqref{VF}, \eqref{VFB} and \eqref{VB} for $V_F$, $V_{F,B}$ and $V_B$, respectively, we may write that  
for all $t \in \R$   
\begin{align}
\label{VF(t)}
	V_F   (t) 
	&   = 
	\frac{1}{2}
	\int_{  \Lambda^* }
	\hat V (k) D_k^*(t )  D_k(t)   \  \d  k 
	\\ 
\label{VFB(t)}
	V_{FB}  (t) 
	& 	 = 
	\int_{  \Lambda^* } 
	\hat V (k)
	D_k^*(t) 
	\big[ 
	b_k(t) 
	+
	b_{ -k }^*(t) 
	\big]  \d  k \\
\label{VB(t)}
	V_{B}  (t) 
	& = 
	\int_{  \Lambda^* } 
	\hat V (k) 
	\big[ 
	b_k^*(t) b_k(t ) 
	+ 
	\frac{1}{2} \, 
	b^*_k (t)   b_{-k}(t ) 
	+ 
	\frac{1}{2} \, 
	b_{- k } (t) b_k(t) 
	\big]  \d  k   
\end{align}  
where $b_k(t)$ and $D_k(t)$ correspond to the Heisenberg evolution of the $b$- and $D$-operators, respectively, as given in    Definition \ref{definition Heisenberg b D}.

\subsection{Number Operator Estimates}
The main purpose of this section is to prove     Proposition \ref{prop N estimates 1}.
The first step  in this direction 
is to prove  appropriate commutator estimates between $\N$ and the generator of the interaction dynamics, $\h_I(t)$.
 The commutator estimates  that we prove are contained  in the upcoming Lemma. 
We recall that $R = |\Lambda| p_F^{d-1}$. 

\begin{lemma}
	[Commutator Estimates for $\N$]
\label{lemma commutator estimates 1}
For all $\ell \geq 1 $ there exists a constant
 $C = C(\ell) >0$ 
such that: 

\begin{enumerate}
 
	\item 
For all    $\Psi \in \F$ and $t \geq 0$ there holds  
\begin{align}
	 \<   
	\Psi 
	,   
	[\N^\ell 
	, 
	V_{F } (t) ]						\nonumber 
	\Psi   \>_\F     =0
	\end{align} 
	\item 
	For all    $\Psi \in \F$ and $t \geq 0$ there holds  
	\begin{align}
		|  \<   
		\Psi 
		,   
		[\N^\ell 
		, 
		V_{F,B} (t) ]						\nonumber 
		\Psi   \>_\F     |
\leq C 
		R 
		 \< 
		 \Psi ,  (\N^\ell  + \1 )
		 \Psi 
		 \>_\F   
	\end{align}
	\item 
For all    $\Psi \in \F$ and $t \geq 0 $  there holds  
	\begin{equation}
		|  \<   
	\Psi 
	,   
	[\N^\ell 
	, 
	V_{B} (t) ]						\nonumber 
	\Psi   \>_\F      |
\leq C 
	R 
	\< 
	\Psi ,  (\N^\ell  + \1 )
	\Psi 
	\>_\F  
	\end{equation}
\end{enumerate}
\end{lemma}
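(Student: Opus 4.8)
Part~(1) is immediate. By Lemma~\ref{lemma N commutators}(1) one has $[D_k(t),\N]=[D_k^*(t),\N]=0$ for every $k$, hence $V_F(t)=\frac{1}{2}\int_{\Lambda^*}\hat V(k)\,D_k^*(t)D_k(t)\,\d k$ commutes with $\N$ and therefore with $\N^\ell$. Thus $[\N^\ell,V_F(t)]=0$ as an operator identity, so its expectation vanishes for every $\Psi$.

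For parts~(2) and~(3) I would start from the monomial expansions \eqref{VFB(t)}, \eqref{VB(t)}: each of $V_{FB}(t)$ and $V_B(t)$ is an integral over $k$ of products $M$ of at most one $D$-operator and at most two $b$/$b^*$-operators. For each such monomial I would compute $[\N^\ell,M]$ using that $D$-operators commute with $\N$ (Lemma~\ref{lemma N commutators}(1)) together with the pull-through relations \eqref{pull thru formula b}; the outcome is always of the form $[\N^\ell,M]=M\,p_M(\N)$, where $p_M$ is a polynomial of degree $\le\ell-1$. In particular the number-preserving monomials $D_k^*D_k$ and $b_k^*b_k$ give $p_M\equiv 0$ and drop out, leaving only $D_k^*b_k$, $D_k^*b_{-k}^*$, $b_k^*b_{-k}^*$ and its adjoint $b_{-k}b_k$ to be controlled.

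The crux is the bound $|\<\Psi,M\,p_M(\N)\Psi\>_\F|\le CR\,\<\Psi,(\N^\ell+\1)\Psi\>_\F$. The essential observation is that each of these $M$ changes the total excitation number by a fixed even integer $j_M$, i.e. $M:\F_n\to\F_{n+j_M}$, so that, writing $\Psi=(\Psi_n)_n$,
\[
\<\Psi,M\,p_M(\N)\Psi\>_\F=\sum_n p_M(n)\,\<\Psi_{n+j_M},\,M\Psi_n\>_\F ,
\]
and I would then apply Cauchy--Schwarz \emph{in the grading index $n$}. This step is what keeps the power counting tight: a Cauchy--Schwarz applied directly to the ungraded expression loses half a power of $\N$, producing quantities of the type $\|\N^\ell\Psi\|\,\|\N^{1/2}\Psi\|$ that are \emph{not} dominated by the quadratic form $\<\Psi,(\N^\ell+\1)\Psi\>_\F$. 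For the sectorwise matrix elements I would split $M$ across the inner product and estimate each factor by the Type-I bound $\|D_k(t)\Phi\|\le C\|\N\Phi\|$ (Lemma~\ref{lemma type I}), the Type-II bound $\|b_k(t)\Phi\|\le CR^{1/2}\|\po^{1/2}\Phi\|$ (Lemma~\ref{lemma type II}), and its adjoint counterpart $\|b_k^*(t)\Phi\|^2\le CR\,\<\Phi,(\po+\1)\Phi\>_\F$ --- the latter deduced from $\|b_k^*(t)\Phi\|^2=\|b_k(t)\Phi\|^2+\<\Phi,[b_k(t),b_k^*(t)]\Phi\>_\F$ and the explicit commutator \eqref{boson commutator}, whose scalar part is of order $R$ and whose operator part is localized near $\S$ and hence bounded by $C\po$. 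The one bookkeeping subtlety is that in the monomials containing a $D$-factor one should bound $\po\le R\,\1$ (the $D$-factor having already absorbed a full power of $\N$), whereas in $b_k^*b_{-k}^*$ and $b_{-k}b_k$ one should keep $\po\le\N$, so as not to incur a second factor of $R$. Either way one obtains $|\<\Psi_{n+j_M},M\Psi_n\>_\F|\le CR\,\<n\>\,\|\Psi_{n+j_M}\|\,\|\Psi_n\|$, hence $|p_M(n)|\,|\<\Psi_{n+j_M},M\Psi_n\>_\F|\le CR\,\<n\>^{\ell}\,\|\Psi_{n+j_M}\|\,\|\Psi_n\|$, and the graded Cauchy--Schwarz gives $CR\,\<\Psi,(\N+\1)^{\ell}\Psi\>_\F\le CR\,\<\Psi,(\N^\ell+\1)\Psi\>_\F$. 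Integrating over $k$ against $|\hat V(k)|$ and using $\int_{\Lambda^*}|\hat V(k)|\,\d k=\|\hat V\|_{\ell^1(\Lambda^*)}<\infty$ (Condition~\ref{cond potentials}) then completes the argument.

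The main obstacle is exactly this power counting: the inequality is false if even one spurious half-power of $\N$ is allowed, so the particle-number grading step cannot be bypassed; and since the prefactor is required to be of order $R$ rather than $R^2$, one must be disciplined about when to replace $\po$ by $R\,\1$ and when by $\N$. Everything else reduces to routine use of the CAR together with the Type-I and Type-II estimates already established.
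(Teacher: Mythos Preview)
Your proposal is correct and rests on the same ingredients as the paper's proof: the vanishing of $[\N,D_k]$ for part~(1), the pull-through formula to write $[\N^\ell,M]=M\,p_M(\N)$ for the number-nonpreserving monomials $M$, and the Type-I/Type-II estimates (together with $\|b_k\|\lesssim R$, equivalently $\po\le R\,\mathbb{1}$) to bound the resulting expressions.

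The execution differs only in packaging. You balance the $\N$-powers by decomposing $\Psi=(\Psi_n)_n$ explicitly in particle-number sectors and applying Cauchy--Schwarz in the grading index; the paper instead expands $[\N^\ell,b_k]=\sum_{n=0}^{\ell-1}\binom{\ell}{n}(-2)^{\ell-n}\N^n b_k$ and redistributes the weights algebraically as $\langle D_k\Psi,\N^n b_k\Psi\rangle=\langle\N^{(n-1)/2}D_k\Psi,\N^{(n+1)/2}b_k\Psi\rangle$ before applying Cauchy--Schwarz directly. These are equivalent: making the paper's fractional-power insertion rigorous (in particular at $n=0$, where $\N^{-1/2}$ appears) amounts precisely to your sectorwise argument. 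The paper also uses self-adjointness to write $\langle\Psi,[\N^\ell,V_{FB}]\Psi\rangle=2\,\mathrm{Im}\int\hat V(k)\langle\Psi,[\N^\ell,D_k^*b_k]\Psi\rangle\,\d k$ and similarly for $V_B$, so it only has to treat $D_k^*b_k$ and $b_kb_{-k}$ and never needs your separate $\|b_k^*\Phi\|$ bound---a minor shortcut rather than a different idea.
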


\begin{remark}
	\label{remark states}
{\blu Recall that we assume that the initial data  $ \nu $
satisfies (C1) from Condition \ref{condition initial data}. 
 }
	Namely,
	there exists 
	sequences 
	$(\lambda_n)_{n=0}^\infty \subset (0 , \infty) $ 
	and $(\Psi_n)_{n=0}^\infty  \subset \F $
	satisfying the normalization condition $\sum_{n=0}^\infty \lambda_n =1 $
	and $ \|	 \Psi_n 	\|_\F =1 $, respectively, such that 
	 the following decomposition holds true  
	\begin{eqnarray}
		\nu (\O)  = \sum_{n=0}^\infty \lambda_n  \<  \Psi_n		 ,		 \O	\Psi_n \>_\F  \ , 
		\qquad \forall \O \in B(\F)  \ . 
	\end{eqnarray}
	In particular, the   estimates contained in Lemma \ref{lemma commutator estimates 1}
{\blu 	can be easily converted into estimates for $\nu$. }
	For instance, if $ \O_1 , \O_2 , \O_3 $  are operators such that 
	\begin{equation}
	 | 	\< \Psi , \O_1   \Psi  \>_\F  |   \leq C    \|  	 \O_2   \Psi 		\|_\F  \|  \O_3 \Psi \|_\F  \ , \qquad \forall \Psi \in \F 
	\end{equation}   
for a constant $C>0$, 	then  it follows from the above decomposition of $\nu$ and the Cauchy-Schwarz inequality that 
	\begin{equation}
	 | 		\nu (\O_1)	 |      \leq C     \nu (\O_2^* \O_2 )^{\frac{1}{2}} \,  \nu (\O_3^* \O_3 )^{\frac{1}{2}} \ . 
	\end{equation}
	In most applications, $\O_2$ and $\O_3$ shall {\blu correspond} to either $\N$ or $\N_\S$.

\end{remark}

Let us briefly postpone the proof of the above Lemma to the next subsubsection. 
First, we turn to the proof of the important Proposition \ref{prop N estimates 1}.

\begin{proof}[Proof of Proposition \ref{prop N estimates 1}]
 The decomposition for $\h_I(t)$ from  \eqref{h_I decomposition } 
combined with 		the commutator estimates
	from     Lemma   \ref{lemma commutator estimates 1}
	implies that for all $\ell \geq 1$ there exists $C  =  C(\ell) > 0 $
	such that 
	\begin{equation}
		\partial_t 
		\nu_t (  \N^\ell + \1    ) 
		 = 
		  \nu_t \big(   i 	[\h_I  (t) , \N^\ell ]	\big)
		\leq 
		C 
		\lambda R 
		\nu_t (  \N^\ell  + \1  )  ,  \qquad \forall  t \geq 0  \ . 
	\end{equation} 
 Gronwall's inequality  now  easily implies that 
 there exists  a constant $C>0 $ such that 
 \begin{eqnarray}
 	\nu_t 
 	(\N^\ell) \,  \leq \,   C \lambda R  \,   \nu_t (\N^\ell + \1 )e^{C\lambda R t } \ , \qquad \forall t \geq 0 \ . 
 \end{eqnarray}
 To finalize the proof, we use the fact that for quasi-free states it holds true that $\nu(\N^\ell) \lesssim \nu(\N)^\ell $, 
 together with the assumption 
  $\nu(\N) = n \geq 1 $ . 
\end{proof}

\subsubsection{Commutator Estimates for $\N$}

\begin{proof}[Proof of Lemma \ref{lemma commutator estimates 1}]

Throughout this proof, $\Psi \in \F$ denotes an element in $ \cap_{k  = 1 }^\infty D(\N^k ) $, 
which will justify all of the upcoming calculations. 
Let us now fix $\ell \in \mathbb N . $

\vspace{2mm}

\noindent \textit{Proof of (1).}  This is an immediate consequence of the fact that $[D_k(t), \N]=0$ for all $ k\in \Lambda^*$ and $t \in \R$. See Lemma \ref{lemma N commutators}. 

\vspace{2mm}

\noindent \textit{Proof of (2).}
Using the fact that $ D_k^* (t) = D_{-k} (t)$ and $ [D_k^*(t), b_k(t)] = 0 $  
we may rewrite 
the fermion-boson interaction term as 
\begin{eqnarray}
	V_{FB}  (t) = 
	\int_{  \Lambda^{*} } \hat V(k) D^*_k(t) b_k (t) \d k   
 \ + \ 	\mathrm{h.c} \ . 
\end{eqnarray}
Thus, we find that 
for all $t \in \R$
\begin{equation}
	\< 	 \Psi ,  [	\N^\ell , V_{FB} (t)	] \Psi 	\>
	 = 
	 2 \mathrm{Im}
 \int_{\Lambda^* }	  \hat V(k) 
   \langle    \Psi , [\N^\ell , D^*_k	 (t)  b_k(t) ] \Psi   \rangle  \ . 
\end{equation}
In view of Lemma \ref{lemma N commutators}, 
we see that $[ D_k^* (t) , \N^\ell  ]  = 0$.
{\blu Further,  using the pull-through formulae  for $b$-operators 
  in \eqref{pull thru formula b}}
with $f (x) = x^\ell $
we find the following useful identity 
\begin{align}
	[\N^\ell ,   b_{k} (t) ]		 
	= 
	\sum_{ n = 0 }^{\ell-1} 
	\binom{\ell }{ n } 
	(-2)^{\ell  - n } 
	\N^n  b_ k (t)  \  , \qquad \forall   k \in \Lambda^*, \ t \in \R \ . 
\end{align}
Consequently, 
we can estimate that 
\begin{align}
		| 	\<
	\Psi , 
	[\N^\ell  , V_{FB} ]
	\Psi 
	\> | 
	& 	 \leq  
		\sum_{ n = 0 }^{\ell-1} 
	\binom{\ell }{ n } 
	(-2)^{\ell  - n } 
	\int_{  \Lambda^{* } }
	|   \hat V (k)  |
	\ 
	|
	\<  
	\Psi , 
	D_k^*(t) 
	\N^n  b_ k (t) 
	\Psi 
	\>
	|	
	\d k 
	\\ 
	& 	 \leq  
	\sum_{ n = 0 }^{\ell-1} 
	\binom{\ell }{ n } 
	(-2)^{\ell  - n } 
	\int_{  \Lambda^{* } }
	|   \hat V (k)  |
	\ 
\|		
 \N^{\frac{n-1}{2}}	 D_k(t) \Psi  
 \|	  
\, 
\|	\N^{\frac{n+1}{2}	}  b_ k (t) 
	\Psi 
 \|	
	\d k  	\ . 	
	\nonumber 
\end{align}
We can now combine Lemma \ref{lemma N commutators},
the Type-I estimate \eqref{lemma type I.1} 
and the norm bound \eqref{bounds on b}
to find that there exists a constant $C>0$
such that 
\begin{equation}
	\|		
	\N^{\frac{n-1}{2}}	 D_k(t) \Psi  
	\|	  
	\, 
	\|	\N^{\frac{n+1}{2}	}  b_ k (t) 
	\Psi 
	\|	 
 \leq C  R 
 \|		
\N^{\frac{n+1}{2}} \Psi  
 \|^2 	  \ , \qquad \forall n\geq 0 \ . 
\end{equation}
Finally, we put
the two above  estimates together and use the elementary fact
$\N^\frac{n+1}{2} \lesssim \N^\ell + 1 $  (valid for $ n \leq \ell -1$)
to find that for some $C >0$ there holds 
\begin{equation}
			| 	\<
	\Psi , 
	[\N^\ell  , V_{FB} ]
	\Psi 
	\> |  
	\leq 
	C  R \|	 \hat V	\|_{\ell^1 } 	\|	 (  \N^\ell +1 )  \Psi	\|^2  \ , \qquad \forall t \geq 0 \ 
\end{equation}
which gives the desired estimate.

\vspace{2mm}
 
\noindent \textit{Proof of (3)}.
First, we note that $[	 \N, b_k^* (t) b_k(t)	] =0$ for all $t \in \R$ and $ k \in \Lambda^*$.
Hence, we can readily check that 
\begin{equation}
	 \< 			
	 \Psi ,   [\N^\ell , V_B(t)]	\Psi 
	 \>
	 = 
	   \mathrm{ Im } 
\int \hat V(k) 
	 \<
	 \Psi   ,
	 [\N^\ell  , b_k (t) b_{-k} (t) ]
	 \Psi 
	  \> \d k 
	  \qquad 
	  \forall t \in \R \ . 
\end{equation}
In view of the commutation relation 
 $\N b_k (t)  b_{-k} (t)  = b_{k}  (t) b_{-k} ( t)  (\N  - 4 ) $ we can calculate  
 using the pull-through formula for $f(x) = x^\ell $ that 
\begin{align}
	[   \N^\ell  , b_{k} (t)  b_{-k}   (t) ] 
	=
	\sum_{n  =  0  }^{ \ell  -1 }
	\binom{ \ell  }{ n  }
	4^{ \ell  - n  }
	(\N +4 )^{ \frac{n  -1 }{2} }
	b_k  (t) b_{- k } (t) 
	\N^{ \frac{n  + 1 }{2}  }  \ . 
\end{align}
Consequently, putting the last two displayed equations together one finds that 
for all $ t \in \R$
\begin{align}
	|
		 \< 			
	\Psi ,   [\N^\ell , V_B(t)]	\Psi 
	\>
	|
	& \leq 
		 	\sum_{n  =  0  }^{ \ell  -1 }
		 \binom{ \ell  }{ n  }
		 4^{ \ell  - n  }
	\int_{\Lambda^* 	}	
	| \hat  V(k) |
	\|	(\N +4 )^{ \frac{n + 1 }{2} }	 \Psi  	\|
\
\|	 b_k (t)  b_{- k } (t) 
\N^{ \frac{n - 1 }{2}  } 	 \Psi  	\| 
\d k 
\ . 
\end{align}
We estimate the right hand side as follows. 
First, 
we note that 
$
	\|	(\N +4 )^{ \frac{n + 1 }{2} }	 \Psi  	\| \leq  C(\ell)  \|	 (\N +1 )^{\ell/2 } \Psi \|
$
for all $0 \leq n \leq \ell -1 $. 
Secondly, we use the Type-II estimate \eqref{lemma type II.1} 
and the commutation relation  \eqref{pull thru formula b} for $f \equiv 1$
to find that 
\begin{align}
\nonumber 
\|			b_k (t) 		b_{-k }  (t) \N^{	 \frac{n -1 }{2 }	} \Psi  \| 
&   
 \ \lesssim  \ 
R^{\frac{1}{2}} 
 \, 
\|	 (\N+2 )^{\frac{1}{2}} b_{-k} (t) \N^{\frac{n-1}{2}  	}	 \Psi 	\|    \\ 
\nonumber 
& 
\  =  \ 
R^{\frac{1}{2}} \, 
 \|	   b_{-k} (t) \N^{\frac{1}{2 }}   \N^{\frac{n-1}{2}  	}	 \Psi 	\|   \\ 
\nonumber 
&  
\   \lesssim  \ 
  R  \, 
   \|	 \N^{\frac{1}{2 } } \N^{\frac{1}{2}}	\N^{\frac{n-1}{2}  	}	 \Psi 	\|   \\ 
  & 
    \  \lesssim \ 
      R \, 
       \|	  ( \N +1)^{\frac{\ell}{2}}	 \Psi \|
\end{align}
where again we used the fact that $  n \leq \ell -1 $. 
The proof of the {\blu lemma} is easily 
finished after we put together the last two displayed estimates. 
\end{proof}

\subsection{Surface-localized Number Operator Estimates}

The main purpose of this section is to prove Proposition \ref{prop N estimates 2}. 
In order to control the time evolution of  $\po$ with respect to $\h_I (t) $, we establish the following commutator estimates. 
Recall that $R  =  |\Lambda| p_F^{d-1 }$. 

\begin{lemma}
	\label{lemma commutator estimates 2}
There exists a constant $C>0$ such that the following estimates hold true 
	\begin{enumerate}
		\item 
		For all   $\Psi \in \F$  
		\begin{equation}
			|  \<  \Psi ,   
			[\po, V_F  (t) ]
			\Psi   \>_\F    |
 \leq C 
			\|	 \po^{1/2} \Psi	\|_\F 
			\|	 \N^{3/2}	 \Psi\|_\F    \ . 
		\end{equation}
		\item 
		For all   $\Psi \in \F$  
		\begin{align}
			| \<
			\Psi  
			,
			[\po  , V_{FB} (t)  ] 			\nonumber 
			\Psi 
			\>_\F   | 
			& 
 \leq C 
			R^{1/2}
			\|	 \po^{1/2}	 \Psi\|_\F  	\|	 \N \Psi	\|_\F  	 \ . 
		\end{align}
		\item 
		For all   $\Psi \in \F$  	
		\begin{equation}
			|\<
			\Psi ,
			[\po , V_B (t)  ] 
			\Psi 											\nonumber 
			\>_\F    | 
 \leq C 
			R
			\|	 \po^{1/2} \Psi	\|^2_\F 
			+ 
			C R 
			\|	 \po^{1/2} \Psi	\|_\F  
			\|	 \Psi	\|_\F   		\ . 
   		\end{equation}
	\end{enumerate}
	
\end{lemma}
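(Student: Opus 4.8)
The plan is to estimate each of the three commutators $[\po,V_F(t)]$, $[\po,V_{FB}(t)]$, $[\po,V_B(t)]$ separately, using the explicit representations \eqref{VF(t)}, \eqref{VFB(t)}, \eqref{VB(t)} in terms of the Heisenberg $b$- and $D$-operators, the commutation relations from Lemma \ref{lemma NS commutators} (namely $[\po,b_k(t)]=-2b_k(t)$ and $[\po,b_k^*(t)]=+2b_k^*(t)$ for $k\in 3\supp\hat V$), and the Type-I and Type-II estimates from Lemmas \ref{lemma type I} and \ref{lemma type II}. The key structural fact is that $\po$ commutes with every $D$-operator (it is localized near $\S$, while $D$ only moves particles/holes away from $\S$), so only the $b$-factors contribute to the commutator; each surviving $b_k(t)$ or $b_k^*(t)$ is then absorbed into a factor $\po^{1/2}$ via the Type-II estimate $\|b_k(t)\Psi\|_\F\lesssim R^{1/2}\|\po^{1/2}\Psi\|_\F$ (and its adjoint).

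For (1): since $[\po,D_k(t)]=[\po,D_k^*(t)]=0$, we have $[\po,V_F(t)]=0$ as an operator identity, which would give a vanishing estimate. The fact that the statement allows a nonzero bound suggests that one instead keeps a slightly coarser bookkeeping — more carefully, the $D$-operators are supported \emph{near but not exactly on} $\S$, so I would write $[\po,D_k^*(t)D_k(t)]$ using the slightly enlarged surfaces $\S(n)$ from the proof of Lemma \ref{lemma type II}, obtaining one factor controlled by $\po^{1/2}$ (from the part of $D$ meeting $\S$) and one factor controlled by $\N^{3/2}$ (using the Type-I bound $\|D_k(t)\Psi\|_\F\lesssim\|\N\Psi\|_\F$ twice, together with $\|D_k^*(t)D_k(t)\Psi\|\lesssim\|\N^2\Psi\|$). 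Integrating against $\hat V(k)$ and using $\sup_{|\Lambda|}\|\hat V\|_{\ell^1}<\infty$ from Condition \ref{cond potentials}(4) gives the stated bound $\lesssim\|\po^{1/2}\Psi\|_\F\|\N^{3/2}\Psi\|_\F$.

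For (2): using $D_k^*(t)=D_{-k}(t)$ and $[D_k^*(t),b_k(t)]=0$, write $V_{FB}(t)=\int\hat V(k)D_k^*(t)b_k(t)\,\d k+\mathrm{h.c.}$, so that $[\po,V_{FB}(t)]=\int\hat V(k)D_k^*(t)[\po,b_k(t)]\,\d k+\mathrm{h.c.}=-2\int\hat V(k)D_k^*(t)b_k(t)\,\d k+\mathrm{h.c.}$ (using Lemma \ref{lemma NS commutators} and that $\po$ commutes with $D_k^*$). Then $|\langle\Psi,[\po,V_{FB}(t)]\Psi\rangle|\le C\int|\hat V(k)|\,\|D_k(t)\Psi\|_\F\,\|b_k(t)\Psi\|_\F\,\d k\le C\|\hat V\|_{\ell^1}\,\|\N\Psi\|_\F\cdot R^{1/2}\|\po^{1/2}\Psi\|_\F$, which is the claim. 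For (3): expanding $V_B(t)$ into its three pieces $b_k^*b_k$, $\tfrac12 b_k^*b_{-k}$, $\tfrac12 b_{-k}b_k$, each commutator with $\po$ produces (by Lemma \ref{lemma NS commutators}) terms of the schematic forms $b_k^*(t)b_k(t)$, $b_k^*(t)b_{-k}(t)$, $b_{-k}(t)b_k(t)$ with bounded numerical coefficients; pairing the two $b$-factors and applying the Type-II estimate to each (and, for $b^*$, first using the pull-through $\|b_k^*(t)\Phi\|\lesssim R^{1/2}\|(\po+2)^{1/2}\Phi\|$ via Lemma \ref{lemma NS commutators}/Lemma \ref{lemma N commutators}) yields $\lesssim R\|\po^{1/2}\Psi\|_\F^2$ for the number-conserving term $b_k^*b_k$, and $\lesssim R\|\po^{1/2}\Psi\|_\F\|\Psi\|_\F$ for the pairing/depairing terms $b^*b^*$-type where one factor hits the vacuum-side bound $\|b_{-k}(t)\Psi\|\lesssim R^{1/2}\|\po^{1/2}\Psi\|$ and the leftover is estimated by $\|\Psi\|_\F$ after pulling out the $R^{1/2}$. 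Summing over $k$ against $\|\hat V\|_{\ell^1}$ gives the stated bound.

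The main obstacle I anticipate is part (1): naively $\po$ commutes with $V_F$, so extracting the genuinely nonzero (and optimal) bound requires the careful geometric bookkeeping with the nested surfaces $\S(1)\subset\S(2)\subset\S(3)$ exactly as in the proof of Lemma \ref{lemma type II}, making sure that the "$\po^{1/2}$ times $\N^{3/2}$" split — rather than "$\po$ times $\N$" or "$\N^2$" — is what actually comes out; getting the powers to match the statement, and checking that the part of each $D$-operator that overlaps $\S$ really is controlled by $\po^{1/2}$ (and not merely $\N^{1/2}$), is the delicate point. The other two parts are routine once the commutation relations of Lemma \ref{lemma NS commutators} and the Type-II estimates are in hand.
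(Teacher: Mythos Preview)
There is a genuine gap in your treatment of parts (1) and (2): the claim that $[\po,D_k(t)]=0$ is \emph{false}. Lemma \ref{lemma N commutators} gives $[\N,D_k(t)]=0$, but $\po$ is the \emph{surface-localized} number operator $\int_\S a_p^*a_p\,\d p$. Since $D_k(t)$ shifts a momentum by $k$ with $|k|\le r$, a particle or hole can be moved in or out of the annulus $\S=\{p_F-3r\le|p|\le p_F+3r\}$, so $[\po,D_k^*(t)]$ is a nonzero operator. The paper computes it explicitly as (up to phases)
\[
[\po,D_k^*]=\int\big(\1_\S(p)-\1_\S(p-k)\big)\chi^\perp(p,p-k)\,a_p^*a_{p-k}\,\d p\ -\ (\text{hole part}),
\]
i.e.\ a sum of operators of the shape $\mathcal{O}(k)=\int\1_\S(p)g(p)a_{p+k}^*a_p\,\d p$. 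The key device (Lemma \ref{lemma O}) is that Cauchy--Schwarz on the $\1_\S$-localization yields $|\langle\Phi,\mathcal{O}(k)\Psi\rangle|\le\|g\|_{\ell^\infty}\|\N^{1/2}\Phi\|\,\|\po^{1/2}\Psi\|$. Writing $\langle\Psi,[\po,V_F]\Psi\rangle=2i\int\hat V(k)\,\mathrm{Im}\,\langle\Psi,[\po,D_k^*]D_k\Psi\rangle\,\d k$, one applies this with $\Phi=D_k\Psi$ (so $\|\N^{1/2}\Phi\|\lesssim\|\N^{3/2}\Psi\|$ by Type-I), which produces exactly $\|\po^{1/2}\Psi\|\,\|\N^{3/2}\Psi\|$. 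A residual commutator term is shown to be self-adjoint and hence drops out under $\mathrm{Im}$. Your proposed fix via nested surfaces $\S(n)$ is the mechanism behind the Type-II estimates, not the one needed here; it does not supply the $\1_\S$-localized Cauchy--Schwarz that singles out the $\po^{1/2}$ factor.

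For (2) the same erroneous assumption appears, but it is accidentally harmless: the extra term $[\po,D_k^*(t)]\,b_k(t)$ that you dropped satisfies $\|[\po,D_k^*]\Psi\|\lesssim\|\N\Psi\|$ (same cross estimate as above), which together with Type-II on $b_k$ reproduces the bound $R^{1/2}\|\N\Psi\|\,\|\po^{1/2}\Psi\|$. For (3) your outline is essentially correct, with one sharpening: $[\po,b_k^*b_k]=0$ exactly (not merely bounded by $R\|\po^{1/2}\Psi\|^2$), so only the off-diagonal piece $b_{-k}b_k+\mathrm{h.c.}$ contributes, and both terms in the stated bound arise from it via the pull-through formula $\varphi(\po)b_k=b_k\varphi(\po-2)$, much as you describe.
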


We shall defer the proof of Lemma  \ref{lemma commutator estimates 2} to  the  next subsubsection. Now
we turn our attention to the proof of Proposition \ref{prop N estimates 2}. 

\begin{proof}[Proof of Proposition \ref{prop N estimates 2}]
Throughout the proof, $C>0$ is a constant whose value may change from line to line.
First, 	in view of the decomposition of $\h_I (t) $ given in \eqref{h_I decomposition }, 
	 Lemma \ref{lemma commutator estimates 2}
	 and Remark \ref{remark states}, 
there holds 
 for all $t \in \R$  
	 \begin{align}
\nonumber 
\frac{d}{dt}
	 	\nu_t  (\N_\S)
	 	= \nu_t    (   i  [ \h_I (t) , \N_\S] )
	 	& \leq 
	 	C  \lambda 
	 	[	 \nu_t (\N_S)	]^{\frac{1}{2}}
	 	[	 \nu_t (\N^3)	]^{\frac{1}{2}} \\
\nonumber 
	 	&  \ + 
	 		 	C  \lambda  R^{\frac{1}{2}}
	 		 	[	 \nu_t (\N_S)	]^{\frac{1}{2}}
	 	[	 \nu_t (\N^2)	]^{\frac{1}{2}} \\
\nonumber 
	 		&  \ + 
	 	C  \lambda  R 
	 	[	 \nu_t (\N_S)	] \\ 
	 		&  \ + 
	 	C  \lambda  R 
	 	[	 \nu_t (\N_S)	]^{\frac{1}{2}}
	 	[	 \nu_t (\1 )	]^{\frac{1}{2}}  \ . 
	 \end{align}
Thus, we divide\footnote{Technically, one should introduce a regularization $u_\delta(t) \equiv (\delta + \nu_t(\po))^{1/2}$ in order to avoid possible singularities whenever $\nu_t(\po)=0$. 
{\blu 	Namely, in order to avoid division by zero in the present argument. }
	One should then close the estimates  after taking the limit $\delta \downarrow 0$. We leave the details to the reader} 
 by $\nu_t(\po)^{1/2}$
 to find that thanks to Proposition \ref{prop N estimates 1} 
\begin{align}
\nonumber 
	\frac{d}{dt} 
	\nu_t (\po)^{\frac{1}{2}}
& 	\leq 
 C 	\lambda R 	\nu_t (\po)^{\frac{1}{2}}
 + 
 C 
 \lambda R \Big(
 \nu_t(\N^3)^\frac{1}{2}/R
 + 
  \nu_t(\N^2)^\frac{1}{2}/R^{\frac{1}{2}}
  +1 
 \Big)		\\
 & \leq 
 C 	\lambda R 	\nu_t (\po)^{\frac{1}{2}}
 + 
 C
 \lambda R  \exp(\lambda R t )
 \Big(
 n^{\frac{3}{2}}/ R + n / R^{\frac{1}{2}} +1 
 \Big)	 \ . 
\end{align}
The  Gr\"onwall inequality 
now 	implies  that  for all $ t \geq 0 $
	\begin{align}
	\nu_t (\po)^{\frac{1}{2}} 
		\leq 
		C
\exp( C \lambda R  t)
\Big(
	\nu_0 (\po)^{\frac{1}{2}} 
+ 
\lambda R t  
\, 
(
n^{\frac{3}{2}}/ R + n / R^{\frac{1}{2}} +1 
)
\Big)   \ . 
	\end{align}
Finally, we notice that in view of Condition \ref{condition initial data}  we have $
\nu_0 (\N_\S)  \lesssim (\lambda R)^2$.
The proof is then finished once we simplify the right hand side using the bound
$ n \lesssim R^{1/2}$, 
and take squares on both sides of the inequality. 
\end{proof}

%
%
%
%

\subsubsection{Commutator Estimates for  $\N_\S$}
\label{section proof commutator estimates}
In order to prove Lemma \ref{lemma commutator estimates 2}, we shall first establish the following useful  lemma.
Here and in the sequel, $\1_\S$ denotes the characteristic function of the Fermi surface $\S $. 

\begin{lemma}
	\label{lemma O}
	For all $ k  \in \Lambda^*$ and $g \in \ell^\infty  $ the operator
	\begin{equation}
		\mathcal{O} (k) \equiv 
		\int_{  \Lambda^{* } }  \1_\S (p) g(p) a_{p+k}^* a_p \d p 
	\end{equation}
	satisfies the following estimate
	\begin{align}
		\big| 
		\<   \Phi , \mathcal{O} (k) \Psi   \>_\F 
		\big| 
		\leq
		\|	 g	\|_{\ell^\infty }
		\| \N^{1/2} \Phi		\|
		\|	 \po^{1/2} \Psi 	\| 
		 \ , \qquad \forall  \Phi , \,  \Psi \in \F.
	\end{align}
\end{lemma}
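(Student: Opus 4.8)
The plan is to exploit the fact that $\mathcal{O}(k)$ is sandwiched between an annihilation operator $a_p$ acting on the right (which the surface-localized number operator $\po$ controls) and a creation operator $a_{p+k}^*$ acting on the left (which the full number operator $\N$ controls). First I would write, for $\Phi,\Psi\in\F$,
\begin{equation}
	\big| \<\Phi,\mathcal{O}(k)\Psi\>_\F \big|
	\ \leq\ \int_{\Lambda^*} \1_\S(p)\, |g(p)|\, \| a_{p+k}\Phi \|_\F\, \| a_p\Psi \|_\F \ \d p \ ,
\end{equation}
which is just the triangle inequality together with $|\<a_{p+k}^*a_p\Psi,\Phi\>| = |\<a_{p+k}\Phi, a_p\Psi\>| \le \|a_{p+k}\Phi\|\,\|a_p\Psi\|$ after moving $a_{p+k}^*$ to the left as $a_{p+k}$. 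Then I would pull out $\|g\|_{\ell^\infty}$ and apply the Cauchy--Schwarz inequality in the $p$-variable (with respect to the measure $\int_{\Lambda^*}\d p = |\Lambda|^{-1}\sum_p$), splitting the integrand so that the $\1_\S(p)$ factor stays with the $\Psi$-term:
\begin{equation}
	\big| \<\Phi,\mathcal{O}(k)\Psi\>_\F \big|
	\ \leq\ \|g\|_{\ell^\infty}
	\Big( \int_{\Lambda^*} \| a_{p+k}\Phi \|_\F^2 \ \d p \Big)^{1/2}
	\Big( \int_{\Lambda^*} \1_\S(p)\, \| a_p\Psi \|_\F^2 \ \d p \Big)^{1/2} \ .
\end{equation}

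The proof then closes by identifying the two integrals. For the second factor, $\int_{\Lambda^*} \1_\S(p)\,\|a_p\Psi\|_\F^2\,\d p = \<\Psi, \po \Psi\>_\F = \|\po^{1/2}\Psi\|_\F^2$ directly from the definition $\po = \int_\S a_p^*a_p\,\d p$ in \eqref{number boson}. For the first factor, the change of variables $p \mapsto p-k$ on the lattice $\Lambda^*$ (a bijection) gives $\int_{\Lambda^*}\|a_{p+k}\Phi\|_\F^2\,\d p = \int_{\Lambda^*}\|a_p\Phi\|_\F^2\,\d p = \<\Phi,\N\Phi\>_\F = \|\N^{1/2}\Phi\|_\F^2$, using $\N = \int_{\Lambda^*}a_p^*a_p\,\d p$. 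Combining the three displays yields exactly $|\<\Phi,\mathcal{O}(k)\Psi\>_\F| \le \|g\|_{\ell^\infty}\,\|\N^{1/2}\Phi\|\,\|\po^{1/2}\Psi\|$, as claimed.

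There is no real obstacle here; the only point requiring a modicum of care is the bookkeeping that ensures $\1_\S(p)$ is kept with the $\|a_p\Psi\|^2$ term (so that $\po$, rather than $\N$, controls $\Psi$) while the shifted momentum $p+k$ is absorbed harmlessly into $\N$ on the $\Phi$ side via translation invariance of the counting measure on $\Lambda^*$. One should also note that the estimate is stated for general $\Phi,\Psi$ in $\F$, so strictly one first proves it on the dense domain $\mathcal D(\N^{1/2})$ (where all expressions are finite) and then either extends by density or simply observes that the inequality is trivially true when the right-hand side is infinite. No spectral theory beyond the definition of $\N^{1/2}$ and $\po^{1/2}$ as positive self-adjoint square roots is needed.
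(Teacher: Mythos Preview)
Your proposal is correct and follows essentially the same approach as the paper: write $\langle\Phi,\mathcal{O}(k)\Psi\rangle$ as an integral of $\langle a_{p+k}\Phi,a_p\Psi\rangle$, apply Cauchy--Schwarz in $p$ keeping $\1_\S$ with the $\Psi$-factor, and then identify the resulting integrals with $\|\N^{1/2}\Phi\|^2$ (via the shift $p\mapsto p-k$) and $\|\po^{1/2}\Psi\|^2$. The paper's proof is line-for-line the same computation.
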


\begin{proof}
Let $\Phi,\Psi\in \F$, $ k \in \Lambda^*$	 and $g \in \ell^\infty $. 
Then, we calculate
\begin{align}
|	
 \<  \Phi , \O (k) \Psi \>_\F 
 	|		  
\nonumber 
&  = 
 \bigg|
 		\int_{  \Lambda^{* } }  \1_\S (p) g(p) 
\<  a_{p+k } \Phi  , a_p   \Psi \>_\F 
 		\d p 
 \bigg|  		\\ 
\nonumber 
 &  \leq 
 \int_{  \Lambda^{* } }  \1_\S (p)  | g(p)  | 
 \| a_{p+k } \Phi \|_\F   \|	 a_p   \Psi \|_\F 
 \d p 			\\ 
\nonumber 
  &  \leq 
  \|	 g \|_{\ell^\infty }
\biggl(
\int_{  \Lambda^{*} } 
 \|		 a_{p+k } \Phi 	\|_\F^2 \d p 
\biggr)^{\frac{1}{2}	}
\biggl(
 \int_{  \Lambda^{*} } 
\1_\S(p) 
 \|		 a_{p  } \Psi  	\|_\F^2 \d p 
\biggr)^{\frac{1}{2}	}  \\
& = 
  \|	 g \|_{\ell^\infty }
  \|	 \N^{\frac{1}{2}} \Phi 	\|_\F
    \|	 \N_\S^{\frac{1}{2}} \Psi	\|_\F  \ . 
\end{align}
In the last line we used the fact that $  \|	 a_p \Phi	\|^2_\F  = \< \Phi , a_p^*a_p\Phi \>_\F $ for all $ p \in \Lambda^*$, plus a change of variables $p \mapsto p-k$. 
A similar argument holds  for the term containing $\Psi $. This finishes the proof. 
\end{proof}

\begin{proof}
	[Proof of Lemma \ref{lemma commutator estimates 2}]  
Throughout this proof, $\Psi \in \F$ is fixed.
In addition, in order to ease the notation, we shall drop the explicit  time dependence in our estimates -- since the estimates are  uniform in $t \in \R $, there is no risk of confusion. 
Let us now fix $\ell \in \mathbb N . $

\vspace{2mm}

\noindent \textit{Proof of (1).}  
Starting from \eqref{VF} we can first calculate that 
	\begin{align}
		\< \Psi , [\po  , V_F] \Psi  \> 
		& =
		2 \i  
		\ 
		\int_{  \Lambda^{* } }
		\hat V (k)
		\mathrm{ Im } 
		\<  \Psi ,  
		[\po, D_k^*] D_k 
		\Psi \> 
		\d  k \ . 
	\end{align}
We now put the above commutator in an appropriate form. 
Using the explicit 
expression of $D_k^* $ in terms of creation- and annihilation- operators  (see Def. \ref{definition Heisenberg b D}) together with the CAR, 
we find that
for all $ k \in \Lambda^*$  there holds 
	\begin{align}
		\nonumber 
		[ \po , D_k^*]
		& 	= 
		\int_{  \Lambda^{* } }
		\Big(
	\1_\S 
	(p)  - 	\1_\S  ( p - k)
		\Big)
		\chi^\perp (p)  \chi^\perp(p  - k )
		a_p^* a_{p - k} \,  \d p  \\ 
\nonumber 
		&  - 
		\int_{  \Lambda^{* } }
		\Big(
		\1_\S  ( h )  - 	\1_\S  ( h + k )
		\Big)
		\chi (h )  \chi ( h+ k  )
		a_h ^* a_{ h + k} \,  \d h    \\
\label{eq comm 1}
		& \equiv \mathcal{O}_1 (k) + \mathcal{O}_2(k) 
	\end{align}
	where we introduce the two following auxiliary operators  (notice the change of variables $p \mapsto p+k$ and $h\mapsto h-k$ in the second operator)
	\begin{align}
		\mathcal{O}_1(k)
		&  \equiv 
		\int_{  \Lambda^{* } }
	\1_\S  (p)  
		\chi^\perp (p, p-k )  
		a_p^* a_{p - k} \,  \d p   
		- 
		\int_{  \Lambda^{* } }
	\1_\S    (   h ) 
		\chi (h  , h +  k )  
		a_h ^* a_{ h + k} \,  \d h    		\\
		\mathcal{O}_2(k)
		&  \equiv 
		- 	\int_{  \Lambda^{* } }
	\1_\S  (p   )  
		\chi^\perp (p , p + k )  
		a_{p+k}^* a_{p } \,  \d p   
		+  
		\int_{  \Lambda^{* } }
	\1_\S  ( h  ) 
		\chi (h, h - k  )  
		a_{ h - k } ^* a_{ h } \,  \d h    	
	\end{align}
	where for simplicity we denote $\chi^\perp( p , p - k) \equiv \chi^\perp (p) \chi^\perp (p -k)$  and similarly for $\chi(h,h+k)$. 
	We are now able to write 
	\begin{align}
		\< \Psi , [\po  , V_F] \Psi  \> 
		& =
		2 \i  
		\ 
		\int_{  \Lambda^{* } }
		\hat V (k)
		\mathrm{ Im } 
		\<  \Psi ,  
		\mathcal{O}_1 (k) D_k 
		\Psi \> 
		\d  k   \\ 
		& 	+ 
		2 \i  
		\ 
		\int_{  \Lambda^{* } }
		\hat V (k)
		\mathrm{ Im } 
		\<  \Psi ,  
		D_k 
		\mathcal{O}_2 (k) 
		\Psi \> 
		\d  k  \\ 
		& 	+
		2 \i  
		\ 
		\int_{  \Lambda^{* } }
		\hat V (k)
		\mathrm{ Im } 
		\<  \Psi ,  
		[ 	\mathcal{O}_2 (k)  , D_k  ] 
		\Psi \> 
		\d  k \ . 
	\end{align}
	The first term in the above equation can be estimated using Lemma \ref{lemma O} for $\mathcal{O}(k) = \mathcal{O}^*_1(k)$. Namely, 
	\begin{align}
		\big|
		2 \i  
		\ 
		\int_{  \Lambda^{* } }
		\hat V (k)
		\mathrm{ Im } 
		\<  \Psi ,  
		\mathcal{O}_1 (k) D_k 
		\Psi \> 
		\d  k  
		\big|
		\leq 
		2 \| \hat V		\|_{\ell^1}
		\|	 \po^{1/2} \Psi	\| 
		\|	 \N^{3/2} \Psi 	\|
	\end{align}
	The second term in the above equation is estimated using Lemma \ref{lemma O} for $\mathcal{O}(k) = \mathcal{O}_2(k)$. We get 
	\begin{align}
		\big|
		2 \i  
		\ 
		\int_{  \Lambda^{* } }
		\hat V (k)
		\mathrm{ Im } 
		\<  \Psi ,  
		D_k  \mathcal{O}_2 
		\Psi \> 
		\d  k  
		\big|
		\leq 
		2 \| \hat V		\|_{\ell^1}
		\|	 \N^{3/2}  \Psi 	\|
		\|	 \po^{1/2} \Psi	\| 
	\end{align}
	The third term in the above equation is actually zero. This comes from the fact that the commutator between $\mathcal O _2(k)$ and $D_k$ is self-adjoint. More precisely, we can calculate
	using the CAR
	\begin{align}
		[
		\mathcal{O} _2(k) , D_k]	
		& = 
		\int_{  \Lambda^{* } }
		\Big(
	\1_\S  (p + k )
		- 
	\1_\S   (p  )
		\Big)
		\chi^\perp (p,p +  k )
		a_p^*a_p \d p \\
		& 
		- 
		\int_{  \Lambda^{* } }
		\Big(
	\1_\S   ( h - k  )
		- 
	\1_\S  ( h   )
		\Big)
		\chi(h,h - k)
		a_h^*a_h  \d h  \ . 
	\end{align}
	We put our results together to find that 
	\begin{align}
		\big | 
		\< \Psi , [\po  , V_F] \Psi  \> 
		\big | 
		& 
		\leq 
		4 \|	 \hat V	\|_{\ell^1 }
		\|	 \po^{1/2} \Psi	\|
		\|	 \N^{3/2}	 \Psi\| \ . 
	\end{align}

	\vspace{2mm}
	\noindent{\textit{Proof of (2).}}
Starting from  \eqref{VFB} 	we  can calculate that 
	\begin{align}
		| 
		\<
		\Psi  
		,
		[\po , V_{FB}]			\nonumber 
		\Psi 
		\>
		| 
		& 
		\leq  
		2 
		\int_{  \Lambda^{* } }
		| \hat  V (k) |
		\, 
		|
		\<
		\Psi, 
		[						\nonumber 
		\po, D_k^* b_k 
		]
		\Psi 
		\>
		|
		\d k 
		\ , 	\\ 
		& \leq 
		2 
		\int_{  \Lambda^{* } }
		| \hat  V (k) |
		\, 
		\|
		[\po , D_k ] \Psi 
		\|								\nonumber 
		\,
		\| 
		b_k \Psi 
		\|
		\d k 
		\\	
		& 
		\quad 
		+ 
		2
		\int_{  \Lambda^{* } }
		| \hat  V (k) |
		\, 
		\|
		[\po ,b_k ] \Psi 
		\|
		\,
		\| 
		D_k  \Psi 
		\|  
		\d k  	 \ . 
		\label{eq comm 2}
	\end{align}

	Let us estimate the first term contained in  the right hand side of \eqref{eq comm 2}.
In view of $D_k^* = D(-k)$ and \eqref{eq comm 1} we have
that  $[\N_\S , D_k ] = \O_1(- k)  + \O_2(-k)$.
Each $\O_i(k)$ can be estimated using \eqref{cross estimate} --we conclude that 
$
		\|  
		[\po ,D_k ] \Psi 
		\|
		\lesssim 
		\|   
		\N \Psi 
		\| \ . 
$
On the other hand, we use the Type-II estimate \eqref{lemma type II.1} on  $b_k$. We conclude that 
\begin{eqnarray}
		\int_{  \Lambda^{* } }
	| \hat  V (k) |
	\, 
	\|
	[\po , D_k ] \Psi 
	\|							 
	\,
	\| 
	b_k \Psi 
	\|
	\d k 
 \	\lesssim  \ 
	R^{\frac{1}{2}}
	\|	 \N \Psi 	\| 	\, \|	 \N_\S^{1/2} \Psi 	\| \  . 
\end{eqnarray}

\vspace{1mm}

	Let us now look at the second term contained in \eqref{eq comm 2}. 
	First, we  
	recall that for $ k \in \supp \hat V$
	there holds $[\N_\S  ,  b_k]  =  -2 b_k$, see Lemma \ref{lemma NS commutators}. 
	Consequently, using 
 the Type-II estimate \eqref{lemma type II.1}  we
 see that $	 		\|
 [\po ,b_k ] \Psi 
 \|	\lesssim R^{1/2 } \|	\N_\S^{1/2}	 \Psi\|$. 
 On the other hand, we can use the Type-I estimate  \eqref{lemma type I.1}		to find $\|	 D_k \Psi 	\|\lesssim \|	 \N \Psi 	\|$. 
These upper bounds  can be put    together to find that 
	\begin{align}
	\int_{  \Lambda^{* } }
	| \hat  V (k) |
	\, 
	\|
	[\po ,b_k ] \Psi 
	\|
	\,
	\| 
	D_k  \Psi 
	\|  
	\d k  	 	& 
		\  \lesssim  \    
		R^{1/2}
		\|	 \po^{1/2}	 \Psi\| 	\|	 \N \Psi	\| \ . 
	\end{align}
	A direct combination of the last three displayed estimates finishes the proof of (2).

	\vspace{2mm }
	
	\noindent
	{\textit{Proof of (3)}}. 
Starting from  \eqref{VB} we  decompose the boson-boson interaction into a diagonal, and off-diagonal part. 
	Namely, we  write $V_B  = V_1 + V_2$, where we 
	set
	\begin{align}
		V_1 
		 \equiv 
		\int_{  \Lambda^{* } }
		\hat V  (k)
		b^*_k b_k   \d  k 	
		\quad
		\t{and}
		\quad 
		V_2  
		 \equiv 
		\frac{1}{2}
		\int_{  \Lambda^{* } }
		\hat V  (k)
		\Big(
 b_k  b_{-k}   + \mathrm{h.c} 
		\Big)  
		\d k 
		\ . 
	\end{align}

\vspace{1.5mm }

For $V_1$ we can quickly verify that  its   commutator with $\po$   vanishes. 
Indeed, thanks to Lemma \ref{lemma NS commutators} 
we find that 
$[\N_S, b^*(k) b_k] = +2 b_k^* b_k - 2 b_k^*b_k =0$
for all $ k \in \supp \hat V$. 
Hence, $[\N_\S , V_1]=0$ upon summing over $ k \in \Lambda^*$. 

\vspace{2mm}

	For $V_2$, we   have the preliminary upper bound as our starting point  
	\begin{align}
\label{eq comm 3}
		|\<
		\Psi ,
		[\po , V_2]
		\Psi 										 
		\>  | 
		& 
\leq  2 
		\int_{  \Lambda^{* } }
		| \hat{V} (k) | 
		|	 \<  \Psi , b_k b_{-k} \Psi \>	|
		\d k 
		\ . 
	\end{align} 
We estimate the integrand of the  right hand side as follows --let us fix $ k \in \supp \hat V$. 
First, recalling that   $[\N_\S, b_k]=0 $    (see Lemma \ref{lemma NS commutators}) 
we find that for any measurable function $ \vp  : \R \rightarrow \C $ 
the following \textit{pull-through formula} holds true
	\begin{eqnarray}
\label{pull through formula Ns}
		\vp ( \po) b_k = b_k \vp (\po -2  )    \ . 
	\end{eqnarray} 
Thus, using  $\vp (x)  = (x+5)^{1/2}$	we find 
	\begin{align}
\nonumber 
		|	 \<  \Psi , b_k b_{-k} \Psi \>	|
		&    = 
		|	   \<   (\po  + 5  )^{1/2} \Psi , b_k  b_{-k}   ( \po +1 )^{-1/2} \Psi    \> |  \\
		& \leq 
		\|	  (\po  + 5  )^{1/2} \Psi	\| 
		\, 
		\|	 b_k  b_{-k}  ( \po +1 )^{-1/2} \Psi   	\|    \ . 
\label{eq comm 4}
	\end{align}
{\blu We use   
 the Type-II estimate \eqref{lemma type II.1} for $b$-operators 
 and the commutation relation $ (\N_S + 2 )^{1/2} b_k = b_k \N_S^{1/2 }$ 
 to find that 
} 
	\begin{align}
		\nonumber 
		\|	 b_k b_{-k}   ( \po +1 )^{-1/2} \Psi   	\|    
		& 
		\,   \lesssim  \, 
		R^{1/2}
		\|	 \po^{1/2}	  b_{-k} ( \po +1 )^{-1/2} \Psi    \| 		\\ 
		\nonumber 
		& 
	 \, 	\leq  \, 
		R^{1/2}
		\|	  (\po+2)^{1/2}	  b_{-k}  ( \po +1 )^{-1/2} \Psi    \| 		\\ 
		\nonumber 
		& 
	 \, 	=  \, 
		R^{1/2}
		\|	  	  b_{-k}   \po^{1/2 }   ( \po +1 )^{-1/2} \Psi    \| 		\\ 
		\nonumber 
		 &  \,  \lesssim  \, 
		R \|	\po^{1/2} \po^{1/2} 	 ( \po +1 )^{-1/2} \Psi    \| 	  \\ 
		&  \, \leq  \, 
		R 	\|	 \po^{1/2} \Psi	\|  \  . 
		\label{eq comm 5}
	\end{align}
On the other hand, the other term multiplying in \eqref{eq comm 4}
can be bounded as follows 
$
\|	  (\po  + 5  )^{1/2} \Psi	\|  
\lesssim \|	 \po^{1/2} \Psi	\|	+ \|	 \Psi 	\|  \ . 
$
A straightforward combination of the estimates contained in \eqref{eq comm 3}, \eqref{eq comm 4} and \eqref{eq comm 5} now finish the proof. 
\end{proof} 

 \section{Leading Order Terms I: Emergence of $Q$}
\label{section TFF}
 In Section \ref{section preliminaries} we considered a double commutator expansion \eqref{expansion f} for the momentum distribution  of particles and holes,
 $f_t(p)$. 
 This expansion     is expressed in terms of the nine quantities $\{T_{\alpha,\beta}(t) \}$ 
 that 
 arise from the three different {\blu interaction} potentials $V_F$, $V_{FB}$ and $V_{B}$, respectively. 
 The main goal of this section is to analyze the single term $T_{F,F}$. 
In particular, we prove that one may extract the mollified collision operator $Q_t$--originally introduced in Def. \ref{definition Q1}--
up to {\blu remainder} terms that we have control of. 
A precise statement is given in the following proposition. 
 We remind the reader that $R  =  |\Lambda| p_F^{d-1}$

 \begin{proposition}
 	[Analysis of $T_{F,F}$]
 	\label{prop TFF}
Let $T_{F,F}(t,p)$ be the quantity defined in Eq.  \eqref{T alpha beta} for $\alpha  =  \beta = F$,
and let $m>0$.
Then, there exists a constant  $C>0$
such that for all  $\vp \in \ell_m^1 $ and $ t \geq 0 $ 
the following inequality holds true 
 	\begin{align}
 		\big|		
 		T_{F,F}
 		(t,\vp)
 		+
 		\,    |\Lambda| \, 
 		t 
 		\< \vp , Q_t [f_0]\>
 		\big|	
 \leq C 
 		 |\Lambda| 
 		\lambda 
 		t^3 
 		\|	\hat V	\|_{\ell^1}^3 
 		\|	 \vp	\|_{\ell^1 }
 		\sup_{\tau \leq t }
 		\Big(
 		R^2 
 		\nu_\tau (\N^4)^{\frac{1}{2} } 
 		+
 		\nu_\tau (\N^4)
 		\Big)
 	\end{align} 
 	where  $T_{F,F} (t,\vp) \equiv \<\vp, T_{F,F}(t)\>$ and $Q_t $   is   given in Def. \ref{definition Q1}. 
 \end{proposition}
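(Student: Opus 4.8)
The plan is to insert the explicit form $V_F(t)=\frac12\int_{\Lambda^*}\hat V(k)\,D_k^*(t)D_k(t)\,\d k$ from \eqref{VF(t)} into the double commutator \eqref{T alpha beta phi} defining $T_{F,F}(t,\vp)$, to replace the evolved state $\nu_{t_2}$ by the initial datum $\nu$, and to evaluate the resulting quasi-free expectation by Wick's theorem. Expanding the inner commutator by the Leibniz rule in the contraction notation, $[N(\vp),D_k^*(t_1)D_k(t_1)]=D_k^*(t_1,\vp)D_k(t_1)+D_k^*(t_1)D_k(t_1,\vp)$, and likewise the outer commutator with $D_\ell^*(t_2)D_\ell(t_2)$, the integrand of $T_{F,F}(t,\vp)$ becomes a finite sum of monomials $\pm\,\hat V(k)\hat V(\ell)\,\nu_{t_2}(M)$, where $M$ is a product of three Heisenberg operators of the type $D_{\pm k}(t_1)$, $D_{\pm\ell}(t_2)$ and exactly one contraction $D_{\pm k}(t_1,\vp)$ or $D_{\pm\ell}(t_2,\vp)$, integrated over $k,\ell\in\Lambda^*$ and over $0\le t_2\le t_1\le t$.

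First I would replace $\nu_{t_2}$ by $\nu_0=\nu$, writing $\nu_{t_2}(M)-\nu(M)=-\i\int_0^{t_2}\nu_s\big([\h_I(s),M]\big)\,\d s$ and inserting the decomposition $\h_I(s)=\lambda\big(V_F(s)+V_{FB}(s)+V_B(s)\big)$ from \eqref{interaction hamiltonian 2}. The single contraction in $M$ is bounded in $B(\F)$ by $C|\Lambda|\,\|\vp\|_{\ell^1}$ (Type-IV estimate, Lemma \ref{lemma type IV}); the remaining $D$-operators are relatively $\N$-bounded and commute with $\N$ (Type-I estimate, Lemma \ref{lemma type I}, and Lemma \ref{lemma N commutators}); and the $b$-operators inside $V_{FB}(s),V_B(s)$ are bounded in $B(\F)$ by $CR$ (see \eqref{bounds on b}). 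Applying Cauchy--Schwarz against $\nu_s$ as in Remark \ref{remark states} then bounds the $V_F(s)$-contribution by $C|\Lambda|\,\|\vp\|_{\ell^1}\,\|\hat V\|_{\ell^1}^3\,\nu_s(\N^4)$ and the $V_{FB}(s)$- and $V_B(s)$-contributions by $C|\Lambda|\,\|\vp\|_{\ell^1}\,\|\hat V\|_{\ell^1}^3\,R^2\,\nu_s(\N^4)^{1/2}$. The extra time integral $\int_0^{t_2}\d s$, together with the two time integrals in \eqref{T alpha beta phi} and the factor $\lambda$ from $\h_I$, accounts for the prefactor $|\Lambda|\lambda t^3\|\hat V\|_{\ell^1}^3\|\vp\|_{\ell^1}$ and the supremum over $\tau\le t$ in the claimed estimate.

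Next I would evaluate the main term, that is $T_{F,F}(t,\vp)$ with every $\nu_{t_2}$ replaced by $\nu$, namely $\int_0^t\int_0^{t_1}\nu\big([[N(\vp),V_F(t_1)],V_F(t_2)]\big)\,\d t_2\,\d t_1$. Since $\nu$ is number-conserving and quasi-free (Condition \ref{condition initial data}), Wick's theorem reduces this expectation to a sum of products of the two-point functions $\nu(a_p^*a_q)=\delta(p-q)f_0(p)$ and $\nu(a_pa_q^*)=\delta(p-q)\widetilde f_0(p)$, the anomalous contractions being zero. After the momentum $\delta$'s are contracted, the Heisenberg phases from $D_k(t_1)$ and $D_\ell(t_2)$ collapse to $e^{\i(t_1-t_2)\Delta E}$ with $\Delta E=E_{p_1}+E_{p_2}-E_{p_3}-E_{p_4}$, and the remaining time integral is exact,
\[
	2\,\mathrm{Re}\int_0^t\!\int_0^{t_1}e^{\i(t_1-t_2)\Delta E}\,\d t_2\,\d t_1
	=\frac{4\sin^2(t\Delta E/2)}{(\Delta E)^2}=2\pi\,t\,\delta_t(\Delta E),
\]
with $\delta_t$ as in \eqref{delta function}. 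Sorting the Wick pairings by whether the internal legs of the two $D$'s both lie inside the Fermi sea ($\chi\chi$), both outside it ($\chi^\perp\chi^\perp$), or are mixed, reproduces the four coefficient functions $\sigma_{HH},\sigma_{PP},\sigma_{HP},\sigma_{PH}$ of Definition \ref{definition Q1}; combining the direct and exchange pairings turns the two summands of each $D_k$ into the modulus square $|\hat V(p_1-p_4)-\hat V(p_1-p_3)|^2$; the contraction $D_\bullet(\cdot,\vp)$, evaluated at its two momenta with opposite signs and symmetrized over the external momentum, produces the bracket $[\delta(p-p_1)+\delta(p-p_2)-\delta(p-p_3)-\delta(p-p_4)]$; and the normal contractions of the surviving $a^*a$ and $aa^*$ factors produce the gain--loss combination $f_0(p_3)f_0(p_4)\widetilde f_0(p_1)\widetilde f_0(p_2)-f_0(p_1)f_0(p_2)\widetilde f_0(p_3)\widetilde f_0(p_4)$. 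Collecting constants, this equals $-|\Lambda|\,t\,\langle\vp,Q_t[f_0]\rangle$, which is the assertion.

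The hard part is the last step: the combinatorial bookkeeping of the Wick expansion, in particular (i) assembling the squared interaction kernel from the direct and exchange contractions, (ii) verifying that the contributions breaking the collisional symmetry --- lower-order, self-energy-type terms --- cancel, so that only the cubic collisional structure of $Q_t$ survives, and (iii) recovering all four coefficient functions with the correct signs and $\chi,\chi^\perp$ insertions. In comparison, the error analysis of the preceding step is routine given the Type I--IV calculus of Section \ref{toolbox1}, though it still demands consistent tracking of the powers of $|\Lambda|$, $R$ and $\N$.
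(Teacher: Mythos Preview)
Your proposal is correct and follows the same two-step strategy as the paper: (i) Duhamel-expand $\nu_{t_2}$ around $\nu$ and bound the remainder via the Type-I/IV calculus and the crude operator bound $\|\h_I(t)\Psi\|\lesssim\lambda\|\hat V\|_{\ell^1}(\|\N^2\Psi\|+R^2\|\Psi\|)$, then (ii) evaluate the quasi-free expectation of the double commutator at the initial datum. The only presentational difference worth noting is in step (ii): you outline a direct Wick expansion in the $D$-picture, whereas the paper first normal-orders $V_F(t)=\,{:}V_F(t){:}+N(g)$, drops $N(g)$ since it commutes with $a_p^*a_p$, antisymmetrizes the quartic kernel, and then invokes the determinant identity of Erd\H{o}s--Salmhofer--Yau to organize the pairings; this is precisely what disposes of the ``self-energy-type'' term $M_p(k_1k_2k_2k_3,k_3k_4k_1k_4)$ that you flag as the hard cancellation. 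On the error side, be slightly careful: to land on $\nu_\tau(\N^4)$ rather than a higher moment you must use that the fully expanded double commutator $C_F(\vp,t_1,t_2)$ obeys $\|C_F\Psi\|\lesssim|\Lambda|\|\vp\|_{\ell^1}\|\hat V\|_{\ell^1}^2\|\N^2\Psi\|$ (three effective factors after Leibniz, not four), and then Cauchy--Schwarz against $\h_I$ --- treating the monomials as bare four-fold products would cost an extra power of $\N$.
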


In order to  prove Proposition \ref{prop TFF}  we  shall perform an additional expansion of $\nu_t$ with respect to the interaction Hamiltonian $\h_I(t)$.
Namely, we consider 
 \begin{align}
 T_{F,F} (t,\vp) 
  & =
 \int_0^t 
 \int_0^{t_1 }				\nonumber 
 \nu 
 \big(
 [ [ N(\vp) , V_F(t_1) ] , V_F (t_2)]
 \big) 
 \d t_1 \d t_2   \\
 & \quad  
   - i 
 \int_0^t 
\int_0^{t_1 }
\int_0^{t_2}
    \nu_{t_2}
   \big(
  [  [ [ N(\vp) , V_F(t_1) ] , V_F (t_2)] ,  \h_I( t_3)  ] 
   \big) 
   \d t_1 \d t_2 \d t_3 \ , 
   \label{TFF eq 1}
 \end{align}
where we recall $N(\vp) \equiv \int_{  \Lambda^{*} } \overline{\vp(p)}a_p^*a_p \, \d p $. 
We then analyze  the two terms of the right hand side of \eqref{TFF eq 1} separately. 
Thus, we split the proof into two parts, which are contained in the following two lemmas. 

\begin{lemma}
	\label{lemma TFF 1}
Let $\nu: B(\F) \rightarrow \C$ be an initial state
satisfying Condition \ref{condition initial data},
and let 
   $f_0(p)  = |\Lambda|^{-1} \nu(a_p^*a_p)$ for all $p \in \Lambda^*$. 
Let $V_F(t)$ be the Heisenberg evolution of the fermion-fermion interaction, 
defined in \eqref{Heisenberg evolution V} for $\alpha  = F$.
Then, for all $\vp \in \ell^1$  and $ t \geq 0 $
	\begin{equation}
		 \int_0^t 
		\int_0^{t_1 }			 
		\nu 
		\big(
		[ [ N(\vp) , V_F(t_1) ] , V_F (t_2)]
		\big) 
		\d t_1 \d t_2   = - t 
		|\Lambda| \< \vp , Q_t[f_0] \>  \ .
	\end{equation}
\end{lemma}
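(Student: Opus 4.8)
The plan is to compute the double commutator $[[N(\vp), V_F(t_1)], V_F(t_2)]$ explicitly, take the expectation in the quasi-free, translation-invariant state $\nu$, and recognize the resulting momentum integral as $-t|\Lambda|\<\vp,Q_t[f_0]\>$ after carrying out the two time integrations. First I would write $V_F(t) = \frac12 \int \hat V(k) D_k^*(t) D_k(t)\,\d k$ and expand $D_k(t)$, $D_k^*(t)$ into their four constituent fermionic bilinears $a^\hash_{p\mp k}(t)a^\hash_p(t)$ with the appropriate $\chi$/$\chi^\perp$ cutoffs and phases $e^{\pm it E}$. The inner commutator $[N(\vp), V_F(t_1)]$ is then handled termwise using the elementary identity $[N(\vp), a_p^*a_q] = (\overline{\vp(p)}-\overline{\vp(q)})a_p^*a_q$ recorded in the preliminaries; this produces, for each bilinear appearing in $D_{k}^*(t_1)D_{k}(t_1)$, a factor like $\overline{\vp(p)}-\overline{\vp(p-k)}$ or combinations thereof. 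The outer commutator with $V_F(t_2)$ then contracts this quartic expression against another quartic $D_{k'}^*(t_2)D_{k'}(t_2)$.

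The key simplification comes at the level of taking the expectation. Because $\nu$ is quasi-free and number-conserving, the expectation of the resulting product of eight creation/annihilation operators reduces by Wick's theorem to sums of products of two-point functions $\nu(a_p^*a_q) = \delta(p-q)\chi_0(p)$-type objects — more precisely $\nu(a_p^*a_p)=|\Lambda| f_0(p)$ by definition, and translation invariance kills all off-diagonal contractions. The commutator structure guarantees that the fully-contracted "vacuum" term and the terms where both bilinears self-contract cancel, leaving only the genuinely connected contractions. Tracking the $\chi$, $\chi^\perp$ factors, one sees the four channels $HH$, $PP$, $HP$, $PH$ emerge exactly as in the definition of $\sigma$ in Definition \ref{definition Q1}: hole-hole and particle-particle channels come from the $D_kD_k$ and $D_k^*D_k^*$ cross terms with the momentum-conserving delta $\delta(p_1+p_2-p_3-p_4)$ and the symmetrized kernel $|\hat V(p_1-p_4)-\hat V(p_1-p_3)|^2$; the mixed channels come from the $D_k^*D_k$ type contractions and carry the factor $2|\hat V(p_1-p_3)|^2$ with the other delta $\delta(p_1-p_2-p_3+p_4)$. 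The factors $\overline{\vp(\cdot)}-\overline{\vp(\cdot)}$ assemble, after relabeling, into the bracket $[\delta(p-p_1)+\delta(p-p_2)-\delta(p-p_3)-\delta(p-p_4)]$ paired against $\vp$, and the products of two-point functions assemble into $f(p_3)f(p_4)\widetilde f(p_1)\widetilde f(p_2) - f(p_1)f(p_2)\widetilde f(p_3)\widetilde f(p_4)$ once one uses $\nu(a_p a_p^*)=|\Lambda|\widetilde f_0(p)$ for the "hole" contractions.

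The final ingredient is the time integration. After Wick contraction, every term carries a phase $e^{i t_1(\cdots)}e^{i(t_2-t_1)(\cdots)}$ or similar built from the dispersion relations $E_{p_i}$; collecting exponents, the $t_1$ and $t_2$ integrals over the simplex $0\le t_2\le t_1\le t$ produce, for each term, a factor that after summing the four "permuted" contributions combines into $t \cdot \delta_t[E_{p_1}+E_{p_2}-E_{p_3}-E_{p_4}]$ where $\delta_t(E)=t\delta_1(tE)$ and $\delta_1(E) = \frac{2}{\pi}\sin^2(E/2)/E^2$ as in \eqref{delta function}. Indeed $\int_0^t\int_0^{t_1} e^{i(t_1-t_2)E}\,\d t_2\,\d t_1 + \text{c.c.} = 2\int_0^t (t-s)\cos(sE)\,\d s = t^2\,\frac{2}{\pi}\cdot\frac{\pi}{2}\cdot\frac{2\sin^2(tE/2)}{(tE)^2}\cdot\frac{2}{t} $ — i.e. $= t\,\delta_t(E)$ up to the constant bookkeeping matching the $\pi$ in front of $Q_t$. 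Assembling all factors, including the overall $\frac14$ from the two $\frac12$'s in $V_F$ combined with the combinatorial multiplicity of the Wick contractions and the symmetrization, yields exactly $-t|\Lambda|\<\vp,Q_t[f_0]\>$.

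I expect the main obstacle to be the bookkeeping of the Wick contractions and the four channels: one must carefully verify which contractions survive (the connected ones), check that the disconnected and tadpole-type terms cancel against each other or vanish by $\hat V(0)=0$ and translation invariance, and confirm that the cutoff functions $\chi,\chi^\perp$ land in precisely the pattern defining $\sigma_{HH},\sigma_{PP},\sigma_{HP},\sigma_{PH}$ — in particular getting the symmetrized kernel $|\hat V(p_1-p_4)-\hat V(p_1-p_3)|^2$ right requires exploiting the antisymmetry of the fermionic bilinears and the freedom to relabel integration variables $p_3\leftrightarrow p_4$. The time-integral identity producing $\delta_t$ is routine once the phases are correctly identified, but correctly identifying the phases (they depend on whether a given momentum sits in a $D$ or $D^*$ and at which time) is where sign and factor errors are most likely to creep in.
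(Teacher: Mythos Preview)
Your approach is essentially the same as the paper's: compute the double commutator, apply Wick's theorem to the quasi-free translation-invariant state, and integrate in time to produce $\delta_t$. The paper organizes the calculation somewhat differently, however. Rather than expanding $D_k^*D_k$ directly, it first normal-orders $V_F(t) = {:}V_F(t){:} + N(g)$, observing that the quadratic tail $N(g)$ commutes with $a_p^*a_p$ and drops out. It then writes ${:}V_F(t){:}$ with an explicitly antisymmetrized coefficient $\Phi_t(p_1p_2p_3p_4)$, and leverages the symmetry analysis of Erd\H{o}s--Salmhofer--Yau \cite{ESY2004} to reduce the $4\times4$ Wick determinant to two surviving terms. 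This packaging avoids tracking the four $\chi/\chi^\perp$ channels separately until the very end, where the decomposition $\Phi = \Phi^{(1)} + \Phi^{(2)}$ recovers $\sigma_{HH}, \sigma_{PP}, \sigma_{HP}, \sigma_{PH}$.

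One point you are underestimating: after Wick contraction, the surviving connected terms are not only of the form $f(p_3)f(p_4)\widetilde f(p_1)\widetilde f(p_2) - f(p_1)f(p_2)\widetilde f(p_3)\widetilde f(p_4)$. There is a second pairing pattern producing a term of schematic shape $f(k_1)f(k_2)f(k_3)\widetilde f(k_4)$ (three $f$'s and one $\widetilde f$), which is neither a vacuum contribution nor a tadpole and is not killed by $\hat V(0)=0$ alone. In the paper's notation this is the term with coefficient $M_p(k_1k_2k_2k_3, k_3k_4k_1k_4)$; its vanishing requires the specific observation that $\Phi(k_1k_2k_3k_2) \propto \delta(k_1-k_3)$, which then annihilates against the factor $\delta(p-k_3)-\delta(p-k_1)$. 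In your direct channel-by-channel expansion this same cancellation will appear, but it is a genuine structural point rather than routine bookkeeping, and you should flag it explicitly.
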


The proof of the identity contained in   Lemma \ref{lemma TFF 1}
will be heavily inspired by the work of Erd\H{o}s, Salmhofer and Yau \cite{ESY2004},
on a heuristic derivation of the quantum Boltzmann equation. 
In fact, we shall make use of some of their algebraic relations.

\begin{lemma}
	\label{lemma TFF 2}
	Let $(\nu_t)_{t\in\R}$ be the interaction dynamics as given in Def. \ref{definition interaction dynamics}, with initial data  $\nu  =  \nu_0 $ satisfying Condition \ref{condition initial data}.
Let $V_F(t)$ be the Heisenberg evolution of the fermion-fermion interaction, 
defined in \eqref{Heisenberg evolution V} for $\alpha  = F$.
Then, there exists a constant $C>0$ such that 
for all $\vp \in \ell^1$  and $ t \geq 0 $
	  \begin{align}
	    \bigg{|}
	  \int_0^t 
	  \int_0^{t_1 }
	  \int_0^{t_2}
	  \nu_{t_2}
	  \big(
	  [  [ [ N(\vp) , &  V_F(t_1) ] ,  V_F (t_2)] ,  \h_I( t_3)  ] 
	  \big) 
	  \d t_1 \d t_2 \d t_3 
	    \bigg{|}
\nonumber 
	  \\
	  	& \leq C  
\lambda 	  	t^3 
	  	\|	\hat V	\|_{\ell^1}^3 
	  	|\Lambda|
	  	\|	 \vp	\|_{\ell^1 }
	  	\sup_{\tau \leq t }
	  	\Big(
	  	\nu_\tau (\N^4)
	  	+ 
	  	R^2 
	  	\nu_\tau (\N^4)^{\frac{1}{2}	}
	  	\Big) \ . 
	  \end{align}
\end{lemma}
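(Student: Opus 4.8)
\textbf{Proof plan for Lemma \ref{lemma TFF 2}.}
The plan is to estimate the triple-time integral by brute force, bounding the integrand $\nu_{t_2}([[[N(\vp),V_F(t_1)],V_F(t_2)],\h_I(t_3)])$ pointwise in $(t_1,t_2,t_3)$ uniformly, and then multiplying by the volume $t^3/6$ of the simplex of integration (or simply $t^3$). The key structural input is that $N(\vp)$ commutes with $V_F$ only through its contraction: writing $V_F(t) = \tfrac12\int \hat V(k) D_k^*(t) D_k(t)\,\d k$, the commutator $[N(\vp),V_F(t_1)]$ produces terms of the form $\hat V(k) D_k^*(t_1,\vp) D_k(t_1)$ and $\hat V(k) D_k^*(t_1) D_k(t_1,\vp)$, where the contracted operator $D_k(t_1,\vp)$ is controlled by the Type-IV estimate (Lemma \ref{lemma type IV}), namely $\|D_k(t_1,\vp)\|_{B(\F)} \lesssim |\Lambda|\,\|\vp\|_{\ell^1}$, while the uncontracted $D_k(t_1)$ is controlled by the Type-I estimate $\|D_k(t_1)\Psi\| \lesssim \|\N\Psi\|$.

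\textbf{Main steps.} First I would expand $[N(\vp),V_F(t_1)]$ into a sum of two terms, each a product of one contracted $D(\vp)$-operator and one bare $D$-operator, integrated against $\hat V(k)\,\d k$. Second, I would take the commutator with $V_F(t_2) = \tfrac12\int \hat V(\ell) D_\ell^*(t_2)D_\ell(t_2)\,\d\ell$: by the Leibniz rule this produces terms where the bare $D(t_1)$ is commuted against $D_\ell(t_2)$ or $D_\ell^*(t_2)$, giving nested commutators $[D_k(t_1), D_\ell(t_2)]$ (controlled by Type-I, $\lesssim\|\N\Psi\|$) and $[D_k(t_1,\vp), D_\ell(t_2)]$ (controlled by Type-IV, $\lesssim|\Lambda|\|\vp\|_{\ell^1}$), plus terms where both $D$-operators from $V_F(t_1)$ survive and sit beside the $V_F(t_2)$ factors. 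Third, I would take the commutator with $\h_I(t_3) = \lambda(V_F(t_3)+V_{FB}(t_3)+V_B(t_3))$: here the $b$- and $D$-operators from $\h_I(t_3)$ interact with whatever $D$-operators are left. For the $V_F(t_3)$ piece one uses only Type-I/Type-IV; for the $V_{FB}(t_3)$ and $V_B(t_3)$ pieces, the $b$-operators appear and one invokes the norm bound $\|b_k(t_3)\|_{B(\F)}\lesssim R$ from \eqref{bounds on b}, the Type-II estimates, and the Type-III estimates for contracted $b(\vp)$-operators. In every resulting term, I then sandwich the operator product between $\Psi$ and $\Psi$ (for $\Psi$ in the range of $\nu_{t_2}$), move all bare $D$'s to act as factors of $\N$ using the commutation $[\N,D_k(t)]=0$ (Lemma \ref{lemma N commutators}) so that Type-I applications collapse to powers of $\N$, and arrive at a bound of the shape
\begin{equation}
\lambda\,|\Lambda|\,\|\vp\|_{\ell^1}\,\|\hat V\|_{\ell^1}^3\,\Big(\nu_{t_2}(\N^4) + R^2\,\nu_{t_2}(\N^4)^{1/2}\Big),
\end{equation}
where the $\N^4$ appears because up to four bare $D$-operators can survive (two from $V_F(t_1)$, two from $V_F(t_2)$) in the worst case, and the $R^2$ prefactor (with only a half-power of $\N^4$) tracks the terms in which two $b$-operators from $\h_I(t_3)$ — or one $b$-operator together with a contraction — contribute factors of $R$ rather than $\N$. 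Finally, integrating over $0\le t_3\le t_2\le t_1\le t$ and taking the supremum over $\tau=t_2\le t$ of the resulting $\nu_\tau$-expectations yields the claimed estimate with the factor $t^3$.

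\textbf{Expected main obstacle.} The bookkeeping is the real difficulty: a naive Leibniz expansion generates on the order of dozens of terms, and for each one must decide which of the four estimate types applies to each commutator, and keep precise track of which factors contribute $\N$, which contribute $R$, and which contribute $|\Lambda|\|\vp\|_{\ell^1}$ — getting the combination $R^2\nu_\tau(\N^4)^{1/2}+\nu_\tau(\N^4)$ exactly right (as opposed to, say, $R^4 + R^2\nu_\tau(\N^2) + \nu_\tau(\N^4)$) requires checking that the $b$-operator contributions never coexist with the maximal number of bare $D$'s. I would organize this by first classifying the terms according to how many $b$-operators appear after commuting with $\h_I(t_3)$ (zero, one, or two), handling the zero-$b$ case purely with Type-I/Type-IV to get the $\nu_\tau(\N^4)$ term, and the one- or two-$b$ cases with Type-II/Type-III, where each $b$ eats a factor of $R$ at the cost of only half a power of $\N$ via the Cauchy–Schwarz splitting in Remark \ref{remark states}, producing the $R^2\nu_\tau(\N^4)^{1/2}$ term. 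The uniformity in $t_1,t_2,t_3$ of all four families of estimates is what makes the final simplex integration trivial.
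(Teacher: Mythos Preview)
Your overall strategy is sound and would succeed, but it takes a genuinely more laborious route than the paper. The paper does \emph{not} expand the third commutator with $\h_I(t_3)$ at all. Instead, it first packages the double commutator as $C_F(\vp,t_1,t_2):=[[N(\vp),V_F(t_1)],V_F(t_2)]$ and proves the single operator bound $\|C_F\Psi\|\le C\|\hat V\|_{\ell^1}^2|\Lambda|\|\vp\|_{\ell^1}\|\N^2\Psi\|$ (this is your ``second step'', eight terms, Type-I and Type-IV only). Then, using that $C_F$ and $\h_I$ are self-adjoint, it applies Cauchy--Schwarz at the top level,
\[
\big|\nu_{t_3}\big([C_F,\h_I(t_3)]\big)\big|\le 2\,\nu_{t_3}(C_F^2)^{1/2}\,\nu_{t_3}(\h_I(t_3)^2)^{1/2},
\]
and finishes with the crude bound $\|\h_I\Psi\|\lesssim\lambda\|\hat V\|_{\ell^1}(\|\N^2\Psi\|+R^2\|\Psi\|)$, from which the combination $\nu_\tau(\N^4)+R^2\nu_\tau(\N^4)^{1/2}$ falls out in one line. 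So the ``main obstacle'' you anticipate---dozens of terms from a triple Leibniz expansion and delicate $b$-counting---is precisely what the paper sidesteps. Your classification by number of $b$-operators would recover the same bound, but the paper's Cauchy--Schwarz shortcut makes the Type-II and Type-III estimates (and any mention of $b(\vp)$, which in fact never arises here since $N(\vp)$ only contracts against $V_F(t_1)$) unnecessary for this lemma.
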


We remind the reader that   the interaction Hamiltonian $\h_I(t)$
admits the decomposition given in \eqref{h_I decomposition } 
in terms of  the Heisenberg evolution of $b$ and $D$-operators--see \eqref{VF(t)}, \eqref{VFB(t)} and \eqref{VB(t)}.

\begin{proof}[Proof of Proposition \ref{prop TFF}]
It suffices to put together Eq. \eqref{TFF eq 1} and  Lemmas \ref{lemma TFF 1} and \ref{lemma TFF 2}. 
\end{proof}

We dedicate the rest of this section to the proof of Lemmas \ref{lemma TFF 1} and \ref{lemma TFF 2}. 
Before we jump into the proof of Lemma \ref{lemma TFF 1}, 
we shall rewrite the fermion-fermion interaction term $V_F(t)$
in a form that will be suitable for our analysis. 
This representation is recorded in Lemma \ref{lemma normal order VF}, 

\medskip 
 \textit{{Normal ordering of   $V_F(t)$}.}
 Let us fix the time label $ t \in \R$. 
First, we see from \eqref{VF(t)} that $V_F(t) = \int_{  \Lambda^{*} }\hat V(k) D_k^*(t) D_k(t) \d k $ can be written in terms of
the Heisenberg evolution of the $D$-operators, as given in Def. \ref{definition Heisenberg b D}. 
These can be written explicitly in terms of creation- and annihilation- {\blu operators}
in the following way 
\begin{align}
	D_k(t) =
	\int_{(\Lambda^*)^2 	}
	d_t (k,p , q ) 
	a_{p}^* 
	a_{q} \d p \d q
\end{align}
where the  coefficients in the above expression are given as follows 
\begin{equation}
	d_k(t, p , q)
	  \,  \equiv  \, 
	 e^{it(E_{p  }    - E_{q}  ) }
	 \big[
	 \chi^\perp (p)
	 	 \chi^\perp (q)
\delta(p - q + k )
	 	 -
	 	 	 \chi(p)
	 	 \chi(q)
	 	 \delta(p - q - k )
	 \big]
\end{equation}
for all   $k, p, q \in \Lambda^*$. 
Since $D_k^*(t) = D_{-k}(t)$  it readily follows that we can   write the fermion-fermion interaction in the following form 
\begin{equation}
\label{TFF VF(t)}
	V_F(t) 
	=  \int_{  \Lambda^{*4 } } 
\bigg[
\int_{  \Lambda^{*} }
\hat V(k) 
d_t(-k , p_1, q_1 ) \,  d_t(k, p_2, q_2) \d k 
\bigg]
a_{p_1}^*  a_{q_1} 
a_{p_2}^*  a_{q_2} 
\d p_1  \d p_2  
\d q_1 \d q_2   \  . 
\end{equation}

\vspace{2mm}

Clearly, the expression in \eqref{TFF VF(t)} is \textit{not} normally ordered. 
Our next goal is then to  put $V_F(t)$ in normal order, with explicit coefficients.
To this end, we introduce the following coefficient  function 
\begin{equation}
\label{coefficient phi}
	\phi_t(  \vec p )
	\equiv 
	\int_{  \Lambda^{*} }
	\hat V(k) 
 \, 	d_t(-k , p_1, p_4 ) \,  d_t(k, p_2, p_3) \d k    
\end{equation}
where $\vec p = (p_1, p_2, p_3,p_4) \in (\Lambda^*)^4 $.  
A straightforward calculation using the CAR  in  Eq. \eqref{TFF VF(t)}
now yields 
\begin{align}
\nonumber 
	V_F(t)    
		& = 
		 \int_{  \Lambda^{*4 } } 
\phi_t(p_1,p_2,q_2,q_1)
	a_{p_1}^*   a_{p_2}^*
	a_{q_2}   a_{q_1} 
	\d p_1  \d p_2  
	\d q_1 \d q_2   \\
\label{TFF VF(t) 2}
	& + 
	 \int_{  \Lambda^{*2 } } 
 \bigg[\
  \int_{  \Lambda^{*2 } } 
 \phi_t(p_1,p_2,q_2,q_1)
 \delta(q_1 - p_2)
  \d p_2 \d q_1
 \bigg]
	a_{p_1}^*    a_{q_2} 
	\d p_1 \d q_2  \ . 
\end{align}
We shall denote by $ :	 V_F(t):$ the normal  ordering of $V_F(t) $, that is, the first term in Eq. \eqref{TFF VF(t) 2}.

\vspace{2mm}

Next, we shall put the above normal order form in a more explicit representation by calculating
 explicitly the coefficient function $\phi_t$, together with its contraction for $q_1=p_2$.
Before we do so, let us introduce some convenient notation:  

 \begin{itemize}[leftmargin=*]
 	\item  
 	When $\vec p  = (p_1 , p_2 , p_3 , p_4 )\in (\Lambda^*)^4 $ is known from context, we let 
 	$$ \chi_{1234}  \equiv  
 	\chi (p_1) 
 	\chi (p_2) 
 	\chi (p_3) 
 	\chi (p_4) 
 	\qquad
 	\t{and}
 	\qquad 
 	\chi_{1234}^\perp \equiv  1 - \chi_{1234}  
 	$$
 	and similarly for $\chi_{ ij } $ and $\chi_{ ij }^\perp$ for any combination of 
 	$i,j \in \{ 1,2,3,4\}$.
 	
 	\item 
 	For any $ \vec p   =  (p_1, p_2,p_3,p_4)\in (\Lambda^*)^4 $ we let 
 	\begin{eqnarray}
 		\Delta E (\vec p  ) \equiv  E_{p_1} + E_{p_2} - E_{p_3} - E_{p_4} 
 	\end{eqnarray}
 	where $E_p$ is the dispersion relation of the system--see \eqref{dispersion relation}. 
 \end{itemize}

Starting from \eqref{coefficient phi} and using the definition of $d_t(k,p,q)$  
we may explicitly calculate that
for all $\vec p  \in (\Lambda^*)^4 $ there holds 
 \begin{align}
\nonumber 
\label{coefficient phi 2}
 	\phi_t( \vec p )
&  	 = 
 	 e^{it \Delta E  (\vec p )}
 	 \delta (p_1 + p_2 - p_3 - p_4 )
 	 \hat V (p_1  - p _4 ) 
 	 \big(
 	 \chi_{1234} + 
 	 \chi_{1234}^\perp 
 	 \big)  \\ 
 	 & -
 	 e^{it \Delta E  (\vec p )}
 	 \delta (p_1 -    p_2  +  p_3  - p_4 )
 	  	 \hat V (p_1  - p _4 ) 
 	 \big(
 \chi_{13} \chi^\perp_{24} + 
 \chi_{13}^\perp  \chi_{24}
 	 \big)   \ . 
 \end{align}
In particular, a straightforward calculation using \eqref{coefficient phi 2}
shows that the integrand of   the quadratic term in \eqref{TFF VF(t) 2} can be written as 
\begin{equation}
\label{TFF eq 3}
	  \int_{  \Lambda^{*2 } } 
	\phi_t(p_1,p_2,q_2,q_1)
	\delta(q_1 - p_2)
	\d p_2 \d q_1 
	= 
	\delta(p_1 - p_3) g(p_1) 
\end{equation}
where 
$	g(p)  
\equiv 
\chi(p) ( \hat V * \chi)(p) 
+ 
\chi^\perp (p) ( \hat V * \chi^\perp )(p) $--the explicit form of $g(p)$ is not important, 
but the   $\delta(p_1 - p_3)$ dependence in the last equation implies that the second term in \eqref{TFF VF(t) 2} 
\textit{commutes} with $a_p^*a_p$. 
This fact we shall use in the proof of Lemma \ref{lemma TFF 1}.

\vspace{1mm}

Finally, thanks to the CAR,   the coefficients $\phi_t(p_1,p_2,p_3,p_4)$ 
inside of $:V_F(t):$ can be antisymmetrized with respect to the permutation of the variables $ (p_1,p_2) \mapsto (p_2,p_1)$
and $(p_3,p_4)\mapsto (p_4,p_3)$, respectively. 
Namely, the coefficients $\phi_t$ in $: V_F(t):$ may be replaced by 
\begin{equation}
\Phi_t( \vec p )
\equiv 
 \frac{1}{4}
\Big(
\phi_t(p_1 , p_2 , p_3 , p_4 )
-
\phi_t(p_2 , p_1 , p_3 , p_4 )
+
\phi_t(p_2 , p_1 , p_4 , p_3 )
-
\phi_t(p_1 , p_2 , p_3 , p_4 )
\Big) \ ,
\end{equation}
which can be put in an explicit form, using \eqref{coefficient phi 2}. 
 We record all these results in the following lemma.

\begin{lemma}[Normal ordering]
\label{lemma normal order VF}
Let $t\in \R$ and 
$V_F(t)$ the Heisenberg evolution of the fermion-fermion interaction. 
Then, the following identity holds 
\begin{equation}
	V_F(t)  \ = \    : \! V_F(t) \! :   \ + \  N(g)  \ . 
\end{equation}
Here, $ : \! V_F(t) \! :  \ = \   \int_{  \Lambda^{*4 } } \Phi_t(p_1 \cdots p_4 ) a_{p_1}^* a_{p_2}^* a_{p_3} a_{p_4} \d p_1 \cdots \d p_4$
is the normal ordering of $V_F(t)$,
and 
$N(g) = \int_{  \Lambda^{*} } g(p) a_p^* a_p \d p $, 
where
$g(p)  
\equiv 
\chi(p) ( \hat V * \chi)(p) 
+ 
\chi^\perp (p) ( \hat V * \chi^\perp )(p) $. 
\vspace{1.5mm}

The coefficient function $\Phi_t :  (\Lambda^*)^4 \rightarrow \C $
is partially antisymmetric
\begin{equation}
	\Phi_t (p_1,p_2,p_3,p_4) 
	= - 
\Phi_t (p_2, p_1, p_3,p_4)   
=  + 
	\Phi_t (p_2,p_1,p_4,p_3) 
	=
	- 
		\Phi_t (p_1,p_2,p_3,p_4 ) 
\end{equation}
 and admits the following decomposition   
 $$\Phi_t = \Phi_t^{(1)} + \Phi_t^{(2)}$$    
 where $\Phi_t^{(1)}$ is given by 
 \begin{equation}
 		\Phi_t^{(1)} 
 	(\vec p ) 	 
 	= 	 
 	\,  \frac{1}{2}\, 
 	e^{  it  \Delta E  (  \vec p )  }   
 	\delta(p_1 + p_2 - p_3 - p_4 )
 	\big(
 	\hat V  (p_1 - p_4 )
 	-			  
 	\hat V  (p_1 - p_3 )
 	\big) 	
 	\big( 
 	\chi_{1234}
 	+
 	\chi^\perp_{1234}
 	\big) 		
 \end{equation}
  and $\Phi_t^{(2) }$ is given by 
			\begin{align}
				\Phi_t^{(2)} ( \vec p ) 
			&   = 
		\, 	\frac{1}{2}\, 
 	e^{  it  \Delta E  (  \vec p )  }   
			\delta(p_1 + p_3 - p_2 - p_4 )
			\hat V (p_1 - p_4 ) 
			\big(
			\chi^\perp_{14} \chi_{23} 
			+
			\chi^\perp_{23} \chi_{14}
			\big) 					\nonumber \\
			& \   \ \ -  
			\frac{1}{2}
 	e^{  it  \Delta E  (  \vec p )  }   
			\delta(p_1 + p_4 - p_2 - p_3)
			\hat V (p_1 - p_3 )
			\big(
			\chi^\perp_{13} \chi_{24} 
			+
			\chi^\perp_{24} \chi_{13}
			\big) 	 \ . 
		\end{align}		
		\end{lemma}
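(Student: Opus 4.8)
The statement is essentially a bookkeeping lemma: the computation it records has already been set up in \eqref{VF(t)}--\eqref{TFF eq 3}, and the proof consists of assembling those steps and then performing one antisymmetrization. The plan is as follows. Starting from $V_F(t)=\int_{\Lambda^*}\hat V(k)\,D^*_k(t)D_k(t)\,\d k$ together with $D^*_k(t)=D_{-k}(t)$, write each $D_k(t)$ in creation/annihilation form via Definition \ref{definition Heisenberg b D}, namely $D_k(t)=\int_{(\Lambda^*)^2}d_t(k,p,q)\,a_p^*a_q\,\d p\,\d q$ with
\[
d_t(k,p,q)=e^{it(E_p-E_q)}\big[\chi^\perp(p)\chi^\perp(q)\,\delta(p-q+k)-\chi(p)\chi(q)\,\delta(p-q-k)\big],
\]
where $E_p$ is the dispersion relation \eqref{dispersion relation}; substituting and integrating the momentum deltas against $\hat V(k)$ yields the quartic (not yet normally ordered) expression for $V_F(t)$ with kernel $\phi_t$ of \eqref{coefficient phi}.

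Next I would normal order using the CAR in the form $a_{q_1}a_{p_2}^*=\delta(q_1-p_2)-a_{p_2}^*a_{q_1}$, which splits $V_F(t)$ into a genuinely quartic normally ordered part with kernel $\phi_t(p_1,p_2,q_2,q_1)$ and a quadratic remainder whose kernel is the single surviving contraction $\int_{\Lambda^*}\phi_t(p_1,p_2,q_2,p_2)\,\d p_2$. Inserting the explicit $\phi_t$ from \eqref{coefficient phi 2}: the momentum-conserving term collapses the two deltas to $\delta(p_1-q_2)$, so that $e^{it\Delta E}=1$, while the remaining term is discarded by an elementary geometric argument using that $\hat V$ is supported in a ball of radius $r$, the point being that the relevant product of characteristic functions vanishes on the joint support of $\delta(p_1-2p_2+q_2)$ and $\1(|p_1-p_2|\le r)$; using $\chi^2=\chi$ and $(\chi^\perp)^2=\chi^\perp$ one is left with $\delta(p_1-q_2)\,g(p_1)$, $g(p)=\chi(p)(\hat V*\chi)(p)+\chi^\perp(p)(\hat V*\chi^\perp)(p)$, which is \eqref{TFF eq 3}. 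The $\delta(p_1-q_2)$ factor makes this remainder diagonal, equal to $N(g)$ and commuting with every $a_p^*a_p$, giving the identity $V_F(t)= :\!V_F(t)\!: + N(g)$.

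For the quartic part, since $a_{p_1}^*a_{p_2}^*=-a_{p_2}^*a_{p_1}^*$ and $a_{p_3}a_{p_4}=-a_{p_4}a_{p_3}$, inside $:\!V_F(t)\!:$ one may replace the kernel by its partial antisymmetrization over the transpositions $(p_1\leftrightarrow p_2)$ and $(p_3\leftrightarrow p_4)$ (after relabeling $q_2,q_1$ as $p_3,p_4$), which is exactly the average $\Phi_t$ of the statement; partial antisymmetry of $\Phi_t$ is then immediate, as is the invariance of $e^{it\Delta E(\vec p)}$ since $\Delta E$ is symmetric under each transposition. The decomposition $\Phi_t=\Phi_t^{(1)}+\Phi_t^{(2)}$ and the explicit kernels follow by feeding \eqref{coefficient phi 2} into the average and sorting by the momentum constraint: the $\delta(p_1+p_2-p_3-p_4)$ part of $\phi_t$ carries a permutation-invariant factor $\chi_{1234}+\chi_{1234}^\perp$, and the symmetrization, combined with $\hat V$ being even and the momentum constraint, turns the $\hat V(p_1-p_4)$ prefactor into $\tfrac12\big(\hat V(p_1-p_4)-\hat V(p_1-p_3)\big)$, giving $\Phi_t^{(1)}$; the $\delta(p_1-p_2+p_3-p_4)$ part reorganizes under the same averaging into the two mixed terms with constraints $\delta(p_1+p_3-p_2-p_4)$ and $\delta(p_1+p_4-p_2-p_3)$ and the indicated products of $\chi$ and $\chi^\perp$, giving $\Phi_t^{(2)}$.

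The only real work is the combinatorial bookkeeping: tracking the eight characteristic functions through the four-term symmetrization and collapsing them into the compact products appearing in $\Phi_t^{(1)}$ and $\Phi_t^{(2)}$ (using $\chi+\chi^\perp=1$, idempotency and disjointness of supports), together with the small geometric lemma that eliminates the non-momentum-conserving contribution to the contraction. These are routine but error-prone, so this is the step I would write out most carefully; everything else is a direct transcription of \eqref{VF(t)}--\eqref{TFF eq 3}.
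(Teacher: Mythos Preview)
Your plan is exactly the computation the paper carries out in the paragraphs preceding the lemma (equations \eqref{TFF VF(t)}--\eqref{TFF eq 3} and the antisymmetrization immediately after), so the approach is identical.

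One small correction: the ``geometric argument using that $\hat V$ is supported in a ball of radius $r$'' you propose for killing the non--momentum-conserving part of the contraction is neither needed nor quite right. If you redo the product $d_t(-k,p_1,p_4)d_t(k,p_2,p_3)$ carefully, the mixed term in $\phi_t$ carries the factor $\chi^\perp_{14}\chi_{23}+\chi_{14}\chi^\perp_{23}$ (the indices $13/24$ in \eqref{coefficient phi 2} appear to be a slip); upon setting the fourth slot equal to $p_2$ this becomes $\chi^\perp(p_2)\chi(p_2)\,(\cdots)=0$ identically, so the vanishing is purely algebraic via $\chi\chi^\perp=0$. By contrast, a genuine geometric argument based on $|p_1-p_2|\le r$ would fail for the configuration $|p_1|,|q_2|>p_F$, $|p_2|\le p_F$, $p_1+q_2=2p_2$: take $p_2$ on the Fermi sphere and $p_1,q_2$ symmetric about $p_2$ along a non-radial direction of length $\le r$. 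Replace that step by the disjoint-support observation and the rest of your outline goes through verbatim.
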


\begin{proof}[Proof of Lemma \ref{lemma TFF 1}]
We start with the   normal ordering of $V_F(t)$ found in Lemma \ref{lemma normal order VF}.

\vspace{1.5mm}

	First, we observe that we may disregard the quadratic term $ N(g) \equiv \int_{  \Lambda^* }g(t,p) a_p^* a_p \d p $. 
	Indeed, since $[ a_p^* a_p , N(g)]  = 0 $ we find that for any $ p \in \Lambda^* $
	\begin{align}
		\nu 
		\big(
		[   [ a_p^* a_p , V_F(t)] , V_F(s) ] 
		\big) 
		=
		\nu 
		\big(
		[   [ a_p^* a_p ,   : \! V_F(t) \! : ] , V_F(s) ] 
		\big) 
	\end{align}
	Furthermore, since $ \nu $ is 
	quasi-free and translation invariant, 
	it {\blu satisfies} the identities \eqref{nu vanishing}. 
	Thus, since $[a_p^* a_p ,:V_F(t):]$ is quartic in creation- and annihilation operators, we find that 
	\begin{align}
		\nu 
		\big(
		[   [ a_p^* a_p , V_F(t)] , V_F(s) ] 
		\big) 						\nonumber 
		& =
		\nu 
		\big(
		[   [ a_p^* a_p ,   : \! V_F(t) \! : ] , V_F(s) ] 
		\big) 	\\ 
		& =
		\nu 
		\big(
		[   [ a_p^* a_p ,   : \! V_F(t) \! : ]  \, , \,    :  \! \! V_F(s) \! \!  :  ] 
		\big) 									\nonumber  
		- 
		\nu 
		\big(
		[   N(g) ,  [ a_p^* a_p ,   : \! V_F(t) \! : ]   ] 	
		\big) 		\\
		& = 
		\nu 
		\big(
		[   [ a_p^* a_p ,   : \! V_F(t) \! : ]  \, , \,    :  \! \! V_F(s) \! \!  :  ] 
		\big)  \  , 
	\end{align}
for all $ p \in \Lambda^*$.

\vspace{1.5mm}
Secondly, we note that a standard calculation using the CAR implies that 
	\begin{align}
		\nu 
		\big(
		[   [ a_p^* a_p , V_F(t)] , V_F(s) ] 
		\big) 		
		& = 
		\int_{  \Lambda^{*4}  \times \Lambda^{*4}}
		\big(
		\delta(p - k_1 )
		+
		\delta (p-k_2 )
		-											\nonumber 
		\delta(p -  k_3)
		-
		\delta ( p - k_4 )
		\big) 		\\
		&  \qquad 
		\times \nonumber 
	\Phi_t(\vec k )
\Phi_s(\vec \ell )
		\nu 
		\big(
		[  \,   a_{k_1}^*a_{k_2}^*a_{k_3}a_{k_4}  \,   ,  \,   a^*_{\ell_1}a^*_{\ell_2}a_{\ell_3 }a_{\ell_4}       \,   ] 
		\big) 	 \d \vec k \d \vec \ell 	\\
		& = 
		\int_{  \Lambda^{*4}  \times \Lambda^{*4}}
		M_p(\vec k , \vec \ell)
		\, 
		\nu (   a_{k_1}^*a_{k_2}^*a_{k_3}a_{k_4} a^*_{\ell_1}a^*_{\ell_2}a_{\ell_3 }a_{\ell_4}     )
		\d \vec k \d \vec \ell  
	\end{align}
	where (suppressing the explicit $t,s\in \R $ dependence)
	\begin{align}
		M_p(\vec k , \vec \ell)
		& 
		\equiv
	\Phi_t(\vec k )
\Phi_s(\vec \ell )
		\big(
		\delta(p - k_1 )
		+
		\delta (p-k_2 )
		-											\nonumber 
		\delta(p -  k_3)
		-
		\delta ( p - k_4 )
		\big) 		\\
		&   \  \ 
		-
		\Phi_t(\vec \ell  )
	\Phi_s(\vec k  )
		\big(
		\delta(p - \ell_1 )
		+
		\delta (p-\ell_2 )
		-											\nonumber 
		\delta(p -  \ell_3)
		-
		\delta ( p - \ell_4 )
		\big) 	 \  . 
	\end{align}
	A change of variables $ ( k_3 , k_4 ,\ell_1 , \ell_2 , \ell_3 , \ell_4  ) \mapsto (\ell_3 , \ell_4, k_3 , k_4 , \ell_1 , \ell_2 )$ now yields 
	\begin{align}
		\nu 
		\big(
		[   [ a_p^* a_p , V_F(t)] & , V_F(s) ] 	 
		\big) 		 \\
		& = 
		\int_{  \Lambda^{*4}  \times \Lambda^{*4}}
		M_p( k_1 k_2 \ell_3 \ell_4 , k_3 k_4 \ell_1 \ell_2  )
		\, 
		\nu (   a_{k_1}^*a_{k_2}^* a_{\ell_4} a_{\ell_3} a^*_{k_3} a^*_{k_4 } a_{\ell_2} a_{\ell_1}     ) 
		\d \vec k 				\nonumber  
		\, 
		\d \vec \ell \ .  
	\end{align}
	It is important to note that the coefficient function $M_p(\vec k  , \vec \ell )$ is antisymmetric with respect to $k_1 \mapsto k_2$, $k_3 \mapsto k_4$,   $\ell_1 \mapsto \ell_2$ and $\ell_3\mapsto \ell_4 $ respectively. 
	In addition, $M_p(\vec k  , \vec \ell ) = -  M_p(   \vec \ell  , \vec k )$. 
Indeed, these symmetries allow us to   simplify the right hand side of the last equation as follows.  
	First,   quasi-freeness  of the state $\nu$   implies that 
	\begin{align}
		\nu (   a_{k_1}^*a_{k_2}^* a_{\ell_4} a_{\ell_3} a^*_{k_3} a^*_{k_4 } a_{\ell_2} a_{\ell_1}     ) 
		=
		\det 
		\begin{pmatrix}
			\nu_{11} & \nu_{12} & \nu_{13} & \nu_{14 }	\\
			\nu_{21} & \tilde \nu_{22} & \tilde  \nu_{23} & \tilde  \nu_{24 }	\\
			\nu_{31} & \tilde \nu_{32} &  \tilde \nu_{33} & \tilde  \nu_{34 }	\\
			\nu_{41} &  \tilde \nu_{42} &  \tilde \nu_{43} & \tilde \nu_{44 }	\\
		\end{pmatrix}
	\end{align}
	where  we denote $ \nu_{i j } \equiv  \nu (a_{k_i}^* a_{\ell_j})$ and 
	$\tilde \nu_{ij }\equiv \delta(k_i - \ell_j )  - \nu_{ij} $. 
	Secondly, based on the   symmetries of $M_p (\vec k , \vec{\ell} )$, 
	we may  follow  the algebraic analysis carried out in \cite[pps 374--375]{ESY2004}
to find that 
	\begin{align}
		\nu 
		\big(
		[   [ a_p^* a_p , &  V_F(t)] , V_F(s) ] 
		\big) 		\\
		& = 
		\int_{  \Lambda^{*4}  \times \Lambda^{*4}}
		M_p( k_1 k_2 \ell_3 \ell_4 , k_3 k_4 \ell_1 \ell_2  )
		\, 
		4 
		\big(
		\nu_{1 1}
		\nu_{2 2 }
		\tilde \nu_{3 3 }
		\tilde \nu_{4 4 }			\nonumber 
		+
		4 
		\nu_{ 1 1 }
		\nu_{2 3 }
		\nu_{ 4 2  }
		\tilde \nu_{3 4}
		\big)  
		\d \vec k 
		\d \vec \ell \ . 
	\end{align}
	Thirdly, translation invariance $\nu (a_p^* a_q) =  \delta(p-q) f_0 (p)$ 
	now yields two terms 
	\begin{align}
\label{TFF eq 2}
		\nu 
		\big(
		[   [ a_p^* a_p , V_F(t)] ,  & V_F(s) ] 		\nonumber 
		\big) 		 \\
		&  = 
		4 
		\int_{  \Lambda^* }
		M_p( k_1 k_2 k_3 k_4  , k_3 k_4 k_1 k_2  )					\nonumber  
		f_0 (k_1 ) f_0 (k_2)  \tilde f_0 (k_3 ) \tilde f_0 (k_4 )
		\d \vec k 	\\	
		& \quad+ 	
		16 
		\int_{  \Lambda^* }
		M_p( k_1 k_2 k_2 k_3   , k_3 k_4 k_1 k_4  )
		f_0 (k_1 ) f_0 (k_2)    f_0 (k_3 ) \tilde f_0 (k_4 )
		\d \vec   k . 
	\end{align}
Similarly as in \cite{ESY2004}, we look at the two terms of the right hand side of \eqref{TFF eq 2} by evaluating the function $M_p$ in the different cases. 
	
	\vspace{1.5mm}
	
	  \textit{The second term of \eqref{TFF eq 2}}. Let us show that the second term vanishes. 
Indeed,  we use the fact that  
	$\Phi_t (k_3 k_4 k_1 k_2) = \Phi_{ -t } ( k_1 k_2 k_3 k_4   )$
	together with antisymmetry with respect to $k_1 \mapsto k_2$ and $k_3 \mapsto k_4 $  to find  that 
	\begin{align}
		M_p( k_1 k_2 k_2  & k_3    
		,  
		k_3 k_4 k_1 k_4  )\\
		& =
		2 
		\cos \big[ (t-s)  (E_1 - E_3)   \big] 
		\big( \delta(p - k_3 ) - \delta (p - k_1)  \big)
		\Phi (  k_1 k_2 k_3 k_2  )		\nonumber
		\Phi (  k_1 k_4 k_3 k_4  ) 
	\end{align}
	where we denote $\Phi  (\vec k ) \equiv \Phi_0 (  \vec k ) $. 
	One may verify that 
	$\Phi (k_1 k_2 k_3 k_2)  $ 
	is proportional to $\delta (k_1 - k _3 )$ 
	and, 
	consequently, 
	it holds that 
	$
	\big( \delta(p - k_3 ) - \delta (p - k_1)  \big)
	\Phi (k_1 k_2 k_3 k_2) = 0$.

	\vspace{1.5mm}
	
	  \textit{The first term of \eqref{TFF eq 2}}. 
	Using the fact that  
	$\Phi_t ( k_3 k_4 k_1 k_2) = \Phi_{-t} (   k_1 k_2 k_3 k_4   )$
	one finds
	\begin{align}
		M_p( k_1 k_2  & k_3 k_4   	 , k_3 k_4 k_1 k_2  )					\\ 
		& =
		2 
		\cos \big[ (t-s) \Delta E (\vec k)   \big] 
		|   \Phi ( \vec k )  |^2 
		\big(
		\delta(p - k_1)
		+
		\delta(p - k_2)
		-
		\delta(p - k_3)
		-											\nonumber 
		\delta(p - k_4)
		\big)   \ . 
	\end{align}
	We plug this result back in \eqref{TFF eq 2} to find that 
	after a change of variables $(k_1 k_2 ) \mapsto (k_3 k_4)$, 
	\begin{align}
		\nu 
		\big(
		[   [ a_p^* a_p ,  & V_F(t)] , V_F(s) ] 
		\big) 		\\
		& = 
		4
		\int_{  \Lambda^* }
		\big(
		\delta(p - k_1)
		+
		\delta(p - k_2)
		-
		\delta(p - k_3)
		-											\nonumber 
		\delta(p - k_4)
		\big) 
		\  |   \Phi ( \vec k )  |^2  
		\\
		& \times \cos \big[ (t-s) \Delta E (\vec k)   \big] 
		\Big(
		f_0 (k_1 ) f_0 (k_2)  \tilde f_0 (k_3 ) \tilde f_0 (k_4 )	\nonumber 
		-
		f_0 (k_3 ) f_0 (k_4)  \tilde f_0 (k_1 ) \tilde f_0 (k_2 )
		\Big)  \ . 
	\end{align}
	
	\vspace{1mm}
	
	Finally, we integrate against time and a test function $\vp(p)$
	to find that
	\begin{align}
		\int_0^{t}
		\int_0^{t_1}
		\nu 
		\big(
		[ [ N(\vp) , V_F(t_1) ] , V_F (t_2)]
		\big) 
		\d t_1 \d t_2 
		=
		- 
		t  |\Lambda |
		\int_{  \Lambda^* } 
		\vp (p) 
		Q_t [f_0]  (p) 
		\d p \  
	\end{align}
	where $Q_t [f_0]$ 
is the expression given by 
	\begin{align}
		Q_t [f_0] (p)
		& =
		4
	\pi 
		\int_{  \Lambda^{*4 } }
		\,
\frac{		|   \Phi ( \vec k )  |^2}{|\Lambda|}
		\,
		\Big[ 
		\delta(p - k_1)
		+
		\delta(p - k_2)
		-
		\delta(p - k_3)
		-											  
		\delta(p - k_4)
		\Big]    \\
		& \times \delta_t [     \Delta E (\vec k )    ] 
		\Big(
		\nonumber 
		f (k_3 ) f (k_4)  \tilde f (k_1 ) \tilde f (k_2 )
		-
		f (k_1 ) f (k_2)  \tilde f (k_3 ) \tilde f (k_4 )
		\Big)  
		\d \vec k 
		\ . 
	\end{align}
	where 
	we recall 
	$\delta_1 (x)  =   \frac{2}{\pi}\frac{ \sin^2 (x/2)}{ x^2 } $
	and
	$\delta_t (x)  =  t \delta_1 (t x ). $
Upon expanding $\Phi = \Phi^{(1)} + \Phi^{(2)}$ in the above expression 
with respect to the decomposition found in Lemma \ref{lemma normal order VF}, 
one may check that the formula is in agreement with the operator $Q_t$, as given by Def. \ref{definition Q1}. This finishes the proof of the lemma. 
	\end{proof}

Finally, we prove the last lemma of this section. 
 
 \begin{proof}[Proof of Lemma \ref{lemma TFF 2}]

Let $\vp \in \ell^1$ and $t,s\in\R$,  
let us introduce the following notation for the fermion-fermion double commutator
 \begin{align}
\nonumber 
 	C_F (\vp, t, s)
 & 	\equiv 
 	[ [ N(\vp) , V_F (t)    ] , V_F(s) ] \\
 	&  = 
 	\int_{  \Lambda^{*2 } } 
 	\hat V (k) \hat V(\ell) 
 \Big[  	\Big[	 N(\vp) , D_k^*(t) D_k (t) 	\Big] , D_\ell^*(s) D_\ell(s)  \Big] 
 \ \d k \d \ell  
\label{definition CF}
 \end{align}
where  we have written $V_F(t)$ 
in terms of $D$-operators, see \eqref{VF(t)}. 
For simplicity, we shall assume that $\vp$ is real-valued so that $C_F(\vp,t,s)$ is self-adjoint (in the general case, one may decompose $\vp = \mathrm{Re}  \vp + \i \mathrm{Im}  \vp $ and apply linearity of the commutator). 
We claim that there exists a constant $C>0$ such that 
 \begin{equation}
 	\| C_F 	(\vp,t,s ) \Psi 	\| 
 \leq C 
 	\| \hat  V		\|_{\ell^1}^2
 	|\Lambda|
 	\|	\vp	\|_{\ell^1 }
 	\| \N^2  \Psi  		\|   \   ,
\label{CF bound}
 \end{equation}
for all $\Psi \in \F$. 
To see this, 
we shall expand the 
double commutator of the right hand side of \eqref{definition CF} into eight terms.
In order to ease the notation, we shall 
drop the time labels $t,s\in \R $. Since our estimates are uniform in time, there is no risk in doing so. 
In terms of the contraction operators
 $
D_k^* (\vp) \equiv [N(\vp) , D_k^*]
  $
 and 
 $D_k(\vp) \equiv    [ N(\vp) , D_k]$
 we find 
\begin{align}
\nonumber 
	 \Big[  	\Big[	 N(\vp) , D_k^* D_k 	\Big] , D_\ell^*D_\ell \Big]  
	 &  \ =  \ 
D_k^*(\vp)			 \Big[D_k , D_\ell^*\Big]    D_\ell 
 \ +  \ 
D_k^*(\vp) D_\ell^*
 \Big[D_k , D_\ell \Big]  \\ 
\nonumber 
 &  \ +  \ 
\Big[
D_k^*(\vp) , D_\ell^*
\Big]
D_\ell D_k
\  + \ 
D_\ell^*
 \Big[
D_k^*(\vp) , D_\ell
\Big] 
D_k  \\ 
\nonumber 
& \ + \ 
D_k^* D_\ell^* 
\Big[
D_k(\vp) , D_\ell
\Big]
\ +  \ 
D_k^* 
\Big[
D_k(\vp)  , D_\ell^*
\Big]
D_\ell \\
& \ + \ 
D_\ell^* 
\Big[
D_k^*, D_\ell
\Big]
D_k (\vp) 
\ + \ 
\Big[
D_k^* ,  D_\ell^*  
\Big]
D_\ell D_k(\vp) \ . 
\label{CF expansion}
\end{align}
All these operators 
can be controlled 
 using the  Type-I and Type-IV estimates, found in Lemma \ref{lemma type I}
	and Lemma \ref{lemma type IV},  respectively, 
	together with the commutator identities
	 $[D_k , \N] = [D_k(\vp) , \N]=0$, see Lemma \ref{lemma N commutators}.
	 For instance, given $\Psi \in \F$ 
	 the first term can be estimated as follows 
	 \begin{align}
\nonumber 
	 \|	 D_k^*(\vp)			 \big[D_k , D_\ell^*\big]    D_\ell  \Psi \|
	&  \,  \leq  \, 
	 \|	D_k^*(\vp)			\|		
	 \|	 \big[D_k , D_\ell^*\big]    D_\ell  \Psi 	\| \\
\nonumber 
	 & \,   \leq  \, 
C  \|	 \vp	\|_{\ell^1 } |\Lambda |
	 \|	 \N   D_\ell  \Psi 	\| \\
\nonumber 
	 &  \, =  \, 
	 C  \|	 \vp	\|_{\ell^1 } |\Lambda |
	 \|	   D_\ell \N  \Psi 	\| \\
	 & \, \leq  \, 
	 C  \|	 \vp	\|_{\ell^1 } |\Lambda |
	 \|	      \N^2   \Psi 	\|    
\label{CF estimate}
	 \end{align} 
for a constant  $C>0$. 
 Every other term in the expansion \eqref{CF expansion}
can be analyzed in the same fashion,  and satisfy the same bound --we leave the details to the reader.
Thus,  we plug   the estimate \eqref{CF estimate}
back in the expansion \eqref{CF expansion} 
and integrate over $k,\ell \in \Lambda^*$. 
One then obtains 
 \eqref{CF bound}.

\vspace{1mm }

Let us now estimate the integral remainder term, we fix  $0 \leq t_3 \leq t_2 \leq t_1 $. 
As a first step, since $C_F$ and $\h_I$ are self-adjoint, 
we use the following rough  upper bound 
 \begin{align}
	\nu_{ t_3 }
	\big(
	&  [   [ [ N(\vp)  , V_F(t_1) ] V_F(t_2) ]    ,  \h_I( t_3)  ] 
	\big)
	\leq 2  
	\nu_{ t_3 }
	\Big(
	C_F(\vp , t_1, t_2 )^2 
	 \Big)^{\frac{1}{2}}
\nu_{ t_3 }
\Big(  \h_I(t_3)^2  \Big)^{\frac{1}{2}}
\label{lemma 6.2 eq 1}
%
\end{align}
In view of Remark \ref{remark states}, we can turn the estimate \eqref{CF bound} into the upper bound 
\begin{equation}
	 \nu_{t_3} \Big(C_F(\vp, t_1 , t_2 )^2 \Big)^{\frac{1}{2}}
	 \leq 
	 C \|		\hat V  \|_{\ell^1}^2 
	  | \Lambda| 		\|	 \vp	\|_{\ell^1 }
	  \nu_{t_3}\big( \N^4 \big)^{\frac{1}{2}} \ .  
\label{lemma 6.2 eq 2}
\end{equation}
On the other hand,  using the operator norm estimates \eqref{bounds on b}, 
 a simple but rough estimate 
 for the interaction Hamiltonian is found to be 
 \begin{align}
 	\|	 \h_I (t) \Psi	\|				\nonumber
 	& \  \leq  \ 
\lambda  	\|	 V_F(t ) \Psi 	\|
 	+ 
\lambda  	\|	 V_{FB} (t ) \Psi 	\|
 	+
\lambda  	\|	 V_{BB} (t ) \Psi 	\|		\\
 	&  \ \lesssim  \ 
\lambda  	\|	 \hat V	\|_{\ell^1}		\|	 \N^2 \Psi	\|
 	+ 
\lambda  	\|	 \hat V	\|_{\ell^1}
 	R 
 	\|	 \N  \Psi \|
 	+ 
\lambda  	\|	 \hat V	\|_{\ell^1}
 	R^2	\nonumber 
 	\|	 \Psi	\|    \\
 	& \  \lesssim  \ 
\lambda  	\|	 \hat V	\|_{\ell^1}
 	\big( 
 	\|	 \N^2 \Psi	\|
 	+ 
 	R^2	
 	\|	 \Psi	\|    
 	\big) 
 	\nonumber 
 \end{align}
 where we recall that  $R =|\Lambda| p_F^{d-1}$. 
 Consequently, in view of Remark \ref{remark states} we find that 
 \begin{equation}
\label{lemma 6.2 eq 3}
 		 \nu_{t_3}
 		  \Big( 
 		  \h_I(t_3)^2 
 		  \Big)^{\frac{1}{2}} 
 	\leq 
 	C\lambda   \|		\hat V  \|_{\ell^1}^2 
\Big(
	\nu_{t_3}\big( \N^4 \big)^{\frac{1}{2}} + R^2  
\Big)    
 \end{equation}
 where we used the fact that $\nu_t (\1) = 1 $ for all $t \in \R . $ 
 The proof of the lemma is now finished once we combine 
 Eqs. \eqref{lemma 6.2 eq 1}, \eqref{lemma 6.2 eq 2} and \eqref{lemma 6.2 eq 3}, 
 and integrate over the   time variables $ 0 \leq t_3 \leq t_2 \leq t_1 \leq t $.
  \end{proof}

 \section{Leading Order Terms II: Emergence of $B$}
 \label{section TFB}
 The main purpose of this section 
 is to analyze the term $T_{FB,FB}(t)$
 found in the double commutator expansion \eqref{expansion f}, introduced in Section \ref{section preliminaries}. 
 In particular, we show that this term gives rise to the operator $B_t$, as given in Def. \ref{definition B},
 corresponding to the second leading order term describing the dynamics of $f_t(p).$ 
 It describes interactions between particles/holes 
 as mediated by 
  \textit{virtual bosons} around the Fermi surface. 
 This is manifest in the fact that, as we shall see, 
 it contains  the  \textit{propagator} of free bosons
 \begin{equation}
 	G_k (t-s) \equiv \< \Omega, [b_k  (t) ,  b_k^* (s)]  \Omega 	\>_\F
 \end{equation}
defined for  $ k \in \Lambda^*$, and $t,s \in \R$.

\vspace{2mm}
We state the main result of this section in the following proposition, 
which we prove in the remainder of the section.

 \begin{proposition}
[Analysis of $T_{FB,FB}$]
 	\label{prop TFB}
Let $T_{FB,FB}(t,p)$ be the quantity defined in Eq.  \eqref{T alpha beta} for $\alpha  =  \beta = FB$,
and let $m>0$.
Then, there exists a constant  $C>0$
such that for all  $\vp \in \ell_m^1 $ and $ t \geq 0 $ 
the following inequality holds true 
 	\begin{align}
\nonumber 
 		\big|		
 		T_{FB,FB}
 	 & 	(t,   \vp)
 		  + 
 		|\Lambda |
 		t \< \vp , B_t [f_0]\>
 		\big|	   \\ 
 	 &  \leq 
 C | \Lambda|  		t^2 
 		\|	 \vp	\|_{\ell^1 }  	\|	 \hat V	\|_{\ell_1}^2   
 		\nonumber 
 		\sup_{ \tau \leq t }
 		\Big(
 		R^{\frac{1}{2}	}
 		\nu_\tau (\po)^{\frac{1}{2}}
 		\nu_\tau  (\N)^{\frac{1}{2}}
 		+ 
 		C R^{\frac{3}{2}}
 		\nu_\tau (\po)^{\frac{1}{2}} 
 		+
 		R p_F^{-m}
 		\nu_\tau (\N_1^2)
 		\Big)	 \\
 		&  \,  +  \, 
| \Lambda|   	 t^3 	\lambda  R 
 		\|	 \vp	\|_{\ell^1 }  	\|	 \hat V	\|_{\ell_1}^3 
 		\sup_{ \tau \leq t }
 		\Big(
 		R^{  \frac{3}{2} } \nu_\tau (\po)^{ \frac{1}{2 }}  
 		+ 
 		R \nu_\tau  ( \po )
 		+ 
 		R p_F^{ -m}
 		\nu_\tau (\N )^{\frac{1}{2 } }
 		\Big)
 	\end{align}
 	where  $T_{FB,FB} (t,\vp) \equiv \<\vp, T_{FB,FB}(t)\>$ and $B_t $   is   given in Def. \ref{definition B}. 
 \end{proposition}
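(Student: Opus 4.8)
\textbf{Proof proposal for Proposition \ref{prop TFB}.}
The plan is to mirror the strategy used for $T_{F,F}$ in Proposition \ref{prop TFF}: first perform a Duhamel-type expansion of the inner state $\nu_{t_2}$ against the interaction Hamiltonian $\h_I(t_3)$, thereby splitting $T_{FB,FB}(t,\vp)$ into a \emph{principal} piece, in which the state $\nu_{t_2}$ is replaced by the initial state $\nu$, and a \emph{remainder} piece carrying the extra commutator with $\h_I(t_3)$. Concretely, write
\begin{align}
\nonumber
T_{FB,FB}(t,\vp)
&= \int_0^t\int_0^{t_1} \nu\big([[N(\vp),V_{FB}(t_1)],V_{FB}(t_2)]\big)\,\d t_1\d t_2 \\
&\quad - \i \int_0^t\int_0^{t_1}\int_0^{t_2} \nu_{t_3}\big([[[N(\vp),V_{FB}(t_1)],V_{FB}(t_2)],\h_I(t_3)]\big)\,\d t_1\d t_2\d t_3 \ .
\label{TFB duhamel}
\end{align}
The first term I expect to produce exactly $-|\Lambda|\,t\,\<\vp,B_t[f_0]\>$, up to controllable errors, and the second term I expect to be bounded by the right-hand side of the proposition. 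So, as with $T_{F,F}$, I would state and prove two lemmas: one identifying the principal term with $-|\Lambda| t \<\vp,B_t[f_0]\>$ (plus a small error producing the $R^{1/2}\nu(\po)^{1/2}\nu(\N)^{1/2}$, $R^{3/2}\nu(\po)^{1/2}$, and $R p_F^{-m}\nu(\N_\S^2)$ contributions), and one bounding the triple-commutator remainder (producing the $t^3\lambda R$ line).

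For the principal term, the key algebraic input is the explicit form of $V_{FB}(t)=\int \hat V(k)D_k^*(t)[b_k(t)+b_{-k}^*(t)]\d k$. Expanding the double commutator $[[N(\vp),V_{FB}(t_1)],V_{FB}(t_2)]$ and taking the expectation in the quasi-free, translation-invariant, number-conserving state $\nu$, the $D$-operators contract against the two-point function $f_0$ while the $b$-operators must pair up, since $\nu$ is number-conserving and the vacuum-like structure of the Fermi surface means $b_k b_\ell^*$ has a scalar (vacuum) contraction governed precisely by the boson propagator $G_k(t-s)=\<\Omega,[b_k(t),b_k^*(s)]\Omega\>$; this is where the commutator identity \eqref{boson commutator} enters. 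The point is that the scalar part of \eqref{boson commutator} is
\begin{equation}
\nonumber
G_k(t-s)=\delta(k-\ell)\int_{\Lambda^*}\chi^\perp(p)\chi(p-k)\,e^{-\i(t-s)(E_p+E_{p-k})}\d p \ ,
\end{equation}
and the remaining operator-valued terms in \eqref{boson commutator} (and all the particle/hole splittings from the $[b,a^*]$ identities) are the source of the error terms. Carrying out the $t_1,t_2$ integrals against $\cos[(t_1-t_2)(\cdots)]$ as in the $T_{F,F}$ computation produces the mollified delta $\delta_t$ appearing in the coefficients $\alpha^H_t$, $\alpha^P_t$ of Definition \ref{definition B}; tracking the $\chi,\chi^\perp$ factors carefully, and splitting the $D$-operator into its particle part and hole part, yields exactly $B_t^{(H)}+B_t^{(P)}$. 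The errors here arise from (i) replacing $\nu(a_p^*a_q)$-type contractions of the non-scalar pieces of $[b,b^*]$ by the propagator — these are Type-II and Type-III contributions, controlled via Lemmas \ref{lemma type II} and \ref{lemma type III} and estimated by $\nu(\po)^{1/2}$ and $p_F^{-m}$; and (ii) the fact that $\nu$ is not literally the vacuum on the Fermi surface — controlled again by $\nu(\po)$ through Condition (C5) and Proposition \ref{prop N estimates 2}.

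For the remainder term, I would argue exactly as in Lemma \ref{lemma TFF 2}: bound the inner double commutator $C_{FB}(\vp,t_1,t_2)\equiv[[N(\vp),V_{FB}(t_1)],V_{FB}(t_2)]$ in operator norm, or relative to powers of $\N$ and $\po$, by expanding it into a fixed number of terms each of the shape $D(\vp)\cdot[\text{commutators of }b,D]\cdot b$ and applying the Type-I through Type-IV estimates together with the $\N$- and $\po$-pull-through formulas (Lemmas \ref{lemma N commutators}, \ref{lemma NS commutators}); then use Cauchy--Schwarz in the form of Remark \ref{remark states} to split $\nu_{t_3}([C_{FB},\h_I(t_3)])\le 2\,\nu_{t_3}(C_{FB}^2)^{1/2}\nu_{t_3}(\h_I(t_3)^2)^{1/2}$, bound $\nu_{t_3}(\h_I(t_3)^2)^{1/2}$ by $\lambda\|\hat V\|_{\ell^1}(\nu_{t_3}(\N^4)^{1/2}+R^2)$ as in \eqref{lemma 6.2 eq 3}, and finally integrate over $0\le t_3\le t_2\le t_1\le t$. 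The bookkeeping of which factor of $R$, $p_F^{-m}$, or $\nu(\po)^{1/2}$ attaches to which term is the only delicate part: each $b$-operator hitting near the Fermi surface contributes either $R^{1/2}\|\po^{1/2}\Psi\|$ (Type-II) or, when contracted with $\vp$, $|\Lambda| p_F^{-m}\|\vp\|_{\ell^1_m}$ (Type-III), while each free $D$-operator costs $\|\N\Psi\|$, and one must chase these through the expansion to land exactly on the stated bound.

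\textbf{Main obstacle.} I expect the genuine difficulty to be the principal-term computation: isolating the boson propagator $G_k(t-s)$ cleanly from \eqref{boson commutator} while controlling \emph{all} the non-scalar remainder pieces, and then doing the $t_1,t_2$-time integration so that the mollified delta $\delta_t[E_h-E_{h+k}-E_r-E_{r+k}]$ with the correct $\chi,\chi^\perp$ constraints emerges to reproduce $\alpha^H_t,\alpha^P_t$ of Definition \ref{definition B} exactly — in particular getting the hole-part versus particle-part split and all signs to match. The remainder estimate, by contrast, is conceptually routine once the Type-I--IV toolbox is in place; it is "merely" a careful enumeration of terms.
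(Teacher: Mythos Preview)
Your approach differs from the paper's in a way that matters. The paper does \emph{not} Duhamel-expand the full $T_{FB,FB}$ first. Instead, it expands the double commutator $[[N(\vp),V_{FB}(t_1)],V_{FB}(t_2)]$ algebraically into a main term $M$ (containing $D_k^*(t_1,\vp)[b_k(t_1),b_\ell^*(t_2)]D_\ell(t_2)$) and four remainders $R^{(1)},\dots,R^{(4)}$. Writing $[b_k,b_\ell^*]=\delta(k-\ell)G_k\,\1-\calR_{k,\ell}$ further splits $M=M^\delta+M^\calR$. Only the diagonal piece $M^\delta$ is then Duhamel-expanded: $M_0^\delta$ yields the exact identity $-|\Lambda|\,t\,\<\vp,B_t[f_0]\>$ (Claim~1), and $M_1^\delta$ is the $t^3$ line (Claim~2). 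The pieces $R^{(i)}$ and $M^\calR$ are estimated \emph{directly} in $\nu_{t_2}$ via Type-II/III/IV bounds and supply the $t^2$ line. The point of this ordering is that the object hit by the Duhamel commutator with $\h_I$ is just $D_k^*(\vp)D_k$ --- purely fermionic, with no $b$-operators --- which makes the triple-commutator analysis in Claim~2 manageable.

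Your proposed remainder estimate has a genuine gap: the Cauchy--Schwarz split $|\nu_{t_3}([C_{FB},\h_I])|\le 2\,\nu_{t_3}(C_{FB}^2)^{1/2}\nu_{t_3}(\h_I^2)^{1/2}$ is too lossy to reach the stated $t^3$ line. The factor $\nu_{t_3}(\h_I^2)^{1/2}$ already costs $\lambda R^2$ (from $V_B$), and $C_{FB}$ contains the scalar term $G_k D_k^*(\vp)D_k$ which contributes another $R\,\nu(\N^2)^{1/2}$; you would land on roughly $t^3\lambda R^3 n$, not the $t^3\lambda R\cdot R^{3/2}\nu_\tau(\po)^{1/2}$ the proposition states. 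The paper avoids this by \emph{not} using Cauchy--Schwarz in Claim~2: it expands $[D_k^*(\vp)D_k,\,V_F+V_{FB}+V_B]$ term by term, and the $V_B$-piece produces $R^{3/2}\nu(\po)^{1/2}$ and $R\,\nu(\po)$ directly (the overall prefactor $R$ in the $t^3$ line is just $|G_k|\lesssim R$). That refined commutator analysis is feasible precisely because the Duhamel was applied only to $M^\delta$, whose integrand $D_k^*(\vp)D_k$ is $b$-free. If you Duhamel-expand the full $C_{FB}$ as you propose, you would have to analyze $[C_{FB},\h_I]$ term by term with $C_{FB}$ already containing $b$'s and $[b,b^*]$'s --- a substantially longer enumeration than the paper's, and one you have not sketched.

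A minor point: your assessment of the main obstacle is inverted. In the paper's execution, the principal-term computation (Claim~1) is a clean quasi-free calculation once you isolate $M_0^\delta$; the bulk of the effort goes into the remainder estimates (Claim~2, Claim~3, and Lemma~\ref{lemma TFB reminder terms}).
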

 
 \vspace{2mm}
 
\begin{remark}
	 In order to prove Proposition \ref{prop TFB}, we expand 
	$
	T_{FB,FB}
	$ 
	into several terms and analyze each one  separately. 
	This expansion is based {\blu on} the following two observations: 

\vspace{1mm}

\noindent \textit{(i)}  For any self-adjoint operators $N, T ,S   $ and  state $\mu $,
		there holds: 
		\begin{equation}
			\label{eq:N}
			\mu  \big( [[N,  T + T^*] , S]		\big)
			=
			2 \mathrm{Re} \, 	\mu  \big( [[N, T  ] , S]		\big) \ . 
		\end{equation}

\noindent\textit{ (ii)  }Thanks to the symmetries  
		$D_k = D^*_{-k }$,  
		$\hat V  ( - k) =  \hat V (k)$
		and the vanishing commutator $[D^*_k , b_k ]= 0$,
		starting from the representation \eqref{VFB(t)} 
		we may rewrite the fermion-boson interaction term 
		as  
		\begin{equation}
			V_{FB}   (t) 
			= 
			\int_{  \Lambda^* } \hat V  (k)
			B^*_k (t) D_k (t) 
			\d k 
			\qquad 
			\t{where}
			\qquad 
			B^*_k  (t) 
			\equiv 
			b^*_k (t)   + b_{-k }  (t) \ . 
		\end{equation} 
\end{remark}

  \vspace{2mm}

Starting from \eqref{T alpha beta}, based on these two observations  we are able to  rewrite
the term $T_{FB,FB}$ 
 for all $ t\in \R$
and $\vp \in \ell^1$  in the following form  
 \begin{align}
 	T_{FB,FB} &
 	(t,\vp) 		 
 	\nonumber 
 	\\ 				
 	&  = 
 	2 \mathrm{Re}  			\nonumber 
 	\int_0^t 
 	\int_0^{t_1}
 	\int_{  \Lambda^{*2 } }
 	\hat V (k) 
 	\hat V(\ell)
 	\nu_{t_2 }
 	\Big(
 	[ [	 N(\vp) , D_k^*(t_1) b_{k}(t_1)		] , 		B^*_{\ell}(t_2) D_{\ell}(t_2 ) ]  
 	\Big) 
 	\d t_1 \d t_2  
 	\d k \d \ell 
 	\\
 	& \equiv 
 	M (t,\vp) 
 	+
 	R^{(1)} (t,\vp)
 	+
 	R^{(2)} (t,\vp)
 	+
 	R^{(3)} (t,\vp)
 	+
 	R^{(4)} (t,\vp)
\label{TFB expansion}
 \end{align}
 where  in the second line we have expanded the commutator into five terms.
 The first one we shall refer to as the \textit{main term}, and is defined as follows
 \begin{align}
\label{TFB main term}
 	M (t,\vp)
 	&   = 
 	2 \mathrm{Re} \, 		 
 	\int_0^t 
 	\int_0^{t_1}
 	\int_{  \Lambda^{*2 } }
 	\hat V (k) 
 	\hat V(\ell)
 	\nu_{t_2 }
 	\Big(
 	D_{k}^*(t_1,\vp)
 	[b_{k}(t_1), b^*_\ell(t_2)]
 	D_{\ell}( t_2  )
 	\Big) 
 	\, 
 	\d t_1 \d t_2  
 	\d k \d \ell     \ . 
 \end{align} 
The last four, which we shall refer to as the \textit{remainder terms}, are defined as follows 
 \begin{align}
 	R^{(1)} (t,\vp)
 	&    = 
 	2 \mathrm{Re} \, 			\nonumber 
 	\int_0^t 
 	\int_0^{t_1}
 	\int_{  \Lambda^{*2 } }
 	\hat V (k) 
 	\hat V(\ell)
 	\nu_{t_2 }
 	\Big(
 	D_{ k }^*(t_1,\vp)
 	B^*_\ell (t_2)
 	[b_{ k}(t_1),  D_{\ell}(t_2)]
 	\Big) 
 	\, 
 	\d t_1 \d t_2  
 	\d k \d \ell 
 	\\
 	R^{(2)} (t,\vp)
 	&    = 
 	2 \mathrm{Re} \, 			\nonumber 
 	\int_0^t 
 	\int_0^{t_1}
 	\int_{  \Lambda^{*2 } }
 	\hat V (k) 
 	\hat V(\ell)
 	\nu_{t_2 }
 	\Big(
 	[ D_{k}^*(t_1 ,\vp) , B^*_\ell(t_2)	] 
 	D_{\ell}( t_2 )  b_{ k }(t_1 )  
 	\Big) 
 	\, 
 	\d t_1 \d t_2  
 	\d k \d \ell 
 	\\
 	R^{(3)} (t,\vp)
 	&    = 
 	2 \mathrm{Re} \, 			\nonumber 
 	\int_0^t 
 	\int_0^{t_1}
 	\int_{  \Lambda^{*2 } }
 	\hat V (k) 
 	\hat V(\ell)
 	\nu_{t_2 }
 	\Big(
 	B_{\ell}^* (t_2) 
 	[D_{k}(t_1,\vp)	 , D_{\ell}(t_2)	]
 	b_{k}(t_1)
 	\Big) 
 	\, 
 	\d t_1 \d t_2
 	\d k \d \ell 
 	\\
 	R^{(4)} (t,\vp)
 	&    = 
\label{TFB reminder terms}
 	2 \mathrm{Re} \, 		 
 	\int_0^t 
 	\int_0^{t_1}
 	\int_{  \Lambda^{*2 } }
 	\hat V (k) 
 	\hat V(\ell)
 	\nu_{t_2 }
 	\Big(
 	[
 	D_{k}(t_1)
 	b_{k }(t_1,\vp)
 	,
 	B_{\ell}^* (t_2) 
 	D_{\ell}(t_2)	
 	]
 	\Big) 
 	\, 
 	\d t_1 \d t_2  
 	\d k \d \ell  \   .
 \end{align}
\begin{remark}
We remind 
  the reader that we have previously  introduced the notation 
\begin{equation}
	D_k^* ( t ,\vp )
	= 
	[N(\vp)  , D_k^*(t)]	 
	\quad \t{and}
	\quad 
	b_k(t, \vp)
	= 
	[N(\vp)  ,  b_k (t)	]		  
\end{equation}
for any $ k \in \Lambda^*$ and $ t \in \R $. We have also used the fact that $[b_k(t), b_{\ell}(s)]=0$. 	
\end{remark}

\vspace{2mm}

 In the remainder of this section, we shall study these five terms separately.
The proof of Proposition \ref{prop TFB} follows directly from the following two lemmas. 
Here, we remind the reader that $R= |\Lambda|p_F^{d-1}$ is our recurring parameter. 

\begin{lemma}
	[The main term]
		\label{lemma TFB main term}
Let $M$ be the quantity defined in \eqref{TFB main term}, and let $ m> 0 $. 
Then, there exists a constant 
$C>0$ 
such that for all $\vp \in \ell_m^1  $
and $t \geq 0 $ the following estimate holds true 
	\begin{align}
		|
		M (t,\vp)
		+ 
|\Lambda |		t    & 
		\langle 
		\vp ,   B_t[f_0]
		\rangle 
		|		\\ 
		\nonumber 
		&  
		\leq C 
		t^2 
		\|		 \hat V	\|_{\ell^1 }^2 
		|\Lambda|
		\|	 \vp 	\|_{\ell^1_m} 
		\sup_{ \tau \leq t }
		\Big(
		R^{\frac{1}{2}}
		\nu_\tau (\po)^{ 	\frac{1}{2}	 }
		+ 
		p_F^{ -m  }
		\Big) 					
		\nu_\tau (\N^2)^{ \frac{1}{2} }    \\ 
		\nonumber 
		& + 
 C 		\lambda 
		t^3 
		R
		|  \Lambda |
		\|	 \vp	\|_{\ell^1_m }
		\|	 \hat V	\|_{\ell^1}^3 
		\sup_{0 \leq \tau \leq t 	}
		\Big(	
		R^{\frac{3}{2}	}
		\nu_\tau (\po)^{  \frac{1}{2}}
		+ 
		R \nu_\tau (\po )
		+ 
		Rp_F^{ -m  }
		\nu_\tau (\N )^{ \frac{1}{2} }
		\Big)   
	\end{align}
	where the operator $B_t$ was introduced in Def. \ref{definition B}. 
\end{lemma}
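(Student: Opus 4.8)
The plan is to prove Lemma \ref{lemma TFB main term} by extracting the boson propagator from the commutator $[b_k(t_1), b_\ell^*(t_2)]$ and showing that its leading part reproduces the operator $B_t$, with the rest being error. Recall from the commutator identity \eqref{boson commutator} that
\begin{equation}
[b_k(t_1), b_\ell^*(t_2)] = \delta(k-\ell)\, G_k(t_1-t_2)\, \1 \ + \ \mathcal{E}_k(t_1,t_2) \ ,
\end{equation}
where $G_k(t_1-t_2) = \int_{\Lambda^*}\chi^\perp(p)\chi(p-k)\,e^{-i(t_1-t_2)(E_p+E_{p-k})}\,\d p$ is the free boson propagator and $\mathcal{E}_k(t_1,t_2)$ collects the two remaining (quadratic in $a^\#$, surface-localized) integrals in \eqref{boson commutator}. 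Substituting this into \eqref{TFB main term} splits $M$ into a \emph{diagonal} piece, in which $\delta(k-\ell)$ collapses the $\ell$-integral and $b$'s disappear, and an \emph{off-diagonal} remainder containing $\mathcal{E}_k(t_1,t_2)$.

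For the diagonal piece, I would first insert $\nu_{t_2}(\,\cdot\,) = \nu(\,\cdot\,) + (\nu_{t_2}-\nu)(\,\cdot\,)$ and Duhamel-expand the difference once more against $\h_I(t_3)$, exactly as in the treatment of $T_{F,F}$ in Section \ref{section TFF}. The leading term is then $2\,\mathrm{Re}\int_0^t\int_0^{t_1}\int_{\Lambda^*}|\hat V(k)|^2\, G_k(t_1-t_2)\,\nu\big(D_k^*(t_1,\vp)\,D_k(t_2)\big)\,\d t_1\d t_2\,\d k$; since $\nu$ is quasi-free and translation invariant, $\nu(D_k^*(t_1,\vp) D_k(t_2))$ is computed explicitly in terms of $f_0$ via Wick's theorem and the formulae \eqref{operator D phi}, producing the characteristic combinations $\chi(h)\chi(h+k)(\bar\vp(h)-\bar\vp(h+k))\big[f_0(h)\widetilde f_0(h+k)\big]$ for holes and the analogous particle term. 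Carrying out the $t_1,t_2$-time integrals of $e^{it_1\Delta E}\overline{e^{it_2\Delta E'}}G_k(t_1-t_2)$ against the oscillating phases produces precisely the mollified delta functions $\delta_t$ appearing in the coefficients $\alpha_t^H$, $\alpha_t^P$ of Definition \ref{definition B}; pairing the result against $\vp$ and relabelling momenta identifies it with $-|\Lambda|\,t\,\langle \vp, B_t[f_0]\rangle$. The sub-Duhamel term with $\h_I(t_3)$ is bounded using that $\|D_k^*(t_1,\vp)\|_{B(\F)}\lesssim |\Lambda| p_F^{-m}\|\vp\|_{\ell^1_m}$ (Type-III, Lemma \ref{lemma type III}), $\|D_k(t_2)\Psi\|\lesssim \|\N\Psi\|$ (Type-I), the propagator bound $|G_k(t)|\lesssim R$, and the rough bound $\|\h_I(t_3)\Psi\|\lesssim\lambda\|\hat V\|_{\ell^1}(\|\N^2\Psi\|+R^2\|\Psi\|)$ established in the proof of Lemma \ref{lemma TFF 2}; this yields the $\lambda t^3 R$ line of the claimed estimate.

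For the off-diagonal remainder involving $\mathcal{E}_k(t_1,t_2)$, I would not Duhamel-expand further but estimate $\nu_{t_2}$ directly: writing $\mathcal{E}_k$ as a surface-localized operator of the form $\int \1_\S(\cdots) g(p) a_p^\#(t_1)a_q^\#(t_2)\,\d p$, one applies the Cauchy--Schwarz-in-$\nu_{t_2}$ device of Remark \ref{remark states} together with Lemma \ref{lemma O} to bound the relevant matrix elements by $\nu_{t_2}(\po)^{1/2}$ times $\nu_{t_2}(\N^2)^{1/2}$; the contraction $D_k^*(t_1,\vp)$ contributes either a $p_F^{-m}\|\vp\|_{\ell^1_m}$ factor (Type-III) or, where it is not contracted with the surface, a $\|\N\cdot\|$ factor, and $D_\ell(t_2)$ contributes $R$ via $|\hat V(\ell)|$ summation and Type-I. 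After the double time integration this produces the $t^2$ line of the estimate. Finally one takes $\sup_{\tau\le t}$ over the appearing expectations and collects terms.

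The main obstacle I anticipate is the bookkeeping in the diagonal leading term: one must carefully track four kinds of terms (hole/hole, particle/particle, and the two mixed structures coming from the $b_{-k}$ part of $B_k^*$ versus the $b_k^*$ part, combined with the hole and particle halves of $D_k^*(\vp)$), match each oscillatory time integral $\int_0^t\int_0^{t_1}e^{i(t_1-t_2)(E_p+E_{p-k})}e^{it_1\Delta}e^{-it_2\Delta'}\,\d t_2\,\d t_1$ to the correct $\delta_t[E_h-E_{h+k}-E_r-E_{r+k}]$ kernel, and verify the precise combinatorial constants and the $2\,\mathrm{Re}$ so that the outcome agrees \emph{exactly} with Definition \ref{definition B} (in particular the shift structure $\alpha_t^H(h-k,k)f(h-k)\widetilde f(h) - \alpha_t^H(h,k)f(h)\widetilde f(h+k)$, which arises from the $\bar\vp(h)-\bar\vp(h+k)$ difference after relabelling). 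Everything else is a routine application of the Type-I through Type-IV estimates and the number-operator bounds of Section \ref{section number estimates}.
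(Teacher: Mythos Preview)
Your overall architecture is correct and matches the paper's: split $M$ via the commutator decomposition \eqref{boson commutator} into a diagonal piece carrying the propagator $G_k$ and an off-diagonal piece $\calR_{k,\ell}$; Duhamel-expand the diagonal piece once more to extract $-|\Lambda|\,t\langle\vp,B_t[f_0]\rangle$ from the $\nu$-term; and estimate the $\calR$-piece directly. The identification of the leading term with $B_t$ and the treatment of the off-diagonal remainder are essentially as in the paper (Claims~1 and~3 there).

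However, your treatment of the sub-Duhamel term (the paper's $M_1^\delta$, Claim~2) has a genuine gap. First, the bound $\|D_k^*(t_1,\vp)\|_{B(\F)}\lesssim |\Lambda|\,p_F^{-m}\|\vp\|_{\ell_m^1}$ is \emph{not} a Type-III estimate and is in fact false: $D_k^*(\vp)$ is supported on $\chi^\perp(p)\chi^\perp(p-k)$ and $\chi(h)\chi(h+k)$, which are not localized to the Fermi surface $\S$, so there is no $p_F^{-m}$ gain. The correct bound is Type-IV (Lemma~\ref{lemma type IV}), namely $\|D_k^*(\vp)\|\lesssim|\Lambda|\|\vp\|_{\ell^1}$. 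Second, even with the corrected Type-IV bound, pairing it with the rough estimate $\|\h_I\Psi\|\lesssim\lambda\|\hat V\|_{\ell^1}(\|\N^2\Psi\|+R^2\|\Psi\|)$ is too crude: it produces a term of order $\lambda t^3 R^3|\Lambda|\|\vp\|_{\ell^1}\nu_\tau(\N^2)^{1/2}$, which does \emph{not} match the second line of the lemma (there the worst power of $R$ is $R^{5/2}$, multiplied by the much smaller factor $\nu_\tau(\po)^{1/2}$ rather than $\nu_\tau(\N^2)^{1/2}$). To obtain the stated bound one must, as the paper does, expand the commutator $[D_k^*(\vp)D_k,\h_I(t_3)]$ into its $V_F$, $V_{FB}$, $V_B$ pieces and estimate each separately, using Type-II estimates on the $b$-operators inside $V_{FB}$ and $V_B$ to extract $\po^{1/2}$ factors (e.g.\ $\|[D,b]\Psi\|\lesssim R^{1/2}\|\po^{1/2}\Psi\|$) rather than just $\|b\|\lesssim R$. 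This finer decomposition is what converts the naive $R^3\nu(\N^2)^{1/2}$ into the claimed $R^{5/2}\nu(\po)^{1/2}+R^2\nu(\po)+R^2p_F^{-m}\nu(\N)^{1/2}$.
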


\begin{lemma}[The remainder terms]
	\label{lemma TFB reminder terms}
Let $R^{(1)}$, $R^{(2)}$, $R^{(3)}$	  and $R^{(4)}$
be the quantities defined in \eqref{TFB reminder terms}, and let $m>0. $ 
Then, there exists  a constant $C> 0 $ such that for all $ \vp \in \ell_m^1 $ and $t \geq 0 $

\vspace{1.5mm} 
\noindent (1) There holds 
\begin{equation}
		\label{lemma remainder 1}
	| R^{(1)} (t,\vp) | 
	\leq C  
	t^2 
	\|	 \hat V	\|_{\ell^1 }^2 
	\|	 \vp	\|_{\ell^1 }
	|	\Lambda	|
	R^{\frac{3}{2}}
	\sup_{ 0 \leq t \leq \tau}
	\nu_\tau (\po)^{\frac{1}{2}	}  \  . 
\end{equation}

\vspace{1.5mm} 
\noindent (2) There holds 
\begin{align}
\label{lemma reminder 2}
		| R^{(2)} (t,\vp) | 
\leq C 
	t^2 
	\|	 \hat V	\|_{\ell^1 }^2 
	\|	 \vp	\|_{\ell^1_m }
	|\Lambda |
	\frac{R}{p_F^m }
	\sup_{ 0 \leq t \leq \tau}
	\nu_\tau 
	(\N^2)^{	\frac{1}{2 }} \ . 
\end{align}

\vspace{1.5mm} 
\noindent (3) There holds 
\begin{equation}
\label{lemma reminder 3}
		| R^{(3 )} (t,\vp) | 
\leq C 
	t^2 
	\|	 \hat V	\|_{\ell^1 }^2 
	\|	 \vp	\|_{\ell^1 }
	|	\Lambda	|
	R^{\frac{3}{2}}
	\sup_{ 0 \leq t \leq \tau}
	\nu_\tau (\po)^{\frac{1}{2}	}  \ . 
\end{equation}
\vspace{1.5mm} 
\noindent (4) There holds 
\begin{equation}
	 		\label{lemma remainder 4}
	| R^{(4)} (t,\vp) | 
\leq C 
	t^2 
	\|	 \hat V	\|_{\ell^1 }^2 
	\|	 \vp	\|_{\ell^1_m }
	|\Lambda |
	\frac{R}{p_F^m }
	\sup_{ 0 \leq t \leq \tau}
	\nu_\tau 
	(\N^2)  \ . 
\end{equation}
\end{lemma}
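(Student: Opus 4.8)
\textbf{Proof strategy for Lemma \ref{lemma TFB reminder terms}.}
The plan is to treat all four remainder terms $R^{(1)},R^{(2)},R^{(3)},R^{(4)}$ by the same two-step recipe: first, for each fixed pair of times $t_1\geq t_2$ and each pair of momenta $k,\ell\in\supp\hat V$, bound the expectation $\nu_{t_2}(\,\cdots\,)$ under the integral using the Cauchy--Schwarz inequality in the mixed-state form of Remark \ref{remark states}, splitting the product of $b$-, $D$-, $B^*$-type operators into a "left" factor acting on one copy of $\Psi_j$ and a "right" factor acting on the other; second, estimate each resulting operator factor by the appropriate Type-I through Type-IV estimate from Lemmas \ref{lemma type I}--\ref{lemma type IV}, pull out the factors of $R$, $|\Lambda|$, $p_F^{-m}$, $\|\hat V\|_{\ell^1}$, $\|\vp\|_{\ell^1}$ or $\|\vp\|_{\ell^1_m}$, and finally integrate the surviving $\nu_{t_2}(\N^j)$ or $\nu_{t_2}(\po)$ over $0\leq t_2\leq t_1\leq t$ and over $k,\ell$, producing the $t^2$ and the $\sup_{\tau\leq t}$ on the right-hand side.

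\textbf{Term-by-term.} For $R^{(1)}$, the integrand is $\nu_{t_2}(D^*_k(t_1,\vp)\,B^*_\ell(t_2)\,[b_k(t_1),D_\ell(t_2)])$. I would group $D^*_k(t_1,\vp)$ with $B^*_\ell(t_2)$ on the left and the commutator $[b_k(t_1),D_\ell(t_2)]$ on the right: the left factor is controlled by the Type-IV bound $\|D_k(t,\vp)\|_{B(\F)}\lesssim|\Lambda|\|\vp\|_{\ell^1}$ together with the norm bound $\|b_\ell(t)\|_{B(\F)}\lesssim R$ from \eqref{bounds on b}, while the right factor is a Type-II estimate $\|[b_\ell,D_k]\Psi\|\lesssim R^{1/2}\|\po^{1/2}\Psi\|$. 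This yields the prefactor $|\Lambda|\,R\cdot R^{1/2}=|\Lambda|R^{3/2}$ and $\nu_{t_2}(\po)^{1/2}$, matching \eqref{lemma remainder 1}. Term $R^{(3)}$ is structurally identical after noting $[D_k(t_1,\vp),D_\ell(t_2)]$ is the commutator of a Type-IV operator (second estimate of Lemma \ref{lemma type IV}), giving again $|\Lambda|\|\vp\|_{\ell^1}$, and pairing $B^*_\ell(t_2)$ (norm $\lesssim R$) with $b_k(t_1)\Psi$ (Type-II, $\lesssim R^{1/2}\|\po^{1/2}\Psi\|$) on one side — hence \eqref{lemma reminder 3}.

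\textbf{The $p_F^{-m}$ terms.} For $R^{(2)}$, the key object is $[D^*_k(t_1,\vp),B^*_\ell(t_2)]$, which contains $[b_\ell(t),D_k(s,\vp)]$ and its adjoint; this is exactly a Type-III estimate, $\|[b_\ell,D_k(\cdot,\vp)]\|_{B(\F)}\lesssim|\Lambda|p_F^{-m}\|\vp\|_{\ell^1_m}$. Pairing this bounded operator with $D_\ell(t_2)b_k(t_1)$ split as $D_\ell(t_2)\Psi$ (Type-I, $\lesssim\|\N\Psi\|$, after moving $b_k$ past via the number bound $\|b_k\|\lesssim R$) gives the prefactor $|\Lambda|\,p_F^{-m}\cdot R$ and $\nu_{t_2}(\N^2)^{1/2}$, i.e. \eqref{lemma reminder 2}; one must be slightly careful to absorb the $\|b_k(t_1)\|_{B(\F)}\lesssim R$ factor correctly so that only one power of $R$ appears. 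For $R^{(4)}$ I would first expand the outer commutator $[\,D_k b_k(\cdot,\vp)\,,\,B^*_\ell D_\ell\,]$ by Leibniz into a handful of terms, each of which places the contracted operator $b_k(t_1,\vp)$ — controlled by the Type-III bound $\|b_k(t,\vp)\|_{B(\F)}\lesssim|\Lambda|p_F^{-m}\|\vp\|_{\ell^1_m}$ — next to products of $D$'s and $b$'s that are handled by Type-I and \eqref{bounds on b}; the worst term carries $|\Lambda|p_F^{-m}\cdot R$ and $\nu_{t_2}(\N^2)$, matching \eqref{lemma remainder 4}. Throughout, the symmetry $D^*_k=D_{-k}$ and the remark that Type-III is $b\leftrightarrow b^*$ symmetric let me ignore the distinction between $b$ and $b^*$ in these estimates; complex-conjugation of $\vp$ is harmless since only $\|\vp\|_{\ell^1}$ and $\|\vp\|_{\ell^1_m}$ enter.

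\textbf{Main obstacle.} The routine part is the bookkeeping; the one genuinely delicate point is ensuring that in each term at most the stated power of $R$ is extracted. The naive approach — bound every $b$ by its operator norm $R$ and every $D$ by $\N$ — would overcount powers of $R$ (e.g. giving $R^2$ or $R^{5/2}$ where $R^{3/2}$ is claimed). The fix is to always convert \emph{one} $b$-factor into a $\po^{1/2}$ via the Type-II estimate rather than its crude $B(\F)$-norm, and to use the pull-through formulae of Lemma \ref{lemma N commutators} to move number operators through $b$'s without cost. Getting this allocation right in each of the Leibniz-expanded subterms of $R^{(4)}$, while keeping the $\vp$-dependence in the correct norm ($\ell^1$ versus $\ell^1_m$), is where the care is needed; everything else follows the template of the proofs of Lemmas \ref{lemma commutator estimates 1} and \ref{lemma commutator estimates 2}.
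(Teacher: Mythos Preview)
Your proposal is correct and for $R^{(1)}$, $R^{(2)}$, $R^{(3)}$ is essentially identical to the paper's proof: the same Cauchy--Schwarz splitting, the same Type-I/II/III/IV allocations, and the same care to spend only one $b$-factor on a Type-II $\po^{1/2}$ bound while the remaining $b$'s are bounded in operator norm by $R$.

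The only place where you diverge from the paper is $R^{(4)}$. You propose to Leibniz-expand $[\,D_k b_k(\cdot,\vp)\,,\,B^*_\ell D_\ell\,]$ into several pieces and bound each; this works, but it generates commutators such as $[b_k(\vp),B^*_\ell]$ and $[b_k(\vp),D_\ell]$ that are not literally in the Type-I--IV list and have to be unwound back to products anyway. The paper short-circuits all of this: since both $D_k b_k(\vp)$ and $B^*_\ell D_\ell$ appear inside a real part of a state expectation, one simply uses $|\nu_{t_2}([A,B])|\le 2\,|\nu_{t_2}(AB)|$ and then bounds the single product $\langle\Psi, D_k\, b_k(\vp)\, B^*_\ell\, D_\ell\,\Psi\rangle$ directly by $\|D_k^*\Psi\|\,\|b_k(\vp)\|\,\|B^*_\ell\|\,\|D_\ell\Psi\|\lesssim |\Lambda|p_F^{-m}\|\vp\|_{\ell^1_m}\,R\,\|\N\Psi\|^2$. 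Your route reaches the same bound with the same prefactors, but the paper's ``commutator $\le$ product'' trick avoids the expansion and the attendant bookkeeping you flag as the main obstacle.
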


\begin{proof}[Proof of Proposition \ref{prop TFB}]
Straightforward combination of the expansion given in Eq.  \eqref{TFB expansion}, 
 and the estimates contained in Lemmas \ref{lemma TFB main term} and \ref{lemma TFB reminder terms}. 
\end{proof}

We dedicate the rest of the section to the proof of   Lemma \ref{lemma TFB main term} and \ref{lemma TFB reminder terms}, respectively. 
This is done in the two following subsections. 
 
 \subsection{Analysis of the main  term}
The main goal of this  subsection is to prove Lemma \ref{lemma TFB main term}
by analyzing the main term $M $. 
Our first step in this direction is to    give an additional
 decomposition of $M $.
 Indeed, we start by noting that the commutator 
 of the bosonic operators may be written as  (see \eqref{boson commutator} in Section \ref{toolbox1})
 \begin{align}
 	\label{commutator of b}
 	[b_k (t), b_\ell^* (s)]
 	& 
 	=   \ 
 	\delta ( k - \ell )
 	G_k(t -s )   \1 
 	- 
 	\calR_{ k , \ell  }(t,s) \ , 
 \end{align} 
which corresponds to a decomposition into its  ``diagonal" and ``off-diagonal" parts,
with respect to the variables $k, \ell \in \Lambda^*$. 
Here,  $G_k(t-s)$ is a scalar that 
 corresponds to the   \textit{propagator}
 of the boson field--it can be explicitly calculated to be 
 \begin{equation} 
\label{Gk}
 	G_k(t -s  )   = 
 	\<	\Omega, 
 	[b_k (t), b_k^* (s)] , \Omega	\>_\F 
 	= 
 	\int_{  \Lambda^* }
 	\chi^\perp(p)
 	\chi(p-k)
 	e^{  -i  (t -s )  (E_p   +  E_{p - k})  }					
 	\d p		\ . 
 \end{equation}
for all $k\in \Lambda^*$ and $t,s \in \R$. 
On the other hand, the second term of \eqref{commutator of b} corresponds to an operator  remainder term 
 \begin{align} 
\label{cal R k ell}
 	\calR_{ k ,\ell  }(t,s)   & 
 	\equiv 
 	\int_{  \Lambda^* }
 	\chi^\perp (p) 	 \chi^\perp (p+ \ell - k )	\chi(p-k) 			\nonumber  
 	e^{  -i (t-s ) E_{p-k}    }	
 	a_p^* (t)  a_{p + \ell - k } (s) \, \d p 	\\ 
 	&   + 
 	\int_{  \Lambda^* }
 	\chi  ( h )
 	\chi (  h + \ell - k )
 	\chi^\perp (h + \ell)
 	e^{  - i (t-s ) E_{h+k}    } 
 	a_h^* (t)  a_{h + \ell - k } (s) \, \d h   \ . 
 \end{align}

\vspace{2mm}

The decomposition of the bosonic commutator given in \eqref{commutator of b}
now suggests that we split the main term into two parts. 
 The first one contains the  $\delta(k - \ell) $ function, 
 and the
 second one contains the operator  $\mathcal R _{k , \ell}$.
 In other words, we shall consider 
 \begin{align}
 	\label{T FB 1 }
 	M (t, \vp )
 	=
 	M^{\delta }(t, \vp )
 	+ 
 	M^{\calR }(t, \vp ) \ . 
 \end{align}
We   analyze $M^\delta$ and $M^\calR $ separately.
The proof of Lemma \ref{lemma TFB main term} is given at the end of the subsection.

Upon expanding the bosonic commutator \eqref{commutator of b} in \eqref{TFB main term}, we evaluate  the $\delta(k-\ell)$ function to find that 
\begin{align}
	M^{\delta }(t, \vp )
	&  
	= 
	2 \mathrm{Re} \, 		 
	\int_{  \Lambda^{* } }
	| \hat V (k)  |^2 
	\int_0^t 
	\int_0^{t_1}
	G_k(t_1 - t_2 )
	\nu_{t_2 }
	\Big(
	D_{ k}^*( t_1 ,\vp)
	D_{k }( t_2 )
	\Big) 
	\, 
	\d t_1 \d t_2  
	\d k    \ . 
\end{align}
In order to analyze the above expectation value, 
we shall expand $\nu_{t_2}$
with respect to the interaction dynamics \eqref{interaction dynamics}. 
Namely, we consider 
\begin{align}
\label{M delta expansion}
	M^\delta = M^\delta_0 + M^\delta_1 
\end{align}
where for all $t\in \R$ and $\vp \in \ell^1$ 
we define 
\begin{align} 
M^\delta_0     (t,\vp )
	& \equiv 
	2 \mathrm{Re} \, 		 
	\int_{  \Lambda^{*2 } }
	| \hat V (k)  |^2 
	\int_0^t 
	\int_0^{t_1}
	G_k(t_1 - t_2 )
	\nu
\big( 
	D_{ k}^*( t_1 ,\vp)
	D_{k }( t_2 )
\big) 
	\, 
	\d t_1 \d t_2     
	\d k  
	\end{align}
together with 
\begin{align}	
\label{M 1 delta}
	M^\delta_1  &  (t,\vp ) \\
\nonumber 
& 	  \equiv 
2  
\mathrm{ Im } 	 
 \int_{  \Lambda^{*  } }
| \hat V (k)  |^2 
\int_0^t 			 
\int_0^{t_1}
\int_0^{t_2  }
G_k(t_1 - t_2 )	 
\nu_{  t_3  }
\big(  
[   D_k^* (t_1 ,\vp)
D_k ( t_2 ) ,
\h_I( t_3 )		 
] 
\big) 
\d t_1 \d t_2\d t_3     
\d k   \ . 
\end{align}

\vspace{3mm} 

 First, we identify that from the  first term in the above expansion will the $B_t$ operator emerge. Namely, we claim that

 \begin{lemma}
\label{claim 1}
 	For all $t\in\R$ and real-valued $\vp \in \ell^1$, the following identity holds true 
 	\begin{align}
 		M_0^\delta(t,\vp)
 		=
 		-t 
 		\< \vp ,B_t[f_0] \>  
 	\end{align}
 	where $B_t$ is the operator given in Def. \ref{definition B}. 
 \end{lemma}

Once this is established, it suffices to control the second term in the expansion of $M^\delta$, that is, 
the extra integral remainder term in \eqref{M delta expansion}, $M_1^\delta$.

\begin{lemma}  
	\label{claim 2}
For all $m>0$ there exists a constant $C>0$
such that for  all $t \geq 0 $ and $\vp \in \ell^1$ the following estimate holds true 
	\begin{align}
		|  
		M^\delta_1(t,\vp)  
		|    
 		\leq C  
		\lambda 
		t^3 
		R
		| \Lambda |
		\|	 \vp	\|_{\ell^1_m }
		\|	 \hat V	\|_{\ell^1}
				\|	 \hat V	\|_{\ell^2}^2
		\sup_{0 \leq \tau \leq t 	}
		\Big[
		R^{\frac{3}{2}	}
		\nu_\tau (\po)^{  \frac{1}{2}}
		+ 
		R \nu_\tau (\po )
		+ 
	 \frac{R}{p_F}	 
		\nu_\tau (\N )^{ \frac{1}{2} }
		\Big] 
		  \ . 
	\end{align}
 \end{lemma}

\begin{proof}[Proof of Lemma \ref{claim 1}]
	Let us fix       $ k \in \Lambda^*$, $t,s\in\R$ and $\vp \in \ell^1 $, which we   assume is real-valued in the remainder of the proof.
	In order to prove our claim,  we write 
\begin{align}
\nonumber 
		D_k^*(t,\vp)
		& = 
		\int_{  \Lambda^{*} }		
		\chi^\perp(p_1,p_1-k) 
		[\vp(p_1) - \vp (p_1 - k )]
		a^*_{p_1}(t) a_{p_1 - k } (t)
		 \d p_1 
		 \\ 
		& - 
				\int_{  \Lambda^{*} }		
		\chi(h_1,h_1+  k) 
		[\vp(h_1) - \vp (h_1  + k )]
		a^*_{h_1}(t) a_{h_1 + k } (t)
		 \d h_1 \ ,  \\ 
\nonumber 
		D_k(s)
& = 
\int_{  \Lambda^{*} }		
\chi^\perp(p_2,p_2+ k) 
 a^*_{p_2  }(s) a_{p_2 +k} (s)
  \d p_2  \\ 
& - 
\int_{  \Lambda^{*} }		
\chi(h_2,h_2 -  k)  
a^*_{h_2 }(s) a_{h_2  -  k } (s)
 \d h_2  \ . 
\end{align}
Thus  we are able to calculate that  the following four terms arise 
	\begin{align}	
		\nu  \big( 					
		\nonumber 
		D_k^* (t,\vp)
		D_k (s)
		\big)
		&  =  
		\int_{  \Lambda^{*2 } }
	\chi^\perp (p_1, p_2, p_1 - k, p_2+k) 
		[  \vp( p_1 ) - \vp (p_1 - k ) ]	\\
		\nonumber
		& \qquad \times 
		\ \nu
		\Big( 
		a_{p_1}^* (t) a_{p_1-k}(t) a_{p_2}^*(s) a_{p_2+k} (s)   
		\Big)
		\d p_1  \d p_2     \\
		& + 
				\int_{  \Lambda^{*2 } }
		\chi(h_1,  h_2, h_1+k , 					  h_2- k )		
			\nonumber
		[  \vp(  h_1 ) - \vp ( h_1+ k  ) ]
		\\
		&	\qquad  \times   					
			\nonumber 
		\ \nu
		\Big( 
		a_{h_1}^* (t) a_{ h_1+ k }(t) a_{ h_2 }^*(s) a_{h_2 - k } (s)   
		\Big) 	\d h_1  \d h_2 	 \\
\nonumber
			&  - 	\int_{  \Lambda^{*2 } }
		\chi^\perp (p_1,   p_1 -k )    \chi(h_2 ,h_2 -  k )
		[  \vp( p_1 ) - \vp (p_1 - k ) ]	\\
		\nonumber
		& \qquad \times 
		\ \nu
		\Big( 
		a_{p_1}^* (t) a_{p_1-k}(t) a_{h_2}^*(s) a_{h_2 -k} (s)   
		\Big)
		\d p_1  \d h_2   \\
\nonumber
 & - 
 				\int_{  \Lambda^{*2 } }
      \chi(h_1, h_1 + k) \chi^\perp (p_2, p_2+k)
		[  \vp( h_1) - \vp ( h_1 + k  ) ]	\\ 
		& \qquad \times 
		\ \nu
		\Big( 
		a_{h_1}^* (t) a_{h_1+ k}(t) a_{p_2}^*(s) a_{p_2+k} (s)   
		\Big)
		\d h_1  \d p_2     \ . 
		\label{TFB eq 1}
	\end{align}
In order to calculate the four terms displayed {\blu on the right hand side} of \eqref{TFB eq 1} we use the   fact
that   $\nu$ is translation invariant and quasi-free.
In particular,  
it is possible to calculate that 
for any 
$p_1 ,p_2,q_1,q_2 \in \Lambda^*$
the following relation {\blu holds} true 
\begin{align}
	\nonumber 
	\nu\Big(
	a^*_{p_1}(t) a_{q_1 } (t)  a^*_{p_2  }(s) a_{q_2 } (s)
	\Big) 
	&   = 
	\delta(q_1 - p_1)
	\delta(q_2  - p _2)
	f_0 (p_1) 
	f_0(p_2) \\
	& + 
	\delta(q_1 - p_2 )
	\delta(q_2  - p_1)
	e^{i(t-s) (E_{p_1}- E_{p_2})}
	f_0(p_1) 
	\widetilde f _0 (p_2 )   \  . 
\label{nu evaluation}
\end{align}

\vspace{1mm}

This implies that the third and fourth term in \eqref{TFB eq 1} are zero. 
Indeed, for the third term we choose in \eqref{nu evaluation}
$p_1  = p_1$, $q_1 = p_1 - k$, $p_2 = h_2$ and $q_2 = h_2 - k $ to find  that 
\begin{align}
	\nonumber 
	\nu\Big(
	a^*_{p_1}(t) a_{ p_1 - k  } (t)  a^*_{ h_2  }(s) a_{ h_2 - k } (s)
	\Big) 
	&   = 
|\Lambda |
	\delta( k )
	f_0 (p_1) 
	f_0(h_2 ) \\
	& + 
|\Lambda |
	\delta(k)
	\delta(p_1 - h_2)
	e^{i(t-s) (E_{p_1}- E_{h_2})}
	f_0(p_1) 
	\widetilde f _0 (h_2 )     . 
\end{align}
It suffices to note that the right hand side is proportional to $\delta(k)$, and that  $		[  \vp( p_1 ) - \vp (p_1 - k ) ] \delta(k) = 0$.
This shows that the third term has a null contribution. The same analysis holds for the fourth term in \eqref{TFB eq 1}.

 \vspace{1mm}

In a similar fashion, the first and second term in \eqref{TFB eq 1} can be collected and rewritten thanks to \eqref{nu evaluation} to find that 
	\begin{align}
		\nu  \big( 			 
		D_k^*  & (t,\vp)D_k (s)
		\big)\\
	\nonumber
		&  = 
|\Lambda |
\int_{  \Lambda^* }
\chi^\perp (p , p-k)
[  \vp(p) - \vp (p - k ) ]
e^{
	i (t-s)
	(E_p - E_{ p -k  })
}
\ 
f_0(p )
\widetilde{ f _0} 
( p - k )
\d p 
\\
	\nonumber
& + 
		|\Lambda |
		\int_{  \Lambda^* }
		\chi(h,  h+k)									 
		[  \vp(  h ) - \vp ( h+ k  ) ]
		e^{
			i (t-s)
			(E_h - E_{ h + k   })
		}
		\ 
		f_0( h  ) 
		\widetilde{ f _0} 
		(h + k )
		\d h  
	\end{align}
where we have dropped all terms in \eqref{nu evaluation} containing $\delta(k)$. 
Now, we integrate  in time the above equation to find that 
	\begin{align}
\nonumber     
		\int_0^t 
		\int_0^{t_1}
		 \nu_{t_2}
		\Big(
		G_k(t_1 - t_2  ) & 
		D_k^*(t_1,\vp)
		D_k (t_2)
		\Big)
		\d t_2
		\d t_1 
		\\
		& = 
				|\Lambda | 
		\int_{  \Lambda^* }
\chi^\perp (p , p-k)
		\Bigg( 
		\int_0^t 
		\int_0^{t_1}
		G_k( t_2  )
		e^{		\nonumber  
			i t_2
			(E_p - E_{ p  -   k   })
		}
		\d t_2 \d t_1 
		\Bigg)  \\
\nonumber 
 & \qquad  \qquad \qquad \qquad 	 \times 	[  \vp(  p  ) - \vp (  p -k   ) ]
		f_0( p ) \widetilde f _0 (p - k )
		\d   p 
		\\
& + 
		|\Lambda | 
		\int_{  \Lambda^* }
		\chi (h , h + k)
		\Bigg( 
		\int_0^t 
		\int_0^{t_1}
		G_k( t_2  )
		e^{		\nonumber  
			i t_2
			(E_h - E_{ h + k   })
		}
		\d t_2 \d t_1 
		\Bigg)  \\
&  \qquad  \qquad \qquad \qquad  	 \times 	[  \vp(  h ) - \vp ( h+ k  ) ]
		f_0( h  ) \widetilde f _0 (h+k )
		\d   h 
		\label{TFB eq 2} \  . 
	\end{align}

To finalize the proof, let us identify the right hand side of the last displayed equation, with the operator $B_t$ as given by Def. \ref{definition B}. 	
Indeed, consider the second term of Eq. \eqref{TFB eq 2}. 
We may calculate explicitly the integrals with respect to time as follows. 
First, we  rewrite $G_k(t)$  in terms of the variables $r= p-k $ 
\begin{equation}
		G_k(t -s  )    
	= 
	\int_{  \Lambda^* }
	\chi(r)
		\chi^\perp(r+k)
	e^{  -i  (t -s )  (E_r   +  E_{r + k})  }					
	\d r		\ . 
\end{equation}
Let $h \in \B\cap \B-k$. After integration in time and taking the real part 
we find 
	\begin{align}
		2 \mathrm{Re} \, 		 
		\int_0^t 
		\int_0^{t_1}					
		\nonumber
		G_k( t_2  )
		& e^{	 
			i t_2
			(E_h - E_{ h + k   })
		}	 					 
		\d t_2 \d t_1 		\\ 
		& =
		\int_{  \Lambda^* }
		\chi(r)
	\chi^\perp(r+k)
														\nonumber
		\  2 \mathrm{Re}  
		\int_0^t 
		\int_0^{t_1}
		e^{   i  t_2  (E_h - E_{ h + k  } - E_r  -  E_{r + k})  }		
		\d t_2 \d t_1 			
		\d r 	\\ 
		& =  						 
		\int_{  \Lambda^* }
		\chi(r)
		\chi^\perp ( r+ k )
 \, 		2 \pi 
  t    		\delta_t  \, 
		\big(  
		E_h - E_{ h + k  }  -  E_r    -  E_{r+k }
		\big)  
		\d r 			\nonumber \\ 
		&  =  2 \pi 
		t \,  \alpha^H_t( h,k) \ . 
	\end{align}
Here, $\delta_t (x) $
corresponds to the mollified Delta function defined as 
$		\delta_t(x) = t \delta_1 (tx)$ 
where
	$	\delta_{1}(x) =  \frac{2}{\pi}  \sin^2(x/2)/x^2 .$ 
	On the other hand, $\alpha^H_t $ 
	corresponds to  the object defining $B_t$, see \eqref{alpha H} in Def. \ref{definition B}. 
A   similar calculation shows that the first term of the right hand side of Eq. \eqref{TFB eq 2}
can be put in the following form 
	\begin{align}
\chi^\perp(p,p-k)
	2 \mathrm{Re} \, 		 
	\int_0^t 
	\int_0^{t_1}					
	\nonumber
	G_k( t_2  )
	 e^{	 
		i t_2
		(E_p - E_{ p -  k   })
	}	 					 
	\d t_2 \d t_1 		 
	 =  2\pi 
	t \,  \alpha^P_t(p ,  k)  
\end{align}
where $\alpha_P$ is the quantity given in \eqref{alpha P}, see Def. \ref{definition B}. 
We integrate against $ | \hat V (k)  |^2 $  and change variables $h \mapsto h-k, \ p \mapsto p + k $   
	in the ``gain term"
of \eqref{TFB eq 2}
	to find that 
	\begin{align}
		2 \mathrm{Re} \, 		 
		\int_{  \Lambda^{*2 } }
		| \hat V (k)  |^2 
		\int_0^t 
		\int_0^{t_1}
		\nu
		\Big(
		G_k(t_1 - t_2 )
		D_{ k}^*( t_1 ,\vp)
		D_{k }( t_2 )
		\Big) 
		\, 
		\d t_1 \d t_2   \nonumber 
		\d k      
		= -t 
		\< \vp ,B_t[f_0] \>  
	\end{align}
	where $B_t$ is the operator  given  in Eq. \eqref{mollified B}. This finishes the proof. 
\end{proof}

\begin{proof}
	[Proof of Lemma \ref{claim 2}]
Let us fix throughout the proof the time label $t \in \R$, the parameter $m>0$
and the test function $\vp \in \ell^1$. 
Based on the fact that $ \|	 G_k(\tau)	\|_{B(\F)}\lesssim R$ for all $ k\in\Lambda^*$ and $ \tau \in \R$, 
our starting point is the  following elementary inequality
\begin{align}	
 | 	M^\delta_1   (t,\vp ) | 
 \, 	  \lesssim   \, 
R 	  	\,	\|	 \hat V	\|_{\ell_2}^2 		\, 
	 	t^3  
	 \sup_{  k\in \supp \hat V, t_ i \in [0,t]}
	 \Big| 	\nu_{  t_3  }
	 \Big(  
	 \Big[   
	 \, D_k^* (t_1 ,\vp)
	 D_k ( t_2 ) \,  , \, 
	 \h_I( t_3 )		 \, 
	 \Big] 
	 \Big)  
	 \Big| \    . 
	 \label{starting point M delta 1}
\end{align}
Thus, it suffices to estimate the $\sup$
quantity in Eq.  \eqref{starting point M delta 1}.
For notational convenience 
	we do not write explicitly the time variables $t_i \in [0,t]$ for $i=1,2,3$--since our estimates are uniform in these variables, there is no risk in doing so. 
	In addition, we shall  only 
	give estimates for pure states $  \< \Psi , \  \cdot \  \Psi \> $
	and then apply Remark \ref{remark states} 
	to conclude estimates for the mixed state $\nu$. 
	Finally, we shall extensively use the results contained in Section \ref{toolbox1} --that is,  the estimates of Type-I, Type-II, Type-III and Type-IV, 
	contained in Lemma \ref{lemma type I}, \ref{lemma type II}, \ref{lemma type III} and \ref{lemma type IV}, respectively,
	together with the  several commutation relations. 
	
	\vspace{3mm}
	
Let us fix $k\in \supp \hat V$. 	We begin by expanding the commutator in \eqref{starting point M delta 1} as follows 
	\begin{align}
\label{TFB eq 3}
		\nu
		\big(
		[
		D_k^*(\vp)
		& D_k , \h_I 
		]			\nonumber 
		\big) \\
		& =
\lambda 
		\nu
		\big(
		[
		D_k^*(\vp)
		D_k , V_F 
		]
		\big) 
		+
\lambda 
		\nu
		\big(
		[
		D_k^*(\vp)
		D_k , V_{FB }
		]
		\big) +
\lambda 
		\nu
		\big(
		[
		D_k^*(\vp)
		D_k , V_{B}
		]
		\big)  \ . 
	\end{align}
	Let us estimate the three terms on the right hand side of Eq. \eqref{TFB eq 3}, separately. 
	We do this in the following items (I), (II) and (III). 
	
	\vspace{3mm}
	
	\noindent \textit{(I) the F term of \eqref{TFB eq 3}.}
	A straightforward expansion of $V_F$ based on the representation \eqref{VF(t)}
	 yields
	\begin{align}
\nonumber 
		[
		D_k^*(\vp)
		D_k , V_F 
		]
		& = 
		\int_{  \Lambda^* } 	\hat V  (\ell) \
		D_k^* (\vp )	 [D_k , D^*_\ell ]	D_\ell 
		\ \d \ell 	
		+ 
		\int_{  \Lambda^* } 	\hat V  (\ell) \
		D_k^* (\vp )	 D_\ell^* [D_k , D_\ell ] 
		\   \d \ell 	  \\
& 		+ 
		\int_{  \Lambda^* } 	\hat V  (\ell)  \
		D_\ell^* [D_k^* (\vp ) ,  D_\ell ]  D_k
		\ \d \ell 		
 + 
 \int_{  \Lambda^* } 	\hat V  (\ell)  \ 
		[D_k^* (\vp ) ,  D_\ell^*]  D_\ell  D_k 
		\  \d \ell 	 \ . 
	\end{align} 
	Each of the   four  terms {\blu on the right hand side} above is estimated
	in the same way. 
	Let us look in detail at the first one. 
	For   
	 $\Psi \in \F$ and $\ell \in \Lambda^*$, 
	  we  find using
	 the   Type-I estimate for $D_\ell$ and $[D_\ell ,  D_k]$,  
	 the  Type-IV estimate  for $D_k(\vp)$
	  	 and the commutation relation $   [\N , D(\vp)] =0 $
	\begin{align}
		\label{lemma reminder estimate 1}
		| \<	\Psi , D_k^* (\vp )	 [D_k , D^*_\ell ]	D_\ell  \Psi \> | 								
		\nonumber
		& =
		| \<  [D_\ell  , D_k  ]	 D_k  (\vp )	 	\Psi  ,   D_\ell  \Psi \> | 		\\
		\nonumber
		&  \leq 
		\|   \N 	D_k  (\vp )	 	\Psi  	\| 		\, \|	\N \Psi 	\| 		\\
		\nonumber
		&   = 
		\|   	D_k  (\vp )	 \N 	\Psi  	\| 		\, \|	\N \Psi 	\| 		\\
		\nonumber
		& \leq 
		\|	D_k(\vp)	\|  
		\|	 \N \Psi	\|^2 
		\\
		& \leq 
		|\Lambda |
		\|	 \vp 	\|_{\ell^1   }
		\|	 \N \Psi	\|^2 \  .
	\end{align}
	We conclude that there is a constant $C >0$ such that 
	\begin{equation}
\label{TFB F term}
		\nu 
		\big( 
		[
		D_k^*(\vp)
		D_k , V_F 
		]
		\big) \leq 
		C
		|\Lambda |
		\|	 \hat V	\|_{\ell^1 }
		\|	 \vp	\|_{\ell^1 }
		\nu(\N^2 ) \ . 
	\end{equation}
	
	\vspace{2mm}
	
	\noindent \textit{(II) the FB term of \eqref{TFB eq 3}.}
	The relation $   \overline{	\nu ( O )	} = \nu (O^*)$ 
	and a straightforward expansion 
	shows that 
	\begin{align}
\label{TFB eq 4}
		\nu
		\big(
		[
		D_k^*(\vp)
		D_k ,  & V_{FB }	 
		]
		\big) \\
\nonumber 
 & 		 =
		\int_{  \Lambda^{*  } }
		\hat V (\ell)
		\ 
		\nu
		\big(
		[
		D_k^*(\vp)
		D_k ,D^*_\ell 	\, b_\ell 
		]
		\big) 
		\ \d \ell 
		-
		\int_{  \Lambda^{*  } }
		\hat V (\ell)
		\ 
		\overline{
			\nu 
			\big(
			[
			D_k^* 
			D_k(\vp) ,
			D^*_\ell 	\, b_\ell 
			]
			\big) 
		} 
		\ \d \ell 	  . 
	\end{align}
We only estimate the first term in \eqref{TFB eq 4}, since the second one is analogous. 
Indeed, 	we expand the commutator to find that 
	\begin{align}
		\label{lemma remainder eq 1}
		\nu
		\big(
		[
		D_k^*(\vp)
		D_k ,D^*_\ell 	\, b_\ell 
		]
		\big) 
		& = 
		\nu
		\big(
		D_k^*(\vp)
		D^*_\ell 
		[
		D_k , 	\, b_\ell 
		]
		\big) 
		+
		\nu
		\big(
		D_k^*(\vp)
		[
		D_k ,  
		D^*_\ell 
		]
		\, b_\ell 
		\big) 				\\
		\label{lemma remainder eq 2}
		& \quad + 
		\nu
		\big(
		D^*_\ell 
		[
		D_k^*(\vp)
		, 	\, b_\ell 
		]
		D_k 
		\big) 
		+
		\nu
		\big(
		[
		D_k^*(\vp)
		,  
		D^*_\ell 
		]     
		\, b_\ell 
		D_k
		\big) 			 \ . 
	\end{align}
	We bound these four terms in the  following three items below. 

\begin{itemize}[leftmargin=*]
\item
	Since both $[D_k, b_\ell]$ and $b_\ell$ satisfy Type-II estimates, 
	the two 
	terms in \eqref{lemma remainder eq 1} are bounded above in the same way.
	Let us look at the first {\blu one} in detail. 
	Indeed,  for   $\Psi \in \F$ and $\ell \in \supp \hat V$
	we find 
	\begin{align}
		| 
		\<
		\Psi   , 
		D_k^*(\vp)
		D^*_\ell 
		[
		D_k , 	\, b_\ell 
		]
		\Psi 
		\>
		|
		&  \ = \ 
		| 
		\<
		D_\ell
		D_k(\vp)
		\Psi 
		, 
		[
		D_k , 	\, b_\ell 
		]
		\Psi 
		\>
		|			\nonumber \\ 
		&
		\ 	\leq  \ 
		\|	D_k(\vp)	\| 				 
		\		\|	    \N   \Psi 	\|	 \ 
		\nonumber
		\|  [D_k , b_\ell]	 \Psi 	\|  
		\\
		&  \ \lesssim  \ 
		|\Lambda |
		\|	 \vp	\|_{\ell^1 }
		\|	 \N \Psi	\|
		R^{\frac{1}{2}}
		\
		\| \po^{1/2} \Psi\|    
	\end{align}
	where we have used the Type-I estimate  for $D_\ell$, 
	the Type-II estimate for $[D_k , b_\ell]$, 
	the Type-IV estimate  for  $ 	 D_k(\vp)	$,
	and
	the commutation relation $[\N, D_k (\vp)]=0$. 
	
\item
		For	the first term in \eqref{lemma remainder eq 2}  we consider $\Psi \in \F$ and $\ell \in \supp \hat V$. We find
	\begin{align}
\nonumber 
		| \<
		\Psi , 
		D^*_\ell 
		[
		D_k^*(\vp)
		, 	\, b_\ell 
		]
		D_k 
		\Psi
		\> |  
 & \leq 
		\|
		[
		D_k^*(\vp)
		, 
		b_\ell 
		] 
		\|  
		\, 
		\|	 \N\Psi	\|^2 	  \\ 
& \lesssim 
		|\Lambda|
		p_F^{-m}
		\| 	\vp 		\|_{\ell_m^1 }
		\|	 \N\Psi	\|^2  \ . 
	\end{align} 
	where we have  used 
	the Type-I estimate for $ D_k$ and $D_\ell$,  
	and the Type-III estimate $   [D_k^* (\vp) , b_\ell] .$
	
\item
		For	the second term in \eqref{lemma remainder eq 2}  we consider $\Psi \in \F$ and $\ell \in \supp \hat V$. We find
	\begin{align}
\nonumber 
		| \langle 
		\Psi , 
		[
		D_k^*(\vp)
		,  
		D^*_\ell 
		]
		& \, b_\ell 
		D_k
		\Psi
		\rangle | \\ 
		& 
		\leq 
		| \<
		[  		D_\ell 		 ,		 D_k(\vp) 		]
		\Psi , 																\nonumber 
		[  b_\ell  ,  D_k] 
		\Psi
		\>|						
		+ 
		| \<
		D_k^* [  		D_\ell 		 ,		 D_k(\vp) 		]
		\Psi , 
		b_\ell  \Psi
		\>|	\\
		& 
		\lesssim  
		\|	 [D_\ell , D_k(\vp)] \| 
		\| 	\Psi 	\|		 
		\|	 [b_\ell, D_k ] \Psi	\|			\nonumber 
		+ 
		\| [D_\ell , D_k( \vp )]		\|
		\|	 \N \Psi 	\|
		\| b_\ell \Psi		\|		\\
		& \lesssim 
		|\Lambda |
		\|	 \vp 	\|_{\ell^1 }
		R^{\frac{1}{2}}
		\|	(  \N+1)  \Psi	\|
		\
		\| \po^{1/2} \Psi\|    \   , 
	\end{align}
	where, we have  used
	the Type-I estimate for $D_k^*$, 
	Type-II estimates for $[b_\ell, D_k]$  and $b_\ell$, 
	Type-IV estimates for 	$[D_\ell, D_k(\vp)]$. 
\end{itemize}

	We put back the three estimates found in the three items above
	to find that 
	there exists a constant 
	$C>0$ such that   
	\begin{align}
\label{TFB FB term}
		\nu
		\big(		
		[
		D_k^* (\vp ) D_k  ,   V_{FB}
		]
		\big)   \leq C 
		\|	 \hat V	\|_{\ell^1  }
		\|		 	\vp	\|_{\ell^1_m } 
		\, |\Lambda |
		\Big[
		R^{ \frac{1}{2}}
		\nu(\po)^{\frac{1}{2}}
		+ 
		p_F^{- m }
		\nu(\N^2 )^{ \frac{1}{2}	} 
		\Big]
		\nu(\N^2 )^{\frac{1}{2}}
		\   		 \  . 
	\end{align}

	\vspace{2mm}
	
	\noindent \textit{(III) the B term of \eqref{TFB eq 3}.}
	Similarly as we dealt with the second term,  we expand 
	\begin{align}
\label{TFB eq 5}
		\nu
		\big(
		[
		D_k^*(\vp)
		D_k , V_{B}
		]
		\big) 
		& =
		\int_{  \Lambda^{*  } }
		\hat V (\ell)
		\nu 
		\big(
		[ D_k^*(\vp) D_k 	,	b_\ell^* b_\ell 	]	 \d \ell 
		\big) 	\\ 
\label{TFB eq 6}
		& + 
		\frac{1}{2}
		\int_{  \Lambda^{*  } }
		\hat V (\ell)
		\nu 
		\big(
		[ D_k^*(\vp) D_k 	, b_{-\ell} b_\ell 	]	 
		\big) 
		\d \ell 
		\\
		& -
		\frac{1}{2}
		\int_{  \Lambda^{*  } }
		\hat V (\ell)
		\overline{ 
			\nu 
			\big(
			[ D_k^* D_k(\vp ) 	, b_{-\ell} b_\ell 	]	  
			\big) 
		} 
		\d \ell  \ . 
	\end{align}
	We only present a proof of the estimates for 
	the terms in \eqref{TFB eq 5} and \eqref{TFB eq 6}. 
	We do this in (III.1) and (III.2) below.
Since the third one is analogous to the second one,   we omit it. 
In order to ease the notation we shall   omit the indices $k,\ell \in \supp \hat V$. 
	
	\vspace{3mm}
	
	\begin{itemize}[leftmargin=*]
\item
		 \textit{Analysis of \eqref{TFB eq 5}}.  
		We expand the commutator to find that 
		\begin{align}
			\label{lemma FB third term eq 1}
			[D^*(\vp) D ,b^*b]
			=
			D^*(\vp)  b^* [D,  b]
			+
			D^*(\vp)  [D,  b^* ]b 
			+ 
			[D^*(\vp)  , b^*b]  D 
		\end{align}
		and estimate each term separately. 
	Let us fix  a   $\Psi \in \F $. 
	
	\vspace{1mm} 
	
\noindent 	$\blacklozenge$ The first term in \eqref{lemma FB third term eq 1} may be estimated as 
		\begin{align}
			| 	\langle	 \Psi , 		D^*(\vp)  &  b^* [D,  b] 		\Psi		 \rangle | 	 \\ 
			\nonumber 
			&
			 \  \leq  \ 
			\|    [ b ,D( \vp  ) ]		\| 
			\, 
			\|	 \Psi	\|	
			\, 
			\|   [D,b]	 \Psi	\|
			+ 
			\|	 D(\vp )	\| 	
			\, 	
			\|	  b\Psi	\|
			\, 
			\|	  [D,b] \Psi 	\|  \\ 
			&  
		 \ 	\lesssim  \ 
			|\Lambda| p_F^{-m }
			\|	  \vp	\|_{\ell^1_m }
			\|	 \Psi	\|	
		R^{\frac{1}{2}	}
			\|	 \po^{1/2} \Psi 	\|
			+ 
			|\Lambda |
			\|	 \vp	\|_{\ell^1 }
		R
			\|	 \po^{1/2}	 \Psi\|^{2 }		\nonumber \\ 
			& 
			 \  \leq  \ 
			\|   \vp	\|_{\ell^1_m }
			|\Lambda |
			\Big(
			p_F^{-m}
			\|	 \Psi	\| 	
			+  
		R^{\frac{1}{2}	}
			\|	 \po^{1/2} \Psi	\|
			\Big)  
		R^{\frac{1}{2}	}
			\|	 \po^{1/2} \Psi	\|
			\ . 	\nonumber 
		\end{align}
	Here, we have used Type-II estimates for $[D,b]$ and $b$, 
	Type-III estimates for  $[b,D(\vp)]$, 
	and Type-IV estimates for    $D(\vp)$. 
	
	\vspace{1mm }
	
\noindent 	$\blacklozenge$ The second term in \eqref{lemma FB third term eq 1}
		may be estimated  as 
		\begin{align}
			| 	\langle 	 \Psi , 	D^*( & \vp)    [D,  b^* ]b 	\Psi		 \rangle  | 	
			\nonumber  \\
			& 
			\  \leq 				 \ 
			\|	 D(\vp)	\|	
			\, 
			\|   [ b ,D^*  ] \Psi		\|	
			\,
			\|		 b\Psi\|
			+ 
			\|	 [	[ 	 b , D^* 	]	 , D(\vp)]	\|	
			\, 
			\|	 \Psi	\|
			\, 
			\|	 b\Psi	\|		\\ 
			\nonumber
			&  
			\ \lesssim   \ 
			|\Lambda|	\| \vp	\|_{\ell^1 }
		R
			\|	 \po \Psi	\|^2
			+
			|\Lambda| p_F^{-m }
			\|	   \vp	\|_{\ell^1_m}
			\|	 \Psi	\|
		 R^{\frac{1}{2}}
			\|	 \po^{\frac{1}{2}} \Psi 	\|		\\
			\nonumber
			&  \  \lesssim  \ 
			\|	 \vp	\|_{\ell^1_m }
			|\Lambda |
			\Big(
			p_F^{-m}
			\|	 \Psi	\| 	
			+  
		R^{\frac{1}{2}	}
			\|	 \po^{1/2} \Psi	\|
			\Big)  
		R^{\frac{1}{2}	}
			\|	 \po^{1/2} \Psi	\| \ . 
			\nonumber
		\end{align}
Here, we have used Type-II estimates for
$[b,D^*]$ and $b$, 
the Type-III estimate for 
$  [	[b, D^*] ,  D(\vp)	]$, 
and Type-IV
estimates for 
$D(\vp)$.

\noindent 	$\blacklozenge$ 	The third term in \eqref{lemma FB third term eq 1}  may be estimated as  
		\begin{align}
			\label{lemma FB third term eq 3}						\nonumber 
			| \<			\Psi ,		[D^* (\vp) , b^*b ]	 D 	\Psi 		\> |		 
			&
			 \  \leq  \ 
			\|	 [	D^*(\vp)	 , b^*b ]	\|	
				\, 	
					\|	 \Psi	\|	
						\, 	
							\|	D \Psi	\|		\\
			& 
			\  \lesssim  \ 
R   {|	\Lambda	|} p_F^{-m }					 
			\|	  \vp\|_{\ell^1_m}
			\,  \|			 \Psi	\| 	\, \|	 \N \Psi	\|  	 \ . 
		\end{align}
Here, we have used the Type-I
estimate for $D$, and
  Type-III estimates
and the operator norm bound for $b$ (see \eqref{bounds on b})
for 
$ 		\|  [D^*(\vp) , b^*b ] ] \|		 \leq   \|	b^* \|		 \,  \|  [D^*(\vp) , b]	 \|  +  \|		[D^*(\vp) , b^*]  \|		\, \|		b  \|	 . 	$

\vspace{4mm }

\item
 \textit{Analysis of \eqref{TFB eq 6}}.  
		Similarly as before, we expand  the commutator 
		\begin{align}
			\label{lemma FB third term eq 2}
			[	D^*(\vp) D  ,  b b 	]
			=
			D^*(\vp) 		b 	[ D , b]
			+ 
			D^*(\vp)
			[D,b] b 
			+ 
			[D^*(\vp ) , b  b ]		D   
		\end{align} 
		and estimate each term separately. 
		We let $\Psi \in \F$ . 
		
		\vspace{1mm}
\noindent 	$\blacklozenge$ The first term in \eqref{lemma FB third term eq 2} 		may be estimated as 
		\begin{align}
\nonumber 
			|			\<	\Psi , D^*(\vp)	  b [D, b ]		\Psi 		 \>		|	
			& 	 \ \leq  \ 
			\|	 D(\vp)	\|		  \| b \|		\|	 \Psi	\|			 
			\|	 [D,b] \Psi	\|	  \\
 & 		  \	\lesssim  \ 
			| \Lambda | 
			\|	 \vp	\|_{\ell^1 }
		R^{	\frac{3}{2}	}
			\|	 \Psi	\|		\|	 \po^{1/2 } \Psi	\|		\ . 
		\end{align}
Here, we have used the Type-II estimate for $[D,b]$, the Type-IV
estimate for $D(\vp)$, 
and the operator norm bound $\|	 b\| \lesssim R$ . 

\vspace{1mm}

\noindent 	$\blacklozenge$ The second term in 
		\eqref{lemma FB third term eq 2}
		may be estimated as 
		\begin{align}
\nonumber 
			|			\<	\Psi , D^*(\vp)	   [D, b ]	 b 	\Psi 		 \>		|	
			&  \ 	\leq  \ 
			\|	 D(\vp)	\|		 	
			\|	 [D,b]	\|	
			\|	 \Psi 	\|
			\|	b \Psi 	\| \\ 
		&  \ 	\lesssim  \ 
			|\Lambda |
			\|	 \vp	\|_{\ell^1 }
		R^{	\frac{3}{2}	}
			\|	 \Psi	\|		\|	 \po^{1/2 } \Psi	\|		\ . 
		\end{align}
Here, we have used the Type-II estimate for $b$, the Type-IV
estimate for $D(\vp)$, 
and the operator norm bound $\|	[D, b]\| \lesssim R$ . 

\vspace{1mm}
\noindent $\blacklozenge$ The third term in 
		\eqref{lemma FB third term eq 2}
		may be estimated as 
		\begin{align}
\nonumber 
			|			\<	\Psi ,    		D [D^*(\vp) , bb] 	\Psi 		 \>		|	
		& 
			 \  \leq  \ 
		\|	 [	D^*(\vp)	 , bb ]	\|	
		\, 	
		\|	 \Psi	\|	
		\, 	
		\|	D^* \Psi	\|		\\
		& 
		\  \lesssim  \ 
		R   {|	\Lambda	|} p_F^{-m }					 
		\|	  \vp\|_{\ell^1_m}
		\,  \|			 \Psi	\| 	\, \|	 \N \Psi	\|  	\  . 
	\end{align}
Here, we have used the Type-I
estimate for $D^*$, and
Type-III estimates
and the operator norm bound for $b$ (see \eqref{bounds on b})
for 
$ 		\|  [D^*(\vp) , bb ] ] \|		 \leq   \|	b \|		 \,  \|  [D^*(\vp) , b]	 \|  +  \|		[D^*(\vp) , b]  \|		\, \|		b  \|	 . 	$ 
	\end{itemize}
	Putting together the estimates found in the six   points above, 
	we find that 
	there exists a constant $C>0$
	such that  
	for all $ k \in \supp \hat V$
	\begin{equation}
\label{TF B term}
		\nu 
		\big( 
		[
		D_k^*(\vp)
		D_k , V_B
		]
		\big) 
		\leq 
		C
		|\Lambda |
				\|	 \vp	\|_{\ell^1_m }
		\|	 \hat V	\|_{\ell^1 }
\Big[
R^{\frac{3}{2}}
\nu(\N_S)^{\frac{1}{2}}
+
R
\nu(\N_\S)
+ 
\frac{R^{\frac{1}{2}}}{p_F^{m}}
\nu(\N_\S)^{\frac{1}{2}}
+
\frac{R}{p_F^{m}}  
\nu(\N^2)^{\frac{1}{2}}
 \Big] \ . 
	\end{equation}

	\vspace{3mm}

Finally, we can go back to the original decomposition found in \eqref{TFB eq 3}, plug it back in the starting point  \eqref{starting point M delta 1},
and use the estimates found in Eqs. \eqref{TFB F term}, \eqref{TFB FB term} and \eqref{TF B term} to find that
there exists a constant $C>0$
such that 
	\begin{align}
		| M_1^\delta  (t,\vp ) | 
		&  
		\leq C  		 |\Lambda | 
		\lambda t^3 
	 R 														\nonumber 
		\|	 \vp	\|_{\ell^1_m }
\|	 \hat V	\|_{\ell^2}^2 	\| 
	\|   \hat V\|_{\ell^1}	
		 \\
		&  \quad \times 
		\sup_{0 \leq \tau \leq t 	}
		\Big(	
		R^{\frac{3}{2}	}
		\nu_\tau (\po)^{  \frac{1}{2}}
		+ 
		R \nu_\tau (\po )
		+ 
		\frac{R^{ \frac{1}{2}}}{p_F^m}
		\nu_\tau (\po)^{  \frac{1}{2} }
		+ 
		\frac{R}{p_F^m}
		\nu_\tau (\N )^{ \frac{1}{2} }
		\Big)  \ .
	\end{align}
To conclude, we note that $\nu(\N_\S) \leq R^{1/2} \nu(\N )$ so that the third term {\blu on the right hand side} above can be absorbed into the fourth one. 
This finishes the proof. 
\end{proof}

Let us estimate the second term of the right hand side in \eqref{T FB 1 }.

\begin{lemma}
	\label{claim 3}  
For all $m>0$ there exists a constant $C>0$ such that 	
for all $ t  \geq 0 $ and $\vp \in \ell^1 $ 
 	the following estimate holds true 
 	\begin{align}
\label{MR estimate}
 		| 	 M^\calR  (t, \vp ) | 
 		 \,  	\leq   \, 
 		 C 
 		t^2 
 		\|		 \hat V	\|_{\ell^1 }^2 
 		|\Lambda|
 		\|	 \vp 	\|_{\ell^1_m} 
 		\sup_{ \tau \leq t }
 		\Big(
 		R^{\frac{1}{2}}
 		\nu_\tau (\po)^{ 	\frac{1}{2}	 }
 		+ 
 		p_F^{ -m  }
 		\Big) 					
 		\nu_\tau (\N^2)^{ \frac{1}{2} }  
 		\ . 
 	\end{align} 
 \end{lemma}

  \begin{proof} 
Let us fix $m>0$,  $t   \geq 0  $ and $\vp\in \ell^1$.
Going back to the definition of the main term in $\eqref{TFB main term}$,
we plug in   the remainder operator 
 $\calR_{ k , \ell  }$ 
 defined in \eqref{cal R k ell}, 
from which the elementary inequality follows 
\begin{align}
 | 		M^\calR   (t,\vp) | 
	& 
 \ 	\lesssim  \ 
t^2 
\|	 \hat V	\|_{\ell^1}^2
	\sup_{  k,\ell \in \supp \hat V, t_ i \in [0,t]}
\Big|
\nu_{t_2 }
\Big(
D_{t_1}^*(k,\vp)
\calR_{k ,\ell }(t_1, t_2)
D_{t_2}(\ell )
\Big)  
\Big|	 \ . 
\end{align}
 	Let us estimate the supremum quantity  in the above equation. 
 	Since our estimates are uniform    in $t_1,t_2$ we shall omit them. 
 	{\blu Letting} $\Psi \in \F$, we find that 
 	\begin{align}
 		| \<	
 		\Psi			,
 		D_{k}^*( \vp)
 		\calR_{k ,\ell } 
 		D_\ell 
 		\Psi 		\> | 
 		& 
 		\   \leq  \ 
 		| \<	
 		D_{k}^*( \vp)
 		\calR^*_{k ,\ell } 
 		\Psi			,
 		D_\ell 
 		\nonumber 
 		\Psi 		\> |  
 		+ 
 		| \<	
 		\Psi			,
 		[ 
 		D_{k}^*( \vp)
 		, 
 		\calR_{k ,\ell } 
 		] 
 		D_\ell 
 		\Psi 		\> | 		\\
 		& 
 		\  \leq  \ 
 		\|	D_k^*(\vp ) 	\|
 		\|	 \calR_{ k , \ell  } \Psi	\| 
 		\|	 D_\ell \Psi	\|
 		+ 
 		\|	  \Psi 	\|
 		\|	  [	  D_k^*(\vp) , \calR_{k , \ell }	]		\|
 		\|	 D_\ell  \Psi	\|  \ . 
 	\end{align}
Letting $k,\ell \in \supp \hat V, $	we find the following estimates for the   quantities containing $\calR_{k,\ell }$
 	\begin{align}
 		\|		
 		\calR_{ k , \ell  }
 		\Psi
 		\| 
 		\lesssim 
 	R^{\frac{1}{2}}
 		\|	
 		\po^{1/2 }
 		\Psi 
 		\|	 
 		\quad
 		\t{and}
 		\quad 
 		\|		
 		[  D_k^* (\vp ) ,    	\calR_{ k , \ell  }  ] 
 		\| 
 	\lesssim 
 	|\Lambda|   p_F^{-m}
 		\|		  \vp	\|_{\ell^1_m }	    \  . 
 	\end{align}
The proof of these estimates follows the same lines of the proof  of Lemma \ref{lemma type II}
and \ref{lemma type III}, so we shall omit it. 
We combine the last three displayed equations together with Remark \ref{remark states} to conclude 
the proof of the estimate contained in Eq. \eqref{MR estimate}. 
 \end{proof}

Finally, we turn to the proof of the last lemma.
 
 \begin{proof}[Proof of Lemma \ref{lemma TFB main term}]
The triangle {\blu inequality}  and the decomposition $M = M^\delta_0 + M^\delta_1 + M^\calR$
gives $|M - M_0^\delta | \leq |M^\delta_1| + |M^\calR|$. 
It suffices  then to use  the results  contained in Lemma \ref{claim 1}, \ref{claim 2} and \ref{claim 3}. 
 \end{proof}

 \subsection{Analysis of   the remainder terms}
 In this subsection, we estimate the remainder terms $R^{(i)}$
  (see \eqref{TFB reminder terms})
 and give a proof of Lemma \ref{lemma TFB reminder terms}.

 \begin{proof}[Proof of Lemma \ref{lemma TFB reminder terms}]
Throughout the proof, we fix $m>0$, $t  \geq 0 $ and $\vp \in \ell_m^1$.
We make extensive use of the Type-I, Type-II, Type-III and Type-IV estimates contained in 
Lemmas \ref{lemma type I}, 
\ref{lemma type II},
\ref{lemma type III},
and
\ref{lemma type IV}, respectively,
together with the operator bound {\blu $\|	b_k(t) \| \leq R $, see \eqref{bounds on b}.} 
Due to the similarities, we only show all the details for the proof of (1), and 
only give the key estimates for the proofs of (2), (3), and (4). 

\vspace{1.5mm}

 	\noindent \textit{Proof of (1)}
Our starting point is the elementary estimate
\begin{align}
  | 	R^{(1)} (t,\vp)  | 
& 
 \,    \lesssim  		\, 
t^2
\|		 \hat V	\|_{\ell^1}^2 
\sup_{  k,\ell \in \supp \hat V, t_ i \in [0,t]}
\Big|
\nu_{t_2 }
\Big(
D_{ k }^*(t_1,\vp)
B^*_\ell (t_2)
[b_{ k}(t_1),  D_{\ell}(t_2)]
\Big)   
\Big|  
	\ . 
\end{align}
In view of Remark \ref{remark states},
it is sufficient to give estimates for pure states $\Psi \in \F$.  
In order to ease the notation, we shall drop the time variables $t_1,t_2 \in [0,t]$, 
together with the momentum labels $ k ,\ell \in \supp \hat V$. 

 	 	Letting $\Psi \in \F$, we find  that 
 	\begin{align}
 		|
 		\<
 		\Psi ,
 		D^* (\vp ) B^* [b,D ]
 		\Psi 
 		\>			\nonumber 
 		|
 		& 	\ \leq  \ 
 		\|	 D(\vp)	\|	\|	 \Psi	\|
 		\|	 B^*	\|
 		\| [ b,D] \Psi 		\|			\\
 		& \  \lesssim  \ 
 		|	\Lambda | 		\|	 \vp	\|_{\ell^1 }
 		\|	 \Psi 	\| 
 	R^{\frac{3}{2}}
 		\|	 \po^{1/2}	 \Psi \| , 
 	\end{align}
where we used the 
Type-II estimate for $[b,D]$, 
the Type-IV estimate for $D^*(\vp)$,
and the norm bound $ \|	B	\| \leq 2 \|	b	\| \lesssim R$. 
The estimate in Eq. \eqref{lemma remainder 1} now follows from the last two displayed equations,  and $\nu(\1) = 1$. 

\vspace{1.5mm}
 	
 	 	\noindent \textit{Proof of (2)}
 	 	 	Letting $\Psi \in \F$, we find that 
 	 	\begin{align}
 	 		|
 	 		\<
 	 		\Psi ,
 	 		[ D^*(\vp)  , B^* ] D b 
 	 		\Psi 
 	 		\>			\nonumber 
 	 		|
 	 		&  \ 	\leq  \ 
 	 		\|	  [ D^*(\vp)  , b  ] 	\|	\|	 \Psi	\|
 	 		\|  	 D b 		 \Psi 		\|			\\
 	 		& \  \lesssim  \ 
 	 		\|	  [ D^*(\vp)  , b  ] 	\|	\|	 \Psi	\|				\nonumber 
 	 		\|  	 \N  b 		 \Psi 		\|			\\
 	 		& \  \lesssim  \ 
 	 		|\Lambda| p_F^{ -m  }\|   \vp		\|_{\ell^1_m }
R
 	 		\|	 \Psi 	\|\|	\N \Psi 	\|  \ , 
 	 	\end{align}
where we have used the 
Type-II estimate, commutation relations and the norm bound for $b$
to obtain 
 	 	$\|  	 D   b 		 \Psi 		\|   \leq \| \N b \Psi 	\|		\leq \|	 b \N \Psi 	\|	  \lesssim R \|	 \N \Psi \|$ ; 
 	 	and the Type-III estimate for $[D^*(\vp), b ]$. 
 	 	The proof is   finished after one follows  the same argument we used for (1). 
\vspace{1.5mm} 	 	

 	 	 	\noindent \textit{Proof of (3)}
 	 	 	 	Letting $\Psi \in \F$, we find  that 
 	 	 	\begin{align}
 	 	 		|	\< 
 	 	 		\Psi  , 
 	 	 		B^* [D(\vp) , D ] b 
 	 	 		\Psi 
 	 	 		\> | 
 	 	 		& 	\leq 
 	 	 		\|	 B^*	\|
 	 	 		\|	 \Psi 	\|
 	 	 		\|	 [D(\vp ), D ]	\|
 	 	 		\|		 b \Psi	\|			\\
 	 	 		& \lesssim 
R^{\frac{3}{2}}
 	 	 		\|	 \Psi 	\|
 	 	 		|\Lambda|		\|		 \vp\|_{\ell^1 } 
 	 	 		\|		\po^{\frac{1}{2}} \Psi 	\| \ , 
 	 	 	\end{align}
where we used the Type-II estimate for $b$, 
the Type-IV estimate for $[D(\vp) , D ]$,
and the norm bound $\|		B\|\lesssim R$. 
 	 	The proof is   finished after one follows  the same argument we used for (1). 
 	 	 	
 	 	 	\vspace{1.5mm}
 	 	 	 	\noindent \textit{Proof of (4)} 	Letting $\Psi \in \F$, we find that 
 	 	 	 	\begin{align}
 	 	 	 		|
 	 	 	 		\<
 	 	 	 		\Psi ,
 	 	 	 		[D b(\vp) , B^* D]
 	 	 	 		\Psi 
 	 	 	 		\>			\nonumber 
 	 	 	 		|
 	 	 	 		& 	\leq 
 	 	 	 		2 
 	 	 	 		|		\<  \Psi ,  D b(\vp)	 B^* D \Psi 			 \>	|		\\
 	 	 	 		& \lesssim 
 	 	 	 		\|	 D^* \Psi	\|
 	 	 	 		\|	 b(\vp)	\|
 	 	 	 		\|	 B^*	\|							\nonumber 
 	 	 	 		\| D \Psi		\|		\\
 	 	 	 		& \lesssim 
 	 	 	 		p_F^{-m } 
 	 	 	 		|\Lambda|
 	 	 	 		\|	   \vp	\|_{\ell^1_m }
 	 	 	 	R
 	 	 	 		\|	 \N \Psi	\|^2  \   .  
 	 	 	 	\end{align}
where we used the Type-I estimate for $D$ and $D^*$, 
the Type-III estimate for $b(\vp)$,
and the norm bound $\|		B\|\lesssim R$. 
The proof is   finished after one follows  the same argument we used for (1). 
 \end{proof}

 \section{Subleading Order Terms}
 \label{section subleading}
 In this section we analyze the $T_{\alpha,\beta}(t,p)$
 terms of the double commutator expansion \eqref{expansion f}
 that we regard as subleading order terms. 
 So far, out of the nine terms 
 we have analyzed two leading order terms: 
 $T_{F,F}$ in Section \ref{section TFF}
 and $T_{FB,FB}$ in Section \ref{section TFB}. 
 Thus, we shall analyze the {\blu remaining} seven.
 We do this in the following five subsections.

 \subsection{Analysis of $T_{F,FB}$}

The main result of this subsection is the following proposition, 
which gives an estimate on the size of $T_{F,FB}$. 

\begin{proposition}
	[Analysis of $T_{F,FB}$]
\label{prop T F FB}
Let $T_{F,FB}(t,p)$
be the quantity defined in \eqref{T alpha beta}
with $\alpha = F$ and $\beta = FB$, and
let $m>0$.
Then, there exists a constant $C>0$
such that  for all $\vp \in\ell_m^1$ and $ t \geq 0$  
the following estimate holds true 
	\begin{equation}
|
 	T_{F,FB}
	(t, \vp ) 
	| 
 \ 	\leq   \ 
 C 
 t^2  
 \|	 \hat V	\|_{\ell^1}^2 
 | \Lambda|
 \|	 \vp	\|_{\ell^1_m}
\sup_{ 0 \leq \tau \leq t }
 \Big(
 R^{\frac{1}{2 }	} \, 
 \nu_\tau 	( \N^2)^{\frac{1}{2}}
 \nu_\tau  (\po)^{\frac{1}{2}}
 + 
p_F^{-m}
\nu_\tau (\N^2 )
 \Big) 
	\end{equation}
where we recall 
$ 	T_{F,FB} (t, \vp )  =  \< \vp , T_{F,FB} (t)\>$
and
$R = |\Lambda| p_F^{d-1}$. 
	\end{proposition}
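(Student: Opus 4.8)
\textbf{Proof plan for Proposition \ref{prop T F FB}.}
The strategy mirrors the analysis of $T_{FB,FB}$ in Section \ref{section TFB}, but is technically simpler because $T_{F,FB}$ is not a leading-order term: there is no main term to extract, only estimates. Starting from \eqref{T alpha beta} with $\alpha = F$, $\beta = FB$, I would first use observation (ii) from Section \ref{section TFB} to rewrite $V_{FB}(t_2) = \int_{\Lambda^*} \hat V(\ell) B^*_\ell(t_2) D_\ell(t_2) \, \d \ell$ with $B^*_\ell(t) = b^*_\ell(t) + b_{-\ell}(t)$, and express $V_F(t_1)$ through $D$-operators as in \eqref{VF(t)}. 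This reduces $T_{F,FB}(t,\vp)$ to a time integral over $0 \le t_2 \le t_1 \le t$ of $\nu_{t_2}$ applied to a double commutator of the schematic form $[[N(\vp), D^*_k D_k](t_1), B^*_\ell D_\ell (t_2)]$, integrated against $\hat V(k)\hat V(\ell)$. Since $T_{F,FB}(0,\vp)$ vanishes (the inner commutator acting on a translation-invariant quasi-free state gives zero at $t=0$ by \eqref{nu vanishing}), the double integral picks up a factor $t^2$, consistent with the claimed bound.

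Next I would expand the double commutator. The inner commutator $[N(\vp), D^*_k(t_1) D_k(t_1)] = D^*_k(t_1,\vp) D_k(t_1) + D^*_k(t_1) D_k(t_1,\vp)$ produces two terms, each a product of three $D$-type operators (one contracted). Commuting with $B^*_\ell(t_2) D_\ell(t_2)$ then yields a sum of finitely many terms, each of which is a product involving at most one $b^{(*)}$-operator (coming from $B^*_\ell$), at most one contracted operator $D(\vp)$, and ordinary $D$-operators, possibly with one nested commutator $[b_\ell, D_k]$, $[D^*_k(\vp), b_\ell]$, or $[D_k, D_\ell]$. The key point is that every such term falls into exactly the pattern handled in Section \ref{toolbox1}: $D$-factors and their commutators satisfy Type-I estimates (Lemma \ref{lemma type I}), the single $b$-factor or a commutator $[b,D]$ satisfies a Type-II estimate (Lemma \ref{lemma type II}) contributing $R^{1/2}\|\N_\S^{1/2}\Psi\|$, the contracted factor $D_k(\vp)$ satisfies a Type-IV estimate $\|D_k(\vp)\| \lesssim |\Lambda| \|\vp\|_{\ell^1}$ (Lemma \ref{lemma type IV}), and a commutator $[D^*_k(\vp), b_\ell]$ satisfies a Type-III estimate $\lesssim |\Lambda| p_F^{-m} \|\vp\|_{\ell^1_m}$ (Lemma \ref{lemma type III}). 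Using also $[\N, D(\vp)] = 0$ from Lemma \ref{lemma N commutators} to move $\N$-powers past $D(\vp)$-factors, each term is bounded on a pure state $\Psi$ either by $|\Lambda| R^{1/2} \|\vp\|_{\ell^1_m} \|\N^2\Psi\| \, \|\N_\S^{1/2}\Psi\|$ or by $|\Lambda| p_F^{-m} \|\vp\|_{\ell^1_m} \|\N^2\Psi\|^2$. This exactly produces the two terms $R^{1/2} \nu_\tau(\N^2)^{1/2}\nu_\tau(\po)^{1/2}$ and $p_F^{-m}\nu_\tau(\N^2)$ in the statement.

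Finally, I would pass from pure-state bounds to the mixed-state bound for $\nu_{t_2}$ via Remark \ref{remark states} (Cauchy--Schwarz on the convex decomposition), take the supremum over $\tau = t_2 \in [0,t]$, carry out the trivial $k,\ell$-integrations against $\hat V$ (each producing a factor $\|\hat V\|_{\ell^1}$), and perform the two time integrations producing $t^2$. Testing against $\vp \in \ell^1_m$ and collecting constants yields the claimed inequality. The only mildly delicate point — which I expect to be the main obstacle — is bookkeeping: one must verify that \emph{every} term in the commutator expansion genuinely matches one of the four estimate types without generating a term with two $b$-operators (which would force a worse power of $R$) or a term where $\N$-powers cannot be commuted past the contraction. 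This is guaranteed by the structure $V_F \sim D^*D$ (no $b$'s) and $V_{FB} \sim B^* D$ (exactly one $B$), so the total number of $b$-factors across the double commutator is at most one, which is precisely what keeps the $R$-dependence at $R^{1/2}$; writing this verification out carefully is the bulk of the work but involves no new ideas.
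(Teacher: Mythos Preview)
Your overall strategy --- expand the double commutator and classify each piece via the Type-I through Type-IV estimates of Section~\ref{toolbox1} --- is exactly what the paper does, and the bookkeeping you anticipate is indeed the bulk of the work. There is, however, a genuine gap in your choice of representation for $V_{FB}(t_2)$. You invoke observation~(ii) to write $V_{FB}=\int\hat V(\ell)\,B^*_\ell D_\ell$ with $B^*_\ell=b^*_\ell+b_{-\ell}$, and then assert that the single $b$-type factor always satisfies a Type-II estimate. But Lemma~\ref{lemma type II} bounds $\|b_k\Psi\|$ and $\|[b_k,D_\ell]\Psi\|$ by $R^{1/2}\|\po^{1/2}\Psi\|$; it says nothing about $b^*_k$ (the paper explicitly warns that the $D_k^*=D_{-k}$ symmetry fails for $b$-operators). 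In your expansion, a term such as $D^*_k(\vp)\,b^*_\ell\,[D_k,D_\ell]$ arises from the $b^*_\ell$ piece of $B^*_\ell$; after moving $D^*_k(\vp)$ to the bra one is left with $\|b_\ell D_k(\vp)\Psi\|\lesssim R^{1/2}\|\po^{1/2}D_k(\vp)\Psi\|$. Since $[\po,D_k(\vp)]\neq 0$ (in contrast to $[\N,D_k(\vp)]=0$), the best available move is $\po\le\N$, which yields $R^{1/2}\nu_\tau(\N)^{1/2}\nu_\tau(\N^2)^{1/2}$ rather than the stated $R^{1/2}\nu_\tau(\po)^{1/2}\nu_\tau(\N^2)^{1/2}$. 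This is strictly weaker and does not prove the proposition (nor would it suffice for Theorem~\ref{theorem 1}, where the smallness of $\nu_\tau(\po)$ from Proposition~\ref{prop N estimates 2} is essential).

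The paper avoids this by using the self-adjointness trick (observation~(i)) rather than~(ii): since $N(\vp)$ and $V_F$ are self-adjoint for real $\vp$, the inner commutator $[N(\vp),V_F]$ is anti-self-adjoint, and writing $V_{FB}=T+T^*$ with $T=\int\hat V(\ell)\,D^*_\ell b_\ell$ gives $\nu_{t_2}\big([[N(\vp),V_F],V_{FB}]\big)=2\,\mathrm{Re}\,\nu_{t_2}\big([[N(\vp),D^*_kD_k],D^*_\ell b_\ell]\big)$. Now only $b_\ell$ appears, sitting at the far right, so in every term of the resulting expansion either $b_\ell$ or $[b_\ell,D]$ acts directly on $\Psi$, and Lemma~\ref{lemma type II} delivers the sharp $\|\po^{1/2}\Psi\|$ factor. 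With this single modification your plan goes through verbatim. (A minor slip: your pure-state bounds should carry $\|\N\Psi\|$, not $\|\N^2\Psi\|$, so as to match $\nu_\tau(\N^2)^{1/2}$.)
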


\begin{proof} 
For simplicity, we assume $\vp$ is real-valued--in the general case, 
one may expand into real and imaginary parts and use linearity of the commutators. 
Starting from \eqref{T alpha beta phi} we use the self-adjointness of $V_F (t) $ and $N(\vp) = \int_{  \Lambda^{*} } \vp(p) a_p^*a_p \d p $ to get 
the elementary inequality 
\begin{align}
\nonumber 
| 	 & T_{  F , FB } 
 (t   , \vp )
| 
 \\
\nonumber 
 &   \quad  = 
\Big|
\int_0^t 
\int_0^{t_1}
\int_{  \Lambda^{*2 } }
\hat{V}(k) 
\hat V (\ell)
2 \mathrm{Re}
\nu_{t_2 }
\Big( 
 [ [   N(\vp) ,  D_k^* (t_1)  D_k  (t_1) ] ,  D_\ell^*  (t_2) b_\ell (t_2) ]
\Big)  
\d t_1 \d t_2  \d k \d \ell 
\Big|   \\
&  \quad \lesssim 
t^2 \|	 \hat V	\|_{\ell_1}^2
\sup_{  k,\ell \in \supp \hat V, t_ i \in [0,t]}
\Big|
\nu_{t_2}
\Big( 
[ [   N(\vp) ,  D_k^* (t_1)  D_k  (t_1) ] ,  D_\ell^*  (t_2) b_\ell (t_2) ]
\Big)  
\Big| \ . 
\label{starting point T F FB}
%
%
\end{align}
It suffices now to estimate the supremum quantity in the above equation. 
In order to ease the notation, we shall drop the time labels $t_1,t_2\in[0,t]$,
together with the momentum variables $k,\ell \in \supp \hat V$. 
Using the notation $D^* (\vp) \equiv  [N (\vp), D^* ]$ we compute the commutator 
\begin{equation}
\label{F,FB term eq 1}
[ N (\vp)  , D^* D]
=
D^* (\vp) D + D^* D(\vp) \ . 
\end{equation}
We shall only show how to estimate the contribution that arises from 
the  first term  on the right hand side of \eqref{F,FB term eq 1}; the second one is analogous. 
To this end, we expand 
\begin{align}
\label{T F FB eq 1}
[	D^* (\vp)  & D , D^* b	]     \\
 & =
D^* (\vp) [D, D^*] b 
+ 
D^*(\vp ) D^* [D, b] 
+ 
[D^*(\vp),  D^*] b D
+  		
D^* [D^*(\vp) , b] D 
\nonumber
 \ . 
\end{align}
Next, we estimate the expectation of  each term in \eqref{T F FB eq 1} separately.  
In view of Remark \ref{remark states}, it suffices to  provide estimates for pure states $\Psi \in \F$. 
We shall make extensive use of Type-I to Type-IV estimates contained in Lemma \ref{lemma type I}--\ref{lemma type IV}, 
the commutation relations from Lemmas \ref{lemma N commutators} and \ref{lemma NS commutators}, 
and operator bounds of the form $\|	 b	\|_{B(\F)}\lesssim R$.

	\begin{itemize}[leftmargin=*]
	\item  
	\textit{The first term in \eqref{T F FB eq 1}.  }
Using the Type-I estimate for $[D^*,D]$, 
the Type-II estimate for $b$, 
the Type-IV estimate for $D(\vp)$
and the commutation relation 
$[\N , D(\vp)]=0$ 
we find 
\begin{align}
|	 \<   \Psi , 	
D^* (\vp) [D, D^*] b\nonumber
\Psi   \>	|
&  \ 			 	\leq  \ 
\|   [D^* ,D]		 D(\vp) \Psi \|	 
\, 
\| b\Psi		\|		\\
& 	\ \lesssim  \ 	
\nonumber
\|	 \N  D(\vp) \Psi	\|	
R^{\frac{1}{2}}
\|	 \po^{\frac{1}{2}}	 \Psi \|		\\ 
& \ \lesssim \ 
| \Lambda | \|	 \vp\|_{\ell^1 }  \| \N\Psi 		\|
R^{\frac{1}{2}} \|	 \po^{\frac{1}{2}}	 \Psi \|	 \ . 
\end{align}

	\item 
\textit{The second term in \eqref{T F FB eq 1}. }
Using the Type-I estimate for $D$, 
the Type-II estimate for $[D,b]$, 
the Type-IV estimate for $D(\vp)$
and the commutation relation 
$[\N , D(\vp)]=0$ 
we find 
\begin{align}
|	 \<   \Psi , 	
D^*(\vp ) D^* [D, b ] 
\nonumber
\Psi   \>	|
&  \ 			 	\leq  \ 
\|    D 	 D(\vp) \Psi \|	 
\, 
\| [ D,  b  ] \Psi		\|		\\
& 	\ \lesssim  \ 	
\nonumber
\|	 \N  D(\vp) \Psi	\|	
R^{\frac{1}{2}}
\|	 \po^{\frac{1}{2}}	 \Psi \|		\\ 
& \ \lesssim \ 
| \Lambda | \|	 \vp\|_{\ell^1 }  \| \N\Psi 		\|
R^{\frac{1}{2}} \|	 \po^{\frac{1}{2}}	 \Psi \|	 \ . 
\end{align}

	\item  
\textit{The third term in \eqref{T F FB eq 1}.   	}
	Using the Type-I estimate for $D^*$, 
	the Type-II estimate for both $b$ and  $[D,b]$, 
	the Type-IV estimate for $[ D, D(\vp)] $
	and the commutation relation 
	$[\N , [D,D(\vp)] ]=0$ 
	we find 
	\begin{align}
		|	 \<   \Psi , 	
		[D^*(\vp)  ,  D^*] b D
		\nonumber
		\Psi   \>	|		
		& \ \leq  \ 
		|	 \<  [D, D(\vp)] \Psi , 	
		[ b  , D ] 
		\nonumber
		\Psi   \>	|
		+ 
		|	 \<   [ D, D(\vp )]  \Psi , 					\nonumber 
		D  b
		\Psi   \>	|	\\ 
		& \ \leq \ 
		\|	 [ D , D(\vp)]  \Psi  	\| \, \|  [b,D] \Psi	\|
		+ 
		\nonumber
		\|	 D^* [ D ,D(\vp)] \Psi 	\| 
		\, \|	 b \Psi	\|			\\
		&  \ \lesssim  \ 
		| \Lambda | \|	 \vp\|_{\ell^1 }  \| ( \N + 1 ) \Psi 		\|
		R^{\frac{1}{2}} \|	 \po^{\frac{1}{2}}	 \Psi \|	 \ . 
		%
		%
		%
		%
		%
		%
		%
	\end{align}
	
	\item 
  \textit{The fourth term in \eqref{T F FB eq 1}.}   
	Using the Type-I estimate for $D$ and 
	the Type-III estimate for $[D^*(\vp),b]$
	we find 
	\begin{align}
		|
		\< 	\Psi ,
		D^*  [D^*(\vp) , b] D 
		\Psi 
		\>
		|
		\,    \leq  \, 
		\| 	 [D^*(\vp) , b] 	\|
		\|	 D	\Psi\|^2 	  
		\,   \lesssim  \, 
		|	\Lambda 	|  p_F^{-m}
		\|	  \vp	\|_{\ell^1_m }
		\|	 \N \Psi	\|^2  \ . 
	\end{align} 
\end{itemize}

The proof now follows by collecting the previous four estimates in the expansion \eqref{T F FB eq 1},
and plugging them back in \eqref{starting point T F FB}. 
 \end{proof}

  \subsection{Analysis of $T_{F,B}$}
 The main result of this subsection is the following proposition, 
 that gives an estimate on the size of $T_{F,B}$. 
 
 \begin{proposition}
 	[Analysis of $T_{F,B}$]
 	\label{prop T F B}
 	Let $T_{F,B}(t,p)$
 	be the quantity defined in \eqref{T alpha beta}
 	with $\alpha = F$ and $\beta = B$, and
 	let $m>0$.
 	Then, there exists a constant $C>0$
 	such that  for all $\vp \in\ell_m^1$ and $ t \geq 0$  
 	the following estimate holds true 
 	\begin{equation}
 		|
 		T_{F,B}
 		(t, \vp ) 
 		| 
 		\ 	\leq   \ 
 		C 
 		t^2 \|	 \hat V	\|_{\ell^1 }^2 
 	|\Lambda|
 	\|	 \vp	\|_{\ell^1_m}
 	\sup_{ 0 \leq \tau \leq t }
 	\Big(
 	R^\frac{3}{2} \nu_\tau (\po)^{\frac{1}{2}}
 	+ 
 	R p_F^{-m}
 	\nu (\N^2)^{	\frac{1}{2}	}
 	+ 
 	R 
 	\nu_\tau (\po)
 	\Big) 
 	\end{equation}
 	where we recall 
 	$ 	T_{F,B} (t, \vp )  =  \< \vp , T_{F,B} (t)\>$
 	and
 	$R = |\Lambda| p_F^{d-1}$. 
 \end{proposition}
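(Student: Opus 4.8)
The plan is to proceed exactly as in the proof of Proposition \ref{prop T F FB} (the analysis of $T_{F,FB}$), the only structural change being that the second interaction is now the purely bosonic $V_B(t)$, which is quadratic in the $b$-operators, rather than the mixed $V_{FB}(t)$. Starting from \eqref{T alpha beta phi} with $\alpha=F$, $\beta=B$, I would insert $V_F(t_1)=\tfrac12\int_{\Lambda^*}\hat V(k)D_k^*(t_1)D_k(t_1)\,\d k$ and the three quadratic-in-$b$ pieces of $V_B(t_2)$ from \eqref{VB(t)}, and bound the double time integral and the $k,\ell$ integrals crudely. This reduces the estimate to controlling, uniformly over $k,\ell\in\supp\hat V$ and $t_1,t_2\in[0,t]$, the expectations
\[
\nu_{t_2}\big(\,[[N(\vp),D_k^*(t_1)D_k(t_1)],\mathcal B_\ell(t_2)]\,\big),
\]
where $\mathcal B_\ell$ ranges over $b_\ell^*b_\ell$, $b_\ell^*b_{-\ell}^*$ and $b_{-\ell}b_\ell$, which produces the overall prefactor $t^2\|\hat V\|_{\ell^1}^2$ (the $|\Lambda|$ and the $\|\vp\|_{\ell^1_m}$ will come out of the operator estimates).

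Next, using $D_k^*(\vp)\equiv[N(\vp),D_k^*]$ and $[\N,D_k(\vp)]=0$ (Lemma \ref{lemma N commutators}), I would write $[N(\vp),D_k^*D_k]=D_k^*(\vp)D_k+D_k^*D_k(\vp)$ and expand $[D_k^*(\vp)D_k,\mathcal B_\ell]$ and the analogous term by the Leibniz rule; each summand is a product of one contracted operator ($D(\vp)$ or its adjoint), at most two $b$/$b^*$-factors, at most two bare $D$-factors, and exactly one commutator of the form $[D_k^{(*)},b_\ell^{(*)}]$ or $[D_k^{(*)}(\vp),b_\ell^{(*)}]$. For each summand I would first move $D_k^{(*)}(\vp)$ onto one side of the inner product (legitimate since $[\N,D_k(\vp)]=0$), and then estimate using the Type-I estimate (Lemma \ref{lemma type I}) for bare $D$'s; the Type-II estimate (Lemma \ref{lemma type II}) for $b$'s and for $[b_\ell,D_k]$, combined with the relations $[\po,b_k]=-2b_k$ and $[\po,b_{-\ell}b_\ell]=-4\,b_{-\ell}b_\ell$ (Lemma \ref{lemma NS commutators}) used to push surface weights $\po^{1/2}$ past the $b$-factors; the Type-III estimate (Lemma \ref{lemma type III}) for $[D_k(\vp),b_\ell]$; the Type-IV estimate (Lemma \ref{lemma type IV}) for $D_k(\vp)$; and the operator-norm bounds $\|b_k\|_{B(\F)}\lesssim R$ and $\po\le R$, $\po\le\N$, whenever a second $b$-factor or a surface weight has to be absorbed crudely.

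Collecting the summands, the worst contributions are of the shape $|\Lambda|\|\vp\|_{\ell^1_m}$ times one of $R^{3/2}\|\po^{1/2}\Psi\|\,\|\Psi\|$ (two $b$-factors, one crude $\|b\|\lesssim R$ together with one Type-II bound), $R\,p_F^{-m}\|\N\Psi\|\,\|\Psi\|$ (a Type-III bound together with $\|b\|\lesssim R$ or $\po\le R$), and $R\,\|\po^{1/2}\Psi\|^2$ (from the off-diagonal $b_{-\ell}b_\ell$ and $b_\ell^*b_{-\ell}^*$ pieces, where both $b$-factors can be made to carry a surface weight). Passing to the mixed state $\nu$ via the Cauchy--Schwarz argument of Remark \ref{remark states}, and finally carrying out the time integrations and the $k,\ell$ integrations, yields precisely the bound claimed in Proposition \ref{prop T F B}.

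\textbf{Main obstacle.} The crux is the bookkeeping for $V_B$: unlike $V_{FB}$ (handled in Proposition \ref{prop T F FB}), it is genuinely quadratic in the near-surface bosons, so every commutator expansion produces a product of two $b$/$b^*$-operators, and one must decide, term by term, whether to extract two surface weights $\po^{1/2}$ (paying $R^{1/2}$ for each, plus one bosonic commutator when a $b^*$ sits between them), one surface weight together with a crude $\|b\|\lesssim R$, or to trade a $\po^{1/2}$ for $R^{1/2}$ via $\po\le R$; a suboptimal choice degrades the power of $R$ or of $p_F$ and spoils the match with the stated estimate. A secondary, more routine point is to keep $D_k(\vp)$ from ever acting before its commutator is taken, so that only $\|\vp\|_{\ell^1_m}$ enters (via the Type-III and Type-IV estimates), with no spurious extra power of $\N$.
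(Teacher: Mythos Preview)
Your proposal is correct and follows essentially the same approach as the paper: crude time and momentum bounds, then the expansion $[N(\vp),D^*D]=D^*(\vp)D+D^*D(\vp)$, followed by a Leibniz-rule expansion of $[D^*(\vp)D,\mathcal B_\ell]$ and term-by-term estimation via the Type-I through Type-IV lemmas together with $\|b\|\lesssim R$. One small correction to your bookkeeping: in the paper the $R\,\nu_\tau(\po)$ contribution comes from the \emph{diagonal} piece $b_\ell^* b_\ell$ (where after commuting one can pair a Type-II bound on each side, e.g.\ $\langle bD(\vp)\Psi,[D,b]\Psi\rangle$), whereas the off-diagonal $b_{-\ell}b_\ell$ piece is handled more crudely with one $\|b\|\lesssim R$ and one Type-II bound, producing the $R^{3/2}\nu_\tau(\po)^{1/2}$ term---the reverse of what you wrote.
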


 \begin{proof} 
For simplicity, we assume $\vp$ is real-valued--in the general case, 
one may expand into real and imaginary parts and use linearity of the commutators. 
Starting from \eqref{T alpha beta phi} we use the self-adjointness of $V_F (t) $, $V_B(t)$ and $N(\vp) = \int_{  \Lambda^{*} } \vp(p) a_p^*a_p \d p $ to get 
the elementary inequality 
{\blu
thanks to
  \eqref{eq:N}}
\begin{align}
\label{T F B eq 1}
	| 	  T_{  F , B } 
 	(t   , \vp )
	| 
	&       = 
	\Big|
	\int_0^t 
	\int_0^{t_1}
	 \mathrm{Re} 
	 \, 
	\nu_{t_2 }
	\Big( 
	[ [   N(\vp) , V_F(t_1) ] ,  V_B(t_2)]
	\Big)  
	\d t_1 \d t_2  
	\Big|   \\
\nonumber 
	& 
	  \lesssim 
	t^2 \|	 \hat V	\|_{\ell_1}^2
	\sup_{  k,\ell \in \supp \hat V, t_ i \in [0,t]}
	\Big|
	\nu_{t_2}
	\Big( 
	[ [   N(\vp) ,  D_k^* (t_1)  D_k  (t_1) ] ,   b_\ell^*(t_2) b_\ell (t_2) ]
	\Big)  
	\Big| \\
\nonumber 
&  \,  	+  
	t^2 \|	 \hat V	\|_{\ell_1}^2
	\sup_{  k,\ell \in \supp \hat V, t_ i \in [0,t]}
	\Big|
	\nu_{t_2}
	\Big( 
	[ [   N(\vp) ,  D_k^* (t_1)  D_k  (t_1) ] ,   b_\ell(t_2) b_{-\ell} (t_2) ]
	\Big)  
	\Big| \  , 
	%
	%
\end{align}
where in the last line we used the representation 
of $V_F(t)$ and $V_B(t)$
in terms of $b$- and $D$-operators found in Eqs. \eqref{VF(t)} and \eqref{VB(t)} --the $b^*b^*$ term
is re-written in terms of $bb$ upon taking the  real part of $\nu$. 

\vspace{1mm}

We now estimate  the two supremum quantities in  \eqref{T F B eq 1}, which we shall refer to 
as     an \textit{off-diagonal contribution}, 
 and a \textit{diagonal contribution}, 
 with respect to the operators $b$ and $b^*$. 
In view of Remark \ref{remark states}, it suffices to  provide estimates for pure states $\Psi \in \F$. 
Further, in  order to ease the notation, we omit the time labels $t_1,t_2 \in [0,t]$
and the momentum variables $ k , \ell \in \supp \hat V$.
We   make extensive use of Type-I to Type-IV estimates contained in Lemma \ref{lemma type I}--\ref{lemma type IV}, 
the commutation relations from Lemmas \ref{lemma N commutators} and \ref{lemma NS commutators}, 
and operator bounds of the form $\|	 b	\|_{B(\F)}\lesssim R$.

	\begin{itemize}[leftmargin=*]
  	\item  
\textit{The off-diagonal contribution of \eqref{T F B eq 1}.}
  	We expand the first commutator as follows 
  	\begin{align}
  		[	[ N  (\vp) , D^*D] , b b]
  		= 
  		[D^*(\vp) D , bb] + [D^* D(\vp) , bb]
  		\label{T F B eq 2} \  , 
  	\end{align}
  	where we recall  we use the notation $D^*(\vp)  =  [N(\vp) , D]$. 
  	We shall only show in detail how to estimate the first term in \eqref{T F B eq 2}--the second  term
  	can be estimated in the same spirit. 
  	We   expand the second commutator as follows 
  	\begin{align}
  		\label{T F B eq 3}
  		[  D^*(\vp) D   ,  b b   ] 
  		= 
  		D^*(\vp) b [D,b]
  		+
  		D^*(\vp) [D,b] b 
  		+
  		[D^*(\vp) ,bb] D  \ . 
  	\end{align}
  	We now estimate the three terms  {\blu on  the right hand side} of \eqref{T F B eq 3}.
  	
  	\begin{itemize}[leftmargin=*]
  		\item[$\blacklozenge$]\textit{The first term of \eqref{T F B eq 3}. }
  		Letting $\Psi \in \F$, 	we find that
  		\begin{align}
  			\nonumber 
  			|
  			\< \Psi  , 
  			D^*(\vp) b [D,b]
  			\Psi 	\> 
  			&  
  			\ 	  	\leq  \ 
  			\|	 D(\vp)	\|		\|	 \Psi	\| \|	b	\|		\| [D,b]	 \Psi	\|	   \\
  			&  \ 		\lesssim  \ 
  			|\Lambda| 	\|	 \vp	\|_{\ell^1 }
  			\|	 \Psi	\|
  			R^\frac{3}{2}
  			\|  \po^{\frac{1}{2}}	 \Psi	\|	  \  , 
  		\end{align}
  		where we used 
  		the Type-II estimate for $[D, b]$, 
  		the Type-IV estimate for $D(\vp)$,
  		and the norm bound $ \|	b	\|      \lesssim R$. 
  		\item[$\blacklozenge$]\textit{The second term of \eqref{T F B eq 3}. }
  		Letting $\Psi \in \F$, 	we find that
  		\begin{align}
  			\nonumber 
  			|
  			\< \Psi  , 
  			D^*(\vp) [D,b] b 
  			\Psi 	\> 
  			|
  			& \ 	\leq  \ 
  			\|	 D(\vp)	\|		\|	 \Psi	\| \|	[D,b]	\|		\|   b 	 \Psi	\|	   \\ 
  			& \ 		\lesssim  \ 
  			|\Lambda| 	\|	 \vp	\|_{\ell^1 }
  			\|	 \Psi	\|
  			R^\frac{3}{2}
  			\|  \po^{\frac{1}{2}}	 \Psi	\|	  \  , 
  		\end{align}
  		where we used 
  		the Type-II estimate for $b$, 
  		the Type-IV estimate for $D^*(\vp)$,
  		and the norm bound $ \|	[D,b]	\|   \lesssim R$,

  		\item[$\blacklozenge$]\textit{The third term of \eqref{T F B eq 3}. }
  		Letting $\Psi \in \F$, 	we find that
  		\begin{align}
  			\nonumber
  			\label{FB eq 1} 
  			|
  			\< \Psi  , 
  			[D^*(\vp) ,bb] D 
  			\Psi 	\> 
  			|
  			& \ 	\leq   \ 
  			\|	 [D^*(\vp) ,bb]	\|
  			\|	 \Psi	\|
  			\|	 \N \Psi	\| \\ 
  			&  \  		\lesssim  \ 
  			| \Lambda|
  			\|   \vp	\|_{\ell^1_m}
  			p_F^{-m}
  			R 
  			\|	 \Psi	\|
  			\|	 \N \Psi	\| \  , 
  		\end{align}
  		where we used 
  		the Type-I estimate for $D$, 
  		the Type-III estimate for $ [ D^*(\vp),b]$
  		and the norm bound $   \|	b	\| \lesssim R$. 
  		
  	\end{itemize}
  	We collect the four estimates found above and put them back in \eqref{T F B eq 2} 
  	to find that the off-diagonal contribution satisfies the following upper bound 
  	\begin{align}
  		\Big|
  		\nu  
  		\Big( 
  		[ [   N(\vp) ,  D^*   D   ] ,   b    b   ]
  		\Big)  
  		\Big|   
  		\ 	\lesssim  \ 
  		|\Lambda	|
  		\|	 \vp  \|_{\ell_m^1}
  		\Big(
  		R^{\frac{3}{2}}
  		\nu (\N_\S)^{\frac{1}{2}}
  		+ 
  		\frac{R}{p_F^m}
  		\nu (\N^2)^{\frac{1}{2}}
  		\Big) \ . 
  	\end{align}

  	\item 
  \textit{The diagonal contribution of \eqref{T F B eq 1}.}
  	Similarly as before, we shall  expand the commutator as follows. 
  	\begin{align}
  		[  D^*(\vp) D   ,  b^* b   ] 
  		= 
  		D^*(\vp) b^* [D,b]
  		+
  		D^*(\vp) [D,b^*] b 
  		+
  		[D^*(\vp) ,b^*b] D  \ . 
  		\label{T F B eq 6}
  	\end{align}
  	These  three terms are {\blu estimated} as follows.
  	\begin{enumerate}[leftmargin=*]
  		\item[$\blacklozenge$] \textit{The first term of \eqref{T F B eq 6}}. 
  		Letting $\Psi \in \F$, 	we find that
  		\begin{align}
  			\nonumber 
  			|
  			\langle \Psi  , 
  			D^*(\vp) b^* & [D,b]
  			\Psi 	\rangle  |  \\ 
  			\nonumber 
  			&  
  			\leq 
  			|
  			\<     
  			D(\vp)  b \Psi  , 
  			[D,b]
  			\Psi 	\>  | 
  			+
  			|
  			\<  
  			[   D(\vp)  ,  b  ] 
  			\Psi  , 
  			[D,b]
  			\Psi 	\>  | 	\\
  			&
  			\lesssim   
  			\nonumber 
  			\|	 D(\vp)	\|		\|	 b \Psi  	\|
  			\|   [D,  b ]	 \Psi  	\|
  			+ 
  			\|	 [D(\vp) , b ]	\|	\|	 \Psi	\|	
  			\|	 [D,b] 	\|		\| \Psi 	\|	
  			\\
  			&  \lesssim  
  			|\Lambda|		\|	 \vp	\|_{\ell_m^1 }
  			\Big(
  			R 
  			\|	 \po^{\frac{1}{2}} \Psi	\|^2 
  			+ 
  			p_F^{-m}R 
  			\|	 \Psi	\|^2
  			\Big)  \ . 
  		\end{align}
  		where we used 
  		the Type-II estimate for $b$ and  $[D, b]$, 
  		the Type-III estimate for $[D(\vp),b]$,  
  		the Type-IV estimate for $D(\vp)$. 
  		\item[$\blacklozenge$] \textit{The second term of \eqref{T F B eq 6}}. 
  		Letting $\Psi \in \F$, 	we find that
  		\begin{align}
  			\nonumber 
  			|
  			\langle \Psi  , 
  			D^*(\vp) [D,  & b^*] b 
  			\Psi 	\rangle 
  			|   \\
  			\nonumber 
  			&      \leq  
  			|
  			\<     
  			D(\vp) [D^*, b ] \Psi  , 
  			b
  			\Psi 	\>  | 
  			+
  			|
  			\<  
  			[   D(\vp)  ,  [D^*, b ]   ] 
  			\Psi  , 
  			b 
  			\Psi 	\>  | 	\\
  			&
  			\lesssim   
  			\nonumber 
  			\|	 D(\vp)	\|		\|	 [D^*,b]\Psi 	\|
  			\|  b  \Psi  	\|
  			+ 
  			\|	  [   D(\vp)  ,  [D^*, b ]   ]  	\|	\|	 \Psi	\|	
  			\|	b   \Psi 	\|	
  			\\   
  			&  \lesssim 
  			|\Lambda|		\|	 \vp	\|_{\ell_m^1 }
  			\Big(
  			R 
  			\|	 \po^{\frac{1}{2}} \Psi	\|^2 
  			+ 
  			p_F^{-m}R 
  			\|	 \Psi	\|^2
  			\Big)  \ ,
  		\end{align}
  		where we used 
  		the Type-II estimate for $b$ and  $[D^*, b]$, 
  		the Type-III estimate for $[D(\vp),[D^*,b]		]$,  
  		and  the Type-IV estimate for $D(\vp)$. 
  		\item[$\blacklozenge$] \textit{The third term of \eqref{T F B eq 6}}. 
  		Letting $\Psi \in \F$, 	we find that
  		\begin{align}
  			\nonumber 
  			|
  			\< \Psi  , 
  			[D^*(\vp) ,b^*b] D 
  			\Psi 	\> 
  			|
  			& 	\leq 
  			\|	 \Psi	\| 
  			\| [D^*(\vp) , b^*b ]		\|
  			\|	D\Psi	\|		\\ 
  			\nonumber 
  			& \leq 
  			\Big(
  			\|    b^*		\|  		  \| [D^*(\vp), b ]		\| 
  			+
  			\| [D^*(\vp), b^* ]				\| \|    b  		\|  
  			\Big)
  			\|	\Psi 	\|	
  			\|	 \N \Psi 	\| \\
  			& \lesssim 
  			| \Lambda|
  			\|   \vp	\|_{\ell^1_m}
  			p_F^{-m}
  			R 
  			\|	 \Psi	\|
  			\|	 \N \Psi	\| \ ,
  		\end{align}
  		where we used 
  		the Type-I estimate $D$, 
  		the Type-III estimate for $[D^*(\vp),b]$	and $[D^*[\vp], b^*]$	,  
  		and the norm bound $\|	 b\|\lesssim R $. 
  		
  	\end{enumerate}
  	We gather the three above estimates to find that 
  	the diagonal contribution    satisfies the following upper bound
  	\begin{align}
  		\Big| 	\nu\Big(
  		[ [  N(\vp)  ,  D^*D ] ,  b^*b ] 
  		\Big)  
  		\Big|
  		\ 	\lesssim  \ 
  		|\Lambda	|
  		\|	 \vp  \|_{\ell_m^1}
  		\Big(
  		R 
  		\nu (\N_\S) 
  		+ 
  		\frac{R}{p_F^m}
  		\nu (\N^2)^{\frac{1}{2}}
  		\Big) \ . 
  	\end{align}

  \end{itemize}

The proof of the proposition is finished once we gather the diagonal and off-diagonal contributions
and plug them back in \eqref{T F B eq 1}. 
\end{proof}

  \subsection{Analysis of $T_{FB,F}$}
In this subsection, we analyze the term $T_{FB,F}$.
Our main result is the estimate contained in the next proposition. 
  
  \begin{proposition}
[Analysis of $T_{FB,F}$]
  	\label{prop T FB F}
Let $T_{FB,F}(t,p)$
be the quantity defined in \eqref{T alpha beta}
with $\alpha = FB$ and $\beta = F$, 
and let $m>0$.
Then, there exists a constant $C>0$
such that  for all $\vp \in\ell_m^1$ and $ t \geq 0$  
the following estimate holds true 
  	\begin{align}
 |   		T_{FB,F}(t,\vp) | 
  		&  \,	\leq C   \, 
  		t^2 \|	 \hat V	\|_{\ell^1 }^2 
  		|\Lambda|
  		\|	 \vp	\|_{\ell^1_m}
  		\sup_{ 0 \leq \tau \leq t }
  		\Big(
  		R^\frac{1}{2}
  		\nu_\tau (  \po)^{\frac{1}{2}}
  		+ 
  		p_F^{-m}
  		\nu_\tau (\N^2)^{\frac{1}{2}} 
  		\Big) 
  		\nu_\tau (\N^2)^{\frac{1}{2}}    
  	\end{align}
where we recall 
$ 	T_{FB,F} (t, \vp )  =  \< \vp , T_{FB,F} (t)\>$
and
$R = |\Lambda| p_F^{d-1}$. 
  \end{proposition}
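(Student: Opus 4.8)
The plan is to mirror the structure used in the proofs of Propositions~\ref{prop TFF}, \ref{prop TFB}, \ref{prop T F FB} and \ref{prop T F B}, since $T_{FB,F}$ differs from $T_{F,FB}$ only in the order in which the potentials $V_{FB}$ and $V_F$ appear in the double commutator. First I would assume $\vp$ real-valued (the general case following by splitting into real and imaginary parts and using linearity of the commutator), and start from the representation \eqref{T alpha beta phi} for $\alpha=FB$, $\beta=F$. Using self-adjointness of $N(\vp)$, $V_{FB}(t)$ and $V_F(t)$, together with the representation \eqref{VFB(t)} for $V_{FB}$ (rewritten via $D^*_k=D_{-k}$ and $[D^*_k,b_k]=0$ as $V_{FB}(t)=\int \hat V(k)\,B^*_k(t)D_k(t)\,\d k$ with $B^*_k=b^*_k+b_{-k}$, as in the remark preceding \eqref{TFB expansion}) and \eqref{VF(t)} for $V_F$, one reduces the bound to a supremum over $k,\ell\in\supp\hat V$ and $t_i\in[0,t]$ of
\begin{equation*}
\Big| \nu_{t_2}\big( [[N(\vp),\, D_k^*(t_1)b_k(t_1)+b^*_k(t_1)D_k(t_1)] , D_\ell^*(t_2)D_\ell(t_2)] \big) \Big|
\end{equation*}
times $t^2\|\hat V\|_{\ell^1}^2$. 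In view of Remark~\ref{remark states}, it suffices to estimate this for pure states $\Psi\in\F$.

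Next I would drop the time labels and momentum variables, write $B^*_k(\vp)\equiv[N(\vp),B^*_k]=b^*_k(\vp)+b_{-k}(\vp)$ and $D_k(\vp)\equiv[N(\vp),D_k]$, and expand the inner commutator $[N(\vp),B^*D]=B^*(\vp)D+B^*D(\vp)$, then expand each of $[B^*(\vp)D,\,D^*_\ell D_\ell]$ and $[B^*D(\vp),\,D^*_\ell D_\ell]$ (and symmetrically the $D^*b$ piece) into the usual Leibniz sum of terms of the schematic forms $B^*(\vp)\,[D,D^*_\ell]\,D_\ell$, $B^*(\vp)\,D^*_\ell\,[D,D_\ell]$, $[B^*(\vp),D^*_\ell]\,D_\ell D$, $D^*_\ell\,[B^*(\vp),D_\ell]\,D$, $B^*\,[D(\vp),D^*_\ell]\,D_\ell$, $B^*\,D^*_\ell\,[D(\vp),D_\ell]$, and so on. Each such monomial is then controlled by the Type-I, Type-II, Type-III and Type-IV estimates of Lemmas~\ref{lemma type I}--\ref{lemma type IV}, the commutation relations $[\N,D]=[\N,D(\vp)]=0$ of Lemma~\ref{lemma N commutators}, the pull-through formulae, and the norm bound $\|b\|_{B(\F)}\lesssim R$ of \eqref{bounds on b}. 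The pattern is identical to that in the proof of Proposition~\ref{prop T F FB}: terms containing a single $b$ or a single $[D,b]$ acting on $\Psi$ produce a factor $R^{1/2}\|\po^{1/2}\Psi\|$, while terms where the external function $\vp$ is contracted with a $b$-operator (i.e.\ $b(\vp)$ or $[D(\vp),b]$) produce a factor $p_F^{-m}|\Lambda|\,\|\vp\|_{\ell^1_m}$ together with an $\|\N\Psi\|$-type factor; the $D$-operators not contracted with $\vp$ each contribute $\|\N\,\cdot\,\|$. Collecting all monomials gives
\begin{equation*}
\Big| \nu\big([[N(\vp),\,B^*D+D^*b],\,D^*_\ell D_\ell]\big) \Big| \lesssim |\Lambda|\,\|\vp\|_{\ell^1_m}\Big( R^{1/2}\nu(\po)^{1/2}+p_F^{-m}\nu(\N^2)^{1/2}\Big)\nu(\N^2)^{1/2},
\end{equation*}
and after applying Remark~\ref{remark states} to pass from pure states to the mixed state $\nu_{t_2}$, integrating over $t_1,t_2\in[0,t]$ and inserting the $t^2\|\hat V\|_{\ell^1}^2$ prefactor, the claimed bound follows.

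The main obstacle — though it is bookkeeping rather than a conceptual difficulty — is ensuring the correct pairing of operator norms: one must be careful that in each monomial exactly one factor is ``spent'' on extracting the surface-localized operator $\po$ (via a Type-II estimate on a $b$ or $[D,b]$) or on extracting the small factor $p_F^{-m}$ (via a Type-III estimate on a $\vp$-contraction), while the remaining $D$'s and the other $b$ (if present) are absorbed into $\N$-powers using $\|D\Psi\|\lesssim\|\N\Psi\|$, $\|b\Psi\|\lesssim R^{1/2}\|\po^{1/2}\Psi\|$, $\|b\|\lesssim R$ and the fact that $\N$ commutes with the $D$'s. Since in $T_{FB,F}$ the innermost commutator brings down $b$ (or $D^*b$) first and the outer one brings $D^*_\ell D_\ell$, at most one $b$-operator survives in any monomial, so no term carries a factor worse than $R\,\nu(\po)$; this is why $T_{FB,F}$ obeys the same type of bound as $T_{F,FB}$ rather than the heavier bound of $T_{FB,FB}$. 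I would flag explicitly, as is done elsewhere in the paper, that the detailed term-by-term estimates are routine repetitions of the four prototype calculations already carried out, and present only the key representative estimate for each schematic type before collecting.
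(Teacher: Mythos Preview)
Your approach is essentially the paper's: reduce to a supremum over $k,\ell\in\supp\hat V$ and $t_i\in[0,t]$, split the inner commutator according to whether $\vp$ lands on the $D$-factor or on the $b$-factor, expand the outer commutator by Leibniz, and close each monomial with one Type-II estimate (giving $R^{1/2}\nu_\tau(\N_\S)^{1/2}$) or one Type-III estimate (giving $p_F^{-m}$), the remaining $D$'s contributing $\nu_\tau(\N^2)^{1/2}$ via Type-I/Type-IV and $[\N,D]=0$.

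The only substantive difference is organizational, and it is worth flagging. The paper does \emph{not} work with the $B^*_kD_k$ form. It first uses the real-part identity to reduce $V_{FB}$ to $2\,\mathrm{Re}\int\hat V(k)\,D^*_k b_k$, and then splits
\[
[N(\vp),D^*_k b_k]=D^*_k(\vp)\,b_k+D^*_k\,b_k(\vp),
\]
so that in every resulting monomial it is the \emph{annihilation} operator $b_k$ (or $[b_k,D_\ell]$, $[[b_k,D^*_\ell],D_\ell]$) that ends up acting on $\Psi$; Lemma~\ref{lemma type II} then applies directly. In your $B^*D$ decomposition the factor $B^*=b^*+b_{-}$ sits on the left, and the Type-II bound of Lemma~\ref{lemma type II} is stated only for $b$, not for $b^*$: the naive estimate $\|b^*\Psi\|$ carries an extra $R^{1/2}\|\Psi\|$ from the commutator $[b,b^*]$, which would give a term $R^{1/2}\nu_\tau(\N^2)^{1/2}$ without the crucial $\nu_\tau(\N_\S)^{1/2}$ factor and hence would not match the stated proposition. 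This is easily fixed---either take adjoints termwise so that $b$ acts on $\Psi$, or (simplest) adopt the paper's $D^*b$ form from the outset---but your proposal does not make this step explicit, and the sentence ``terms containing a single $b$ \ldots\ acting on $\Psi$'' glosses over precisely the place where the $b^*$ piece of $B^*$ needs that extra maneuver. Your parenthetical ``symmetrically the $D^*b$ piece'' is also confusing, since $B^*D$ already \emph{equals} $D^*b+b^*D$; there is no separate piece to treat.
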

  
  \begin{proof} 
For simplicity, we assume $\vp$ is real-valued--in the general case, 
one may expand into real and imaginary parts and use linearity of the commutators. 
Starting from \eqref{T alpha beta phi} we use the self-adjointness of $V_{FB} (t) $, $V_F(t)$ and $N(\vp) = \int_{  \Lambda^{*} } \vp(p) a_p^*a_p \d p $ to get 
the elementary inequality 
{\blu
	thanks to
	\eqref{eq:N}}
\begin{align}
	\label{starting point T FB F}
		| 	  T_{  FB , F } 
	 	(t   , \vp )
	| 
 &  = 
	\Big|
	\int_0^t 
	\int_0^{t_1}
	\mathrm{Re} 
	\, 
	\nu_{t_2 }
	\Big( 
	[ [   N(\vp) , V_{FB }(t_1) ] ,  V_F(t_2)]
	\Big)  
	\d t_1 \d t_2  
	\Big|   \\
\nonumber 
	& 
	\lesssim 
	t^2 \|	 \hat V	\|_{\ell_1}^2
	\sup_{  k,\ell \in \supp \hat V, t_ i \in [0,t]}
	\Big|
	\nu_{t_2}
	\Big( 
	[ [   N(\vp) ,  D_k^* (t_1)  b_k  (t_1) ] ,   D_\ell^*(t_2) D_\ell (t_2) ]
	\Big)  
	\Big| 
\end{align}
where in the last line we used the representation 
of $V_F(t)$ and $V_B(t)$
in terms of $b$- and $D$-operators found in Eqs. \eqref{VF(t)} and \eqref{VB(t)} --the $D_k^*b^*_{-k}$ term
is re-written in terms of $D^*_k b_k$ upon taking the  real part of $\nu$. 
Next, we estimate the   supremum   in  \eqref{starting point T FB F}. 
In view of Remark \ref{remark states}, it suffices to  provide estimates for pure states $\Psi \in \F$. 
In  order to ease the notation, we omit the variables $t_1,t_2 \in [0,t]$ and  $ k , \ell \in \supp \hat V$.
We  shall  make extensive use of Type-I to Type-IV estimates contained in Lemma \ref{lemma type I}--\ref{lemma type IV}, 
and the commutation relations from Lemmas \ref{lemma N commutators} and \ref{lemma NS commutators}.

\vspace{1mm}

 We expand the first commutator in terms of $D^*(\vp) = [N(\vp) , D ]$  and $b(\vp) = [N(\vp) , b]$
 as follows 
  	\begin{align}
[ [N (\vp) , D^* b ] , D^*D 	]
= 
 [ D^*(\vp) b  , D^*D] 
 + 
  [ D^*  b(\vp)   , D^*D] 
  		  		\label{FB, F eq 1} \  . 
  	\end{align}
We dedicate the rest of the proof to estimate the expectation of the
 two terms {\blu on the right hand side} of \eqref{FB, F eq 1}. 
  	
	\begin{itemize}[leftmargin=*]
   	\item  
  \textit{The first term of \eqref{FB, F eq 1}}
   	We break up the    commutator into three pieces 
   	\begin{align}
   		\label{T FB F eq 1}
   		[D^*(\vp) b  , D^* D]
   		= 
   		D^* (\vp)
   		D^* 
   		[ b,D ] 
   		+ 
   		D^*(\vp) [b,D^*]
   		D
   		+ 
   		[D^*(\vp) , D^* D]
   		b 
   	\end{align}
   	which we now estimate separately.  
	\begin{itemize}[leftmargin=*]
   		\item[$\blacklozenge$] \textit{The first term of \eqref{T FB F eq 1}.} 
   		Letting $\Psi \in \F$, we find that  
   		\begin{align}
   			\nonumber 
   			| 	\< 
   			\Psi  , 
   			D^* (\vp)
   			D^* 
   			[ b,D ] 
   			\Psi 
   			\> | 
   			& 	\ 	\leq  \ 
   			\|	 D D(\vp)	 \Psi \|
   			\|	 [b,D] \Psi 	\|			\\ 
   			& 	\ 	 \lesssim  \ 
   			| \Lambda | 	\|	 \vp	\|_{\ell^1 }
   			R^\frac{1}{2}
   			\|	 \N \Psi	\| 
   			\| \po^{1/2}	 \Psi 	\|  \ ,
   		\end{align}
   		where we used 
   		the Type-I estimate for  $D$, 
   		the Type-II estimate for   $[D, b]$,  
   		the Type-IV estimate for $D(\vp)$,
   		and the commutation relation $[\N , D(\vp)]=0$.

   		\item[$\blacklozenge$] \textit{The second term of \eqref{T FB F eq 1}}.
   		Letting $\Psi \in \F$, we find that  
   		\begin{align}
   			| 	\langle \nonumber 
   			\Psi  , 
   			D^*(\vp) &  [b,D^*]
   			D
   			\Psi 
   			\rangle |  \\ 
   			\nonumber 
   			& 				\ 	\leq  \ 
   			| 	\< 
   			\Psi  , 
   			D^*(\vp)  D [b,D^*]
   			\Psi 
   			\> | 
   			+
   			| 	\< 
   			\Psi  , 
   			D^*(\vp)
   			[ [b,D^*] , 
   			D ] 
   			\Psi 
   			\> |   
   			\\ 
   			& 		\ \lesssim \
   			\nonumber 
   			\|	 D^* D(\vp)	 \Psi \|
   			\|	 [b,D^*] \Psi 	\|
   			+
   			\|		D(\vp)	\| 	\|	 \Psi	\|
   			\|		  [ [b,D^*] , 
   			D ]  \Psi 
   			\| 			\\
   			& \ \lesssim \ 
   			| \Lambda | 	\|	 \vp	\|_{\ell^1 }
   			R^\frac{1}{2}
   			\|	( \N  +\1)  \Psi	\| 
   			\| \po^{1/2}	 \Psi 	\|	\  , 
   		\end{align} 
   		where we used 
   		the Type-I estimate for $D^*$, 
   		the Type-II estimate for $	 [b,D^*]	$ and  $	 [	 [	 b	, D^*	] ,  D	]		$, 
   		the Type-IV estimate for $D(\vp)$,
   		and the commutation relation $[\N, D(\vp)] =0.$
   		
   		\item[$\blacklozenge$] \textit{The third term of \eqref{T FB F eq 1}}.
   		Letting $\Psi \in \F$, we find that  
   		\begin{align}
   			\nonumber 
   			| 	\<   
   			\Psi  , 
   			[D^*(\vp) , D^* D]
   			b 
   			\Psi 
   			\> | 
   			& 	\		\leq \ 
   			\|	 [D(\vp) , D^* D]  \Psi	\|
   			\|	 b \Psi	\|		\\
   			&   			\ \lesssim \
   			| \Lambda | 	\|	 \vp	\|_{\ell^1 }
   			R^\frac{1}{2}
   			\|	 \N  \Psi	\| 
   			\| \po^{1/2}	 \Psi 	\| \   ,
   		\end{align}
   		where we used 
   		the Type-I estimates for $D$ and $D^*$, 
   		the Type-II estimate for $b$, 
   		the Type-IV estimate for $ [ D(\vp) , D]$ and $[D(\vp) , D^* ]$,
   		and the commutation relation $[ \N , [   D(\vp) ,D   ]]=0. $
   		
   	\end{itemize}
   	Upon gathering the last three estimates, we find that the first term  of \eqref{FB, F eq 1}
   	satisfies the following upper bound 
   	\begin{align}
   		\label{T FB F first term bound}
   		| 
   		\nu 
   		\big(
   		[ 	 D^*(\vp) b  , D^* D 	] 
   		\big)  
   		|
   		\lesssim  
   		|\Lambda|	
   		\|	 \vp\|_{\ell_m^1 }
   		R^{\frac{1}{2}}
   		\nu(\N^2 )^{\frac{1}{2}}
   		\nu(\N_\S)^{\frac{1}{2}} \ . 
   	\end{align}
   	
   	\item 
  	  \textit{The second term of \eqref{FB, F eq 1}.}
   	Similarly as before, we break up the    {\blu commutator}  into three pieces 
   	\begin{align}
   		\label{T FB F eq 2}
   		[	D^* b(\vp)	 ,     D^*D]
   		=
   		D^* D^* [b(\vp),D]
   		+ 
   		D^* [ b(\vp)  , D^*] D 
   		+ 
   		b(\vp) [D^* , D^* D] . 
   	\end{align}
   	{\blu These terms can be estimated as follows. } 
   	
	\begin{itemize}[leftmargin=*]
   		\item[$\blacklozenge$] \textit{The first term in \eqref{T FB F eq 2}}. 
   		Letting $\Psi \in \F$, we find that  
   		\begin{align}
   			\nonumber 
   			| 	\< \Psi ,
   			D^* D^* 
   			[b(\vp),D]
   			\Psi  \> | 
   			& 		\leq 
   			\|	 D  D   (\N + 2 )^{ -1 } \Psi	\|
   			\|	 (\N+ 2 ) [ b(\vp)   ,D  ]  \Psi	\| 		\\ 
   			&   			\lesssim 
   			|\Lambda|
   			\|	 \vp	\|_{\ell^1_m}
   			p_F^{-m}
   			\|	\N   \Psi	\|^2  \  , 
   		\end{align}
   		where we used 
   		the Type-I estimate for $D$,
   		the Type-III estimate for $[b (\vp) ,D ]$ 
   		and the pull-through formula $ (\N +2) [b(\vp) ,D ] = [b(\vp), D] \N$.  
   		\item[$\blacklozenge$] 
   		Letting $\Psi \in \F$, we find that 
   		\begin{align}
   			\nonumber 
   			| 	\< \Psi ,
   			D^* 
   			[ b(\vp)  , D^*] 
   			D 
   			\Psi  \> | 
   			& 	\ 	\leq \ 
   			\| D \Psi		\|
   			\|	[ b(\vp)  , D^*] 	\|
   			\|	 D \Psi	\|		\\ 
   			&  \ 	\lesssim  \ 
   			| \Lambda |	\|	 \vp 	\|_{\ell^1_m}
   			p_F^{-m} 
   			\|	 \N \Psi	\|^2 \  , 
   		\end{align} 
   		where we used 
   		the Type-I estimate for $D$,
   		and the Type-III estimate for $[b (\vp) ,D^* ].$ 
   		\item[$\blacklozenge$]
   		Letting $\Psi \in \F$, we find that 
   		\begin{align}
   			\nonumber 
   			| 	\< \Psi ,
   			b(\vp) 
   			[D^* , D^* D]
   			\Psi  \> | 
   			& 	\ 	\leq \ 
   			\|	 b^* (\vp)  \N \Psi	\|
   			\|	[D^* , D^* D]  ( \N +2  )^{-1 }	 \Psi \|			\\ 
   			& 	\ \lesssim   \  
   			| \Lambda |	\|	 \vp 	\|_{\ell^1_m}
   			p_F^{-m} 
   			\|	 \N \Psi	\|^2 \  , 
   		\end{align} 
   		where we used 
   		the Type-I estimate for $[D^*,D^*D]$,
   		the Type-III estimate for $b^* (\vp) $, the pull-through formula $ (\N +2) b(\vp)  = b (\vp)  \N$
   		and the commutation relation $[D^*,\N]=0$. 
   	\end{itemize}
   	
   	Upon gathering the last three estimates, we find that the second term  of \eqref{T F FB eq 1}
   	satisfies the following upper bound 
   	\begin{align}
   		\label{T FB F second term bound}
   		\Big|
   		\nu 
   		\Big(
   		[ D^*   b(\vp) , D^* D ] 
   		\Big)  
   		\Big|
   		\lesssim  
   		|\Lambda|	
   		\|	 \vp\|_{\ell_m^1 }
   		p_F^{-m}\nu(\N^2 ). 
   	\end{align}

   	\item 
  \textit{Conclusion}. The proof of the proposition is finished once we put together the estimates found in Eqs. 
   	\eqref{T FB F first term bound} and \eqref{T FB F second term bound} back in
   	\eqref{FB, F eq 1} . 
   \end{itemize}
  \end{proof}

  \subsection{Analysis of $T_{FB,B}$} 
  
  The main result of this subsection is the following proposition. 
  It contains an estimate on the size of $T_{FB,B}$. 
  
  \begin{proposition}
  	[Analysis of $T_{FB,B}$]
  	\label{prop T FB B}
Let $T_{FB,B}(t,p)$ be the quantity defined in \eqref{T alpha beta} 
with $\alpha = FB$, and $\beta=B$.
Further, let $m>0$.
Then, there exists a constant $C>0$
such that for   all   $\vp \in \ell_m^1$
  	and 
$ t \geq 0 $
such that 
  	\begin{align}
  	 |  
  	 	T_{FB,B}
  		(t,\vp)
  		| 
  		&  \, 	\leq  	\, 
  		C   
  		t^2 \|	 \hat V	\|_{\ell^1 }^2 
  		|\Lambda|
  		\|	 \vp	\|_{\ell^1_m}
  		\sup_{ 0 \leq \tau \leq t }
  		\Big(
  		R^{\frac{3}{2}}  \, 
  		\nu_\tau (  \po)^{\frac{1}{2}}
  		+ 
  		R^2 p_F^{-m}
  		\nu_\tau (\N_1^2)^{\frac{1}{2}} 
  		\Big)  
  	\end{align}
where we recall $  
T_{FB,B}
(t,\vp)  =  \< \vp , T_{FB,B} (t) \>
$  	and  $R =    |\Lambda|  p_F^{d-1}  $. 
  \end{proposition}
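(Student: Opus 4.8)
The plan is to follow the same template used for $T_{F,FB}$ and $T_{F,B}$ in Propositions \ref{prop T F FB} and \ref{prop T F B}. Since $\vp$ may be assumed real-valued (otherwise decompose into real and imaginary parts), and $V_{FB}(t)$, $V_B(t)$, $N(\vp)$ are self-adjoint, I start from \eqref{T alpha beta phi} and use the representations \eqref{VFB(t)} and \eqref{VB(t)} together with the symmetry $D_k^* = D_{-k}$, $\hat V(-k) = \hat V(k)$, and $[D_k^*, b_k]=0$ to rewrite $V_{FB}(t)$ in the form $\int \hat V(k) B_k^*(t) D_k(t) \d k$ with $B_k^* = b_k^* + b_{-k}$, and to replace the $b^*b^*$ part of $V_B$ by $bb$ after taking the real part of $\nu_{t_2}$. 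The elementary bound $|\int_0^t\int_0^{t_1}\cdots| \lesssim t^2 \|\hat V\|_{\ell^1}^2 \sup |\nu_{t_2}(\cdots)|$ then reduces everything to estimating, uniformly in $k,\ell \in \supp\hat V$ and $t_i \in [0,t]$, the commutator expectation
\begin{equation*}
\nu_{t_2}\big( [[N(\vp), D_k^*(t_1) b_k(t_1)], b_\ell^*(t_2) b_\ell(t_2)] \big)
\qquad\text{and}\qquad
\nu_{t_2}\big( [[N(\vp), D_k^*(t_1) b_k(t_1)], b_\ell(t_2) b_{-\ell}(t_2)] \big).
\end{equation*}
By Remark \ref{remark states} it suffices to bound these on pure states $\Psi \in \F$.

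Next I would expand the inner commutator using the contraction notation: $[N(\vp), D_k^* b_k] = D_k^*(\vp) b_k + D_k^* b_k(\vp)$, where $D_k^*(\vp) = [N(\vp), D_k^*]$ and $b_k(\vp) = [N(\vp), b_k]$. Then each of the two resulting quartic-in-$\{b,D\}$ expressions is commuted with $b_\ell^* b_\ell$ (resp. $b_\ell b_{-\ell}$), producing a handful of terms each of the shape (contraction operator)$\times$(string of $b$'s and $D$'s and their commutators). The bookkeeping is exactly parallel to the analysis of \eqref{TFB eq 5}--\eqref{TFB eq 6} in the proof of Lemma \ref{lemma TFB main term}, item (III), and to the diagonal/off-diagonal split in the proof of Proposition \ref{prop T F B}. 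For each term I would apply the appropriate combination from the toolbox: Type-I estimates (Lemma \ref{lemma type I}) for bare $D$'s and $D$-$D$ commutators, Type-II estimates (Lemma \ref{lemma type II}) for $b$'s and $[b,D]$ commutators (giving factors $R^{1/2}\|\po^{1/2}\Psi\|$), Type-III estimates (Lemma \ref{lemma type III}) for the $\vp$-contracted operators $b_k(\vp)$, $[b_\ell, D_k(\vp)]$, etc. (giving factors $|\Lambda| p_F^{-m}\|\vp\|_{\ell^1_m}$), Type-IV estimates (Lemma \ref{lemma type IV}) for $D_k(\vp)$ (giving $|\Lambda|\|\vp\|_{\ell^1}$), the operator-norm bound $\|b_k\| \lesssim R$ from \eqref{bounds on b}, and the commutation relations from Lemmas \ref{lemma N commutators} and \ref{lemma NS commutators} to move $\N$ past contractions. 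Because there are now \emph{three} $b$-type factors coming in (one from the $D^*b$ piece of $V_{FB}$, two from $V_B$), the worst terms will carry either $R\cdot R^{1/2} = R^{3/2}$ with $\nu(\po)^{1/2}$, or — when the $\vp$-contraction lands on a $b$ so that a Type-III estimate must be used alongside two operator-norm bounds $\|b\|\lesssim R$ — a factor $R^2 p_F^{-m}\|\vp\|_{\ell^1_m}$ paired with $\nu(\N_1^2)^{1/2}$. This matches precisely the two terms appearing in the claimed bound.

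Collecting all the term-by-term estimates and taking suprema over $\tau \le t$ yields
\begin{equation*}
|T_{FB,B}(t,\vp)| \lesssim t^2 \|\hat V\|_{\ell^1}^2 |\Lambda| \|\vp\|_{\ell^1_m} \sup_{0\le\tau\le t}\Big( R^{3/2}\nu_\tau(\po)^{1/2} + R^2 p_F^{-m}\nu_\tau(\N_1^2)^{1/2}\Big),
\end{equation*}
which is the assertion of Proposition \ref{prop T FB B}. The main obstacle is not any single estimate — each individual term is handled by one of the four toolbox lemmas exactly as in the preceding proofs — but rather the combinatorial completeness: one must enumerate \emph{all} the terms produced by the double expansion (two from the inner commutator, times roughly three or four from commuting with $b_\ell^*b_\ell$ or $b_\ell b_{-\ell}$, for each of which one must decide where the $\vp$-contraction sits) and verify that every one of them is dominated by one of the two allowed right-hand-side shapes, with the correct powers of $R$ and $p_F^{-m}$. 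A secondary subtlety is the careful use of the pull-through formulae \eqref{pull thru formula b} and \eqref{pull through formula Ns} to insert and cancel factors of $(\N+c)^{\pm 1/2}$ or $(\po+c)^{\pm1/2}$ so that the $b$'s act on states of the right number, exactly as in the estimate chain \eqref{eq comm 5}; this is where one ensures the $\po$ (rather than merely $\N$) appears in the terms that should carry it.
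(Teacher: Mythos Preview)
Your proposal is correct and follows essentially the same approach as the paper: the same starting inequality, the same split $[N(\vp),D_k^*b_k]=D_k^*(\vp)b_k+D_k^*b_k(\vp)$, and the same term-by-term analysis via the toolbox lemmas, correctly identifying the $R^{3/2}\nu(\po)^{1/2}$ and $R^2 p_F^{-m}\nu(\N^2)^{1/2}$ contributions. One minor difference: for the second piece $[D_k^*b_k(\vp),V_B]$ the paper bypasses the commutator expansion entirely and uses the crude bound $|\nu([A,V_B])|\le 2|\nu(AV_B)|$ together with $\|V_B\|\lesssim \|\hat V\|_{\ell^1}R^2$ and the Type-III estimate for $b_k(\vp)$, which is slightly quicker than the full expansion you outline; also note that the first piece requires the boson-commutator norm bound $\|[b_k,b_\ell^*]\|\lesssim R$ (equation \eqref{commutator boson estimate}), which you did not list explicitly but which fits naturally into your scheme.
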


  \begin{proof}
For simplicity, we assume $\vp$ is real-valued--in the general case, 
one may expand into real and imaginary parts and use linearity of the commutators. 
Starting from \eqref{T alpha beta phi} we use the self-adjointness of $V_{FB} (t) $, $V_B(t)$ and $N(\vp) = \int_{  \Lambda^{*} } \vp(p) a_p^*a_p \d p $ to get 
the elementary inequality 
\begin{align}
\nonumber 
	| 	  T_{  FB , B} 
	(t   , \vp )
	| 
	&  = 
	\Big|
	\int_0^t 
	\int_0^{t_1}
	\mathrm{Re} 
	\, 
	\nu_{t_2 }
	\Big( 
	[ [   N(\vp) , V_{FB }(t_1) ] ,  V_B (t_2)]
	\Big)  
	\d t_1 \d t_2  
	\Big|   \\
	\label{starting point T FB B}
	& 
	\lesssim 
	t^2 \|	 \hat V	\|_{\ell_1} 
	\sup_{  k \in \supp \hat V, t_ i \in [0,t]}
	\Big|
	\nu_{t_2}
	\Big( 
	[ [   N(\vp) ,  D_k^* (t_1)  b_k  (t_1) ] ,   V_B(t_2) ]
	\Big)  
	\Big| 
\end{align}
where in the last line we used the representation 
of $V_{FB}(t)$ in terms of $b$- and $D$-operators found in \eqref{VFB(t)}--the $D_k^*b^*_{-k}$ term
is re-written in terms of $D^*_k b_k$ upon taking the  real part of $\nu$. 
Next, we estimate the two supremum quantity in  \eqref{starting point T FB B}. 
In view of Remark \ref{remark states}, it suffices to  provide estimates for pure states $\Psi \in \F$. 
In  order to ease the notation, we omit the variables $t_1,t_2 \in [0,t]$. 
We  shall  make extensive use of Type-I to Type-IV estimates contained in Lemma \ref{lemma type I}--\ref{lemma type IV}, 
and the commutation relations from Lemmas \ref{lemma N commutators} and \ref{lemma NS commutators}.

\vspace{1mm}
In terms of $D_k^*(\vp)  =  [ N (\vp) , D_k]$ and $b_k  (\vp)  =  [N(\vp) , b_k]$
we calculate the first commutator to be 
\begin{equation}
  		\label{TFB,B eq 1}
 [	 [   N(\vp) , 	 D^*_k b_k	] , V_B	]	   
	 = 
	[	 D^*_k(\vp) b_k ,    V_B		]	 + 	[	 D^*_k b_k(\vp)  ,    V_B		]	 , \qquad \forall k \in \supp \hat V\ . 
\end{equation}
  	We shall estimate the expectation of the  two terms in \eqref{TFB,B eq 1} separately.

 \begin{itemize}[leftmargin=*]
 	\item   		\textit{The first term of \eqref{TFB,B eq 1}.}
 	We expand $V_B$ into   three additional terms. Namely
 	\begin{align}
 		\nonumber 
 		[  D_k^*(\vp) b_k   , V_B ] 
 		&   = 
 		\int_{  \Lambda^{* } } \hat V (\ell)
 		\Big(
 		[  D_k^*(\vp) b_k   ,  b_\ell^* b_\ell] 
 		+
 		\frac{1}{2}	
 		[  D_k^*(\vp) b_k   ,  b_\ell b_{ - \ell }] 
 		+
 		\frac{1}{2}
 		[  D_k^*(\vp) b_k   ,  b_{-\ell}^* b_\ell^* ] 
 		\Big)
 		\d \ell  \\ 
 		&  \equiv  
 		\int_{  \Lambda^{* } } \hat V (\ell) 
 		\Big( 
 		C_{1} (k,\ell) + C_{2} (k,\ell) + C_{3}(k,\ell) 
 		\Big)  \d \ell  \ . 
 		\label{C1 C2 C3}
 	\end{align}
 	Next, we proceed  	to  analyze  the commutators $C_{j }$ for $ j  =1 ,2,3$  separately.

	\begin{itemize}[leftmargin=*]
 		\item[$\blacklozenge$] \textit{Analysis of $C_1$}. We expand the commutator 
 		\begin{equation}
 			\label{C1,1}
 			C_{1}(k,\ell)
 			= 
 			D_k^* (\vp) [b_k , b_\ell^*] b_\ell
 			+
 			[D_k^*(\vp) , b_\ell^* b_\ell] b_k \ . 
 		\end{equation}
 		Let us recall that the   $[b_k (t) , b_\ell^*(s)]$ 
 		can be calculated explicitly -- see \eqref{boson commutator}. 
 		In particular, it  can be easily verified that for $k,\ell \in \supp \hat V$ it satisfies the   estimate
 		\begin{equation}
 			\label{commutator boson estimate}
 			\|    [  b_k (t),b_\ell^* (s) ]	 \|_{B(\F)}   
 			\lesssim  
 			R
 		\end{equation}
 		Consequently,   $C_{1}$
 		can be estimated as follows.
 		Omitting momentarily the variables $k,\ell \in \supp \hat V$ we find 
 		\begin{align}
 			\nonumber 
 			\big|
 			\<
 			\Psi , 
 			C_{1}  
 			\Psi 
 			\>
 			\big|
 			&   \leq 
 			\big|
 			\<
 			[ b , b^* ] D (\vp) 
 			\Psi ,
 			b
 			\Psi 
 			\>
 			\big|
 			+
 			\big|
 			\<
 			\Psi ,
 			[D^*(\vp) , b^* b] b 
 			\Psi 
 			\>	
 			\big|			\\ 
 			\nonumber 
 			& 
 			\leq 
 			\| [b  , b^*]  \| \, \|   D^* (\vp)	 \Psi	\|
 			\, \| b \Psi		\|
 			+ 
 			\|		[D^*(\vp) , b^* b] 	\|
 			\|	 \Psi	\|		\|	 b \Psi	\|  
 			\\
 			& 
 			\lesssim  
 			R 
 			|\Lambda |		\|	 \vp	\|_{\ell^1 }
 			\|	  \Psi	\|
 			R^{\frac{1}{2} } 
 			\|	 \po^{ \frac{1}{2}  } \Psi	\|
 			+ 
 			| 	 \Lambda| 
 			p_F^{-m}
 			\|	\vp	\|_{\ell^1_m}
 			R^2 
 			\|	 \Psi	\|^2
 			\  ,
 			\label{C1}
 		\end{align}
 		where we used the
 		Type-II estimate for $b$, 
 		the Type-III estimate for $[D^*(\vp), b]$ and $[D^*(\vp) , b^*]$,
 		the Type-IV estimate for $D^*(\vp)$, 
 		the norm bound $\|		b\|		\lesssim R$
 		and the commutator bound \eqref{commutator boson estimate}. 
 		\item[$\blacklozenge$] \textit{Analysis of $C_2$}. 
 		{\blu This term is easier to estimate, as there are no non-zero commutator between the $b$ operators.}  		
 		Namely, 
 		there holds $C_{2} (k, \ell) =   [  D_k^*(\vp)  ,  b_\ell b_{ - \ell }]  b_k .$
 		Thus, we find (omitting the $k,\ell \in \supp \hat V $ variables) 
 		\begin{align}
 			|	 \<  \Psi , C_2 \Psi \>	|
 			\lesssim 
 			| \Lambda|	\|	 \vp	\|_{\ell^1 }
 			\| \hat V		\|_{\ell^1}^2 
 			R^2
 			p_F^{ -m}
 			\|	 \Psi	\|^2 \ . 
 			\label{C2}
 		\end{align}
 		\item[$\blacklozenge$] \textit{Analysis of $C_3$}.
 		This is the most intricate term among the three terms we analyze, because it involves higher-order commutators. 
 		First we decompose 
 		\begin{align}
 			\nonumber 
 			C_{3}(k ,\ell)
 			& 	=	
 			D_k^*(\vp)   b_{-\ell}^*   [   		  b_k ,  b_\ell^* ] 
 			+
 			D_k^*(\vp)     [  		  b_k   ,  b_{-\ell}^*] b_\ell^* 
 			+ 
 			[  D_k^*(\vp)  ,  b_{-\ell}^* b_\ell^* ] 
 			b_k 	\\
 			\label{eq:C3}
 			& \equiv 
 			C_{3,1}(k,\ell)
 			+
 			C_{3,2}(k,\ell)
 			+ 
 			C_{3,3}(k,\ell) \   
 		\end{align}
 		and analyze each term separately.

 		\vspace{1mm}
 		
 		Let us look at the first one.  Omitting the $k,\ell \in  \supp \hat V $ variables we find 
 		\begin{align}
 			\nonumber 
 			\big|
 			\< 
 			\Psi , 
 			C_{3,1}
 			\Psi 
 			\> 
 			\big|
 			& = 
 			\big|
 			\< 
 			b  D (\vp )	 \Psi ,   [   		  b ,  b ^* ] 
 			\Psi 
 			\> 
 			\big|				\\
 			\nonumber 
 			& \leq 
 			\|  b  D (\vp)	 \Psi	\| 
 			\|	 [b  ,b^*] \Psi 	\|		\\
 			\nonumber 
 			& \leq 
 			\|	 [  b , D(\vp)]	\| \|	 \Psi	\|  \|	 [ b, b^*]   \Psi	\|
 			+ 
 			\| D(\vp)		\|	 
 			\|	 b \Psi 	\|
 			\|	 [ b, b^*]   \Psi	\|  \\
 			\nonumber 
 			& \lesssim  
 			( | \Lambda | p_F^{-m} \| \vp		\|_{\ell_m^1}  ) R \|	 \Psi	\|^2 
 			+ 
 			| \Lambda  |    \|	 \vp	\|_{\ell^1 }		R^{\frac{1}{2}} \|	 \po^{\frac{1}{2}} \Psi	\| 
 			R \| \Psi		\|			\\
 			& \leq   
 			|\Lambda | \| \vp	\|_{\ell^1_m }
 			\Big(
 			R^{\frac{3}{2}} 
 			\|	 \Psi	\|	\|	 \po^{ \frac{1}{2} } \Psi	\|
 			+ 
 			R p_F^{-m} \|	 \Psi	\|^2 
 			\Big)  
 		\end{align}
 		where we used 
 		the Type-II estimates for $b$, 
 		the Type-III estimate for $[b, D (\vp)]$, 
 		and the commutator bound  $\|	 [b , b^*]  	\| \leq R $, see Eq. \eqref{commutator boson estimate}. 
 		
 		\vspace{1mm}
 		
 		Let us now look at the second one. 
 		Let us recall that the 
  bosonic commutator  can be written 
 		as $[b_k , b_\ell^*] = \delta(k - \ell) G_k \1 + \calR_{ k , \ell  }$ 
 		where $G_k$ is a scalar, and $\calR_{k,\ell}$ is a remainder operator (see \eqref{commutator of b} for details). 
 		Thus, we find 
 		\begin{align}
 			\nonumber 
 			\big|
 			\langle 
 			\Psi , 
 			C_{3,2}(k ,\ell )
 			\Psi 
 			\rangle 
 			\big|  
 			\nonumber 
 			& \leq 
 			\big|
 			\< 
 			\Psi , 
 			C_{3,1} (k , - \ell)
 			\Psi 
 			\> 
 			\big| 	
 			+ 
 			\big|
 			\< 
 			\Psi , 
 			D_k^*(\vp)     
 			[  \calR_{ k , -  \ell  }   ,   b_{  \ell} ]
 			\Psi 
 			\> 
 			\big| 				\\
 			\nonumber 
 			& \leq 
 			|\Lambda | \| \vp	\|_{\ell^1_m }
 			\Big(
 			R^{\frac{3}{2}} 
 			\|	 \Psi	\|	\|	 \po^{\frac{1}{2}}\Psi	\|			+ 
 			R p_F^{-m} \|	 \Psi	\|^2 
 			\Big)   
 			\\
 			& \quad   	
 			+ 
 			|\Lambda|  \| \vp		\|_{\ell^1 }
 			R^{\frac{1}{2}} \|	 \Psi 	\|		
 			\|   \po^{\frac{1}{2}}	 \Psi	\| 
 		\end{align}
 		where in the last line we used the upper bound for $C_{3,1} (k,\ell )$, 
 		the Type-IV estimate for $D_k^*(\vp)$,
 		and   the  following commutator estimate 
 		\begin{equation}
 			\|	  [  \calR_{ k , \ell  } , b_{- \ell } ]   \Psi	\|
 			\lesssim
 			R^{\frac{1}{2}}
 			\|	 \po^{\frac{1}{2}} \Psi	\| 
 		\end{equation}
 		valid for $k ,\ell \in \supp \hat V$. 
 		
 		\vspace{1mm}
 		
 		Let us now look at the third one. Omitting the $k, \ell \in  \supp \hat V $ variables we find 
 		\begin{align}
 			\big|
 			\< 
 			\Psi , 
 			C_{3,3}
 			\Psi 
 			\> 
 			\big| 	   
 			\leq
 			2
 			\|		b^*	\|
 			\|	  [ D^*(\vp) , b^*  ] 	\|
 			\|	 \Psi 	\|
 			\|	 b \Psi	\|	  
 			\lesssim  
 			|\Lambda|
 			\|	 \vp	\|_{\ell^1_m}
 			R^2
 			p_F^{-m}
 			\|	 \Psi	\|^2 
 		\end{align}
 		where we used the Type-III estimate for $[D^*(\vp) , b^*]$,
 		and the norm bounds $\|	 b \| ,  \|	b^*	\|\lesssim R . $

 		\vspace{1mm}
 		
 		{\blu We can combine   the estimates for $C_{3,1}$, $C_{3,2}$ and $C_{3,3}$ with \eqref{eq:C3}. Namely, we   find that  for all 
 			$k , \ell \in \supp \hat V $ there holds } 
 		\begin{align}
 			\big|
 			\<	\Psi , C_3 (k,\ell) \Psi	\>
 			\big|
 			\lesssim 
 			| \Lambda|	\|	 \vp	\|_{\ell^1  }  
 			\Big(
 			R^{ \frac{3}{2}} 
 			\|	 \Psi	\| 
 			\|		 \po^{\frac{1}{2}} \Psi 	\|
 			+ 
 			R^2
 			p_F^{ -m}
 			\|	 \Psi	\|^2 
 			\Big) \  . 
 			\label{C3}
 		\end{align}
 	\end{itemize}	
 	
 	{\blu  Finally,  	we combine the   estimates that we found for
 		$C_{1}$, $C_{2}$ and $C_{3}$ in 
 		\eqref{C1}, \eqref{C2} and \eqref{C3}, respectively. 
 		More precisely, we find that 
 		the expectation of the first term   in \eqref{TFB,B eq 1}
 		is bounded above by} 
 	\begin{align}
 		\label{T FB B eq 3}
 		\Big|	  
 		\nu 
 		\big(  	   
 		[  D_k^*(\vp) b_k   , V_B ] 
 		\big) 
 		\Big| 
 		\leq 
 		| \Lambda|	\|	 \vp	\|_{\ell^1_m }
 		\| \hat V		\|_{\ell^1} 
 		\Big(
 		R^{ \frac{3}{2}} 
 		\nu(\1)^{\frac{1}{2}}
 		\nu(\po)^{ \frac{1}{2}}
 		+ 
 		R^2
 		p_F^{ -m}
 		\nu(\1)
 		\Big) \ . 
 	\end{align}

 	\item 
 	\textit{The second term of \eqref{TFB,B eq 1}.}
 	This one is easy, we use the {\blu rough} estimate 
 	\begin{align}
 		\label{T FB B eq 2}
 		| \nu 
 		\Big(  	   
 		[  D_k^* b_k (\vp)   , V_B ] 
 		\Big)  | 
 		\leq 
 		| \nu 
 		\big(  	   
 		D_k^* b_k (\vp)   V_B 
 		\big)  | 
 		+ 
 		| \nu 
 		\big(  	   
 		V_B   		 D_k^* b_k (\vp)   
 		\big)  | \ . 
 	\end{align}
 	We estimate these two terms as follows. 
	\begin{itemize}[leftmargin=*]
 		\item[$\blacklozenge$]
 		In view of 
 		$\|	 V_B	\|_{B(\F)}
 		\lesssim 
 		\|	 \hat V	\|_{\ell^1 } 
 		R^2 $
 		we find 
 		for the first term in \eqref{T FB B eq 2} that 
 		\begin{align}
 			\nonumber 
 			\big|
 			\< 
 			\Psi , 
 			D_k^* b_k (\vp)   V_B 
 			\Psi 
 			\> 
 			\big| 	
 			& \leq 
 			\|   b_k^*(\vp) D_k \Psi		\|   
 			\| V_B \Psi 		\| 
 			\\ 
 			& \leq   		\nonumber 
 			\|	 \hat V	\|_{\ell^1 } 
 			\|	 b^*_k (\vp)	\|
 			\|	 \N \Psi	\|
 			R^2 \|	\Psi 	\|  \\
 			&   			\leq 
 			|\Lambda|
 			\|	 \hat V	\|_{\ell^1 } 
 			p_F^{ - m}
 			\|	 \vp	\|_{\ell^1_m}
 			R^2 
 			\|	 \N\Psi	\|\|	 \Psi	\| \ , 
 		\end{align}
 		where we used the Type-I estimate for $D_k$, 
 		and the Type-III estimate for $b_k^*(\vp)$. 
 		
 		\item[$\blacklozenge$]
 		For the second term in \eqref{T FB B eq 2}, 
 		we use the same bound for $V_B$, 
 		together with the 
 		pull through formula $\N b(\vp) = b(\vp) (\N -2 )$
 		to find that 
 		\begin{align}
 			\nonumber 
 			\big| 	
 			\< 
 			\Psi , 
 			V_B   		 D_k^* b_k (\vp)   
 			\Psi 
 			\> 
 			\big| 	
 			&   		\leq 
 			\|	 V_B \Psi	\|
 			\|	 D_k^* (\N+2)^{-1}	\|
 			\|	  ( \N+2)  b_k(\vp) \Psi 	\|			 \\ 
 			\nonumber 
 			& 	\leq 
 			\|	 \hat V	\|_{\ell^1 }   		R^2 \|	 \Psi	\|
 			\|   b_k(\vp)	\N     \Psi	\|	\\
 			&   			\leq 
 			\|	 \hat V	\|_{\ell^1 } 
 			R^2 | \Lambda |	  \|  \vp \|_{\ell_m^1 }		 p_F^{-m}
 			\|	 \Psi	\|		\|  \N    \Psi	\|  \ . 
 		\end{align}
 		Here,   		  we used the Type-I estimate for $D_k^*$, 
 		and the Type-III estimate for
 		$b_k(\vp)$. 
 	\end{itemize}

 	These last two estimates  combined together then imply that 
 	\begin{align}
 		\label{T FB B eq 4}
 		| \nu 
 		\big(  	   
 		[  D_k^* b_k (\vp)   , V_B ] 
 		\big)  | 
 		\leq 
 		\|	 \hat{V}	\|_{\ell^1} 
 		R^2 | \Lambda |	  \|  \vp \|_{\ell_m^1 }		 p_F^{-m}
 		\nu(\1)^{\frac{1}{2}}
 		\nu( \N^2)^{\frac{1}{2}		} \ . 
 	\end{align}

 	\item 
  \textit{Conclusion.} 
 	The proof of the proposition is finished once we gather the estimates  contained in \eqref{T FB B eq 3} and \eqref{T FB B eq 4}, 
 	and plug them back in \eqref{starting point T FB B}.
 	
 \end{itemize}

  \end{proof}

  \subsection{Analysis of $T_{B,\alpha }$}
  Out of the nine terms $T_{\alpha , \beta} (t,\vp)$, 
  those with  $\alpha = B $ are the easiest ones to deal with.
The main result of this subsection is contained in the following proposition.
It contains an estimate for the three terms $T_{B,F}$, $T_{B,FB}$ and $T_{B,B}$. 
  \begin{proposition} 
[Analysis of $T_{B,F}$, $T_{B,FB}$ and $T_{B,B}$]
  	\label{prop T B}
Let $T_{B,F}(t,p)$,
$T_{B,FB}(t,p)$ 
and
$T_{B,FB}(t,p)$
be the quantities defined in \eqref{T alpha beta},
for $\alpha = B$
and $\beta = F$, 
$\beta = FB$
and
$\beta = B$, respectively.
Further, let $m>0$.
Then, there exists a constant $C>0$
such that for all $\vp \in \ell_m^1$
and $t \geq 0$
there holds 
  	\begin{align}
  		\nonumber  
|  		T_{B,F  } (t,\vp)|
+
|  		T_{B,FB  } &  (t,\vp)|
 +
|  		T_{B,B  } (t,\vp)|	\\
 	& 	\leq C  
  		t^2 
  		\|		\hat V 	\|_{\ell^1}^2 
  		| \Lambda |  			\|	 \vp	\|_{ \ell^1_m}
  		R^3 p_F^{-m} 
  		\sup_{0 \leq \tau \leq t}
  		\Big(
  		1 + R^{-2}
  		\nu_\tau (\N^4 )^{\frac{1}{2}	} 
  		\Big) \ , 
  	\end{align}
  	where  we recall 
  	$ T_{\alpha,\beta} (t,\vp) = \<  \vp , T_{\alpha,\beta } (t)  \>   $
  	and 
  	$ R = | \Lambda| p_F^{d-1}$  . 
  \end{proposition}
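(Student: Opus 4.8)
\textbf{Proof strategy for Proposition \ref{prop T B}.}
The plan is to treat all three terms $T_{B,F}$, $T_{B,FB}$ and $T_{B,B}$ uniformly, exploiting the fact that in each case the \emph{first} slot of the double commutator is the boson-boson interaction $V_B(t)$, whose structure --- quadratic in $b$-operators --- is the source of the large pre-factor but also the reason these terms are purely subleading. Starting from \eqref{T alpha beta phi} with $\alpha = B$, I would first pass to the self-adjoint representation of $V_B(t)$ given in \eqref{VB(t)} and open the inner commutator $[N(\vp), V_B(t_1)]$. Because $[N(\vp), b_k^* b_k] = b_k^*(\vp) b_k + b_k^* b_k(\vp)$ and similarly for the off-diagonal pieces $b_k^* b_{-k}^*$ and $b_{-k} b_k$, every resulting operator is a product of \emph{two} $b$- or $b^*$-type operators, at least one of which is a contraction $b_k(\vp)$ or $b_k^*(\vp)$. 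This is the key observation: the presence of a contracted $b(\vp)$ immediately delivers a Type-III estimate (Lemma \ref{lemma type III}), giving the decisive factor $|\Lambda| p_F^{-m} \|\vp\|_{\ell_m^1}$, while the companion uncontracted $b$ contributes a norm bound $\|b\| \lesssim R$ from \eqref{bounds on b}.

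Next I would take the outer commutator with $V_\beta(t_2)$ for $\beta \in \{F, FB, B\}$. After distributing, each of $V_F$, $V_{FB}$, $V_B$ supplies at most two more factors of $b$, $b^*$ or $D$-operators. The worst case is $\beta = B$, which adds two more $b$-factors, so the generic term is a product of roughly four $b$/$b^*$-operators (one of them contracted) times possibly a $D$ or a remainder commutator $[b, b^*]$. I would bound these by repeatedly using: the Type-III estimate once (for the contracted $b(\vp)$, giving $|\Lambda| p_F^{-m}\|\vp\|_{\ell_m^1}$), the operator norm bound $\|b\|, \|b^*\| \lesssim R$ up to three times (giving $R^3$), the Type-I estimate $\|D_k \Psi\| \lesssim \|\N\Psi\|$ when a bare $D$ appears, and the commutator bound $\|[b_k, b_\ell^*]\| \lesssim R$ from \eqref{commutator boson estimate} when boson commutators are generated. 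Since at most four $a^*a$-factors can act, the worst Fock-space weight is $\|\N^2\Psi\|$, i.e. $\nu_\tau(\N^4)^{1/2}$ after applying Remark \ref{remark states}; when no $D$ is present (the $\beta = B$ case) one only gets $\nu_\tau(\1)$, accounting for the ``$1$'' in the displayed bound. Collecting powers of $R$: three from the uncontracted $b$-operators and boson commutators, so $R^3 p_F^{-m}$, with the $\nu_\tau(\N^4)^{1/2}$ contribution carrying the compensating $R^{-2}$ as written. Finally I would integrate over $k, \ell \in \supp \hat V$, picking up $\|\hat V\|_{\ell^1}^2$, and over $0 \le t_2 \le t_1 \le t$, picking up $t^2$, and take the supremum over $\tau \le t$ of the resulting $\nu_\tau$-expectations, exactly as in the proofs of Propositions \ref{prop T F B}, \ref{prop T FB F} and \ref{prop T FB B}.

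The main obstacle --- though it is bookkeeping rather than a conceptual difficulty --- is the sheer combinatorial proliferation of terms in the $\beta = B$ case: expanding $[[N(\vp), V_B], V_B]$ produces on the order of a dozen distinct operator monomials once all Leibniz expansions and the non-trivial boson commutators $[b_k, b_\ell^*] = \delta(k-\ell) G_k \1 + \calR_{k,\ell}$ are accounted for, and one must check that \emph{every} one of them admits the claimed bound with no term escaping the $R^3 p_F^{-m}$ ceiling. The way I would control this is to observe that the estimate is monotone: replacing any $b$ or $b^*$ by its operator-norm bound $R$, any $D$ by $\N$, any contracted $b(\vp)$ by the Type-III bound, and any boson commutator by $R$, is always legitimate, so a single ``universal'' estimate --- count the $b$-factors, count the $D$-factors, locate the unique contraction --- suffices for all monomials simultaneously. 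Consequently I would not enumerate the monomials individually but state one lemma-style bound covering products of the form $D^{\#} b^{\#} b(\vp)$ (with $D^\#$ meaning zero or one bare $D$ and $b^\#$ meaning up to three bare $b/b^*$ or boson-commutator factors) and apply it term by term, exactly mirroring the itemized structure used in the proof of Proposition \ref{prop T FB B}. The remaining subtlety is ensuring the pull-through formulae from Lemma \ref{lemma N commutators} are invoked correctly so that powers of $\N$ can be freely moved past $b$-operators before Remark \ref{remark states} is applied; this is routine given the machinery already developed in Section \ref{toolbox1}.
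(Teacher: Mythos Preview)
Your approach is correct and the power counting you describe lands on the right answer, but the paper takes a considerably shorter route that avoids the combinatorial bookkeeping you flag as the main obstacle. Instead of expanding the outer commutator with $V_\beta$ and enumerating monomials, the paper applies the crude estimate
\[
\big|\langle \Psi, [[N(\vp), V_B], V_\alpha] \Psi\rangle\big| \;\leq\; 2\, \big\|[N(\vp), V_B]\big\|_{B(\F)}\, \|\Psi\|\, \|V_\alpha \Psi\|,
\]
valid because (for real $\vp$) $[N(\vp), V_B]$ and $V_\alpha$ are (anti-)self-adjoint. The first factor is bounded in \emph{operator norm} exactly via the mechanism you identify --- one Type-III estimate on $b_k(\vp)$ and one $\|b_k\|\lesssim R$ --- yielding $\|\hat V\|_{\ell^1}\, R\, |\Lambda|\, p_F^{-m}\|\vp\|_{\ell_m^1}$. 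The second factor is then handled by the single uniform bound $\|V_\alpha\Psi\| \lesssim \|\hat V\|_{\ell^1}\big(\|\N^2\Psi\| + R^2\|\Psi\|\big)$, already recorded in the proof of Lemma~\ref{lemma TFF 2}, valid simultaneously for $\alpha\in\{F,FB,B\}$. Multiplying, integrating in time, and applying Remark~\ref{remark states} gives the claim in one stroke.

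What the paper's approach buys is that there is no outer-commutator expansion at all: no boson-commutator remainders $\calR_{k,\ell}$, no case split on $\beta$, no pull-through gymnastics. What your approach buys is a closer structural parallel to the other $T_{\alpha,\beta}$ proofs; it would succeed, but here the observation that $[N(\vp),V_B]$ is itself a \emph{bounded} operator (thanks to the contracted $b(\vp)$) makes the finer decomposition unnecessary.
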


  \begin{proof} 
In what follows, we let $\alpha$ be either $F$, $FB$ or $B$, 
and we fix $m>0$, $t \geq 0$ and $\vp\in \ell_m^1$.
Starting from \eqref{T alpha beta} one   finds the following elementary bound
\begin{align}
	|	T_{B,\alpha} (t, \vp)	| 
	\lesssim 
	t^2   
	\sup_{t_i\in[ 0 , t] }
	\big|
	\nu_{t_2} 
	\big(
 [ 	[N(\vp)	  , V_B(t_1)	]   , V_\alpha(t_2) ] 
	\big)
	\big|
\end{align}
and so it suffices to estimate the supremum quantity in the above inequality. 
In view of Remark \ref{remark states}, 
it suffices to consider estimates on pure states $\Psi \in \F$.
In order to ease the notation, we drop the time variables $t_1,t_2\in[0,t]$. 
Thus, we find that 
\begin{align}
	|\< \Psi  ,   [   	[N(\vp)	  , V_B  	]   , V_\alpha  ]  \Psi  \>|
	\leq 2 
	|\< \Psi  ,    [ 	[N(\vp)	  , V_B  	]   V_\alpha   \Psi  \>|
	\leq 
	2
	\|		 [N(\vp) , V_B]\|  		\| \Psi		\|		\|	V_\alpha \Psi 	\|
\end{align}
Using the expansion of $V_B$ in terms of $b$-operators (see \eqref{VB(t)}), 
it is straightforward to find that, in  terms of $b_k(\vp)  =   [	N(\vp) ,  b_k 	]$, 
\begin{equation}
	\|  [N(\vp) , V_B]	 	\| 
	\leq 
	 2
	 \|	 \hat V	\|_{\ell^1 }	
	   \|  	 b	 \|  \|	 b_k(\vp) 	\|
	 \lesssim 
\|	 \hat V	\|_{\ell^1 }	
R |\Lambda| p_F^{- m } \|	 \vp	\|_{\ell^1_m}
\end{equation}
where we used     the Type-III estimate on  $b_k(\vp)$ (see Lemma \ref{lemma type III}),
together with the norm bound $\|	b_k	\|\lesssim R$. 
  	On the other hand, we have previously established the estimate 
  	\begin{equation}
\|	 V_\alpha \Psi \|   
  		\lesssim 
  		\|	 \hat V	\|_{\ell^1 }
  		\Big( 
  		\|	 \N^2 \Psi	\|
  		+
  		R^2 	\|	 \Psi	\| 
  		\Big)  \ . 
  	\end{equation}
  	The proof is finished once we gather the last four estimates. 
  \end{proof}

 \section{Proof of Theorem \ref{theorem 1}}
  \label{section proof of theorem}
 We are now ready to give a proof of our main result, Theorem \ref{theorem 1}. 
 We shall make extensive use of the excitation estimates 
 established in Section \ref{section number estimates}.
 Namely, letting $(\nu_t)_{t \in \R}$
 be the interaction dynamics \eqref{interaction dynamics} with initial data satisfying Condition \ref{condition initial data},
 we know that for all $\ell \in \mathbb N $ exists a constant $C>0$ such that 
 for all $ t \geq 0$ there holds 
 \begin{align}
\label{N estimate}
\nu_t( \N^\ell)  & 	\leq C n^\ell \exp(C \lambda R t ) \ , \\
\label{NS estimate}
\nu_t(\N_\S) & \leq (\lambda R \< t\>)^2 \exp ( C  \lambda Rt  ) \ . 
 \end{align}
Here, $ n  = \nu_0(\N) \lesssim R^{1/2}$ is the initial number of particles/holes in the system, and 
  $R = |\Lambda| p_F^{d-1 }$ is our recurrent parameter.

 \begin{proof}  
Throughout the proof, we shall fix the parameter $m>0$.
Let $f_t(p)$ be the momentum distribution of the system, as defined in Def. \ref{definition momentum distribution}. 
In Section \ref{section preliminaries}, 
we performed a   double commutator expansion of $f_t(p)$,  given in \eqref{expansion f}, 
in terms of the quantities $T_{\alpha,\beta} (t,p)$, defined in   Eq. \eqref{T alpha beta}.
It then follows from the triangle inequality that 
for all $t \geq 0 $
\begin{align}
\nonumber 
	\big\|	 f_t - f_0    -   \lambda^2 t  \,  \big(  Q_t[f_0 ]   & +    B_t[f_0] \big) 	\big\|_{\ell_m^{1*}} \\
\nonumber 
& 	\leq  
 \frac{\lambda^2}{ | \Lambda| }
	\Big( 
	\big\|	  T_{F,F} (t) +    t |\Lambda |                  Q_t[f_0 ]   	\big\|_{\ell_m^{1*}}
 +    \big\|	  T_{FB,FB} (t) +   t |\Lambda |B_t[f_0 ]   	\big\|_{\ell_m^{1*}}
 \Big)   \\ 
\nonumber 
& 
 \quad +  \frac{	\lambda^2  }{|\Lambda |} 
\Big(    \|	  T_{F,FB} (t)   	\|_{\ell_m^{1*}}   +       \|	  T_{F,B} (t)  	\|_{\ell_m^{1*}}  \Big) \\ 
\nonumber 
 & 
 \quad +  \frac{	\lambda^2  }{|\Lambda |} 
\Big(    \|	  T_{FB,F} (t)   	\|_{\ell_m^{1*}}   +       \|	  T_{FB,B} (t)  	\|_{\ell_m^{1*}}  \Big) \\ 
 & 
\quad +  \frac{	\lambda^2  }{|\Lambda |} 
\Big(    \|	  T_{B,F} (t)   	\|_{\ell_m^{1*}}   
+ 
      \|	  T_{B,FB} (t)  	\|_{\ell_m^{1*}} 
      + 
        \|	  T_{B,B} (t)  	\|_{\ell_m^{1*}}
       \Big)     \  
       \label{expansion f 2.0}
\end{align}	
where $Q_t$ and $B_t$ 
are the operators defined in Def. \ref{definition Q1} and \ref{definition B}, respectively. 
We shall now estimate the right hand side of \eqref{expansion f 2.0}. 
First, we estimate the leading order terms, previously analyzed in Section \ref{section TFF} and \ref{section TFB}. 
Secondly, we describe the subleading order terms, previously analyzed in Section \ref{section subleading}. \\

 	\noindent  \textsc{Leading order terms.}
 	First, we collect the Boltzmann-like dynamics. 
 	This term emerges from $T_{F,F}$. 
 	Indeed,  it follows from Proposition \ref{prop TFF} and 
Eq. \eqref{N estimate}
 	that
 	there exists  a constant $C   >0$ 
 	such that for all $ t \geq 0 $ 
 	\begin{align}
 		\nonumber 
 		\|	  T_{F,F} (t)  +   t  |\Lambda |   Q_t[f_0]		\|_{\ell_m^{1*}}
 		& 	\leq C 
 		|\Lambda|
 		t^3 
 		\lambda 
 		\sup_{\tau \leq t }
 		\Big(
 		R^2 
 		\nu_\tau (\N^4)^{\frac{1}{2} } 
 		+
 		\nu_\tau (\N^4)
 		\Big) \\ 
 		& 
 		\leq C 
 		|\Lambda|
 		t^3  
 		\lambda 
 		(R^2   + n^2  )  n^2 
 		\exp(C \lambda R t  )
 		\nonumber \\ 
 		& 
 		\leq C
 		|\Lambda|
 		t^3  
 		\lambda 
 		R^2 	 n^2   
 		\exp(C \lambda R t  ) \ , 
 	\end{align}
 	where we have used the assumption  $ n \lesssim  R  $.

 	Now, we collect the interactions between holes/particles and bosonized particle-hole pairs around the Fermi surface. 
 	In view of Proposition \ref{prop TFB} and 
 	Eqs. \eqref{N estimate} and \eqref{NS estimate}
 	we find that there exists a constant  $C >0  $
 	such that for all  $t \geq 0 $  there holds 
 	\begin{align}
 		\nonumber 
 		\| 
 		T_{FB,FB}
 		(t)
 		+ 
 t  		|\Lambda |
 	 & 	    B_t [f_0] 
 		\|_{\ell_m^{1*}}			\\ 
\nonumber 
 		&  \leq C 
 		| \Lambda|     
 		t^2 
 		\sup_{ \tau \leq t }
 		\Big[
 		R^{\frac{1}{2}	}
 		\nu_\tau (\po)^{\frac{1}{2}}
 		\nu_\tau  (\N)^{\frac{1}{2}}
+ 
 		R^{\frac{3}{2}}
 		\nu_\tau (\po)^{\frac{1}{2}} 
 	 	+
 		\frac{R}{ p_F^m}
 		\nu_\tau (\N^2)
 		\Big]	 \\
 		\nonumber 
 		&  \qquad  + 
 		C  
 		| \Lambda|    
 		t^3     
 		\lambda  R  
 		\sup_{ \tau \leq t }
 		\Big[
 		R^{  \frac{3}{2} } \nu_\tau (\po)^{ \frac{1}{2 }}  
 		+ 
 		R \nu_\tau  ( \po )
 		+ 
 		\frac{R}{ p_F^m}
 		\nu_\tau (\N )^{\frac{1}{2 } }
 		\Big]	 \ , 	\\ 
 		\nonumber 
 		&
 		\leq 
 		C 
 		|\Lambda |
 		t^2
 		\Big[ 
 		R^{\frac{1}{2}}
 		\lambda R  \< t \>  n^{\frac{1}{2}}
 		+ 
 		R^{\frac{3}{2}}
 		\lambda R \< t \>  
 		+ 
 		\frac{Rn^2}{p_F^m }	
 		\Big]
 		e^{C \lambda R t } 
 		\\
 		\nonumber 
 		& \qquad  + 
 		C
 		|\Lambda | 
 		t^3 
 		\lambda R  
 		\Big[
 		R^\frac{3}{2}
 		\lambda R \< t   \> 
 		+ 
 		R 
 		( \lambda R  \< t \>  )^2 
 		+ 
 		\frac{R n^{\frac{1}{2}}}{p_F^m}
 		\Big]
 		e^{C \lambda R t } 
 		\ , \\ 
 		\nonumber 
 		& \leq 
 		C
 		| \Lambda|  
 		\Big[
 		t^2 \<t\> \lambda R^{\frac{3}{2}} n^{ \frac{1}{2} }
 		+ 
 		 	 		t^2 \<t\>  \lambda R^{  \frac{5}{2} }
 		+ 
 		t^2 \frac{Rn^2}{p_F^m }	
 		\Big]	
 		e^{C \lambda R t } 
 		\\ 
 		&  \qquad  + 
 		C | \Lambda| 
 		\Big[
 		 		t^3 \<t\> 		 \lambda^2 R^{\frac{7}{2}}  
 		+ 
 		 	 t^3 	\<t\>^2 \lambda^3 R^4 
 		+ 
 		t^3  \frac{\lambda R^2 n^{\frac{1}{2}}}{p_F^m }
 		\Big]
 		e^{C \lambda R t } 
 		\ . 
 	\end{align}
 	Under the assumptions $ 1 \lesssim  n \lesssim  R$  we find the following upper bound,
 	for some constant $C  >0 $.
 	Note that we absorb {\blu polynomials} on the variable $\lambda R \<t\> $ into the exponential factor $\exp(C \lambda R \<t\>)$, after updating the constant $C $. 
 	\begin{align}
 		\nonumber 
 		\| 
 		T_{FB, FB}
 		(t)
 	 & 	+ 
 t   		|\Lambda | 
 		   B_t [f_0] 
 		\|_{\ell_m^{1*}}  \\ 
 		\nonumber 
 		& 	\leq 
 		C |\Lambda|
 		\Big[
 		\lambda t^2  \<t\> R^{ \frac{5}{2} }
 		\Big(
 		1 + \lambda R  		\<t\> + R^{-\frac{1}{2}} (\lambda R  		\<t\> )^2
 		\Big)
 		+ 
 		\frac{t^2   Rn^2 }{p_F^m }
 		\big(
 		1  + \lambda R  t  
 		\big)
 		\Big]
 		e^{C \lambda R \< t\>  }    \\
 		& \leq 
 		C |\Lambda|
 		\Big(
 		\lambda  	 t^2 	\<t\>  R^{\frac{5}{2}}
 		+ 
 		\frac{t^2   Rn^2 }{p_F^m }
 		\Big)
 		e^{C \lambda R \< t\>  } 
 		\ . 
 	\end{align}

 	\noindent  \textsc{Subleading order terms.}
 	In the expansion given by \eqref{expansion f} we have already analyzed the leading order terms given by $T_{F,F}(t)$ and $T_{FB,FB} (t)$. 
 		The remaining seven terms are regarded as subleading order terms. 
 		These can be estimated as follows.

 	Using Proposition \ref{prop T F FB} 
 	and   	Eqs. \eqref{N estimate} and \eqref{NS estimate},  
 	we find
 	that there is  a constant $C>0$ such that 
 	\begin{align}
 		\nonumber 
 		\|  
 		T_{F,FB}
 		(t)
 		\|_{\ell_m^{1* }	}
 		& 	\leq C 
 		t^2  
 		| \Lambda|
 		\sup_{ 0 \leq \tau \leq t }
 		\Big(
 		R^{\frac{1}{2 }	} \, 
 		\nu_\tau 	( \N^2)^{1/2}
 		\nu_\tau  (\po)^{1/2}
 		+ 
 		p_F^{-m}
 		\nu_\tau (\N^2 )
 		\Big) \\ 
 		\nonumber 
 		& 	\leq C 
 		t^2  
 		| \Lambda|
 		\Big(
 		R^{\frac{1}{2 }	} \, 
 		n^2 
 		(\lambda R  		\<t\> )
 		+ 
 		p_F^{-m}
 		n^2 
 		\Big) 
 		e^{C \lambda R t } 
 		\\ 
 		& \leq 
 		C 
 		|\Lambda|
 		\Big(
 		\lambda  	t^2	\<t\> R^{\frac{3}{2}}n^2 
 		+ 
 		\frac{n^2 t^2}{p_F^m }
 		\Big)
 		e^{C \lambda R t } 
 		\ . 
 	\end{align}
 	
 	Using Proposition \ref{prop T F B}
 	 	and   	Eqs. \eqref{N estimate} and \eqref{NS estimate},  
 	we find
 	that there is  a constant $C>0$ such that 
 	\begin{align}
 		\nonumber 
 		\| 	T_{F,B} (t) 
 		\|_{\ell_m^{1* }	}
 		& 	\leq C 
 		t^2 
 		|\Lambda|
 		\sup_{ 0 \leq \tau \leq t }
 		\Big(
 		R^\frac{3}{2} \nu_\tau (\po)^{\frac{1}{2}}
 		+  
 		R 
 		\nu_\tau (\po)
 		+
 		R p_F^{-m}
 		\nu (\N^2)^{	\frac{1}{2}	}
 		\Big)  
 		\\ 
 		\nonumber 
 		& 	\leq C 
 		t^2  
 		| \Lambda|
 		\Big(
 		R^\frac{3}{2}
 		\lambda R  		\<t\>
 		+
 		R (\lambda R  		\<t\> )^2
 		+ 
 		\frac{R n }{p_F^m}
 		\Big)
 		e^{C \lambda R t } 
 		\\ 
 		& 
 		\nonumber 
 		\leq 
 		C 
 		|\Lambda|
 		\Big(
 		\lambda  		t^2	\<t\> R^{\frac{5}{2}}
 		\big(
 		1 + \lambda R^\frac{1}{2}  		\<t\>
 		\big)
 		+ 
 		\frac{R n t^2  }{p_F^m}
 		\Big) 
 		e^{C \lambda R t } 
 		\\ 
 		& \leq 
 		C 
 		|\Lambda|
 		\Big(
 		\lambda  		t^2	\<t\> R^{\frac{5}{2}}
 		+ 
 		\frac{R n t^2  }{p_F^m}
 		\Big) 
 		e^{C \lambda R  		\<t\> }  \ . 
 	\end{align}

 	Using Proposition \ref{prop T FB F} 
 	 	and   	Eqs. \eqref{N estimate} and \eqref{NS estimate},  
 	we find
 	that there is  a constant $C>0$ such that 
 	\begin{align}
 		\nonumber 
 		\|  	T_{FB,F}
 		(t)
 		\|_{\ell_m^{1* }	} 
 		& 	\leq C 
 		t^2 
 		|\Lambda|
 		\sup_{ 0 \leq \tau \leq t }
 		\Big(
 		R^\frac{1}{2}
 		\nu_\tau (  \po)^{\frac{1}{2}}
 		+ 
 		p_F^{-m}
 		\nu_\tau (\N^2)^{\frac{1}{2}} 
 		\Big) 
 		\nu_\tau (\N^2)^{\frac{1}{2}}   
 		\\ 
 		\nonumber 
 		& 	\leq C 
 		t^2  
 		| \Lambda|
 		\Big(
 		R^{1/2}(\lambda R  		\<t\> ) n 
 		+ 
 		\frac{ n^2 }{p_F^m}
 		\Big)
 		e^{C \lambda R t } 
 		\\ 
 		& 
 		\leq 
 		C 
 		|\Lambda|
 		\Big(
 		\lambda  		t^2	\<t\> R^{ \frac{3}{2}} n 
 		+ 
 		\frac{ n^2 t^2  }{p_F^m}
 		\Big) 
 		e^{C \lambda R t } 
 		\ . 
 	\end{align}
 	
 	Using Proposition \ref{prop T FB B} 
 	 	and   	Eqs. \eqref{N estimate} and \eqref{NS estimate},  
 	we find
 	that there is  a constant $C>0$ such that 
 	\begin{align}
 		\nonumber 
 		\|  	T_{FB,B} 
 		(t) 
 		\|_{\ell_m^{1* }	}
 		& 	\leq C 
 		t^2 
 		|\Lambda|
 		\sup_{ 0 \leq \tau \leq t }
 		\Big(
 		R^{\frac{3}{2}}  \, 
 		\nu_\tau (  \po)^{\frac{1}{2}}
 		+ 
 		R^2 p_F^{-m}
 		\nu_\tau (\N^2)^{\frac{1}{2}} 
 		\Big) 
 		\\ 
 		\nonumber 
 		& 	\leq C 
 		t^2  
 		| \Lambda|
 		\Big(
 		R^{\frac{3}{2}}
 		(\lambda R  		\<t\> )
 		+ 
 		\frac{ R^2 n  }{p_F^m}
 		\Big) 
 		e^{C \lambda R t } 
 		\\ 
 		& 
 		\leq 
 		C 
 		|\Lambda|
 		\Big(
 		\lambda  		t^2	\<t\> R^{ \frac{5}{2}} 
 		+ 
 		\frac{ R^2 n  t^2  }{p_F^m}
 		\Big)
 		e^{C \lambda R t } 
 		\ . 
 	\end{align}

 	Using Proposition 
 	 	and   	Eqs. \eqref{N estimate} and \eqref{NS estimate},  
 	we find
 	that there is  a constant $C>0$ such that  
 	\begin{align} 
 		\|  	T_{B  } (t)
 		\|_{\ell_m^{1* }	}
 		& \leq 
 		C 
 		| \Lambda |  		  
 		t^2 
 		R^3 p_F^{-m} 
 		\sup_{0 \leq \tau \leq t}
 		\Big(
 		1 + R^{-2}
 		\nu_\tau (\N^4 )^{\frac{1}{2}	} 
 		\Big) 
 		\leq 
 		C 
 		| \Lambda |  		  
 		t^2 
 		\frac{R^3   }{p_F^m} 
 		e^{C \lambda R t } 
 		\  ,  
 	\end{align}
where we have additionally used the fact that $1 \lesssim n \lesssim R.$  \\

 	\noindent \textsc{Conclusion.}
 	It suffices now to gather all the estimates 
 	for the leading and subleading order terms, and plug them back in the expansion given in Eq. \eqref{expansion f 2.0}
 for the momentum distribution of the system. 
 This finishes the proof of our main theorem. 
 \end{proof}

 \section{Collision operator estimates}
 \label{section fixed volume}
 In this section, we prove the inequalities that were stated in Section \ref{section main results} 
 concerning the 
  three-dimensional torus  $\Lambda $ of fixed length $L>0$. 
 We establish     three lemmas in arbitrary dimension $d \geq1$, 
 and specialize to $d =3 $ when constructing the initial data. 
  
 \subsection{The delta function}
 First, we recall that $\delta_t (x)$
 is the mollified Delta function, defined in \eqref{delta function}.
 Here, we prove the following approximation lemma. 
 \begin{lemma}
 	\label{lemma delta}
 	There is $C>0$ such that for all $  |x| \geq (\frac{2\pi}{L})^2$, 
 	$  |y|  \leq \frac{|x|}{2 \lambda }$
 	and $t>0$
 \begin{equation}
 	 | 	 \delta_t(x + \lambda y )  -  (2/ \pi) t   \delta_{x,0} |  
 	\leq
 	C \, 	 \frac{ 	 (1 - \delta_{x,0})			}{ x^2 }  \frac{1}{t }
 	+    C \delta_{x,0}   \lambda^2  t^3  |y|^2  \ . 
 \end{equation}
 \end{lemma}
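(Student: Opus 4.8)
The plan is to reduce the estimate to two separate regimes governed by whether $x=0$ or $x\neq 0$, exploiting the explicit form $\delta_1(E) = \tfrac{2}{\pi}\sin^2(E/2)/E^2$ and the scaling relation $\delta_t(E) = t\,\delta_1(tE)$. First I would write $\delta_t(x+\lambda y) = t\,\delta_1\big(t(x+\lambda y)\big)$ and treat the two cases. When $x\neq 0$, the target quantity is just $|\delta_t(x+\lambda y)|$, and since $\delta_1(E) \le \tfrac{2}{\pi E^2}$ we get $\delta_t(x+\lambda y) \le \tfrac{2}{\pi t (x+\lambda y)^2}$; the hypothesis $\lambda|y|\le 1/2$ and $x\in\Z\setminus\{0\}$ (so $|x|\ge 1$) give $|x+\lambda y| \ge |x| - 1/2 \ge |x|/2$, hence $\delta_t(x+\lambda y) \le \tfrac{8}{\pi t x^2}$, which is the first term on the right-hand side.

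The case $x=0$ is where the real work lies. Here the claim becomes $|\delta_t(\lambda y) - (2/\pi)t| \le C\,\lambda^2 t^3 |y|^2$, i.e. we must quantify how close the mollified delta is to its peak value when evaluated near the origin. I would compute $\delta_1(0) = \tfrac{2}{\pi}\lim_{E\to 0}\sin^2(E/2)/E^2 = \tfrac{2}{\pi}\cdot\tfrac14 \cdot \ldots$ — wait, more carefully, $\sin^2(E/2)/E^2 \to 1/4$ as $E\to 0$, so one should double-check the normalization; in any case $\delta_1$ is smooth and even near $0$ with a well-defined value and vanishing first derivative. The key is a second-order Taylor expansion of the smooth even function $g(E) := \sin^2(E/2)/E^2$ about $E=0$: $g(E) = g(0) + \tfrac12 g''(0)E^2 + O(E^4)$, so $|g(E) - g(0)| \le C E^2$ for $|E|$ bounded (and $\lambda t|y| $ need not be small, but $g$ is globally bounded and Lipschitz-squared-controlled, so one gets $|g(E)-g(0)|\le C\min(E^2,1) \le C E^2$ only if $E$ is bounded — here I would instead use the global bound $|g(E)-g(0)| \le C E^2$ valid for all $E$ since $g(0)-g(E) = g(0)(1 - \text{something}) $, hmm). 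Concretely: $|g(E) - g(0)| \le C E^2$ holds for all $E\in\R$ because $g$ is bounded, even, smooth, and $g(E) = g(0) - c E^2 + \dots$ with the remainder controlled — but globally one only has $|g(E)-g(0)|\le 2\|g\|_\infty$, so to get the $E^2$ bound for large $E$ one notes $E^2 \ge$ const there and $2\|g\|_\infty \le C E^2$. Either way, setting $E = t\lambda y$ gives $|\delta_t(\lambda y) - t\,\delta_1(0)| = t\,|g(t\lambda y) - g(0)|\cdot\tfrac{2}{\pi} \le C\, t\cdot (t\lambda y)^2 = C\lambda^2 t^3 |y|^2$, and $t\delta_1(0)$ is exactly $(2/\pi)t$ by the normalization, completing this case.

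The main obstacle — really a bookkeeping point rather than a deep difficulty — is making the Taylor-remainder bound $|g(E) - g(0)| \le C E^2$ genuinely \emph{global} in $E$ (it must hold without any smallness assumption on $t\lambda|y|$, since only $\lambda|y|$ is assumed small, not $t\lambda|y|$). I would handle this by splitting at $|E| = 1$: for $|E|\le 1$ use the degree-two Taylor expansion of the smooth even function $g$ with its integral remainder bounded by $\sup_{|E|\le1}|g''|\cdot E^2/2$; for $|E|\ge 1$ use $|g(E)-g(0)| \le g(0) + \|g\|_{\infty} \le (g(0)+\|g\|_\infty) E^2$ since $E^2\ge 1$. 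Combining the two cases with the triangle inequality yields the stated bound, and I would then remark that the constant $C$ depends only on $\|g\|_\infty$ and $\|g''\|_{L^\infty([-1,1])}$, hence is a genuine constant in the sense of the paper's conventions. The role of the hypothesis $\lambda|y|\le 1/2$ is solely to keep $x+\lambda y$ away from zero in the $x\neq 0$ case; it plays no role in the $x=0$ case.
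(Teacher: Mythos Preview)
Your proposal is correct and follows essentially the same approach as the paper: both split into the cases $x=0$ and $x\neq 0$, handle the latter via $\delta_1(E)\le \tfrac{2}{\pi E^2}$ together with $|x+\lambda y|\ge |x|/2$, and handle the former via the quadratic bound $|\delta_1(E)-\delta_1(0)|\le CE^2$ (the paper justifies this tersely by noting $\delta_1'(0)=0$ and $\delta_1$ globally bounded, which is exactly your split-at-$|E|=1$ argument compressed). Your aside that the normalization $\delta_1(0)$ deserves double-checking is well taken, but does not affect the structure of the argument.
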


\begin{proof}
We consider the decomposition 
	\begin{equation}
\label{delta}
		\delta_t(x + \lambda y ) 
		 =
		  \delta_{x,0}
		 \delta_t( \lambda y ) 
		 + 
		 (1 - \delta_{x,0})
		 \delta_t(x + \lambda y )  \ . 
	\end{equation}
The first term  in \eqref{delta} 
is estimated as follows.
Using $\delta_t(0) = 2 t / \pi $, we find that 
\begin{align}
 |  			
  \delta_t( \lambda y )   
  			  -    2 t / \pi 
  			  | 
  			  =
  			  t |
  			   \delta_1 ( t  \lambda y )  - \delta_1 ( 0 )  
  			   |
  			 \leq  C  t ( t \lambda |y|)^2 \ . 
  			  \end{align}
In the last line,  $C>0 $ is a constant that {\blu satisfies} $ | \delta_1 (z ) - \delta_1(0) | \leq C |z |^2$
for all $z   \in \R $--the constant exists because $\delta_1' (0)=0$, and $\delta_1(z)$ is globally bounded. 
The second term in \eqref{delta} is estimated as follows.
For $|x | \geq 1$  and $\lambda |y| \leq  1/2  $ we have 
\begin{align}
	\delta_t(x + \lambda y )
\leq 
\frac{ 2/\pi }{  t  (x + \lambda y)^2}
\leq 
\frac{ 2/\pi }{  t x^2  (1   -   |x|^{-1 } \lambda |y |)^2}
\leq 
\frac{C}{t x^2  } \ . 
\end{align}
The proof is finished once we put all the inequalities together.
\end{proof}

\subsection{Operator estimates}
Let us now analyze  the time dependence of the operators $Q_t$ and $B_t$.

 \vspace{1mm}
 
Let us recall that $Q_t$ was defined in Def. \ref{definition Q1}, and the time independent operator 
$\mathscr{Q}$ is defined  in the same way, 
but with the discrete Delta function 
$(2/\pi) \delta_\Z (\Delta  e)$ 
replacing the energy mollifier $\delta_t(\Delta E)$.
Here, $\Delta E$ corresponds to the dispersion relation \eqref{dispersion relation},
whereas $\Delta e$ corresponds to  (signed) free dispersion 
$$e(p) = (\chi^\perp(p) - \chi(p) )  \, p^2/2 . $$ 
We shall prove that, under our assumptions for $\hat V$, the following result is true. 
\begin{lemma}
	[Analysis of $Q_t$]
	\label{lemma Q difference}
Assuming that $0< \lambda \|	 \hat V	\|_{\ell_1}\leq 
\frac{1}{2} (	 \frac{2\pi}{L})^2 $, there is $C = C(V)    >0$ such that for all $f \in \ell^1 (	 \Lambda^* 	)$ and $t>0 $ there holds 
\begin{align}
		\|	Q_t[f] - t \mathscr{Q} [f]	\|_{\ell^\infty	 }
	\leq  & C 
	t  \big(1/t^2 + (\lambda t)^2 \big)
	\|	 \widetilde f 	\|_{\ell^\infty }^2
	\|	 f	\|_{\ell^1 } 
	\|	f	\|_{\ell^\infty }    \ . 
\end{align}
\end{lemma}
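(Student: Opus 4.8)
The plan is to compare $Q_t[f]$ and $t\mathscr{Q}[f]$ term by term, using the decomposition $\sigma = \sigma_{HH}+\sigma_{PP}+\sigma_{HP}+\sigma_{PH}$ from Definition \ref{definition Q1} and the pointwise estimate on the energy mollifier from Lemma \ref{lemma delta}. Since both operators have the same algebraic structure, the difference $Q_t[f](p) - t\mathscr{Q}[f](p)$ is an integral over $(\Lambda^*)^4$ of the same coefficient $\sigma(\vec p)$, the same Kronecker-delta momentum selector $[\delta(p-p_1)+\delta(p-p_2)-\delta(p-p_3)-\delta(p-p_4)]$, the same gain-minus-loss factor $f(p_3)f(p_4)\widetilde f(p_1)\widetilde f(p_2) - f(p_1)f(p_2)\widetilde f(p_3)\widetilde f(p_4)$, but with the scalar $\delta_t[\Delta E] - (2/\pi)t\,\delta_{\Delta e,0}$ in place of either mollifier. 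The key point is that, on the lattice $\Z^d$ with the momentum-conservation constraints built into $\sigma$, the dispersion relation $E_p$ differs from the free dispersion $e(p)$ by exactly the $\lambda$-dependent Hartree shift, so $\Delta E = \Delta e + \lambda y$ for some $y = y(\vec p)$ with $|y| \lesssim \|\hat V\|_{\ell^1}$, and $\Delta e \in \Z$ because it is a sum/difference of the integers $\pm p_i^2/2$... more precisely $\Delta e$ takes values in $\frac12\Z$, so after absorbing the factor $\frac12$ the relevant quantity $x=\Delta e$ or $2\Delta e$ is an integer, which is what Lemma \ref{lemma delta} requires. I would state this integrality carefully at the outset.

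Once that reduction is in place, I would apply Lemma \ref{lemma delta} with $x = \Delta e$ (suitably rescaled to be integer-valued) and $\lambda y$ the Hartree correction, using the hypothesis $\lambda\|\hat V\|_{\ell^1}\le 1/2$ to guarantee $\lambda|y|\le 1/2$. This gives
\begin{align}
\big| \delta_t[\Delta E(\vec p)] - (2/\pi) t\, \delta_{\Delta e(\vec p),0} \big|
\le C \frac{1-\delta_{\Delta e,0}}{(\Delta e)^2}\,\frac1t + C\,\delta_{\Delta e,0}\,\lambda^2 t^3 |y|^2 \ .
\end{align}
Plugging this into the integral, the $\ell^\infty$ norm of the difference is bounded by the supremum over $p\in\Z^d$ of four integrals (one per Kronecker delta in the selector); by symmetry it suffices to treat the $\delta(p-p_1)$ term, which after evaluating the delta becomes an integral over $(p_2,p_3,p_4)$. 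The gain/loss factor is bounded pointwise by $2\|\widetilde f\|_{\ell^\infty}^2\|f\|_{\ell^\infty}^2$, actually more economically by $\|\widetilde f\|_{\ell^\infty}^2$ times the product of two factors of $f$; the coefficient $\sigma(\vec p)$ carries the momentum-conservation $\delta(p_1+p_2-p_3-p_4)$ (or the particle-hole analogue) together with a bounded factor $|\hat V(\cdot)-\hat V(\cdot)|^2 \lesssim \|\hat V\|_{\ell^1}^2$, and this momentum delta collapses one more integration variable, leaving an integral over two free momenta of the kinetic weight $(\Delta e)^{-2}\mathbf 1(\Delta e\ne 0)$ or $\mathbf 1(\Delta e=0)$. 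The $\ell^1$-type bound $\int_{\Z^d}|f(p)|\,\mathrm dp=\|f\|_{\ell^1}$ is used to control one of the remaining integrations (the variable on which $f$ depends linearly after one $f$-factor has been pinned by the selected delta), while the other $f$-factor is bounded by $\|f\|_{\ell^\infty}$, and the kinetic weight $\sum_{\Delta e\neq 0}(\Delta e)^{-2}$ converges to a constant. This produces the claimed bound $Ct(1/t^2+(\lambda t)^2)\|\widetilde f\|_{\ell^\infty}^2\|f\|_{\ell^1}\|f\|_{\ell^\infty}$.

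The main obstacle I anticipate is the bookkeeping for the $\sigma_{HP}$ and $\sigma_{PH}$ terms: there the momentum constraint is $\delta(p_1-p_2-p_3+p_4)$ rather than $\delta(p_1+p_2-p_3-p_4)$, and the coefficient is $2\chi(p_1,p_3)\chi^\perp(p_2,p_4)|\hat V(p_1-p_3)|^2$ instead of a difference of two $\hat V$'s, so the structure of which variable carries the $\ell^1$ factor of $f$ and which carries $\ell^\infty$ is slightly different; one must check in each of the four cases that after fixing $p=p_i$ via the selector and using the momentum-conservation delta, exactly one summation variable remains that can be absorbed by $\|f\|_{\ell^1}$ and one by $\|f\|_{\ell^\infty}$, with a uniformly (in $p$) summable kinetic weight left over. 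A secondary subtlety is verifying that $\Delta e(\vec p)$ restricted to the support of $\sigma$ is genuinely a (half-)integer and that the ``integer'' hypothesis of Lemma \ref{lemma delta} applies after rescaling $\Delta E$ and $\Delta e$ by $2$; this is where the choice $L=2\pi$, $\Lambda^*=\Z^d$ is essential, since $p^2\in\Z$ for $p\in\Z^d$. I would also remark that $\|\widetilde f\|_{\ell^\infty}\le 1+\|f\|_{\ell^\infty}$ so the bound can be stated purely in terms of $\|f\|_{\ell^1}$ and $\|f\|_{\ell^\infty}$ if one prefers, but leaving $\widetilde f$ explicit matches the statement.
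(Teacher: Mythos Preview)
Your overall strategy matches the paper's: write $Q_t[f]-t\mathscr{Q}[f]$ as the same integral with $\delta_t[\Delta E]-(2/\pi)t\,\delta_{\Delta e,0}$ inserted, invoke Lemma \ref{lemma delta}, and estimate the remaining momentum sums. There is, however, a genuine gap in your final counting. After fixing $p_1=p$ via the selector and eliminating $p_2$ via momentum conservation, two free variables $(p_3,p_4)$ remain. You propose to absorb one with $\|f\|_{\ell^1}$, bound the other $f$-factor by $\|f\|_{\ell^\infty}$, and control the leftover sum by the ``convergent'' kinetic weight $\sum_{\Delta e\neq 0}(\Delta e)^{-2}$. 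That last step fails for $d\geq 2$: for fixed $p,p_3$ the map $p_4\mapsto\Delta e$ is affine, so each integer value of $\Delta e$ is hit on an entire $(d-1)$-dimensional sublattice and the sum over $p_4$ diverges.

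The fix --- which is exactly what the paper does --- is to bound the kinetic weight \emph{trivially}, $(1-\delta_{\Delta e,0})/(\Delta e)^2\leq 1$, and instead use the $\hat V$-factor (not just its sup-norm) to close the remaining sum. Do not replace $|\hat V(p-p_4)-\hat V(p-p_3)|^2$ by a constant; expand the square and keep one factor summable. For instance, for $\sigma_{PP}$,
\[
\int_{\Z^{2d}}|\hat V(p-p_4)|^2\,|f(p_3)|\,|f(p_4)|\,\d p_3\,\d p_4
\ \leq\ \|f\|_{\ell^\infty}\,\|\hat V\|_{\ell^2}^2\,\|f\|_{\ell^1}
\ \leq\ \|\hat V\|_{\ell^1}\|\hat V\|_{\ell^\infty}\,\|f\|_{\ell^\infty}\|f\|_{\ell^1}\,,
\]
and the cross term and the $\sigma_{HP},\sigma_{PH}$ pieces (with $|\hat V(p_1-p_3)|^2$) are handled identically. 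This is the ``crude upper bound'' the paper records, and it yields the stated estimate directly.

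One side remark: your half-integer worry is unnecessary. On the support of each $\sigma_\alpha$ the momentum constraint forces, via polarization, $\pm p_1^2\pm p_2^2\mp p_3^2\mp p_4^2\in 2\Z$, so $\Delta e\in\Z$ and Lemma \ref{lemma delta} applies without rescaling.
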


\begin{proof}
Starting from the definition of  $Q_t[f]$, one finds 
after evaluating the delta functions $\delta(p - p_1) + \delta(p  - p_2 ) - \delta(p - p_3) - \delta(p - p_4)$
that 
\begin{align}
	Q_t[f] - t \mathscr{Q}[f]
	= 
	R_t^+[f] - R_t^- [f]
\end{align}
where on the right hand side we have two remainder terms, corresponding to a gain, and a loss term.
Namely, for $ p \in \Lambda^*$ we have 
\begin{align}
	R_t^+[f] (p) 
	& =  \pi 
		\int 
		\sigma(\vec p ) 
	\Big(
	\delta_t (\Delta E) - 2t/\pi \delta_{\Delta e, 0}
	\Big)
	f(p_3) f(p_4) \widetilde{ f } (p_2) \widetilde f (p) 
  \,	\d p_2 \d p_3 \d p_4 \ , \\ 
		R_t^- [f] (p) 
	& = \pi 
	\int  
	\sigma(\vec p ) 
	\Big(
	\delta_t (\Delta E) - 2t/\pi \delta_{\Delta e, 0}
	\Big)
	 f(p ) f(p_2) \widetilde{ f } (p_3) \widetilde f (p_4) 
 \, 	\d p_2 \d p_3 \d p_4 \ . 
\end{align}
Here, we have denoted $\vec p  = (p , p_2, p_3, p_4)$,  
$\Delta E = E(p) + E(p_2) - E(p_3) - E(p_4)$ 
and
$\Delta e \equiv  \frac{1}{2} (p^2 +  p_2^2  - p_3^2 -p_4^2)$. 
Lemma \ref{lemma delta} with $x=\Delta e$ 
and $y = \mathcal{O}(\|	 \hat V	\|_{\ell^1 })$ now implies that there is $C>0$ such that 
\begin{align}
\label{R+}
	|    R_t^+ [f] (p)  |
 &  \leq  
C (1/t  +  \lambda^2 t^3 \|	 \hat V	\|_{\ell^1}^2  )
\| \widetilde f		\|_{\ell^\infty}^2
	\int  
\sigma(\vec p)
 \, |f(p_3)| \, |  f(p_4)  | \ \d p_2 \d p_3 \d p_4 \  , \\ 
\label{R-}
	|    R_t^- [f] (p)  |
&  \leq  
C (1/t  +  \lambda^2 t^3 \|	 \hat V	\|_{\ell^1}^2  )
\| \widetilde f		\|_{\ell^\infty}^2
	\int 
\sigma(\vec p)
\,  |f(p)| \, |  f(p_2)|   \ \d p_2 \d p_3 \d p_4 \ . 
\end{align}
Next, we consider the following upper bound  
for the coefficients
\begin{align}
\nonumber 
	\sigma(\vec p) 
& 	\leq 
\delta( p + p_2 - p_3 - p_4)
| \hat V (p - p_3) - \hat V (p - p_4)|^2 
+ 
2
\delta( p -  p_2 - p_3  +  p_4)
| \hat V (p - p_3)  |^2   \\ 
\nonumber 
 &  =  \delta( p + p_2 - p_3 - p_4) 
 \Big(
  \hat V (p - p_3)^2 
 + 
  \hat V (p - p_4)^2 
   - 2 \hat V(p - p_3) \hat V(p - p_4)
 \Big) \\
 & 
 \quad + 
 2
 \delta( p -  p_2 - p_3  +  p_4)
 | \hat V (p - p_3)  |^2   \ . 
\end{align}
We insert the above inequality {\blu on the right hand side} of 
\eqref{R+},
and use some elementary manipulations  
to obtain the crude upper bound 
\begin{equation}
	\int 
	\sigma(\vec p )
	| f(p_3)| \, | f( p_4) |\, 
	\d p_2 \d p_3 \d p_4 
	\leq 
	C
	 \|	   \hat V	\|_{\ell^1}  \|	 \hat V	\|_{\ell^\infty } 	
	 	 \|	 f	\|_{\ell^\infty}	\|	f	\|_{\ell^1}      , 
\end{equation}
and the same bound holds for the right hand side of Eq.  \eqref{R-}. 
This finishes the proof after we collect all the estimates, 
 and collect the $\hat V$-dependent  factors into a constant $C>0.$
\end{proof}

Next, we analyze the operator $B_t$, defined in Def. \ref{definition B}, 
and its relation 
to the time independent operator $\mathscr{B}$, defined in the same way but
with $\delta_t(E_1 - E_2 - E_3 - E_4)$
being replaced by 
$  
(2/\pi) \delta_\Z (   e_1 -e_2 - e_3 - e_4).$
While for the operator $Q_t$ an upper bound can be given in terms of
the number of holes 
$n =  |\Lambda| \int  f(p) \d p $, 
the operator $B_t$ depends on the total number of fermions $N$.

\begin{lemma}
	[Analysis of $B_t$]
	\label{lemma B difference}
Assuming that $0< \lambda \|	 \hat V	\|_{\ell_1}\leq 
\frac{1}{2} (	 \frac{2\pi}{L})^2 $, there is $C = C(V)    >0$ such that for all 
$ f \in \ell^1 (	 \Lambda^* 	)$ 
and $t>0 $ there holds  
	\begin{align}
		\|	B_t[f] - t \mathscr{B} [f]	\|_{\ell^\infty}
\ 		\leq   \, C \, 
		t  \,
		\big(   1/ t^2 +  (\lambda t)^2 \big)
\bigg(	 \frac{N}{|\Lambda|}		\bigg)^{ \frac{d-1}{d }}
		\|	 \widetilde f 	\|_{\ell^\infty}  
		\|	f	\|_{\ell^\infty}    \ . 
	\end{align} 
\end{lemma}

\begin{proof}
	Recall that $B = B^{(H)} + B^{(P)}$ is defined  in Def. \ref{definition B}
	in terms of the respective hole and particle interaction terms. 
	Let us look only at the $B^{(H)}$ term, the second one being analogous. 
	We find in terms of $\mathscr B  = \mathscr{B}^{(H)} + \mathscr {B}^{(P)}$
	that for 
$ f \in \ell^1 (	 \Lambda^* 	)$  
	 \begin{align}
 		B_t^{(H)}  [f]    -  t \mathscr B^{(H)} [f]   
 =   L_t [f]   
\end{align}
where we define the following remainder term 
\begin{align}
\nonumber 
	  L_t[f]   (h) 
 & 	  =
	  	2\pi 
	  \int 
	  |\hat V(k)|^2 
\Big( 
 	\rho^{H}_t (h - k ,k)    f(h - k) \widetilde f(h) 
 	-
 			\rho^{H}_t (h,k)    f(h) \widetilde f(h+k ) 
 	\Big) 
 	 \d k    .
\end{align}
Here,  the new  remainder coefficient $\rho_t (h,k)$ are  given by 
	\begin{align}
		\rho_t  
		( h, k)
		&   \equiv 
		\chi(h)\chi(h+k)
	  \int 
		\chi( r )
		\chi^\perp ( r + k )
	\Big( 
		\delta_t 
( 
\widetilde{\Delta E }	  
	) 
		 -  \frac{2t}{\pi} 
		 \delta_\Z   ( \widetilde {\Delta e }	) 
		 \Big)  
		\d r    
	\end{align}
where we denote $\widetilde{\Delta E} = 	E_h - E_{ h + k  }  -  E_r     -  E_{r+k }    $
and 
$\widetilde{\Delta e} =     e_h - e_{h+k} - e_r - e_{r+k }   .$
Thus, it follows from Lemma \ref{lemma delta}
with $x = \widetilde{\Delta e}$
and $|y| \leq \| \hat V		\|_{\ell^1}$
that there is $C>0$ such that 
\begin{align}
\nonumber 
 \|	  B_t  [f]  -  t \mathscr{B}   [f]	  \|_{\ell^\infty }  
\nonumber 
&  \leq 
 C 
 (1 /t +  t^3  \lambda^2 \|	 \hat V	\|_{\ell^1}^2 )
 \|	  \widetilde f	\|_{\ell^\infty}
 \|	f	\|_{\ell^\infty } 
	  \int 
  |\hat V (k)|^2 
 \chi (r) \chi^\perp (r+k ) \d r \d k  \ \\
 & \leq 
  C 
 (1 /t +  t^3  \lambda^2 \|	 \hat V	\|_{\ell^1}^2 )
 \|	  \widetilde f	\|_{\ell^\infty}
 \|	f	\|_{\ell^\infty } 
 \|	 \hat V	\|_{\ell^1}^2
 N^{ \frac{d-1}{d} } \ . 
\end{align}
In the last line, we have used the geometric estimate 
$\int  \chi(r) \chi^\perp(r+k)\d r \lesssim  (N / |\Lambda|)^{\frac{d-1}{d}}$, 
valid for $ k \in \supp  \hat V$.
This finishes the proof after we absorb $\hat V$ into the constant $C>0.$
\end{proof}

 \subsection{Example of Initial Data}
 \label{appendix 1}
 In the remainder of this section, we work in three spatial dimensions $d=3$. 
The inequality contained in     Theorem  \ref{corollary}
 becomes a meaningful approximation for $f_t$
 provided $f_0$ is such that 
 \begin{equation}
 	\label{lower bound}
 	\|	 \mathscr{Q} [f_0]	\|_{\ell_m^{1*}}  
 	+ 
 	\|	 \mathscr B [f_0 ]\|_{\ell_m^{1*}} 
 	\gg  
 	\|	  \t{Rem}_1 (T)	\|_{\ell_m^{1*}}   \  . 
 \end{equation} 

Clearly, we need a lower bound on $\hat V$.
For simplicity, we assume the following. 
Recall $r>0$ from Condition \ref{condition 3}. 
\begin{condition}
	\label{condition 4}
 $\hat V (k)$ is rotationally symmetric and $\hat V (0,0,|k|)>0$
 for all $|k| \leq r$. 
\end{condition}

\vspace{2mm}

 In the rest of this section, we construct examples of initial data $f $
 for which the  lower bound \eqref{lower bound} holds true.
We recall here that  we denote by $\S$ the Fermi surface defined in \eqref{S definition}, 
in terms of the parameter $r>0$. 

\vspace{2mm}

We consider initial data with delta support in the union of the sets, 
with properties that we describe in Condition \ref{condition 3} below.

\begin{definition}
	\label{definition initial data}
Let $ n\geq 1$, and consider
sets   $P  =  \{ p_k\}_{k=1}^n  \subset\B^c /  \S$, 
and   $H =   \{  h_\ell \}_{\ell=1}^n \subset \B / \S $. 
We  define 
\begin{equation}
	f_{H,P}(p) 
	\equiv
	\sum_{ q \in H \cup P  }      \delta (p-q ) \ . 
\end{equation}	
\end{definition}

For simplicity, we shall simply write $f \equiv f_{H,P}$.
Note that one may  easily construct an initial state $\nu : B(\F) \rightarrow \C $ 
with momentum distribution 
 $f  $ by considering the pure state associated to the Slater determinant 
 \begin{equation}
\label{nu}
 	\nu (\mathcal{ O } ) 
 	\equiv 
 	\< \Psi   , \mathcal O \Psi  \>_\F 	 
 	\quad
 	\t{with}
 	\quad 
 	\Psi  \equiv
 	\frac{1}{|\Lambda|}
 	  \prod_{p \in H \cup P  }   a^*_p 	\  \Omega  \ . 
 \end{equation}
As we have already argued in Section \ref{section main results}, 
the state  $\nu$ {\blu satisfies} Condition \ref{condition initial data}. 
We will additionally assume the following support conditions 

\begin{condition}\label{condition 3}
	We assume that the sets $H$ and $P$ satisfy the following two conditions. 
\begin{enumerate}
	\item $|x - y| > r  $ for all pairwise different $x,y \in H \cup P $ 
	\item There exists a  constant    $\ve>0$ such that the following holds: 
	for all $ q \in  H \cup P  $  there exists $i \in \{ 1,2,3\}$  
	such that  
	\begin{equation} 
		 \ve \ p_F^2  \  \leq \  | q_i |^2 \  \leq \  (1 - \ve) \  p_F^2 \  . 
	\end{equation}
\end{enumerate}
\end{condition}

\noindent  \textit{Estimates for $\mathscr{Q}[f]$}. 
The upper bound 
$\|	 \mathscr{Q} [f ]	\|_{\ell^\infty } \leq C    \|	 \widetilde f  	\|_{\ell^\infty }^2 \|	 f 	\|_{\ell^1 }  \|	f 	\|_{\ell^\infty } $
can be established in an analogous way as we did for Lemma \ref{lemma Q difference}. 
Consequently, one easily finds that for all $f $ as in Definition \ref{definition initial data} there holds 
\begin{equation}
 \|	 \mathscr{Q} [f ]	\|_{\ell^\infty }	 \leq C n 
\end{equation}
for a constant $C>0$, independent of $n$.

\vspace{2mm}

\noindent \textit{Estimates for $\mathscr{B}[f ]$}.
Let us recall that  in the present case, the 
operator $\mathscr{B}[f]$ has the  following
decomposition  into holes and particles 
\begin{align}
	 \mathscr{B}  &   =   \mathscr B^{(H)}  +  \mathscr{B}^{(P)} 	\label{B2} 
\end{align}
where 
each of 
   the operators acts  on $\ell^1   (  \Lambda^* )$  as follows 
\begin{align}
\mathscr B^{(H)}[f](h) 
\nonumber 
	&   	 = 
 \frac{2 \pi }{\pi/2 }
\int 
	|\hat V(k)|^2 
	\Big(
	\alpha^{H}_t (h - k ,k)    f(h - k) \widetilde f(h) 
	- 
	\alpha^{H}_t (h,k)    f(h) \widetilde f(h+k ) 
	\Big) \d k 
  \\
	\nonumber 
 \mathscr{B}^{(P)} [f](p) 
	& 	= 
\frac{ 2 \pi }{ \pi/2  }
\int  
	|\hat V(k)|^2 
	\Big(
	\alpha^{P}_t ( p + k ,k)    f( p + k) \widetilde f(p) 
	- 
	\alpha^{P}_t (p,k)    f(p) \widetilde f( p  - k ) 
	\Big) \d k    
\end{align} 
for $f \in \ell^1$ and $p,h\in \Lambda^*$. 
Here,  the coefficients  $\alpha^H$   and  $\alpha^P$ are given as follows  
\begin{align}
	\alpha^H 
	( h, k)
	&   \equiv 
	\chi(h)\chi(h+k)
\int 
	\chi( r )
	\chi^\perp ( r + k )
\delta_\Z  \big(  				 r  \cdot k   -  h \cdot k 		\big) 
 \d r      \\ 
	\alpha^P ( p, k)
	& 	\equiv 
	\chi^\perp (p)\chi^\perp ( p -k)
\int 
	\chi( r )
	\chi^\perp ( r + k )
  \delta_\Z 
  \big( 
r \cdot k   -  (p - k) \cdot k 
\big) \d r   
\end{align}
for all $p,h, k  \in \Lambda^*.$ 
In particular, we have evaluated the free dispersion relation $\Delta e $ in terms
of $p$, $h$ and $k$. 
 
\vspace{2mm}

Certainly, it is sufficient to analyze the counting function defined as 
\begin{align}
	\nonumber 
	N(q,k) &  \equiv 
\int  \chi(r) \chi^\perp(r+k) \delta_\Z    
\big(
	r \cdot k   -  q  \cdot k
	\big)
	\d r 	\\
& 	 = 
	 \frac{1}{|\Lambda|}
	 \Big|
	 \Big\{
	 r \in	 \big(	 \frac{2\pi}{L}\Z	\big)^3			 :  |r | \leq p_F ,  \ |r + k | > p_F , \ r \cdot k = q \cdot k
	 \Big\} 
	 \Big|
\end{align}
for $q \in    \big(	 \frac{2\pi}{L}\Z	\big)^3	  $ and    $ 1  \leq |k| \leq r$ -- where $r \sim |\supp \hat V|$.

\begin{remark}
Geometrically, $N(q,k)$ counts the number of lattice points that lie in the intersection of
the 
\textit{lune set}
$L(k) = \{ r \in \big(	 \frac{2\pi}{L}\Z	\big)^3	:  |r | \leq p_F ,  \ |r + k | > p_F    \}$
and the plane $H(q,k)\equiv \{	 r \in \big(	 \frac{2\pi}{L}\Z	\big)^3	  : r \cdot k = q \cdot k \}$.  
Notice that  $  L(k) \cap H(q,k)$
is nonempty only if
$|q \cdot k | \leq p_F |k|$. 
\end{remark}

 \begin{lemma}[Upper bound for $\mathscr{B} $]
There is a constant $C = C (\hat V)$ such that for all 
$f\in \ell^{\infty} ( \Lambda^* )$ 
the following bound holds true 
\begin{equation}
 \|		\mathscr{B}	[f]	\|_{\ell^\infty    }
 \leq C 	
 (N / |\Lambda|)^{1/3} 
  \|	f	\|_{\ell^\infty    } 	\|	1 - f 	\|_{\ell^\infty  } 
 		  \ . 
\end{equation}
\end{lemma}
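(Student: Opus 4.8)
The goal is an $\ell^\infty$ bound on $\mathscr{B}[f]$ scaling like $N^{1/3}$, uniformly in $f$ with $0\le f\le 1$. Since $\mathscr{B}=\mathscr{B}^{(H)}+\mathscr{B}^{(P)}$ and the two pieces are structurally identical (swap $\chi\leftrightarrow\chi^\perp$, $h\mapsto p$), I would only treat $\mathscr{B}^{(H)}$ in detail. The first step is to extract from the explicit formula for $\mathscr{B}^{(H)}[f](h)$ the crude pointwise bound
\begin{equation}
\label{Bplan1}
|\mathscr{B}^{(H)}[f](h)| \le \frac{4}{(2\pi)^3}\,\|f\|_{\ell^\infty}\|\widetilde f\|_{\ell^\infty}\sum_{k\in\Z^3}|\hat V(k)|^2\big(\alpha^H(h-k,k)+\alpha^H(h,k)\big),
\end{equation}
using only $0\le\chi,\chi^\perp\le 1$ and $|f(h-k)|,|\widetilde f(h)|\le\|f\|_{\ell^\infty}$-type bounds, together with $|f(h\pm k)|\le\|f\|_{\ell^\infty}$. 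Because $\hat V$ has compact support in a ball of radius $r$ (Condition \ref{cond potentials}), the $k$-sum runs over $\mathcal{O}(1)$ many terms, so it suffices to bound a single $\alpha^H(q,k)$ uniformly in $q\in\Z^3$ and $k\in\supp\hat V$.

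The heart of the matter is therefore the counting estimate: I would show that the counting function
\begin{equation}
\label{Bplan2}
N(q,k)=\big|\{r\in\Z^3: |r|\le p_F,\ |r+k|>p_F,\ r\cdot k=q\cdot k\}\big|
\end{equation}
satisfies $N(q,k)\le C\, p_F^{\,2}$ uniformly in $q\in\Z^3$ and $1\le|k|\le r$, hence $\le C N^{2/3}$. Wait — that would give $N^{2/3}$, not $N^{1/3}$; so the real content is the \emph{sharper} bound $N(q,k)\le C\, p_F$, i.e.\ $\le C N^{1/3}$. The geometric picture (as in the Remark preceding the lemma) is that $r$ is constrained to lie simultaneously in the lune $L(k)=\{|r|\le p_F,\ |r+k|>p_F\}$ — a shell-like region of thickness $\mathcal{O}(|k|)=\mathcal{O}(1)$ near the sphere $|r|=p_F$ — and in the fixed hyperplane $H=\{r\cdot k=c\}$ for $c=q\cdot k$. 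The intersection $L(k)\cap H$ is a planar region obtained by slicing a thin curved shell by a plane; its two-dimensional area (hence its count of integer points, since the plane has bounded integer-point density per unit area when $k$ is a bounded integer vector) is $\mathcal{O}(p_F\cdot |k|)=\mathcal{O}(p_F)$: one factor $p_F$ from the length of the circular arc $\{|r|=p_F\}\cap H$ (a circle of radius $\le p_F$), and one factor $\mathcal{O}(1)$ from the transverse thickness of the lune. I would make this rigorous by parametrizing: write $r=r_\parallel \hat k + r_\perp$ with $r_\perp\perp k$; the constraint $r\cdot k=c$ fixes $r_\parallel$; then $|r|\le p_F$ forces $|r_\perp|\le p_F$, and the lune condition $-\tfrac12 k^2<2r\cdot k\le p_F|k|$ is automatically a bounded window — but more carefully, among the lattice points in the disk $\{r_\perp\perp k, |r_\perp|\le p_F\}\cap(r_\parallel\hat k + \Z^3)$, the additional constraint $|r+k|>p_F$ combined with $|r|\le p_F$ cuts down to an annular arc of width $\mathcal{O}(1)$, giving the $\mathcal{O}(p_F)$ count. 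Summing \eqref{Bplan1} over the $\mathcal{O}(1)$ values of $k\in\supp\hat V$ and absorbing $\sum_k|\hat V(k)|^2\le\|\hat V\|_{\ell^1}^2$ into the constant $C=C(\hat V)$ then yields $\|\mathscr{B}^{(H)}[f]\|_{\ell^\infty}\le C N^{1/3}\|f\|_{\ell^\infty}\|\widetilde f\|_{\ell^\infty}$, and the same for $\mathscr{B}^{(P)}$.

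The main obstacle is the lattice-point count \eqref{Bplan2}: the naive area bound gives $N^{2/3}$, and getting the correct $N^{1/3}$ requires genuinely using that one \emph{also} intersects with the thin lune, not just the ball. I would handle this by first observing that a lattice point $r$ with $r\cdot k=c$ and $|r|\le p_F$ can be written using an integral basis of the sublattice $\{r\in\Z^3: r\cdot k = 0\}$ (a rank-two lattice with covolume $\lesssim|k|$), so integer points in $L(k)\cap H$ are in bijection with integer points of a rank-two lattice of bounded covolume lying in the planar region $L(k)\cap H$; then a standard bound (points $\le$ area/covolume $+$ perimeter $+ 1$) applies, and the area of $L(k)\cap H$ is $\mathcal{O}(p_F)$ by the thin-shell-times-bounded-slice argument above. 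A secondary, more bookkeeping-level issue is making sure the crude reduction \eqref{Bplan1} is valid for \emph{all} $h$, including $h$ near $\S$ where the $\chi$'s are what makes $\alpha^H$ nonzero; but since we only claim an $\ell^\infty$ bound (not the refined $\ell^{1*}_m$ bound), keeping all $\chi$-factors by $1$ is harmless. Finally I would remark that this upper bound is complemented, for the specially constructed initial data of Condition \ref{condition 3}, by a matching lower bound $\|\mathscr{B}[f]\|_{\ell^\infty(\Z^3\setminus\S)}\simeq N^{1/3}$, which is where Condition \ref{condition 4} (positivity of $\hat V$) and the separation/non-degeneracy of the support points enter — but that is a separate statement from the present lemma.
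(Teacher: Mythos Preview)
Your proposal is correct and follows the same overall strategy as the paper: reduce the $\ell^\infty$ bound to the uniform estimate $N(q,k)\le Cp_F$ for the counting function, then sum over the finitely many $k\in\supp\hat V$ and absorb $\sum_k|\hat V(k)|^2$ into $C(\hat V)$. The one notable difference is in how the counting bound is established. You propose a lattice-theoretic route: pass to the rank-two sublattice of $\Z^3$ in the hyperplane $\{r\cdot k=c\}$ (covolume $|k|/\gcd(k)\le r$), then bound integer points in the planar annulus by area plus boundary error. The paper instead uses a direct integral comparison: it bounds the lattice count by the Lebesgue volume of a unit-thickening of the constraint set,
\[
N(q,k)\le C\int_{\R^3}\1(|x|\le p_F+1)\,\1(|x+k|\ge p_F-1)\,\1(|x\cdot k-q\cdot k|\le 1)\,\d x,
\]
and evaluates this in cylindrical coordinates aligned with $k$, obtaining $\le C(k)\,p_F$ in one line. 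Both arguments work; the paper's avoids the sublattice bookkeeping and the non-convexity of the annulus. One small imprecision in your sketch: the cross-sectional annulus does \emph{not} always have width $\mathcal O(1)$---when $q\cdot\hat k$ is within $\mathcal O(1)$ of $p_F$ the inner radius collapses---but in that regime the outer disk itself has area $\pi R_+^2=\mathcal O(p_F)$, so your conclusion $N(q,k)=\mathcal O(p_F)$ survives. The cleanest fix within your approach is to note directly that the annulus area equals $\pi(2q\cdot k+|k|^2)\le\pi(2p_F|k|+|k|^2)=\mathcal O(p_F)$, and that the Gauss-circle error for each of the two bounding disks is $\mathcal O(R_+)\le\mathcal O(p_F)$.
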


\begin{proof}
Let us first given an upper bound for the counting function $N(q,k)$, 
for   $(q,k)$ with $ q \in\big(	 \frac{2\pi}{L}\Z	\big)^3	 $ and $ 1 \leq |k| \leq r $. 
Indeed, let us assume that $ | q \cdot k|  \leq p_F |k|$ for otherwise $N(q,k)=0$.
Then, a standard integral estimate shows that for a constant $C>0$
	\begin{equation} 
		N (q,k) 
		\leq 
		C
		\int_{\R^3 }
		\1 (	 |x| \leq p_F +1  	)
		\1 (|x  + k   | \geq p_F -1 ) 
		\1 ( |x \cdot k - q \cdot k | \leq 1 	 	)  \d x \ . 
	\end{equation}
	We now evaluate the last integral by changing variables so that $x \cdot k = x_3 |k|$.
	Indeed, denoting $\hat k = k/ |k|$  we find  using cylindrical coordinates  
	\begin{align}
		\nonumber 
		N(q,k) 
		& 	\leq 
		C
		\int_{	q \cdot \hat k -1 / |k | }^{q \cdot \hat k + 1/ |k|}
		d x_3
		\int_0^\infty 
		r d  r  
		\1 \Big(
		( p_F -1)^2 - (x_3+|k|)^2  \leq  r^2   \leq (p_F +1)^2 - x_3^2
		\Big)  \\ 
		& \leq C(k) p_F  \ , 
		\label{bound N}
	\end{align}
	where we have evaluated the last integral and used the upper bound $|q \cdot k | \leq p_F |k|$. 
	
	\vspace{2mm}
	
	Going back to the operator $\mathscr{B}[f]$, one may readily find that 
	for a constant $ C >0$ there holds 
	\begin{equation}
		\|	 \mathscr{B} [f]	\|_{\ell^\infty}
		\leq 
		C
		 \|	f 	\|_{\ell^\infty}
		\|1 -f 		\|_{\ell^\infty} 
\int 
			|\hat V (k)|^2 
		\sup_{q \in \Lambda^*  }		N(q,k) \d k 
		\leq 
			C
			\|	\hat V\|_{\ell^2 }^2 
		\|	f 	\|_{\ell^\infty}
		\|1 -f 		\|_{\ell^\infty} 
	 \ 	p_F 
	\end{equation}
	where we have used the bound \eqref{bound N} for the counting function
	in terms of $p_F$. 
	This finishes the proof after we use $p_F \sim (N / |\Lambda|)^{1/3}$
	\end{proof}

In order to give a lower bound for $\mathscr{B} [f]$, 
we take $f$ as in Definition \ref{definition initial data}
satisfying   Condition \ref{condition 3}. 
It turns out that one can  easily calculate the leading order term of the asymptotics of $N(q,k)$
provided  $k $ is chosen parallel to one of the
basis vectors, and $q$ is large enough in this direction.
We do this in the following lemma.

\begin{lemma}[Counting function]
	\label{lemma number}
Let $k = (0,0,  \pm | k|)  \in \supp \hat V $ and  let $q  \in \Lambda^* $ 
 satisfy the lower bound  $  \pm q_3 \geq C p_F^{2/3}$. 
Then, the following asymptotics holds true 
\begin{equation}
	N(q,k ) = 
	 \frac{  q \cdot k}{2 \pi L }
	  \Big(
	1 +  \mathcal O  (p_F^{-1/3})
	\Big) \ , \qquad p_F \rightarrow \infty \ . 
\end{equation} 
The same result holds for $ k  = ( \pm 1,0,0) |k|$ and $k =(0,\pm 1,0) |k|$
provided $  \pm q_1  \geq C p_F^{2/3}$ and $ \pm  q_2 \geq C p_F^{2/3}$. 
\end{lemma}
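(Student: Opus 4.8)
The plan is to prove Lemma \ref{lemma number} by reducing the counting function $N(q,k)$ to a lattice-point count over a two-dimensional planar region and then estimating that region's area. Since $k = (0,0,\pm|k|)$, the constraint $r\cdot k = q\cdot k$ forces $r_3 = q_3$ (after dividing by $\pm|k|$), so
\begin{equation}
N(q,k) = \Big| \big\{ (r_1,r_2) \in \Z^2 : r_1^2 + r_2^2 \leq p_F^2 - q_3^2 , \ r_1^2 + r_2^2 + (q_3 + |k|)^2 > p_F^2 - \ ? \big\} \Big| \ ,
\end{equation}
that is, after expanding $|r+k|^2 = r_1^2 + r_2^2 + (q_3\pm|k|)^2$, the two ball constraints become a pair of inequalities on the single quantity $\rho^2 := r_1^2 + r_2^2$, namely $p_F^2 - q_3^2 - 2q_3|k| - k^2 < \rho^2 \leq p_F^2 - q_3^2$ when $k_3 = +|k|$ (and the symmetric version for $-|k|$). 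So $N(q,k)$ is the number of lattice points of $\Z^2$ lying in an annulus of inner radius $\rho_- = \sqrt{p_F^2 - q_3^2 - 2|q_3||k| - k^2}$ and outer radius $\rho_+ = \sqrt{p_F^2 - q_3^2}$. First I would record that this annulus is nonempty and that its radii are genuinely of size $\sim p_F$: this is exactly where the hypothesis $|q_3| \geq C p_F^{2/3}$ enters, guaranteeing $\rho_+^2 - \rho_-^2 = 2|q_3||k| + k^2$ is comparable to $p_F^{2/3}$ while $\rho_+ \lesssim p_F$, so the annulus width is $(\rho_+^2-\rho_-^2)/(\rho_++\rho_-) \gtrsim p_F^{-1/3}$ — thin, but not so thin that Gauss-circle error swamps the main term.

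Next I would invoke the Gauss circle lemma: for $R \geq 1$, $\#\{(r_1,r_2)\in\Z^2 : r_1^2+r_2^2 \leq R^2\} = \pi R^2 + \mathcal O(R)$. Applying this to both circles and subtracting gives
\begin{equation}
N(q,k) = \pi(\rho_+^2 - \rho_-^2) + \mathcal O(\rho_+ + \rho_-) = \pi(2|q_3||k| + k^2) + \mathcal O(p_F) \ .
\end{equation}
Now $q\cdot k = \pm q_3 |k| = |q_3||k|$ under the sign hypothesis, so the main term is $2\pi\, q\cdot k\,(1 + \mathcal O(k^2/(q\cdot k)))$, and the error term $\mathcal O(p_F)$ relative to the main term $\sim p_F^{2/3}\cdot$(size of $q_3$) is where I must be slightly careful: actually the relative error is $\mathcal O(p_F / (q\cdot k))$, which is $\mathcal O(p_F^{1/3})$ if $q\cdot k \sim p_F^{2/3}$ — that is \emph{larger} than the claimed $\mathcal O(p_F^{-1/3})$. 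So the statement as written must implicitly require $q\cdot k$ to be somewhat larger, or the $\mathcal O(p_F^{-1/3})$ should be read as $\mathcal O(p_F^{-1/3})$ only in the regime $q\cdot k \sim p_F$; I would re-examine the constant $C$ and the intended scaling of $q$ (recall from the surrounding text that $q$ ranges over momenta with a Cartesian component $\sim p_F$, so $q\cdot k \sim p_F|k| \sim p_F$, and then $\mathcal O(p_F)/\mathcal O(p_F^2) \cdot p_F \cdot$ wait — with $q\cdot k \sim p_F$ the relative error is $\mathcal O(p_F/p_F) $; I need the sharper Gauss bound). The resolution is to use the improved Gauss circle bound $\pi R^2 + \mathcal O(R^{2/3})$ is false at that strength, so instead I would keep the honest $\mathcal O(R) = \mathcal O(p_F)$ error but note that in the application (Condition \ref{condition 3}) the relevant $q$ has $q\cdot k \sim p_F \cdot |k|$, giving relative error $\mathcal O(p_F^{-1/3})$ precisely when we also use that the annulus radii are $\sim p_F^{... }$ — I would need to recompute: with $q\cdot k \sim p_F$, main term $\sim p_F$, error $\sim \sqrt{p_F^2 - q_3^2} \leq p_F$; this does not immediately give $p_F^{-1/3}$.

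The honest plan, then, is: the main obstacle is pinning down the precise error exponent, and I expect the lemma is stated for $q$ with $|q_i| \sim p_F$ for some $i$ (not merely $\geq C p_F^{2/3}$) so that $q\cdot k \sim p_F^2$ after the correct bookkeeping — let me instead reconsider whether $r_3 = q_3$ is forced or whether $r\cdot k = q\cdot k$ with $k = (0,0,\pm|k|)$ gives $r_3 = q_3$, yes it does, so $\rho_+^2 = p_F^2 - q_3^2$ which is $\mathcal O(p_F^2)$ and $\rho_+^2 - \rho_-^2 = 2|q_3||k| + k^2 \sim |q_3| \sim p_F$; main term $\pi\cdot\mathcal O(p_F)$, error $\mathcal O(\rho_+) = \mathcal O(p_F)$. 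To get a genuine asymptotic with relative error $\mathcal O(p_F^{-1/3})$ one needs $|q_3| \gg p_F^{2/3}$, indeed $|q_3| \sim p_F$ would give relative error $\mathcal O(1)$, not better — so actually the \emph{sweet spot} is $|q_3| \sim p_F^{2/3}$ giving $\rho_+^2 - \rho_-^2 \sim p_F^{2/3}$ wait that's smaller than the error $\mathcal O(p_F)$. I am going in circles; the correct move is to improve the Gauss circle estimate for the \emph{difference} of two nearly-equal radii using the fact that we can choose, in the application, $k$ and exploit that many $(r_1,r_2)$ in a thin annulus are counted with the sharp van der Corput bound $\mathcal O(R^{1/3})$ for the circle problem restricted appropriately — but since the paper only needs the leading order for the lower bound in Remark \ref{remark lower bound}, I would ultimately just prove $N(q,k) = 2\pi\,(q\cdot k)(1 + o(1))$ under the stated hypothesis by the annulus-area computation plus the classical $\mathcal O(R)$ Gauss bound, checking that $o(1)$ holds because $q\cdot k \to \infty$ faster than $p_F$ — which, re-reading, must be the intent: $|q_3| \geq C p_F^{2/3}$ is a typo/simplification and the substantive hypothesis used downstream is $|q_i|^2 \sim p_F^2$, making $q\cdot k \sim p_F |k|$ and $\rho_\pm^2$ bounded away from $p_F^2$; then $\rho_+ \sim p_F\sqrt{1-\varepsilon} < p_F$ and the error $\mathcal O(\rho_+)/\mathcal O(q\cdot k) = \mathcal O(1/|k|)$. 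I would flag this discrepancy, carry out the annulus reduction and Gauss-bound subtraction cleanly, and state the result with whatever error term the hypotheses honestly yield, noting it suffices for the lower bound $\|\mathscr B[f]\|_{\ell^\infty(\Z^3\backslash\S)} \simeq N^{1/3}$.
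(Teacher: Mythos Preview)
Your reduction to a two-dimensional annulus count is correct and matches the paper exactly; the formula
\[
N(q,k)=\big|\big\{x\in\Z^2:\ p_F^2-(q_3+|k|)^2<|x|^2\le p_F^2-q_3^2\big\}\big|
\]
is the right starting point. The genuine gap is the Gauss-circle input. You use only the trivial bound $\#\{|x|\le R\}=\pi R^2+\mathcal O(R)$, and you correctly discover that this is too weak: with $\rho_+\sim p_F$ and main term $q\cdot k\sim p_F$, the relative error is $\mathcal O(1)$. The paper instead invokes Huxley's bound
\[
\big|\#\{x\in\Z^2:|x|\le R\}-\pi R^2\big|\le C_\ve R^{\theta+\ve},\qquad \theta=\tfrac{262}{416}<\tfrac{2}{3},
\]
so that the annulus error is $\mathcal O(p_F^{\theta+\ve})=o(p_F^{2/3})$. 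Your remark that an $\mathcal O(R^{2/3})$ error ``is false at that strength'' is a misremembering: van der Corput already gives $\mathcal O(R^{2/3})$, and Huxley is a strict improvement below $2/3$.

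Your instinct about the hypotheses is also correct. The paper's proof does not rely solely on the stated lower bound $\pm q_3\ge C p_F^{2/3}$; it explicitly uses Condition~\ref{condition 3}, namely $\ve\, p_F^2\le q_3^2\le(1-\ve)p_F^2$, to ensure $\rho_\pm\sim p_F$ and $q\cdot k\sim p_F$. With that in place the relative error is $\mathcal O(p_F^{\theta+\ve-1})=\mathcal O(p_F^{-1/3})$, as claimed. So the missing idea is simply to replace the classical $\mathcal O(R)$ by Huxley (or even van der Corput) and to read the lemma together with Condition~\ref{condition 3}.
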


\begin{remark}
	Before we turn to the proof, let us note that for $ k = (0,0,|k|)$
one can explicitly calculate that 
	\begin{equation}
		\label{number}
		N(q ,k) = 
 |\Lambda|^{-1 }
		\big|
		\big\{		 x \in 	 (\frac{2\pi}{L}\Z)^2 	 : 		 { p_F^2 - ( q_3 +  |k| )^2	}	 < |x|^2 \leq  { p_F^2 -  q_3^2  	}	 	 		 \big\}
		\big|   \ .
	\end{equation}
Note the area of the above annulus is   $\pi (2  q_3 |k|+ |k|^2) >0  .  $ 
Determining the leading order term 
of the  asymptotics of the   counting function  \eqref{number} with $ L = 2\pi$
	is a problem  that has received attention in  other fields; 
	see for instance \cite{BleherLebowitz1994,BleherLebowitz1995,Colzani2021,HughesRudnik2004,Major1992} and the references therein.  
Let us try to explain (informally) why it is, in general,  more challenging than the usual Gauss circle problem. 
Namely, we note that if  $q_3$  is sufficiently small relative to $p_F$
	 (this is the so-called  \textit{thin annulus} situation), 
	 the area of the corresponding annulus may be comparable 
to the remainder term
that comes from lattice point counting. 
 This would be the case for holes $h \in H$ of sufficiently small norm, relative to $p_F$. 
	In our case,  the lower bound $ q_3  \geq C\,   p_F^{2/3}$ 
	(introduced in Condition \ref{condition 3})
	is sufficiently large and the problem is avoided altogether. In other words, we avoid the  thin annulus regime in our analysis.  
\end{remark}

\begin{remark}
In  the proof of Lemma \ref{lemma number} 
we compute the asymptotics of $N(q,k)$ as $p_F \rightarrow \infty$
for particular values of $ k \in  (	 \frac{2\pi}{L}\Z	)^3 $, 
with remainder $o (p_F^{2/3})$. 
These particular values are enough to establish the desired lower bounds on
$\mathscr{B}[f]$, 
for $f$ verifying Condition \ref{condition 3}. 
The asymptotics for \textit{arbitrary} values of 
$ k \in (	 \frac{2\pi}{L}\Z	)^3 $
has also been computed in the literature for $L = 2 \pi$
with remainder $O	\big( 	 \log(p_F)^{2/3} p_F^{2/3}	\big) $. 
See e.g.  \cite[Eq. (B.85)]{Christiansen2023-t}. 

\end{remark}

\begin{proof}[Proof of Lemma \ref{lemma number}]
First, we recall some estimates from the Gauss circle problem.
Namely, let us denote by 
  $n(r) \equiv  | \{ x\in\Z^2 : |x|^2 \leq r^2 \}|$
  the area of the circle $\pi r^2$. 
It is  known that 
the remainder $E(r) \equiv  n(r) -  \pi r^2$ 
satisfies the following bound: 
for all $\ve>0$ there exists $C_\ve$ and $r_\ve>0$ such that 
\begin{equation}
\label{reminder E}
|E(r) | \leq C_\ve  r^{\theta +\ve} \ , \qquad    \forall r \geq r_{\ve}. 
\end{equation}
Here,   $\theta	 =   262/416 < 2/3$ 
is (to the authors best knowledge)
the current best power for the bound \eqref{reminder E}, and is due to Huxley 
\cite{bourgain}. 

\vspace{2mm}

We now assume $ k = (0,0,|k|)$. 
Let us now use \eqref{reminder E} 
with $\theta + \ve < 2/3$. 
Indeed,  as  $p_F \rightarrow\infty$ the area of the annulus is the difference between the area of two concentric circles and one finds 
\begin{align}
\nonumber 
&  N(q,k) \\ 
\nonumber 
 &    =
 \frac{1}{L^3}
  \Big(
   n \Big(  \big(\frac{L}{2\pi}	\big) \sqrt{     p_F^2 - q_3^2  }  	\Big)
   - 
      n \Big( \big(\frac{L}{2\pi}	\big)     \sqrt{  p_F^2 - ( q_3+|k|)^2  }  \Big)
 \Big) \ ,  \\ 
 & = 
  \frac{1}{L^3}
 \nonumber 
 \Big(
\pi  \big(\frac{L}{2\pi}	\big)^2  \, (2  q_3 |k| + |k|^2)
+
    E \Big(  \big(\frac{L}{2\pi}	\big)   \sqrt{p_F^2 - q_3^2}		\Big)
 - 
 E\Big(   \big(\frac{L}{2\pi}	\big)  \sqrt{p_F^2 - (q_3+|k|)^2}		\Big)
 \Big) \ ,  \\ 
   & = 
\frac{1}{2\pi L } q_3 |k|
   \Big(
   1 +
o  \bigg(		\frac{   1 +|k| }{	   p_F^{1/3}	}		\bigg) 
   \Big)   \ . 
\end{align}
Here, we have used the lower bound  
$ \ve p_F^2 \leq p_F^2 - q_3^2 \leq p_F^2$
and similarly for $p_F^2 + (q_3  + |k|)^2$; 
see Condition \ref{condition 3}. 
This finishes the proof in view of 
$q \cdot  k = q_3 |k|$. 
\end{proof}

We are now ready to give a lower bound for the $\mathscr{B}$ operator.
We do this by evaluating the function $\mathscr{B}[f]$ over the points $q\in  ( \frac{2\pi}{L}\Z)^3 $ where $f$ has
non-trivial support. 

\begin{lemma}[Lower bound for $\mathscr{B}$]
	Let $f$ be as in Definition \ref{definition initial data}
	satisfying Condition \ref{condition 3}. 
	Then, there exsists $C_\Lambda>0$ such that for all $ q \in H \cup P$
	there holds 
	\begin{equation}
		|	\mathscr{B} [f] (q)	| \geq C_\Lambda  N^{1/3} \ . 
	\end{equation}
	
\end{lemma}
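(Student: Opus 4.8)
\textbf{Proof plan for the lower bound on $\mathscr{B}$.}

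The plan is to evaluate $\mathscr{B}[f](q)$ at a point $q \in H \cup P$ and show that, after using the structure of $f = f_{H,P}$, exactly one term in the defining sum survives to leading order, and that this term is of size $N^{1/3}$. I will treat the case $q = h \in H$ (a hole); the case $q = p \in P$ is identical after swapping $\mathscr{B}^{(H)} \leftrightarrow \mathscr{B}^{(P)}$ and $\chi \leftrightarrow \chi^\perp$. First I would observe that since $q \in H \subset \B / \S$ we have $\chi(q) = 1$ and $\chi^\perp(q) = 0$, so $\mathscr{B}^{(P)}[f](q) = 0$ and only $\mathscr{B}^{(H)}[f](q)$ contributes. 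In $\mathscr{B}^{(H)}[f](h)$ with $h = q$ there are two families of terms, the gain term $\alpha^H(q-k,k)\, f(q-k)\, \widetilde f(q)$ and the loss term $\alpha^H(q,k)\, f(q)\, \widetilde f(q+k)$, summed against $|\hat V(k)|^2$ over $k \in \supp\hat V$.

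Next I would use Condition \ref{condition 3}(1): the points of $H \cup P$ are pairwise separated by more than $r$, so for $k \in \supp\hat V \subset \{|k| \le r\}$ and $k \neq 0$ we have $q - k \notin H \cup P$, hence $f(q-k) = 0$ and the entire gain term vanishes. (The $k=0$ contribution vanishes because $\hat V(0) = 0$ by Condition \ref{cond potentials}(3).) In the loss term, $f(q) = 1$, and $\widetilde f(q+k) = 1 - f(q+k) = 1$ by the same separation argument. So we are left with
\begin{equation}
	\mathscr{B}[f](q) = \frac{4}{(2\pi)^3}\sum_{k \in \Z^3} |\hat V(k)|^2\, \alpha^H(q,k) \ ,
\end{equation}
and since $\alpha^H(q,k) = \chi(q)\chi(q+k)\, N(q,k) \ge 0$ with $N$ the counting function, every summand is nonnegative; it therefore suffices to bound one of them from below. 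Here I would invoke Condition \ref{condition 3}(2): there is an index $i \in \{1,2,3\}$ with $\varepsilon p_F^2 \le |q_i|^2 \le (1-\varepsilon)p_F^2$, so in particular $|q_i| \ge \sqrt{\varepsilon}\, p_F \ge C p_F^{2/3}$ for $p_F$ large. Choosing $k = (0,0,\pm|k_0|)$ (or the analogous axis-aligned vector in the $i$-th coordinate) with $\pm$ chosen so that $\pm q_i > 0$ and $|k_0|$ the smallest positive integer with $(0,0,|k_0|) \in \supp\hat V$ — which exists and has $\hat V(0,0,|k_0|) > 0$ by Condition \ref{condition 4} — gives $q\cdot k = \pm q_i |k_0| \ge \sqrt\varepsilon\, p_F |k_0|$, large enough that Lemma \ref{lemma number} applies. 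Also $\chi(q+k) = 1$: indeed $|q+k|^2 = |q|^2 + 2q\cdot k + |k|^2$, and since $q \in \B/\S$ means $|q| \le p_F - 3r$, a short computation with $|k| \le r$ bounds this by $p_F^2$ once one checks $q \cdot k \le 0$ by choosing the sign of $k$ opposite to the sign of $q_i$; I would need to be slightly careful and pick the sign so that $q \cdot k < 0$ while still $|q\cdot k| \gtrsim p_F$, which is consistent with the requirement $\pm q_i \ge C p_F^{2/3}$ in Lemma \ref{lemma number} since that lemma is stated for both signs. Then $N(q,k) = 2\pi |q\cdot k|(1 + \mathcal O(p_F^{-1/3})) \ge c\, p_F \ge c'\, N^{1/3}$.

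The main obstacle is the bookkeeping of signs and the verification that the chosen momentum transfer $k$ lands the pair $(q, q+k)$ entirely inside the Fermi ball so that $\alpha^H(q,k)$ is genuinely nonzero and governed by the asymptotics of Lemma \ref{lemma number} — in other words, reconciling the constraint ``$q$ near the Fermi surface but outside $\S$'' with ``$q\cdot k$ large in absolute value'' and ``$q + k$ still inside $\B$''. This is exactly what Condition \ref{condition 3}(2) and the $3r$-collar in the definition of $\S$ are designed to guarantee, so the argument should close, but it requires writing out the inequality $|q+k|^2 \le p_F^2$ carefully using $|q| \le p_F - 3r$, $|k| \le r$, and the sign choice. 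Once that is done, combining the display above with the nonnegativity of all summands and the single lower bound from Lemma \ref{lemma number}, together with $\hat V(0,0,|k_0|)^2 > 0$, yields $|\mathscr{B}[f](q)| \ge C N^{1/3}$ for all $q \in H \cup P$, which is the claim.
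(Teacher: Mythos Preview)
Your approach is essentially the same as the paper's: they treat $q=p\in P$ while you treat $q=h\in H$, but both arguments use Condition~\ref{condition 3}(1) to kill the gain term, reduce to a sign-definite sum over $k$, and then invoke Condition~\ref{condition 4} and Lemma~\ref{lemma number} on a single axis-aligned $k$.

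Your sign discussion at the end, however, is confused in a way that would break the argument if executed as written. The constraint $\chi(q+k)=1$ does not require any sign choice: since $q\in H\subset\B\setminus\S$ gives $|q|\le p_F-3r$, the triangle inequality alone yields $|q+k|\le|q|+|k|\le p_F-2r<p_F$ for every $k\in\supp\hat V$. Meanwhile, Lemma~\ref{lemma number} is stated with the convention that $k=(0,0,\pm|k|)$ is paired with $\pm q_3\ge Cp_F^{2/3}$, so the sign of $k_3$ must \emph{match} that of $q_3$; this forces $q\cdot k>0$, not $q\cdot k<0$. With your proposed opposite-sign choice one gets $q\cdot k\le-\sqrt{\varepsilon}\,p_F|k|<-\tfrac12|k|^2$ for large $p_F$, and then $N(q,k)=0$ by the remark following the definition of $N$. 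The correct (matching-sign) choice gives both $\chi(q+k)=1$ and $N(q,k)\sim 2\pi\,q\cdot k\gtrsim p_F$ simultaneously, with no tension; once that is fixed, your argument coincides with the paper's.
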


 \begin{remark}
 	Since $H\cup P$ does not intersect the Fermi surface, the above lemma combined with the  previous 
 	upper bound implies that 
 	\begin{equation}
c_\Lambda   N^{1/3}
 \leq 		\|	   \mathscr{B}[f] 	\|_{\ell^\infty(	\Lambda^* / \S 	)} 	  
 \leq 
 C_\Lambda  N^{1/3} 
 	\end{equation}
for constants $c_\Lambda, C_\Lambda$ depending on the volume $|\Lambda | = L^d$,  and $N \geq 1$ large enough. 
 \end{remark}
 
 \begin{proof}
 	We prove the lemma only for $q = p \in P$ since the proof for $q = h \in H$ is analogous. 
 	To this end, we notice that thanks to Condition \ref{condition 3}, 
 	it holds that  $f (p)=1$  and $f(p+k)=0$  for all $ k \in \supp\hat V$.
 	Consequently, the ``gain term" vanishes and one is left with 
 	a simplified loss term. Namely, there holds 
 	\begin{equation}
 		\mathscr{B}^{(P)}[f](p) 
 		=   - 
4  \int 
 	   |\hat V (k)|^2 
 		N(p - k , k )  \d k 
 	\end{equation}
Let $ i \in\{1,2,3\}$ be the index for which the lower bound holds true 
$|p _i | \geq \ve p_F $. 
Assume without loss of generality that $i=3 $. 
Thanks to our assumption on $\hat V$ given by Condition \ref{condition 4}, 
there exists $ k_* = (0,0,1)|k|$ with $\hat V (k_*)>0 $. 
Then, $|p_3 -  |k| | \geq C p_F $
and we may use Lemma \ref{lemma number} 
with $q = p-k_*$.
Hence, we find that for some $c_\Lambda>0 $
\begin{equation}
	|		 		\mathscr{B}^{(P)}[f](p) 		|
	\ \geq  \ 
	C | \hat V(k_*)	|^2  
	N(p-k_*)
 \ 	\geq \ 
	C | \hat V(k_*)	|^2   
	\big( p_3 -  |k|  \big)
\ \geq  \ 
c_\Lambda N^{1/3}
\end{equation}
This finishes the proof. 
 \end{proof}

\section{Proof of Theorem \ref{thm3}}

The main purpose of this section
is the comparison of the operators $\mathscr{C}$
and $\mathscr{B}$, acting on $\ell^1(\Lambda^*)$.
In what follows, in order to shorten some lengthy expressions, we will use the following notation
for the three-fold lattice sum
\begin{equation}
	\int \equiv \int \d p ' \d p_* \d p_*' 
\end{equation}
which are summing over the pre-collisional variable $p_* \in \Lambda^*$, 
and the post-collisional variables $p', p'_* \in \Lambda^*$. 
We also introduce the measure  
\begin{equation}
\d 	\sigma \equiv 
	b(p p_*  p' p'_* ) \, \d p_* \d p' \d  p_*' 
\end{equation}
where the function $b$ was defined in \eqref{kernel}.
We also recall  the following notations
\begin{equation}
	F = F(p)
	\qquad 
	F_* = F(p_*)
	\qquad 
	F' = F(p')
	\qquad 
	F'_* = F (p'_*)  
\end{equation}
which will be extensively used throughout this section. 

\vspace{1mm}

Let us recall that, with these notations,  the operator $\mathscr{C}$ acts on $\ell^1(\Lambda^*)$  as follows 
	\begin{align}
\label{operatorC}
	\mathscr C  [F] 
	\equiv  
	\int 
	\Big(
	F' F'_* (1 - F) (1 - F_*) - 
	F F_* (1 - F' ) (1 - F_*')
	\Big)   \d \sigma  \ . 
\end{align}  
Similarly,    the operator $\mathscr{B}= \mathscr{B}^{(P)} + \mathscr{B}^{(H)}$
acts on $\ell^1(\Lambda^*)$
and was defined as the limits of the operator $ \frac{1}{t}  B_t$ given in Definition \ref{definition B}.
The effect of  taking the limit is that we replace the approximate  delta function  $\delta_t(\Delta E)$ with $\frac{1}{\pi /2} \delta_\Z ( \Delta E  )$. 
Here, we write them as 
after a change of variables: 
\begin{align}
\nonumber 
		\mathscr B ^{ (P)}  [ f ] 
& 	 =    (1 - \chi )  
	\int 
	(1 - \chi ' )
	\Big(
 \chi_* (1- 	\chi_*') 
  f'	 	 ( 1 - f ) 
 -
\chi_*'		(1- \chi_*	 )
   f 	 	 		 ( 1 - f'  ) 
	\Big)   \d \sigma_P  \\ 
	\label{operatorB}
		\mathscr B ^{ (H )}  [ f ] 
	&  = \chi 
	\int 
\chi' 
	\Big(
 		 \chi_* (1- 	\chi_*') 
 		f' (1 -f )
	-
 \chi_*'		(1- \chi_*	 ) 
    f 	 	 		 ( 1 - f'  ) 
	\Big)   
	\d \sigma _H  \ . 
\end{align}
The measures
are written  as 
\begin{align}
	\label{kernel2}
	\d \sigma_P& = 
	 \frac{2 \pi}{( \pi/2 )  }   \,  \delta (p + p_* - p' - p'_*)
	\delta_{\Z } 
	( p^2 + p_*^2  -  (p')^2  - (p_*) '|^2		  )  
	| \widehat V ( p - p ' )		 			|^2    \ ,  \\
	 \d \sigma_H   & = 
		 \frac{2 \pi}{( \pi/2 ) } 	   \,  \delta (p + p_* - p' - p'_*)
	\delta_{\Z } 
	( p^2 + p_*^2  -  (p')^2  - (p_*) '|^2		  )  
	| \widehat V ( p - p ' )		 			|^2      \ . 
\end{align}

\subsection{Statements}
The main result of this section is the comparison between the operators $\mathscr{C}[F]$
and $\mathscr{ B}[f]$
after one changes variables from the original picture 
to the particle-hole picture. 
We assume that  the perturbation $f$  does not have support around the surface $\S$, in the sense of Condition \ref{condition initial data}.
More precisely:

\begin{proposition}
	\label{prop F}
	Let $\mathscr{C}$ and $\mathscr{B}$ be the operators described by \eqref{operatorC} and  \eqref{operatorB}.
	Let $F,f \in \ell^1(\Lambda^*)$ satisfy $ 0 \leq F, f \leq  1$ and  assume that they are related through 
	\begin{equation}
		\label{Ff}
		F = (1 - \chi) f + \chi (1 -f ) \ . 
	\end{equation}
Furthermore, let $\hat V$ verify Condition \ref{cond potentials}
and assume that $f |_{\S} \equiv 0 $, where $\S $ is the Fermi surface  given by \eqref{S definition}.
Then,   it holds that for all $ p \in \S $
\begin{equation}
	\mathscr{C}[F] = (1 -\chi) \mathscr{B}[f]  - \chi  \mathscr{B}[f] + R [f]
\end{equation}
where $R[f]$ is a remainder that {\blu satisfies} the estimate 
\begin{equation}
\|		R[f]			\|_{\ell^\infty(	 \Lambda^* \backslash \S 	)}		 \leq  C 	\|	 \widehat V	\|_{\ell^2}^2 	 \|	f	\|_{\ell^1 (\Lambda^*)} \ . 
\end{equation}
\end{proposition}

\begin{proof}
We    decompose the operator $\mathscr{C}$ into its particle and hole contributions. 
We then      decompose it further 
into the gain and collision terms. More precisely, we write 
 \begin{equation}
 	\label{C decomposition}
 	\mathscr{C}[F] = \chi \mathscr C^+ [F] 		-\chi \mathscr{C}^- [F]			+  (1 - \chi)	\mathscr{C}^+ [F] - (1 - \chi)	\mathscr{C}^- 	 [F]
 \end{equation} 
and analyze each term separately. 

\vspace{1mm}

\textit{Observation 1}. 
Let us recall that the measure $\d \sigma = b( p p_* p' p_*')\d p_* \d p ' \d p_* ' $
is defined in  \eqref{kernel} in terms of the symmetrized matrix elements
\begin{equation}
	\label{V}
	|	 \widehat V( p - p') -  \widehat V ( p - p_*')	|^2 =   
	 |   \widehat V( p - p')|^2 
	 + 
	 |   \widehat V( p - p_* ')|^2  
	  -2 
	   \widehat V( p - p') 
  \widehat V ( p - p_*') 
  \ . 
\end{equation}
Observe that inside the integral we can  change  variables  and obtain 
$		 |   \widehat V( p - p_* ')|^2   \mapsto 	 |   \widehat V( p - p')|^2   $ for the second term. 
On the other hand, observe that for fixed $ p \in \S $ the product 
$	   \widehat V( p - p') 
\widehat V ( p - p_*') $ forces \textit{all} momenta  to lie close to each other, within a radius $O(1)$
within the Fermi surface. Since $f {|_\S}=0 $ such term drops out. 
Thus, without loss of generality  we assume throughout the proof 
that the measure is determined  by $2 | \hat V( p  - p')|^2$
rather than  \eqref{V}. 
This factor $2$ enters the $\mathscr{ B}$ operators in the kernels \eqref{kernel2}.

\vspace{2mm}
\textit{Observation 2}. Let us make an  observations that will facilitate the analysis of each term.
Recall that we regard $p$ and $p'$ as pre and post-collisional momenta (resp. $p_*$ and $p_*'$)
and that the measure  $\d \sigma $ is proportional to  $|\widehat V( p - p' )|^2$
 (resp.  $| \widehat V( p_* - p'_*)|^2 $  thanks to momentum conservation). 
Recall also that  the map $k \mapsto \hat V(k)$ is supported in a ball of radius $r>0$, 
and  the Fermi surface $\S$ is defined as a neighborhood of the Fermi surface of order $3r$. 
Therefore, under the integral sign $ \int  \d \sigma $ we have that 
the product of $\chi$ functions induce the following restrictions. 
\begin{align}
	\label{obs}
	\chi (1  - \chi ')  & = \chi ( 1- \chi ')  ( \1_\S \times \1_\S )( p ,p') 
\end{align}
and similarly for  $	\chi_*  (1  - \chi '_*) $. 
Since $f  {|_\S } = 0 $ we then find that several combinations simplify. For instance: 
\begin{equation}
	\label{obs2}
	\chi (1 - \chi ' ) f   = 0  
	\qquad
	\t{and}
	\qquad 
		\chi (1 - \chi ' )  (1  - f )   = 		\chi (1 - \chi ' )    \ . 
\end{equation}

\vspace{1mm }

\textit{The first term of \eqref{C decomposition}}. 
Thanks to \eqref{Ff} we have  $\chi  (1-F)  = \chi f  $. Thus, 
\begin{equation}
\chi	\mathscr{C}^+ [F]
  = 
  \int 
 \chi  f  F'  \, F'_*    (1 - F_*)   \d \sigma  \ . 
\end{equation}
Next,  
we can compute 
thanks to  \eqref{Ff}  for $F'$ and the observation \eqref{obs}
\begin{equation}
    \chi f  F ' 
	   \  =     \ 
	   \chi '  \chi  
	   	 	 f   (1 - f' ) 
	  \  	 + \  (1 - \chi ' ) \chi 
	   	 	   f    f ' 
 \ 	   	 	   =		\    \chi '  \chi  
	   	 	   f   (1 - f' )  \ . 
\end{equation}
 Thereby 
\begin{equation} 
\label{C+1}
		\chi	\mathscr{C}^+ [F]
		= 
		\int 
		\chi  \chi '   f  (1 -f ') F'_*   (1 - F_*)  \d \sigma \ . 
\end{equation}
Next, we compute using \eqref{Ff}  the four contributions
of $F_* (1  - F_*)$
which then subsequently simplify thanks to a variation of \eqref{obs2}: 
\begin{align}
\label{C+2}
	F_* ' (1 - F_*)    = 
 \chi_* ' 
 \chi_* 
 ( 1 - f_* ' ) f_* 
 +
 \chi_* ' 
 (1-  \chi_* )   
 +
 ( 1 -	 \chi_* ' ) (1 -  \chi_*)
 f_* '  (1-f_*  )   \ . 
\end{align}
We conclude that by putting \eqref{C+1} and \eqref{C+2} together that 
\begin{align}
	\label{Q1}
	\nonumber 
			\chi	\mathscr{C}^+ [F] 
			&  = 
				\int  
			\chi  \chi ' 	\chi_* ' 
			(1-  \chi_* )   f  (1 -f ') 
		  \d \sigma \\ 
		  	\nonumber 
& 	+ 		 	\int 
			 \chi  \chi '  	 \chi_* ' 
			 \chi_*   f  (1 -f ') 
( 1 - f_* ' ) f_* \d \sigma   \\
& + 	\int 
\chi  \chi ' ( 1 -	 \chi_* ' ) (1 -  \chi_*)    f  (1 -f ') 
f_* '  (1-f_* ) \d \sigma  \ . 
\end{align}
Let us now show that the second  and third terms can be bounded in the  $\ell^\infty$ norm by $\|	f	\|_{\ell^1}$. 
To see this, 
we write $ \d \sigma   \leq C | V( p '- p 	)|^2 \delta(	p + p_*  - p' - p'_*	) \d p_* \d p' \d p_* ' $
and
use the uniform bounds $ 0  \leq \chi_*' , \chi_* , f '  \leq 1 $   to find 
\begin{align}
	\nonumber
\Big| 
		\int 
	\chi  \chi '   f  (1 -f ') 
 	\chi_* ' 
	\chi_* 
	( 1 - f_* ' ) f_*	 \d \sigma  	\Big|   	
& \ 	\leq  \ 
	C 
	\int 
	f_*   |V ( p ' - p)	|^2 \delta(	  p'_* + p ' - p - p_*	)  d p_* d p ' d p_* '    \\
&  \ 	\leq  \  C \|	f\|_{\ell^1} \|	 \widehat V	\|_{\ell^2}^2 \ . 
\label{eq1}
\end{align} 
For the third term, the analogous bound yields 
\begin{align}
\nonumber\Big|  
	\int 
	\chi  \chi '   f  (1 -f ') 
	( 1 -	 \chi_* ' ) (1 -  \chi_*)
	f_* '  (1-f_* )	 \d \sigma 
		 \Big|    
		&  	\leq 			\nonumber
 C \|	f\|_{\ell^1} \|	 \widehat V	\|_{\ell^2}^2 \ . 
 \label{eq2}
\end{align}
We conclude that
\begin{equation}
		\chi	\mathscr{C}^+ [F]  
		 = 
			\int 
		\chi  \chi '  	\chi_* ' 
		(1-  \chi_* )  f  (1 -f ') 
	 	\d \sigma   + R_1 [f]
\end{equation}
where $R_1 [f]$ satisfies the claimed $\ell^1$ estimate. 

\vspace{3mm }

\textit{The second term of   \eqref{C decomposition}}.
 We use \eqref{Ff} for $1 - F' $  and then
 a variation of \eqref{obs2} to find 
\begin{align}
  	\chi	\mathscr{C}^- [F]  
  & = 	\int  
  \chi  \chi '  (1 -f ) f '  F_*
  (1 - F_* ' )\d \sigma   + 
  \int   
  \chi  (1   - \chi ' )      F_*   (1 - F_* ' ) \d \sigma   \  . 
\end{align}  
Next, we expand  $ F_*   (1 - F_* ' )$ with \eqref{Ff} and  then use a variation of \eqref{obs2} to find 
\begin{align}
	F_*  (1 - F_* ' )  
	& = 
	\chi_* 
	\chi_*  ' 
	( 1 - f_*  ) f_*   ' 
	+
	\chi_* 
	(1-  \chi_* '  )  	 
	+
	( 1 -	 \chi_*  ) (1 -  \chi_* ' )
	f_*   (1-f_* '  )   \ . 
\end{align}
The combination of the last two identities
gives a total combination of six terms. 
 Three of them contain  one (and only one) 
 of  the combinations 
 $\chi ( 1 - \chi ' )$, 
 $ \chi ' (1 - \chi )$, 
 $\chi_*  (1 - \chi_* ' )$
 and
 $\chi_*'  (1 - \chi_*  )$. 
 These will be the leading terms. We thus expand the operator and write
 \begin{align}
\nonumber 
 	  & 	\chi	\mathscr{C}^- [F]   \\ 
 	  \nonumber 
 	&   	 = 
 	  	 \int  
 	  	 \chi  \chi '  
 	  	 \chi_*( 1 - \chi_*')
 	  	 (1 -f ) f ' 
 	  	 + 
 	  	 \chi (1 - \chi ')
 	  	 \bigg[
 	  	 	\chi_* 
 	  	 \chi_*  ' 
 	  	 ( 1 - f_*  ) f_*   '  
 	  	 + 
 	  	 	( 1 -	 \chi_*  ) (1 -  \chi_* ' )
 	  	 f_*   (1-f_* '  )  
 	  	 \bigg] \\ 
 	  	 \nonumber 
 	 &  	 + 
 	  	 \int \chi \chi ' \chi_* \chi_* '   (1  -f )f '    (1 - f_*)f_* ' 
 	  	+ 
 	  	 \int \chi \chi ' (1 - \chi_* )(1- \chi_* ')   (1  -f )f '    f_* (1 - f_*' ) \\
&  	  	 + 
 	  	 \int \chi (1 - \chi ') \chi_*(1 - \chi_*' ) \ . 
\label{Q2}
 \end{align}
The two terms in the second line can be combined into a single operator $R_2[f]$, which is  easily estimated    in  $\ell^\infty$ norm 
as in
\eqref{eq1}. 

\vspace{3mm}

\textit{The third term of   \eqref{C decomposition}}.
This term can be computed   analogously  as we did with the  $\chi \mathscr C^{ - }[F]$ term. 
Namely, we use the relations \eqref{Ff} 
to expand various $F$ factors and then use the observation \eqref{obs2} 
to simplify them. 
A straightforward although long  computation shows
\begin{align}
	\label{Q3}
	\nonumber 
	(1 - \chi)\mathscr{C}^+ [F]   
	& 	=
	\int  (1 - \chi)(1 - \chi') 		 \chi_* ' (1 - \chi_*)  f' (1 -f ) \\
		\nonumber 
& 	 \quad + 
	\int   \chi ' (1 - \chi)   \chi_* '     \chi_*   f_* (1 -f_* ' ) \\
		\nonumber 
&  \quad 	+ 
	\int  \chi '   (1 - \chi)     (1 - \chi_* ' ) (1 - \chi_* )f_* ' (1 - f_*) \\
		\nonumber 
& \quad  	+ 
 	\int(1 - \chi) (1 - \chi ' ) \chi_* ' \chi_* f' (1 - f ) f_* (1 - f_* ' ) \\
 		\nonumber 
	&  \quad + 
 	\int(1 - \chi) (1 - \chi ' ) ( 1  - \chi_* ' ) ( 1 -  \chi_*)  f '  (1 -f ) f_* ' (1  - f_* )  \\
& \quad  	 + 
 	 \int  \chi '  (1 - \chi )  \chi_* ' (1 - \chi_* )  \ . 
\end{align}
The first three terms are the corresponding leading order terms, as they contain
one (and only one)
combination of 
$\chi ( 1 - \chi ' )$, 
$ \chi ' (1 - \chi )$, 
$\chi_*  (1 - \chi_* ' )$
and
$\chi_*'  (1 - \chi_*  )$. 
The fourth and fifth term 
can be estimated similarly as in \eqref{eq1}
and will be denoted by $R_3[f]$.

\textit{The fourth term of \eqref{C decomposition}}. We proceed analogously as
we did for the term $\chi \mathscr C^+[F]$. 
We use \eqref{Ff} for $1 - F' $  and then
a variation of \eqref{obs2} to find 
\begin{align}
	\label{Q4}
	\nonumber
(1  - \chi )	\mathscr{C}^-[F] 
&  = 
 \int (1 -\chi) (1 -   \chi ') \chi_* (1 - \chi_* ' ) f (1  - f ' ) \\
 	\nonumber
&   \qquad + 
 \int (1 - \chi ) (1 - \chi') (1 - \chi_* ) (1 - \chi_* ') f (1 - f ') f_* (1 - f_*') \\
  &  \qquad + 
  \int (1 - \chi )(1 - \chi') \chi_* \chi_* '
  f (1 - f')
  f_* ' (1 - f_* ) \  . 
\end{align}
The first term is leading, and the second and third terms are remainder terms, denoted by $R_4[f]$, 
which satisfy the estimate \eqref{eq1}. 

\textit{Conclusion}. 
Next, we put  \eqref{C decomposition}, \eqref{Q1}, \eqref{Q2}, \eqref{Q3} and \eqref{Q4} together.
To this end, we denote $R[f ] \equiv \sum_i R_i[f]$ the sum
of the remainder terms that satisfy \eqref{eq1}. Further, we use  the fact 
that the Fermi ball is a stationary solution of the quantum Boltzmann equation, i.e.
\begin{equation}
	\mathscr C [\chi ] = 0  \ . 
\end{equation}
We thus find  that for all $ p \in \S $
\begin{align}
\nonumber
	\mathscr C [F]
	  =&   
	 \int \chi \chi ' \chi_* ' (1 - \chi_*) f ( 1 - f ' ) \d \sigma  \\
	 \nonumber
& -  \int \chi \chi ' \chi_* (1  - \chi_*') (1 -f )f '  \\
\nonumber
 &  - 
 \int  \chi (1 - \chi')
 \chi_* \chi_* '  (1 - f_*) f '_* \\
 \nonumber
& - 
    \int \chi (1 - \chi')
 (1 - \chi_*) (1 - \chi_* ' )     f_* (1 -  f '_*)  \\
 \nonumber
	& 	+
\int  (1 - \chi)(1 - \chi') 		 \chi_* ' (1 - \chi_*)  f' (1 -f ) \\
\nonumber
\nonumber 
& 	   + 
\int     \chi ' (1 - \chi)   \chi_* '     \chi_*   f_* (1 -f_* ' ) \\
\nonumber
\nonumber 
&    	+ 
\int  \chi   '   (1 - \chi)     (1 - \chi_* ' ) (1 - \chi_* )f_* ' (1 - f_*) \\
& -
 \int (1 -\chi) (1 -   \chi ') \chi_* (1 - \chi_* ' ) f (1  - f ' )   + R[f] \ . 
\end{align}
Finally, we find that when restricting to $ p \in \S$ the factor $\chi (1 - \chi')$ vanishes, thanks to the measure $\d \sigma $ and \eqref{obs2}.
Thus, for all such momenta, 
one finds after re-arranging terms 
\begin{align}
	\nonumber
	\mathscr C [F]
	& 	=
 (1 - \chi ) 	 	\int	(1 - \chi')    \big[		 	 \chi_* ' (1 - \chi_*)  f' (1 -f )  
	- 
   \chi_* (1 - \chi_* ' ) f (1  - f ' ) 
	\big]  \d \sigma  \\
&   \qquad \ \ - 
	 	 \chi  	\int 	\chi '   \big[		
   \chi_* (1  - \chi_*') f '  (1 -f ) 
	-
  \chi_* ' (1 - \chi_*) f ( 1 - f ' )  
	\big] 
	\d \sigma   
 + R[f] \ .
\end{align}
The right hand side can be identified with the $\mathscr{B}$ operators given in \eqref{operatorB}.
 This finishes the proof. 
\end{proof}

\noindent  

\vspace{3mm}

\noindent \textbf{Acknowledgements.}
E.C. is deeply  grateful to Michael Hott for several stimulating discussions regarding quantum Boltzmann dynamics, as well as for    many helpful comments
that helped improve the present manuscript.  
In addition, the authors would like to express their sincere gratitude towards the anonymous referee for careful and thorough reading of an earlier version of this manuscript; their feedback, comments, and suggestions helped us greatly enhance the quality of our article. 

The work of E.C.  was  supported  by  the Provost’s Graduate Excellence Fellowship at The University of Texas at Austin. 
T.C. gratefully acknowledges support by the NSF through grants DMS-1151414 (CAREER), DMS-1716198, DMS-2009800, and the RTG Grant DMS-1840314 {\em Analysis of PDE}.

  \vspace{3mm}
  
\noindent \textbf{Data availability.} 
This manuscript has no associated data.

   \vspace{3mm}
  
\noindent   \textbf{Conflict of interest}. The authors declare that there is no conflict of interest.


\begin{thebibliography}{10}
 
 
 



\bibitem{BachEtal2016}
V. Bach, S. Breteaux, S. Petrat, P.  Pickl, T.  Tzaneteas, 
\textit{Kinetic energy estimates for the accuracy of the time-dependent Hartree–Fock approximation with Coulomb interaction,}
J. Math. Pures Appl. \textbf{105}(1), pp 1–30  (2016) 

 


\bibitem{BardosEtal2003}
C. Bardos, F. Golse, A. D. Gottlieb, N. J. Mauser, 
\textit{Mean field dynamics of fermions
and the time-dependent Hartree–Fock equation,}
J. Math. Pures Appl. \textbf{82}(1),   pp 665–683 (2003) 


 
  
 
 
 \bibitem{Bendetto2004}
 D. Benedetto, F. Castella, R. Esposito,   M. Pulvirenti. 
 \textit{Some considerations on the derivation of the nonlinear quantum Boltzmann equation.}
 J. Stat. Phys. \textbf{116},  pp 381–410 (2004).
 

\bibitem{Bendetto2005}
 D. Benedetto, F. Castella, R. Esposito,   M. Pulvirenti. 
\textit{On the weak-coupling limit for bosons and fermions,}
Math. Models Methods Appl. Sci. \textbf{15} (12) 2005.




\bibitem{Bendetto2007}
D. Benedetto, F. Castella, R. Esposito,   M. Pulvirenti. 
\textit{A short review on the derivation of the nonlinear quantum Boltzmann equations.}
Commun. Math. Sci. Oc 2007 International Press Supplemental Issue, pp. 55–71. 



\bibitem{Bendetto2008}
D. Benedetto, F. Castella, R. Esposito,   M. Pulvirenti. 
\textit{From the N-body Schr\"odinger equation to the quantum Boltzmann equation: a term-by-term convergence result in the weak coupling regime,} 
Commun. Math. Phys. \textbf{277}(1),  pp 1–44 (2008).


\bibitem{Bendetto2006}
 D. Benedetto, F. Castella, R. Esposito,   M. Pulvirenti. 
\textit{Some considerations on the derivation of the nonlinear quantum Boltzmann equation II: The low density regime,} {J. Stat. Phys.}   \textbf{124}, 951–996 (2006).





\bibitem{Benedikter2016}
N. Benedikter, 
V. Jakšić, 
M. Porta, 
C.  Saffirio, 
B. Schlein. 
\textit{Mean-Field Evolution of Fermionic Mixed States,}
Commun. Pure Appl. Math., \textbf{69}, pp 2250-2303 (2016).






\bibitem{Benedikter2020}
N. Benedikter, 
P. T. Nam, 
M.  Porta, 
B. Schlein, 
R. Seiringer.
\textit{Optimal upper bound for the correlation energy of a Fermi gas in the mean-field regime,}
Commun. Math. Phys. \textbf{374}, pp 2097–2150 (2020). 


\bibitem{Benedikter2021}
N. Benedikter,
P. T.  Nam, 
M. Porta,
B. Schlein, 
R. Seiringer. 
\textit{Correlation energy of a weakly interacting Fermi gas, }
Invent. math.  \textbf{225}, pp 885–979 (2021) 






\bibitem{Benedikter2022}
N. Benedikter, 
P. T. Nam, 
M. Porta, 
B. Schlein, 
R. Seiringer. 
\textit{Bosonization of Fermionic many-body dynamics,}
Ann. Henri Poincar\'e \textbf{23}, pp 1725–1764 (2022). 



\bibitem{Benedikter2016-2}
N. Benedikter, 
M.  Porta, 
C.  Saffirio, 
B.  Schlein. 
\textit{From the Hartree dynamics to the Vlasov equation,}
Arch. Rational Mech. Anal. \textbf{221},   273–334 (2016) . 




\bibitem{Benedikter2014}
N. Benedikter, 
M. Porta, 
B. Schlein. 
\textit{Mean–field evolution of Fermionic systems,}
Commun. Math. Phys. \textbf{331}, pp 1087–1131 (2014).

 
 




 

 
 
 
 \bibitem{Benedikter 2023}
 N. Benedikter, 
 M. Porta, 
 B. Schlein, 
 R. Seiringer. 
\textit{ Correlation Energy of a Weakly Interacting Fermi Gas with Large Interaction Potential,}
Arch. Rational Mech. Anal. \textbf{247}, 65 (2023).
 
 
  
  
  \bibitem{BleherLebowitz1994}
  P. Bleher,  J. Lebowitz.
   \textit{Energy-level statistics of model quantum systems: universality
  and scaling in a lattice-point problem.} 
J. Stat. Phys. \textbf{74}, pp 167–217. (1994).
  
  
  
  \bibitem{BleherLebowitz1995}
  P. Bleher,  J. Lebowitz. 
  \textit{Variance of number of lattice points in random narrow elliptic
  strip.}
Ann. Inst. H. Poincaré Probab. Statist. \textbf{31}, pp 27-58 (1995).
  
  
  
 \bibitem{bourgain}
 M. N. Huxley. 
 \textit{Exponential Sums and Lattice Points III,}
  Proc. London Math. Soc. (3), 87 (2003), 591-609.
 
 

 
\bibitem{Chen2005}
T. Chen, 
\textit{Localization lengths and Boltzmann limit for the Anderson model at small disorders in dimension 3.}
J. Stat. Phys. \textbf{120}, pp 279–337 (2005). 
 

 
 
 \bibitem{Hott}
 T. Chen, M. Hott, 
 \textit{On the emergence of quantum Boltzmann fluctuation dynamics near a Bose-Einstein condensate.}
 J. Stat. Phys. \textbf{190}, 85 (2023). 
 
 
 
 \bibitem{Hott2}
 T. Chen, M. Hott, 
 \textit{Derivation of renormalized Hartree-Fock-Bogoliubov and quantum Boltzmann equations in an interacting Bose gas.}
J. Funct. Anal. \textbf{289} (10), 111096 (2025).
 
 
 
 
 
 
 
 \bibitem{ChenRodnianski}
 T. Chen, 
 I. Rodnianski. 
\textit{ Boltzmann limit for a homogenous Fermi gas with Hartree-Fock interactions in a random medium.}
 {  J. Stat. Phys.,} {\bf 142} (5), pp 1000-1051, 2011.

 
 
 \bibitem{ChenSasaki}
 T. Chen, I. Sasaki. 
\textit{ Boltzmann limit and quasifreeness for a homogenous Fermi gas in a weakly disordered random medium.}
 {  J. Stat. Phys.}, {\bf  132} (2), pp 329-353, 2008.
 
  \bibitem{ChenGuo2015}
 X. Chen, Y. Guo.
 \textit{ On the weak coupling limit of quantum many-body dynamics and the quantum Boltzmann equation, }
 Kinet. Relat. Models \textbf{8},  pp 443–465 (2015). 
 
 \bibitem{ChenHolmer2023}
 X. Chen, J. Holmer. 
 \textit{The derivation of the Boltzmann equation from quantum many-body dynamics,}
 arXiv:2312.08239 (2023). 
 
 \bibitem{Christiansen2023-t}
 M. R. Christiansen.
 \textit{Emergent Quasi-Bosonicity in Interacting Fermi Gases}, 
PhD Thesis 	arXiv:2301.12817. 
 
 
 \bibitem{Christiansen2022}
 M. R. Christiansen, 
 C. Hainzl, 
 P. T. Nam. 
\textit{The Random Phase Approximation for interacting Fermi gases in the mean-field regime,}
Forum of Mathematics, Pi. 2023;11:e32. doi:10.1017/fmp.2023.31
 
 
\bibitem{Christiansen2022-2}
M. R. Christiansen, 
C. Hainzl, 
P. T. Nam. 
\textit{On the Effective Quasi-Bosonic Hamiltonian of the Electron Gas: Collective Excitations and Plasmon Modes}. 
 Lett. Math. Phys. \textbf{112}, 114 (2022)



\bibitem{Christiansen2023}
M. R. Christiansen, 
C. Hainzl, 
P. T. Nam. 
\textit{The Gell-Mann--Brueckner Formula for the Correlation Energy of the Electron Gas: A Rigorous Upper Bound in the Mean-Field Regime. }  
Commun. Math. Phys. \textbf{401}, 1469–1529 (2023). 
 
 
\bibitem{Colzani2021}
L. Colzani, 
B. Gariboldi,
G. Gigante. 
\textit{Variance of Lattice Point Counting in Thin Annuli.}
 J. Geom. Anal. \textbf{31}, pp 8903–8924 (2021). 


\bibitem{Deng2021}
Y. Deng, Z. Hani. 
\textit{On the derivation of the wave kinetic equation for NLS.} 
Forum of Mathematics, Pi. 2021;9:e6. doi:10.1017/fmp.2021.6


\bibitem{Deng}
Y. Deng, Z. Hani,   X. Ma. 
\textit{Long time derivation of the Boltzmann equation from hard sphere dynamics. }
Preprint arXiv:2408.07818, 2024.




\bibitem{Elgart2004}
A. Elgart ,
L. Erd\H{o}s, 
B.  Schlein, 
H.-T. Yau. 
\textit{Nonlinear Hartree equation as the mean field limit of weakly coupled fermions,}
J. Math. Pures Appl. \textbf{83}, pp 1241–1273 (2004). 


\bibitem{Erdos2002}
L. Erd\H{o}s.   
\textit{Linear Boltzmann equation as the long time dynamics of an electron weakly coupled to a phonon field. }
J. Stat. Phys., \textbf{107}, 1043–1127 (2002)


\bibitem{ErdosYau2000}
L. Erd\H{o}s, H.-T. Yau. 
\textit{Linear Boltzmann equation as the weak coupling limit of a random Schr\"odinger equation.}
{Commun. Pure Appl. Math.,} {\bf 53}(6), 667 --753, (2000).


\bibitem{ESY2004}
L. Erd\H{o}s, 
M. Salmhofer, 
H.-T. Yau. 
\textit{On the quantum Boltzmann equation,}
J. Stat. Phys.,   \textbf{116}, pp 367–380 (2004). 



 






\bibitem{FrohlichKnowles2011}
J. Fr\"ohlich, A. Knowles. 
\textit{A microscopic derivation of the time-dependent Hartree-Fock equation with Coulomb two-body interaction,}
J. Stat. Phys.   \textbf{145}, pp 23–50 (2011). 



\bibitem{Gallagher2012}
 I. Gallagher, L. Saint-Raymond,   B. Texier. 
\textit{ From Newton to Boltzmann: Hard spheres and short-range potentials.} 2012.
Zur. Lect. Adv. Math.
European Mathematical Society (EMS), Zürich, 2013.




\bibitem{holandau}
T. G. Ho,  L. J. Landau. 
\textit{Fermi gas on a lattice in the Van hove limit. }
J. Stat. Phys.  \textbf{87}(3), 821–845 (1997).



\bibitem{Hugenholtz}
N. M. Hugenholtz. 
\textit{Derivation of the Boltzmann equation for a Fermi gas. }
J. Stat. Phys.  \textbf{32}(2), pp 231–254 (1983).


\bibitem{HughesRudnik2004}
C.P. Hughes, Z. Rudnik.
\textit{On the distribution of lattice points in thin annuli.}
Int. Math. Res. Not.  
13, 637–657 (2004)





\bibitem{Lafleche2021}
L. Lafleche, 
C. Saffirio. 
\textit{Strong semiclassical limit from Hartree and Hartree-Fock to Vlasov-Poisson equation,}
Analysis \& PDE \textbf{16} (4), pp 891-926 (2023).








\bibitem{Lukkarinen2009}
J. Lukkarinen, H. Spohn.
\textit{Not to normal order—notes on the kinetic limit for weakly interacting quantum fluids,}
J. Stat. Phys. \textbf{134}, pp 1133–1172 (2009). 


\bibitem{Major1992}
P. Major.
\textit{Poisson law for the number of lattice points in a random strip with finite area.}
Prob. Th. Rel. Fields \textbf{92} pp  423–464 (1992).





\bibitem{Narnhofer1981}
H. Narnhofer,   
G. L. Sewell. 
\textit{Vlasov hydrodynamics of a quantum mechanical model, }
Commun. Math. Phys. \textbf{79}, pp 9-24(1981). 


 \bibitem{Nordheim}
  L.W. Nordheim.
\textit{On the kinetic method in the new statistics and its application inthe electron theory of conductivity}. 
Proc. Roy. Soc. 119, 689–698 (1928). 
 
 
 
  


\bibitem{PetratPickl2016}
S. Petrat, P. Pickl. 
\textit{A new method and a new scaling for deriving Fermionic mean-field dynamics, }
Math. Phys. Anal. Geom. \textbf{19}, 3 (2016).  







\bibitem{Porta2017}
M. Porta, 
S. Rademacher, 
C. Saffirio,
B. Schlein. 
\textit{Mean field evolution of fermions with Coulomb interaction,}
J. Stat. Phys.  \textbf{166}, pp 1345–1364 (2017)





\bibitem{Pulvirenti2006}
M. Pulvirenti. 
\textit{The weak-coupling limit of large classical and quantum systems,}
in: International Congress of Mathematicians, Vol. III, 2006, pp. 229–256. 


\bibitem{Rodnianski2009}
I. Rodnianski, B.    Schlein. 
\textit{Quantum Fluctuations and Rate of Convergence Towards Mean Field Dynamics. }
Commun. Math. Phys. \textbf{291}, pp 31–61 (2009). 



\bibitem{Sawada1957}
K. Sawada. 
\textit{Correlation energy of an electron gas at high density.}
 Phys. Rev. 106, 372,
1957.

\bibitem{Sawadaetal1957}
K. Sawada, K. A. Brueckner, N. Fukuda, and R. Brout. 
\textit{Correlation energy of an electron
gas at high density: Plasma oscillations.} 
Phys. Rev. 108, 507, 1957.


\bibitem{Spohn1994}
H. Spohn. 
\textit{Quantum Kinetic Equations,}
 pages 1–10. Springer US, Boston, MA, 1994.

\bibitem{Spohn2006}
 H. Spohn. 
 \textit{The phonon Boltzmann equation, properties and link to weakly anharmonic lattice dynamics. }
{ J. Stat. Phys.,} {\bf 124}, 1041--1104 (2006).


\bibitem{UehlingUhlenbeck}
 E.A. Uehling, 
 G.E. Uhlenbeck. 
\textit{Transport phenomena in Einstein-Bose and Fermi-Dirac gases.}
Phys. Rev. 43, 552–561 (1933).








 
 
 
 




 

 




 

 

 

 

  




   
 

  






\end{thebibliography}
\end{document}